\title{Two-sample inference for high-dimensional Markov networks}
\author[1]{Byol Kim\thanks{byolkim@uchicago.edu}}
\affil[1]{Department of Statistics, The University of Chicago, Chicago, IL 60637,USA.}
\author[2]{Song Liu\thanks{song.liu@bristol.ac.uk}}
\affil{School of Mathematics, University of Bristol, Bristol, BS8 1UG, UK;\\
    Alan Turing Institute, London, NW1 2DB, UK.}
\author[3]{Mladen Kolar\thanks{mkolar@chicagobooth.edu}}
\affil{Booth School of Business, The University of Chicago, Chicago, IL 60637, USA.}
\date{}
\begin{document}

\maketitle

\begin{abstract}
Markov networks are frequently used in sciences to represent conditional independence relationships underlying observed variables arising from a complex system. It is often of interest to understand how an underlying network differs between two conditions. In this paper, we develop methods for comparing a pair of high-dimensional Markov networks where we allow the number of observed variables to increase with the sample sizes. By taking the density ratio approach, we are able to learn the network difference directly and avoid estimating the individual graphs. Our methods are thus applicable even when the individual networks are dense as long as their difference is sparse. We prove finite-sample Gaussian approximation error bounds for the estimator we construct under significantly weaker assumptions than are typically required for model selection consistency. Furthermore, we propose bootstrap procedures for estimating quantiles of a max-type statistics based on our estimator, and show how they can be used to test the equality of two Markov networks or construct simultaneous confidence intervals. The performance of our methods is demonstrated through extensive simulations. The scientific usefulness is illustrated with an analysis of a new fMRI dataset.
\end{abstract}

\noindent {\bf Keywords:} {Differential networks;
High-dimensional inference;
Kullback-Leibler Importance Estimation Procedure;
Markov networks;
Post-regularization inference.}

\section{Introduction}

Markov networks, also known as Markov random fields or undirected probabilistic graphical models, are successfully used in many application domains to represent interactions between measured components of a complex system and help scientists in uncovering structured information from large amounts of unstructured data \citep{Lauritzen1996Graphical,mackay2003information,Koller2009Probabilistic}. In genetics the graph structure can be used, for example, to model regulatory activities in gene expressions \citep{Hartemink2001Using,dobra2004sparse}, while in neuroscience it can be used to model brain network in order to identify features associated with different mental diseases \citep{Supekar2008Network}. Other successful application areas include social and political sciences \citep{Banerjee2008Model}, analysis of financial data \citep{Barber2015ROCKET}, and many others. One of the fundamental problems in statistics is that of learning the graph structure of a probabilistic graphical model based on independent and identically distributed (i.i.d.) samples. See \cite{Drton2017Structure} for a recent overview.

The focus of this paper is on developing a method for statistical
inference of parameters in a differential network.
For a recent survey, see \citet{Shojaie2020Differential}, and references therein.
In many
applications, interest centres not on a particular network, but rather
on whether and how the network changes between different states.  For
example, genes may regulate each other differently when the external
environment is altered. The way different regions of a brain interact
together may be altered depending on the activity that a patient is
performing.  A single graphical model lacks the ability to capture
such changes and cannot reflect the dynamic nature of such data,
therefore limiting our ability to gain key insights into the
underlying system under consideration.

We develop a collection of methods for performing statistical
inference on the difference of parameters in high-dimensional Markov
networks.  Subtleties arise when the target of inference is the difference
of parameters rather than the parameters themselves.  In
high-dimensional regimes, consistent estimation requires an assumption
of inherent low-dimensionality such as sparsity \citep{Yuan2007Model,
  Friedman2008Sparse, Yuan2010High, Cai2011Constrained,
  Ravikumar2011High}.  Therefore, a crude procedure that estimates the
network parameters separately, and then takes the difference, can only
work when \emph{all} the individual networks are sparse.  This is
quite restrictive for applications where the individual networks may
be dense, but the differences are expected to be sparse, say, due to
the experimental set-up.  Moreover, even when the assumption is
satisfied, many such methods have tuning parameters that have an
influence on the estimated structure, and it is unclear how they
should be combined in practice to yield a consistent estimate of the
difference.

This has led many researchers either to jointly estimate structurally
similar networks \citep{Chiquet2011Inferring, Danaher2011Joint,
  Guo2011Joint, Mohan2014Node, Ma2016Joint, Majumdar2018Joint} or to
directly estimate the difference \citep{Zhao2014Direct,
  Xu2016Semiparametric, Liu2017Support, Fazayeli2016Generalized}.  The
latter approaches tend to have better sample complexity as well as
greater applicability.  The methods we propose also belong to the
latter category.

Our proposal tries to fill two gaps in the existing literature on
differential network estimation.  First, the majority of the
literature on graphical models are developed assuming a particular observation model,
and the growing literature on difference estimation is no exception.
For example, \citet{Xia2015Testing} assume that the data are Gaussian,
whereas \citet{Cai2019Differential} use an Ising model.
By contrast, we work with \emph{general} Markov random fields; we present
  a unified framework for statistical inference in differential
  networks, without the need for developing separate methodology for
  different distributional assumptions.
Following the development of \citet{Sugiyama2008Direct, Sugiyama2012Density, Liu2014Direct,
  Liu2017Support, Fazayeli2016Generalized}, we take the density ratio
approach and estimate the difference directly.
The last three assume a high-dimensional regime, and study consistency of point estimators
defined as solutions to penalized procedures, but the question of
statistical inference is unaddressed.

This brings us to the second gap.  Most of the existing literature on
network difference estimation focuses on producing consistent point
estimates, leaving the question of quantifying uncertainty in those
estimates largely untouched.  The methods we develop in this paper can
be used to construct confidence intervals and carry out hypothesis
tests about the difference of networks parameters.  The theoretical
guarantees we provide hold under a fairly weak set of assumptions.  In
particular, they do not rely on perfect model selection at any stage,
which would have necessitated strong assumptions, e.g.,
incoherence and strong signal strength.
Certain features of our problem, e.g., nonlinearity,
introduce technical challenges in establishing our theoretical
results.

Our paper contributes to the growing literature on statistical
inference on high-dimensional parameter estimates.  Hypothesis testing
and confidence intervals for high-dimensional M-estimators are studied
in \citet{Zhang2011Confidence, Belloni2012Inference,
  Belloni2013Honest, Javanmard2013Confidence,
  Meinshausen2013Assumption, Geer2013asymptotically}. Related ideas
have been developed in the context of Gaussian graphical models
\citep{Ren2013Asymptotic, Jankova2014Confidence, Jankova2017Honest},
elliptical copula models \citep{Barber2015ROCKET, Lu2015Posta}, and
Markov networks \citep{Wang2016Inference, Yu2016Statistical}.
Existing inferential techniques for high-dimensional differential
networks rely on Gaussian observation model and separate estimation
\citep{Xia2015Testing, Belilovsky2016Testing, Liu2017Structural}.
By contrast, our methods also apply to non-Gaussian data and are based on direct difference estimation.

Our Gaussian bootstrap approximation results can be viewed as
another contribution along the lines of \citet{Chen2018Gaussian} and
\citet{Xue2020Distribution}, which build on the ideas of
\citet{Chernozhukov2013Gaussian, Chernozhukov2015Comparison,
  Chernozhukov2017Central}. In particular, the testing procedure
developed in \citet{Xue2020Distribution} relies on a Gaussian
approximation result of the difference of two independent sums in
high-dimensions. Our equal graph test is similar in flavor, but as our
estimator cannot be represented as an independent sum, the proof of
validity requires a careful control of the remainder. Furthermore, our
empirical bootstrap heuristic is an interesting generalization of
\citet{Dezeure2017High} to a non-linear and two-sample problem, the
theoretical exploration of which we leave up to future work.

The rest of this paper is organized as follows.
Section~\ref{sec:preliminaries} discusses some background.  Our
methods are presented in Section~\ref{sec:methodology}, and their
theoretical guarantees are given in Section~\ref{sec:theory}.  We
report the results of our extensive simulation study in
Section~\ref{sec:simulations}, and analyze a real fMRI dataset in
Section~\ref{sec:real_data}.  We conclude with a discussion of
alternatives and future directions in Section~\ref{sec:discussions}.
The proofs of the main results are found in Appendix.  A Julia
package implementing the proposed methods may be obtained from
\url{https://github.com/mlakolar/KLIEPInference.jl}, together with the code to
reproduce the results.


\section{\label{sec:preliminaries}Preliminaries}

We list notations that are used frequently throughout this paper.
Vectors are distinguished from scalars by bold font, e.g., $\vb$. Bold uppercase letters are reserved for matrices, e.g., $\Mb$.
For $d \in \NN$, $[d] = \{1, \dots, d\}$.
For $k \in [d]$, $\eb_k \in \RR^d$ is the $k$th standard basis vector.
For $\vb \in \RR^d$ and $k \in [d]$, we write $v_k$ for the $k$th component of $\vb$.
For $S \subseteq [d]$, $\vb_S \in \RR^d$ with $v_{S,k} = v_k$ for $k \in S$, $v_{S,k} = 0$ else.
Let $\Ical \subseteq [d]$ be an index set.
For a set of scalars $\{v_k\}_{k \in \Ical}$, $(v_k)_{k \in \Ical}$ denotes the $|\Ical|$-vector with the components given by the set.
Similarly, for a set of vectors $\{\vb_k\}_{k \in \Ical}$ with all $\vb_k \in \RR^d$, $[\vb_k]_{k \in \Ical}$ denotes the $d \times |\Ical|$-matrix with the columns given by the set.
Given $\vb_1 \in \RR^{d_1}$ and $\vb_2 \in \RR^{d_2}$, $\vb_1 \cup \vb_2 \in \RR^{d_1 + d_2}$ denotes their concatenation.
For $\vb \in \RR^d$, the partition of $\vb$ induced by a partition $d_1 + d_2 = d$ is denoted $\vb = \sbr{\begin{smallmatrix}\vb_1 \\ \vb_2\end{smallmatrix}}$.
Similarly, for $\Mb \in \RR^{d \times d}$, $\Mb = \sbr{\begin{smallmatrix}\Mb_{11} & \Mb_{12} \\ \Mb_{21} & \Mb_{22}\end{smallmatrix}}$.

The inner product is denoted as $\langle \ub, \vb \rangle = \ub^\top \vb = \sum_{k=1}^{d} u_k v_k$.
We use $\|\dummy\|$ to denote a norm on $\RR^d$, and $\|\dummy\|_*$ to denote its dual, $\|\vb\|_* = \sup_{\|\ub\| \leq 1} |\langle \ub, \vb\rangle|$.
For $p \in [1,\infty]$, $\|\vb\|_p = \rbr{\sum_{k=1}^d |v_k|^p}^{1/p}$ is the usual $\ell_p$-norm of $\vb$.
This is extended first to $q \in (0,1]$ as $\|\vb\|_q = \sum_{k'} |v_{k'}|^q$, and then to $q = 0$ by adopting the convention $0^0 \equiv 0$ so that $\|\vb\|_0 = |\supp(\vb)| = |\{ k : v_{k} \neq 0\}|$.
For $q \in [0,1)$, $\ell_q$-``norms" can be thought of as generalized sparsity measures.
For a matrix $\Mb \in \RR^{d \times d}$, $\|\Mb\| = \|\operatorname{vec}(\Mb)\|$, e.g., $\|\Mb\|_\infty = \max_{1 \leq k, k' \leq d} |M_{k k'}|$.
For $s > 0$, $\vertiii{\Mb}_s$ is the maximum $s$-sparse eigenvalue of $\Mb$: $\vertiii{\Mb}_s = \sup_{\|\vb\|_0 \leq s, \|\vb\| = 1} |\vb^\top \Mb \vb|$.
Also, $\topnorm{\Mb} = \sup_{\|\vb\| \leq 1} \|\Mb \vb\|_*$.

Let $(a_n)_{n \geq 1}$ and $(b_n)_{n \geq 1}$ be sequences.
$a_n \lesssim b_n$ or $a_n = O(b_n)$ whenever $|a_n / b_n| \leq M$ for some $M > 0$ for all sufficiently large $n$.
$a_n \asymp b_n$ when both $a_n \lesssim b_n$ and $b_n \lesssim a_n$.
$a_n = o(b_n)$ means $|a_n / b_n| \to 0$ as $n \to \infty$.
If such a relationship holds with probability approaching $1$, this is distinguished by $\PP$ in the subscript, for example, $O_{\PP}$, $o_{\PP}$, $\lesssim_{\PP}$, $\asymp_{\PP}$.

$\text{REM}$ is a catch-all symbol for the remainder of an
approximation, whose precise definition varies according to the
context and from line to line.  $\Phi$ is the cdf of the standard
Gaussian: $\Phi(z) = \int_{-\infty}^z \phi(t) \, dt$, where
$\phi(z) = e^{-z^2/2} / \sqrt{2\pi}$.  $\Phi^{-1}$ denotes its inverse
function, i.e., the standard Gaussian quantile function.

\subsection{\label{sec:expon_family_pairw}Statement of the problem}

A \emph{Markov network} describes conditional dependencies among a collection of random variables \citep{Lauritzen1996Graphical, Drton2017Structure}.
Let $\xb = (x_v)_{v = 1}^{m}$ be a random vector taking values in $\XX \subseteq \RR^{m}$.
Consider an \emph{undirected graph} $G$ on the node set $V = [m]$, i.e., a pair $G = (V, E)$, where $E$, called the edge set, contains unordered pairs of nodes.
We say that nodes $u, v \in V$ are connected by an edge if $\{u, v\} \in E$.
Formally, a Markov network associated with $G = ([m], E)$ is a collection of $m$-variate distributions such that $x_u \indep x_v \mid (x_w)_{w \neq u, v}$ if and only if $\{u, v\} \notin E$.
Thus, the edge set $E$ describes which pairs of random variables are conditionally independent given all the other variables.

Let $\mathcal{C}(G)$ denote the set of all \emph{cliques} of $G$, i.e., subsets of $V$ for which every pair of nodes is connected by an edge.
It is well-known that any such $\xb$ with a strictly positive density is an exponential family $f(\xb; \gammab) = \exp( \gammab^\top \psib(\xb) ) / Z(\gammab)$ for some $\gammab = (\gamma_C)_{C \in \mathcal{C}(G)}$, $\psib = (\psi_C)_{C \in \mathcal{C}(G)}$, and the normalizing constant $Z(\gammab) = \int \exp( \gammab^\top \psib(\xb) ) \, d\xb$, where  $\gamma_C \in \RR$ and $\psi_C$ is a function of the clique variables $(x_v)_{v \in C}$ only \citep{Hammersley1971Markov}.
A parametric class $\Fcal_\gamma$ of Markov networks is obtained by assuming a fixed $\psib$.

A special case of significance is the class of \emph{pairwise} Markov networks \citep{wainwright08graphical, Yang2013Graphical} that have densities of the form
\begin{equation}
\label{eq:family:pairwise}
	f(\xb; \gammab) = \frac{1}{Z(\gammab)} \exp\rbr{\sum_{v = 1}^{m} \gamma_v \psi_v(x_v) + \sum_{u = 1}^{m} \sum_{v = u+1}^{m} \gamma_{uv} \psi_{uv}(x_u, x_v)},
\end{equation}
where $\gammab = (\gamma_v)_{v=1}^{m} \cup (\gamma_{uv})_{1 \leq u < v \leq m}$, $\psib = (\psi_v)_{v = 1}^{m} \cup (\psi_{uv})_{1 \leq u < v \leq m}$.
For this class, each component function of $\psib$ is at most a function of two variables, and hence $x_u \indep x_v \mid (x_w)_{w \neq u, v}$ if and only if $\gamma_{uv} = 0$, $u \neq v$.
Thus, for a pairwise Markov network, the edge set $E$ dictates which of the pairwise parameters $(\gamma_{uv})_{1 \leq u < v \leq m}$ are nonzero.
A number of well-studied models belong to the pairwise class.

\begin{example}[\label{example:Ising}Ising models]
An Ising model is a family of discrete probability distribution on the vertices of the $m$-dimensional hypercube $\XX = \{\pm 1\}^m$ given by the probability mass function of the form \eqref{eq:family:pairwise} with $\psi_{v}(x_v) = x_v$, $\psi_{uv}(x_u, x_v) = x_u x_v$, and $\gamma_v, \gamma_{uv} \in \RR$.
Thus, the Markov network associated with $G = ([m], E)$ are all Ising models with $\gamma_{uv} \neq 0$ if and only if $\{u, v\} \in E$.
\end{example}

\begin{example}[Gaussian graphical models]
The most-studied example of a probabilistic graphical model is the case of the undirected Gaussian graphical model.
Suppose $\xb \sim \Ncal(\mub, \Sigmab)$. Then, $\xb$ has a density of the form \eqref{eq:family:pairwise} with $\psi_v(x_v) = x_v$, $\psi_{uv}(x_u, x_v) = x_u x_v$, $\gamma_v = (\Sigmab^{-1} \mub)_v$, and $\gamma_{uv} = -[\Sigmab^{-1}]_{uv} / 2$.
Thus, if $\xb$ is in a Gaussian graphical model with the graph $G = ([m], E)$, then the inverse covariance matrix satisfies $[\Sigmab^{-1}]_{uv} \neq 0$ if and only if $\{u, v\} \in E$.
\end{example}


Suppose $f_x = f(\dummy; \gammab_x)$ and $f_y = f(\dummy; \gammab_y)$ are two distributions from the same pairwise family \eqref{eq:family:pairwise} corresponding to parameters $\gammab_x$ and $\gammab_y$, respectively.
Then, the \emph{change} from $f_x = f(\dummy; \gammab_y)$ to $f_y = f(\dummy; \gammab_x)$ can be described by the \emph{difference} $\thetab^* = \gammab_x - \gammab_y$.
In particular, whenever $x_u \indep x_v \mid (x_w)_{w \neq u, v}$ is true for only one of $f_x$ or $f_y$, we have $\theta_{uv}^* = \gamma_{x, uv} - \gamma_{y, uv} \neq 0$.
More generally, the support of $\thetab^*$ gives the pairs of random variables for which the conditional dependence relationship has changed.

The \emph{differential network} is defined as the difference $\thetab^*$ of $\gammab_x$ and $\gammab_y$.
We represent the differential network with a graph $G = (V, E)$ where an edge $\{u, v\} \in E$ is drawn between vertices $u$ and $v$ if and only if $\theta_{uv}^* \neq 0$.
Our goal here is to learn the differential network given independent and identically distributed (i.i.d.) observations from each of $f_x = f(\dummy; \gammab_x)$ and $f_y = f(\dummy; \gammab_y)$.
More precisely, using $\xb^{(1)}, \dots, \xb^{(n_x)} \iidsim f_x$ and $\yb^{(1)}, \dots, \yb^{(n_y)} \iidsim f_y$, we would like to construct confidence intervals or conduct hypothesis tests over possibly high-dimensional sub-vectors of $\thetab^*$ with provably valid simultaneous guarantee at arbitrary user-specified confidence level of $1-\alpha$ for small $\alpha \in [0, 1]$.
This requires an estimate of $\thetab^*$, which we construct in the next section based on the density ratio $f_x / f_y$ without separately estimating the individual parameters $\gammab_x$ and $\gammab_y$.

\subsection{\label{sec:KLIEP}Direct difference estimation via density ratio}

We describe the Kullback-Leibler Importance Estimation Procedure
(KLIEP) \citep{Sugiyama2008Direct} and how it can be used to
directly estimate the differential network $\thetab^*$.

KLIEP is a framework for estimating the density ratio of two probability distributions based on i.i.d.~observations from each. When the distributions are from the same parametric exponential family, the density ratio depends on the underlying pair of parameters only through their \emph{difference} while maintaining the exponential form. Indeed, let $r_{\theta^*} = f_x / f_y$. Then,
\begin{equation*}
	r_{\theta^*}(\xb)
	= \frac{f_x(\xb)}{f_y(\xb)}
	= \frac{Z(\gammab_y) \exp\left( \gammab_x^\top \psib(\xb) \right)}{Z(\gammab_x) \exp\left( \gammab_y^\top \psib(\xb) \right)}
	= \frac{\exp\left( \thetab^{*\top} \psib(\xb) \right)}{Z_y(\thetab^*)},
\end{equation*}
where we have $Z_y(\thetab^*) = \EE_y[ \exp( \thetab^{*\top} \psib(\yb) ) ]$, because
\begin{multline*}
	Z_y(\thetab^*)
	= \frac{Z(\gammab_x)}{Z(\gammab_y)}
	= \frac{\int \exp\rbr{\gammab_x^\top \psib(\xb)} \, d\xb}{Z(\gammab_y)}\\
	= \int \exp\rbr{\thetab^{*\top} \psib(\xb)} \frac{\exp\rbr{\gammab_y^\top \psib(\xb)}}{Z(\gammab_y)} \, d\xb
	= \EE_y\sbr{\exp\rbr{\thetab^{*\top} \psib(\yb)}}.
\end{multline*}
This can be used to derive a procedure that directly learns $\thetab^*$ without having to learn either $\gammab_x$ or $\gammab_y$.
Let $D_{\textnormal{KL}}(f \| g)$ be the \emph{Kullback-Leibler (KL) divergence} for probability densities $f$ and $g$. Recall that $D_{\textnormal{KL}}(f \| g) \geq 0$ with equality if and only if $f = g$ almost everywhere.
Since $f_x = r_{\theta^*} f_y$, $\thetab^* = \arg\min_{\thetab} D_{\textnormal{KL}}(f_x \| r_\theta f_y)$.
Moreover, it is proved in Appendix~\ref{supp:lKLIEP:derivation} that
\begin{equation}
\label{eq:KLIEP:motivation}
\begin{aligned}
	\thetab^*
	&= \arg\min_{\thetab} D_{\textnormal{KL}}(f_x \| r_\theta f_y)\\
	&= \arg\min_{\thetab} \cbr{-\EE_x\sbr{\thetab^\top \psib(\xb)} + \log \EE_y\sbr{\exp\rbr{\thetab^\top \psib(\yb)}}},
\end{aligned}
\end{equation}
where $\EE_x$ denotes the expectation with respect to $f_x$ and $\EE_y$ the expectation with respect to $f_y$.
The \emph{empirical KLIEP loss} $\lKLIEP$ is obtained by replacing each expectation with the corresponding sample average:
\begin{equation}
\label{eq:lKLIEP}
\begin{aligned}
	\lKLIEP(\thetab)
	&= \lKLIEP(\thetab; \xb^{(1)}, \dots, \xb^{(n_x)}, \yb^{(1)}, \dots, \yb^{(n_y)})\\
	&= -\frac{1}{n_x} \sum_{i = 1}^{n_x} \thetab^\top \psib(\xb^{(i)}) + \log \cbr{\frac{1}{n_y} \sum_{j = 1}^{n_y} \exp \rbr{\thetab^\top \psib(\yb^{(j)})}}.
\end{aligned}
\end{equation}
Minimizing $\lKLIEP$ yields the KLIEP estimate $\hat\thetab_{\textnormal{KLIEP}} = \arg\min_{\thetab} \lKLIEP(\thetab)$ as a direct estimate of the differential network $\thetab^*$.
$\lKLIEP$ is convex in $\thetab$, and when it is strictly convex --- which requires $n_y > p$ --- the KLIEP estimate $\hat\thetab_{\textnormal{KLIEP}}$ is known to be approximately normal and unbiased \citep[Chapter 13]{Sugiyama2012Density}.

In the high-dimensional setting with $n_y \leq p$, the minimizer of $\lKLIEP$ is no longer unique, and regularization becomes necessary for consistent estimation.
In this setting, \cite{Liu2017Support} and \cite{Fazayeli2016Generalized} proposed regularized versions of KLIEP using norm penalties.
In particular, \cite{Liu2017Support} proposed the \emph{sparse KLIEP}
\begin{equation}
\label{eq:spKLIEP}
	\check\thetab
	= \check\thetab(\lambda)
	= \arg\min_{\thetab} \lKLIEP(\thetab; \xb^{(1)}, \dots, \xb^{(n_x)}, \yb^{(1)}, \dots, \yb^{(n_y)}) + \lambda \|\thetab\|_1,
\end{equation}
where $\lambda > 0$ is a regularization parameter to be chosen by the user.
They show that when $\thetab^*$ is sparse, the support of $\check\thetab$ consistently recovers the support of $\thetab^*$ for suitable choices of $\lambda$.
However, such results typically require additional conditions, e.g., a lower bound on the minimal signal strength and incoherence of the Hessian, which may be restrictive for many real data applications.
Furthermore, these are essentially results about the accuracy of the point estimates, whereas to construct confidence intervals or conduct hypothesis tests, one needs information about the distribution of the estimators.
This is difficult for regularized estimators, as we shall see next.

\subsection{\label{sec:stat-infer-high}De-biasing}

Challenges arise when a regularized estimator $\check\thetab$, e.g., the sparse KLIEP estimator \eqref{eq:spKLIEP}, is used for statistical inference. Regularization produces a non-negligible bias, and the distribution of the resulting estimator is typically intractable \citep[see][and references therein]{Ning2014General}.

We propose to deal with this issue by de-biasing each component of $\check\thetab$.
For convenience, adopt a linear indexing so that $\gammab = (\gamma_k)_{k = 1}^{p}$ and $\psib = (\psi_k)_{k = 1}^{p}$, where $p$ is the total number of parameters.
Suppose we wish to obtain a de-biased estimate of $\theta_k^*$ for some $k \in [p]$.
Let $\thetab_{k^c}^* = \thetab_{[p] \setminus \{k\}}^* \in \RR^{p-1}$ denote the vector of remaining $p-1$ parameters.
This is the nuisance parameter for carrying out statistical inference for $\theta_k^*$.
Abusing the notation somewhat, we write the resulting partition as $\thetab = (\theta_k,\thetab_{k^c})$.
Define $\omegab_k^*$ as the vector satisfying $\EE[\nabla^2 \lKLIEP(\thetab^*)] \omegab_k^* = \eb_k$, and let $\check\omegab_k$ be a consistent estimator of $\omegab_k^*$.

Our method offers two options for constructing an approximately normal and unbiased estimator $\hat\theta_k$ of $\theta_k^*$ that are asymptotically equivalent \citep{Chernozhukov2015Valid}.
The first option is to use the \emph{one-step} estimator \citep{Vaart1998Asymptotic,Zhang2011Confidence,Geer2013asymptotically}:
\begin{equation}
\label{eq:def_one_step}
	\hat\theta_k^{\textnormal{1+}} = \check\theta_k - \check\omegab_k^\top \nabla \lKLIEP(\check\thetab).
\end{equation}
This approximately solves a modified score equation $\check\omegab_k^\top \nabla \lKLIEP(\theta_k,\check\thetab_{k^c}) = 0$, where $\check\thetab_{k^c}$ is defined via $\check\thetab = (\check\theta_k, \check\thetab_{k^c})$, by taking one Newton iteration starting from $\check\theta_k$.
In Section~\ref{sec:theory:main}, we prove that the one-step estimator $\hat\theta_k^{\textnormal{1+}}$ is an approximately normal and unbiased estimator of $\theta_k^*$.

When $\check\thetab$ and $\check\omegab_k$ are both sparse vectors, de-biasing may be carried out via the so-called \emph{double selection} \citep{Chernozhukov2015Valid}.
Let $\tilde\thetab$ be the estimate obtained by re-fitting to the union of the supports of $\check\thetab$ and $\check\omegab_k$, i.e.,
\begin{equation}
\label{eq:def_double_selection}
	\tilde\thetab = \arg\min_\theta \lKLIEP(\thetab) \quad\text{subject to}\quad \supp(\thetab) \subseteq \{k\} \cup \supp(\check\thetab) \cup \supp(\check\omegab_k).
\end{equation}
Then, the {double-selection} estimator $\hat\theta_k^{\textnormal{2+}}$ is defined as the $k$th component of $\tilde\thetab$.
Intuitively, by including the estimated supports of both $\thetab^*$ and $\omegab_k^*$, the double selection procedure achieves robustness to errors from either model selection procedure.
Provided that $\tilde\thetab$ is as accurate as $\check\thetab$ --- which would be the case for sparse or approximately sparse $\thetab^*$ and $\omegab_k^*$ --- $\hat\theta_k^{\textnormal{2+}}$ is asymptotically equivalent to $\hat\theta_k^{\textnormal{1+}}$.

For a derivation of \eqref{eq:def_one_step} in the context of KLIEP, see Appendix~\ref{supp:proofs:debiasing}.
A general discussion of the relationship of one-step estimation and double selection may be found in \citet{Chernozhukov2015Valid}.



\section{\label{sec:methodology}Methodology}

We propose a procedure for constructing an approximately normal and unbiased estimator of the differential network (\Cref{sec:methodology:single}).
We then give two bootstrap sketching procedures for estimating the quantiles of a max-type statistic based
on the estimator from \Cref{sec:methodology:single}, and show how they can be used for simultaneous
inference (\Cref{sec:methodology:multiple}).

\subsection{\label{sec:methodology:single}Sparse Kullback-Leibler Importance Estimation With de-biasing (SparKLIE+)}

We present Procedure~\ref{procedure:KLIEP+}, which is a general recipe for de-biasing regularized KLIEP estimates for each $\theta_k^*$ in $k \in \Ical$, where $\Ical \subseteq [p]$ is the set of indices for the parameters of inferential interest. The procedure uses a general norm penalty for regularization.

\begin{procedure}[\label{procedure:KLIEP+}Kullback-Leibler Importance Estimation With de-biasing (KLIE+)]
\begin{framed}
\begin{algorithmic}
\REQUIRE {Data $\Xb_{n_x} = \{\xb^{(i)}\}_{i=1}^{n_x}$, $\Yb_{n_y} = \{\yb^{(j)}\}_{j=1}^{n_y}$, positive regularization parameters $\lambda_\theta, \lambda_k$, $k \in \Ical$}
\ENSURE {De-biased estimates $\hat\theta_k$, $k \in \Ical$}
\STATE {\textit{Step 1.} Find an initial estimate of $\thetab^*$
\begin{equation}
\label{eq:KLIEP+:1}
	\check\thetab = \arg\min_{\theta} \lKLIEP(\thetab;\Xb_{n_x},\Yb_{n_y}) + \lambda_\theta \|\thetab\|.
\end{equation}}
\FOR {$k \in \Ical$}
\STATE{\textit{Step 2.} Find an initial estimate of $\omegab_k^*$
\begin{equation}
\label{eq:KLIEP+:2}
	\check\omegab_k = \arg\min_{\omega} \frac 12 \omegab^\top \nabla^2 \lKLIEP(\check\thetab)\omegab - \omegab^\top \eb_k + \lambda_k \|\omegab\|.
\end{equation}}
\STATE{\textit{Step 3.} De-bias, either by \eqref{eq:def_one_step} or by \eqref{eq:def_double_selection}, to obtain $\hat\theta_k$.}
\ENDFOR
\RETURN {$\hat\theta_k$, $k \in \Ical$}
\end{algorithmic}
\end{framed}
\end{procedure}

A general Gaussian approximation bound for \Cref{procedure:KLIEP+} will be given below in \Cref{thm:main} in \Cref{sec:theory:main}. The result is valid as long as the initial estimators from \eqref{eq:KLIEP+:1} and \eqref{eq:KLIEP+:2} are sufficiently accurate.
For example, this is the case for sparse or approximately sparse $\thetab^*$ and $\omegab_k^*$ when the $\ell_1$-penalty is used (\Cref{lem:consistency:1,lem:consistency:2} in Appendix~\ref{supp:consistency:l1}).
We call this procedure Sparse Kullback-Leibler Importance Estimation with de-biasing (SparKLIE+), with SparKLIE+1 referring to SparKLIE+ that uses one-step \eqref{eq:def_one_step} for de-biasing and SparKLIE+2 referring to the double selection \eqref{eq:def_double_selection} option.

\begin{remark}[\label{remark:alternatives}Alternative procedures for initial estimation]
It is possible to use other procedures for either of the initial estimation steps as long as the errors satisfy $\|\check\thetab - \thetab^*\| \cdot \|\check\omegab_k - \omegab_k^*\| = o_{\PP}(n^{-1/2})$.
We give examples in the case of the $\ell_1$-penalty.
In Appendix~\ref{supp:implementation:autoscaling}, we detail an autoscaling procedure for each step that allows the regularization parameter to be chosen in a data-independent way while yielding consistent estimation.
We may also re-fit the model on the estimated support \citep{Belloni2013Least}.
Finally, it is also possible to use a constrained procedure, similar to the method of \citet{Ning2014General}, where instead of \eqref{eq:KLIEP+:2}, one solves
\[
	\min\ \|\omegab\|_1 \quad\text{subject to}\quad \|\nabla^2 \lKLIEP(\check\thetab) \omegab - \eb_k\|_\infty \leq \lambda_k.
\]
\end{remark}

\begin{remark}[\label{remark:lambda}Regularization parameters]
\Cref{procedure:KLIEP+} assumes that the user has already picked out the regularization parameters $\lambda_\theta, \lambda_k$, $k \in \Ical$.
However, the optimal choice, as dictated by \Cref{lem:grad:1,lem:grad:2} in Appendix~\ref{supp:grads}, depends on constants related to the regularity of the density ratio, which are typically unknown.
In Appendix~\ref{supp:exper5}, we empirically study the sensitivity of \Cref{procedure:KLIEP+} to the choice of regularization parameters and find that the performance is robust across a wide range of regularization levels.
Furthermore, in Appendix~\ref{supp:implementation:autoscaling}, we further provide alternative procedures for Steps 1 and 2 that allows for problem-independent choices of penalty levels.
This is the version of Procedure~\ref{procedure:KLIEP+} we use in \Cref{sec:simulations,sec:real_data}.
\end{remark}

\subsubsection{Variance of the SparKLIE+ estimator}

For statistical inference, we also need a consistent estimator of the variance of $\sqrt{n} \, \hat\theta_k$, $n = n_x + n_y$.
Define the \emph{empirical density ratio estimate}
\begin{equation}
	\hat r_\theta(\yb) = \exp\rbr{\thetab^\top \psib(\yb)} / \hat
Z_y(\thetab), \quad\text{where}\quad \hat Z_y(\thetab) = \frac{1}{n_y} \sum_{j=1}^{n_y} \exp\rbr{\thetab^\top \psib(\yb^{(j)})}.
\label{eq:rhat}
\end{equation}
Let $\hat\Sbb_\psi$ and $\hat\Sbb_{\psi\hat r}(\check\thetab)$ be the sample covariance matrices of $\{\psib(\xb^{(i)})\}_{i=1}^{n_x}$ and $\{\psib(\yb^{(j)}) \hat r_{\check\theta}(\yb^{(j)})\}_{j=1}^{n_y}$, i.e.,
\begin{gather*}
\hat\Sbb_\psi = \frac{1}{n_x} \sum_{i=1}^{n_x} \psib(\xb^{(i)}) \psib(\xb^{(i)})^\top - \overline{\psib} \overline{\psib}^\top,\\
\hat\Sbb_{\psi\hat r}(\thetab) = \frac{1}{n_y} \sum_{i=1}^{n_y} \hat r_\theta^2(\yb^{(j)}) \psib(\yb^{(j)}) \psib(\yb^{(j)})^\top - \hat\mub(\thetab) \hat\mub(\thetab)^\top,
\end{gather*}
where
\begin{align}
\overline{\psib} &= \frac{1}{n_x} \sum_{i=1}^{n_x} \psib(\xb^{(i)}), & \hat\mub(\thetab) &= \frac{1}{n_y} \sum_{j=1}^{n_y} \psib(\yb^{(j)}) \, \hat r_\theta(\yb^{(j)}).
\label{eq:psibar:muhat}
\end{align}
Let $\hat\Sbb_{\textnormal{pooled}}(\check\thetab)$ be the pooled covariance
\[
	\hat\Sbb_{\textnormal{pooed}}(\check\thetab) = \frac{n}{n_x} \, \hat\Sbb_\psi + \frac{n}{n_y} \, \hat\Sbb_{\psi\hat r}(\check\thetab).
\]
Finally, a consistent estimator of the variance of $\sqrt{n} \, \hat\theta_k$ is
\begin{equation}
\label{eq:varest}
	\hat\sigma_k^2 = \check\omegab_k^\top \hat\Sbb_{\textnormal{pooled}}(\check\thetab) \check\omegab_k.
\end{equation}
This estimates the variance of $\sqrt{n} \, \omegab_k^{*\top} \nabla\lKLIEP(\thetab^*)$, which we show is asymptotically equivalent to $\sqrt{n} \, (\hat\theta_k - \theta_k^*)$ in the proof of \Cref{thm:main} in Appendix~\ref{supp:proofs:thm:main}.
By \Cref{lem:varest} in Appendix~\ref{app:consistency_var_est}, $\hat\sigma_k^2$ is consistent if both $\check\thetab$ and $\check\omegab_k$ are.

\Cref{cor:main} in \Cref{sec:theory:main} implies that if $z_q = \Phi^{-1}(q)$ is the $q$-quantile of a standard Gaussian, then $\PP\{\sqrt{n} \, (\hat\theta_k - \theta_k^*) / \hat\sigma_k \leq z_q\} \approx \Phi^{-1}(z_q) = q$.
Thus, $\hat\theta_k \pm z_{1-\alpha/2} \times \hat\sigma_k / \sqrt{n}$ is an asymptotically valid $100 \times (1-\alpha) \%$ confidence interval (CI) for $\theta_k^*$.
Similarly, the test that rejects for $\sqrt{n} \, |\hat\theta_k - \theta_{0k}| / \hat\sigma_k > z_{1-\alpha/2}$ is asymptotically level-$\alpha$ for the one-dimensional null hypothesis $\Hcal_{0k}: \theta_k^* = \theta_{0k}$.
In Section~\ref{sec:simulations}, we verify with simulations that the approximations are fairly accurate and robust even at small sample sizes.

\subsection{\label{sec:methodology:multiple}High-dimensional inference via bootstrap sketched quantiles}

In Section~\ref{sec:methodology:single}, we proposed SparKLIE+, a procedure for obtaining an asymptotically unbiased estimator of a component of the differential network. Iterating Step~3 of SparKLIE+ over all edges yields an unbiased estimator $\hat\thetab$ of the differential network $\thetab^*$. To make inferences about the structure of $\thetab^*$ using $\hat\thetab$, one may construct a simultaneous confidence region or conduct a simultaneous hypothesis test. This raises issues of multiple comparisons.

We deal with this problem by a bootstrap approximation of the quantiles of the following statistic
\begin{equation}
\label{eq:def_max_statistic}
	T = T_{n_x, n_y} = \max_{k = 1, \dots, p} \sqrt{n} \, |\hat\theta_k - \theta_k^*|, \quad\text{where}\quad n = n_x + n_y.	
\end{equation}
Let $c_{T, q}$ be the $q$-quantile of $T$.
Then, it is easy to verify that $\hat\thetab \pm c_{T, 1-\alpha} / \sqrt{n}$ is a $100 \times (1-\alpha) \%$ confidence region for $\thetab^*$. Similarly, the test that rejects if $\max_k |\hat\theta_k| > c_{T, 1-\alpha} / \sqrt{n}$ controls the family-wise error rate at level $\alpha$ for the null hypothesis $H_0: \theta_k^* = 0$ for all $k \in [p]$. This approach has the advantage of adapting to the correlations among $\hat\theta_k$'s. Thus, given $c_{T, q}$ --- or an accurate estimator thereof --- we can learn the differential network structure while controlling the type I error rate.

However, in high-dimensions, it is itself a highly nontrivial problem to estimate $c_{T, q}$ with sufficient accuracy \citep[see][and references therein]{Chernozhukov2013Gaussian, Chernozhukov2017Central, Deng2020Beyond}. In this section, we present two bootstrap-based methods for estimating $c_{T, q}$.

Our first proposal employs the Gaussian multiplier bootstrap. Recall the definitions of $\hat r_{\theta}$ from \eqref{eq:rhat}, and of $\bar\psib$ and $\hat\mub(\thetab)$ from \eqref{eq:psibar:muhat}.

\begin{procedure}[\label{procedure:bootstrap:Gaussian}Gaussian multiplier bootstrap sketching for estimating quantiles of $T$ ]
\begin{framed}
\begin{algorithmic}
\REQUIRE {Data $\Xb_{n_x} = \{\xb^{(i)}\}_{i=1}^{n_x}$, $\Yb_{n_y} = \{\yb^{(j)}\}_{j=1}^{n_y}$; the outputs $\check\thetab$ and $\check\omegab_k$, $k \in \Ical$, of \eqref{eq:KLIEP+:1} and \eqref{eq:KLIEP+:2} from \Cref{procedure:KLIEP+}}
\ENSURE {A Gaussian bootstrap estimate $\hat c_{T, q}$ of $c_{T, q}$}
\FOR {$b = 1, \dots, n_b$}
\STATE {Draw Gaussian weights $\xi_x^{(b,1)}, \dots, \xi_x^{(b,n_x)}, \xi_y^{(b,1)}, \dots, \xi_y^{(b,n_y)} \iidsim \Ncal(0,1)$.}
\STATE {Compute
\begin{multline}
\label{eq:T-boot:Gaussian}
	\hat T^{(b)} = \max_k \sqrt{n} \Bigg| \Bigg\langle \check\omegab_k, \frac{1}{n_x} \sum_{i=1}^{n_x} \rbr{\psib(\xb^{(i)}) - \overline{\psib}} \xi_x^{(b,i)}\\
	- \frac{1}{n_y} \sum_{j=1}^{n_y} \rbr{\psib(\yb^{(j)}) \hat r_{\check\theta}(\yb^{(j)}) - \hat\mub(\check\thetab)} \xi_y^{(b,j)} \Bigg\rangle \Bigg|.
\end{multline}}
\ENDFOR
\RETURN {$\hat c_{T, q}$, the $q$ sample quantile of $\{\hat T^{(b)} : b = 1, \dots, n_b\}$.}
\end{algorithmic}
\end{framed}
\end{procedure}

\Cref{procedure:bootstrap:Gaussian} may be
  procedure for estimating the $(1-\alpha)$-quantile of the maximum of
  $|\Ncal(\zero, \hat\Sigmab)|$,
  where
  $\hat\Sigmab = \check\Omegab^\top \hat\Sbb_\text{pooled}
  \check\Omegab$, $\check\Omegab = [\check\omegab_k]_{k = 1}^{p}$, and
  $\hat\Sbb_\text{pooled}$ is defined in \eqref{eq:varest}.  Since we
  can show that $\hat\thetab - \thetab^* \approx \Ncal(\zero, \Sigmab^*)$
  for some fixed $\Sigmab^*$ and, moreover, $\hat\Sigmab \approx \Sigmab^*$, we claim
  that $\hat c_{T, q}$ is a good estimate of the
  $q$-quantile of $T$. This intuition is
  formally stated in \Cref{thm:bootstrap} in
  \Cref{sec:theory:bootstrap}.

  Although \Cref{procedure:bootstrap:Gaussian} is accurate for
  sufficiently large sample sizes, at smaller values of $n_x$ and
  $n_y$, empirical bootstrap tends to yield more robust estimates of
  the quantiles.  The procedure below, based on the empirical bootstrap,
  is what we recommend in practice.

\begin{procedure}[\label{procedure:bootstrap:empirical}Empirical bootstrap sketching for estimating quantiles of $T$]
\begin{framed}
\begin{algorithmic}
\REQUIRE {Data $\Xb_{n_x} = \{\xb^{(i)}\}_{i=1}^{n_x}$, $\Yb_{n_y} = \{\yb^{(j)}\}_{j=1}^{n_y}$; the outputs $\check\thetab$ and $\check\omegab_k$, $k \in \Ical$, of \eqref{eq:KLIEP+:1} and \eqref{eq:KLIEP+:2} from Procedure~\ref{procedure:KLIEP+}}
\ENSURE {An empirical bootstrap estimate $\hat c_{T, q}$ of $c_{T, q}$}
\FOR {$b = 1, \dots, n_b$}
\STATE {Re-sample $\Xb_{n_x}^{(b)} = \{\xb^{(b, 1)}, \dots, \xb^{(b, n_x)}\}$ and $\Yb_{n_y}^{(b)} = \{\yb^{(b, 1)}, \dots, \yb^{(b, n_y)}\}$ uniformly at random with replacement.}
\FOR {$k \in \Ical$}
\STATE {For replicating SparKLIE+1 estimate \eqref{eq:def_one_step}, $\hat\theta_k^{(b)} = \check\theta_k - \check\omegab_k^\top \nabla\lKLIEP(\check\thetab; \Xb_{n_x}^{(b)}, \Yb_{n_y}^{(b)})$.}
\STATE {For replicating SparKLIE+2 estimate \eqref{eq:def_double_selection}, $\hat\theta_k^{(b)}$, the $k$th component of
\[
	\arg\min_\theta \lKLIEP(\thetab; \Xb_{n_x}^{(b)}, \Yb_{n_y}^{(b)}) \text{ subject to } \supp(\thetab) \subseteq \{k\} \cup \supp(\check\thetab) \cup \supp (\check\omegab_k).
\]}
\ENDFOR
\STATE {Compute
\begin{equation}
\label{eq:T-boot:empirical}
	\hat T^{(b)} = \max_k \sqrt{n} \, |\hat\theta_k^{(b)} - \hat\theta_k|.
\end{equation}}
\ENDFOR
\RETURN {$\hat c_{T, q}$, the $q$ sample quantile of $\{\hat T^{(b)} : b = 1, \dots, n_b\}$.}
\end{algorithmic}
\end{framed}
\end{procedure}

Note that only Step 3 of Procedure~\ref{procedure:KLIEP+} is repeated in Procedure~\ref{procedure:bootstrap:empirical}. This is akin to the use of $\check\thetab$ and $\check\omegab_k$, $k \in \Ical$, in Procedure~\ref{procedure:bootstrap:Gaussian}.

We give a heuristic argument in support of Procedure~\ref{procedure:bootstrap:empirical}, leaving the formal proof to future work.
For the sake of argument, consider the infeasible estimator $\hat\theta_k^{1*} = \theta_k^* - \omegab_k^{*\top} \nabla \lKLIEP(\thetab^*)$ or $\hat\theta_k^{2*}$, the $k$th component of $\arg\min_{\thetab} \nabla \lKLIEP(\thetab)$ subject to $\supp(\thetab) \subseteq \{k\} \cup \supp(\thetab^*) \cup \supp(\omegab_k^*)$.
In other words, $\hat\theta_k^{1*}$ or $\hat\theta_k^{2*}$ is the result of applying \eqref{eq:def_one_step} or \eqref{eq:def_double_selection}, but with the true parameters $\thetab^*$ and $\omegab_k^*$ replacing the initial estimates $\check\thetab$ and $\check\omegab_k$.
It is easy to see that both $\hat\theta_k^{1*}$ and $\hat\theta_k^{2*}$ are approximately normal and unbiased estimators, and that making the same replacement in Procedure~\ref{procedure:bootstrap:empirical} would yield bootstrap replicates of $\hat\theta_k^{1*}$ and $\hat\theta_k^{2*}$.
Because $\check\thetab$ and $\check\omegab_k$ are consistent estimators, we expect Procedure~\ref{procedure:bootstrap:empirical} to be approximately valid for bootstrapping the SparKLIE+ estimator $\hat\theta_k^{1+}$ or $\hat\theta_k^{2+}$.
This intuition is verified in simulations in Section~\ref{sec:simulations:2}.


\section{\label{sec:theory}Theory}

In this section, we establish statistical validity of the inference procedures discussed in
\Cref{sec:methodology:single} and \Cref{sec:methodology:multiple} under two model assumptions introduced in \Cref{sec:conditions}.

\subsection{\label{sec:conditions}Assumptions}

We discuss two sufficient conditions that imply the accuracy of
Gaussian approximation.  The first is about the regularity of the
density ratio $r_\theta(\yb)$.

\begin{condition}[bounded density ratio model]
\label[condition]{cond:bdr}
There exist $\varrho > 0$ such that
\begin{equation*}
M_r^{-1} \leq r_\theta(\yb) \leq M_r \text{ a.s.~for all } \thetab \text{ with } \|\thetab - \thetab^*\| \leq \varrho
\end{equation*}
for some $M_r = M_r(\varrho) \geq 1$.
\end{condition}

For convenience, we fix $\varrho = \|\thetab^*\|$.  \Cref{prop:bss}
says that \Cref{cond:bdr} is equivalent to a boundedness condition on
the sufficient statistics, a claim that was stated without proof for
$\ell_2$-norm in \citet{Liu2017Support}.  We generalize the statement,
and prove it in Appendix~\ref{supp:pfs_for_bdr}.

\begin{proposition}[bounded sufficient statistics] \label[proposition]{prop:bss}
Condition 1 is satisfied if and only if $\|\psib(\xb)\|_* \leq M_\psi$ a.s.~for some $M_\psi < \infty$.
\end{proposition}

In general, regularity conditions on the density ratio tend to induce
even stronger regularity conditions on the sufficient statistics.  The
identity
$\hat Z_y(\thetab) / Z_y(\thetab) \equiv n_y^{-1} \sum_{j=1}^{n_y}
r_\theta(\yb^{(j)})$ implies
$\hat Z_y(\thetab) / Z_y(\thetab) \in [M_r^{-1},M_r]$.  Moreover,
$\hat r_\theta(\yb) \equiv (\hat Z_y(\thetab) / Z_y(\thetab))
r_\theta(\yb)$, so that
\begin{equation*}
M_r^{-2} \leq M_r^{-1} \rbr{1 - o_{\PP}(1)} \leq \hat r_\theta(\yb) \leq M_r \rbr{1 + o_{\PP}(1)} \leq M_r^2.
\end{equation*}
The outer bounds are obvious.  The inner bounds require a
concentration result (\Cref{lem:Hoeffding_r} in Appendix~\ref{supp:pfs_for_bdr}).

When \Cref{procedure:KLIEP+} is implemented with the $\ell_1$-penalty,
it is natural to impose Condition~1 with the $\ell_1$-norm, which by
\Cref{prop:bss} is equivalent to imposing an $\ell_\infty$-bound on
the sufficient statistics.  Thus, this choice of penalty works nicely
with models that take values on a bounded domain, such as Ising
models, Potts models, or truncated Gaussians with bounded support.
Indeed, for the Ising model defined in Example \ref{example:Ising},
$\|\psib(\xb)\|_\infty = 1$ but $\|\psib(\xb)\|_2^2 = p$.

The second are regularity conditions on the population covariances of
$\psib(\xb)$ under $f_x$ and $f_y$, as well as that of
$(\psib(\yb) - \mub_\psi) r_{\theta^*}(\yb)$ under $f_y$.  Recall
$\Sigmab_\psi = \Cov_x[ \psib(\xb) ]$, and let
$\Sigmab_{\psi r} = \Cov_y[ (\psib(\yb) - \mub_\psi) r_{\theta^*}(\yb)
]$, where
$\mub_\psi = \EE_x[\psib(\xb)] = \EE_y[\psib(\yb) r_{\theta^*}(\yb)]$.

\begin{condition}[bounded population eigenvalues] \label[condition]{cond:bddpopeigs}
There exist $0 < \underline{\kappa} \leq \bar\kappa < \infty$ such that
\begin{align*}
\underline\kappa \leq \min_{\|\vb\| = 1, \vb \neq \zero} \vb^\top \Sigmab_\psi \vb &\leq \max_{\|\vb\| = 1, \vb \neq \zero} \vb^\top \Sigmab_\psi \vb \leq \bar\kappa,\\
\underline\kappa \leq \min_{\|\vb\| = 1, \vb \neq \zero} \vb^\top \Sigmab_{\psi r} \vb &\leq \max_{\|\vb\| = 1, \vb \neq \zero} \vb^\top \Sigmab_{\psi r} \vb \leq \bar\kappa.
\end{align*}
\end{condition}

\Cref{cond:bddpopeigs} is a natural one, and ensures that the problem
is well behaved \citep{Liu2017Support}.  A lower bound on the minimum
eigenvalues ensures that the model is non-degenerate.  The upper bound
ensures that $\lKLIEP$ \eqref{eq:lKLIEP} is smooth, and can be regarded as
analogous to the assumption on the log-normalizing function in
\citet{Yang2013Graphical}.  These bounds will naturally appear in
bounding the convergence of $\nabla^2 \lKLIEP(\thetab)$ to
$\Sigmab_\psi$, as well as in bounding the variance of the estimator
$\sigma_k^2$.

Conditions imposed here are weaker than those in
\citet{Liu2017Support}, as we do not hope to correctly identify the
support of the parameter $\thetab^*$.  In particular, we do not need
to assume the incoherence condition, nor do we need to require that
the nonzero components of $\thetab^*$ be large enough.

Recall $\thetab^* = \gammab_x - \gammab_y$ and
$\omegab_k^* = \Sigmab_\psi^{-1} \eb_k$ where
$\Sigmab_\psi = \Cov_x[\psib(\xb)]$ and $k \in [p]$.  To facilitate
the discussion of rates in the next two sections, we introduce
additional notations.  Let $n = n_x + n_y$.  We view $n_x$, $n_y$,
$p$, $s_\theta = s_{\theta, q_\theta} = \|\thetab^*\|_{q_\theta}$,
$s_k = s_{k, q_k} = \|\omegab_k^*\|_{q_k}$ as sequences indexed by $n$
and possibly diverging to $\infty$.  $n_x$ and $n_y$ are characterized
by sequences $\eta_{x,n}$ and $\eta_{y,n}$ in $(0,1)$ such that
$\eta_{x,n} + \eta_{y,n} \equiv 1$, $n_x = n_{x,n} = \eta_{x,n} n$ and
$n_y = n_{y,n} = \eta_{y,n} n$.  In particular, this implies that
$n \asymp n_x \asymp n_y$.

The bounds we give below are finite-sample in the sense
  that they are given as functions of $n$, $p$, $s_\theta$, $s_k$.
  They can be used to study the asymptotic behavior as $n \to \infty$
  by considering a sequence of models
  $(\thetab^*, \Sigmab_\psi) = (\thetab^*_n, \Sigmab_{\psi,n})$ such
  that the induced sequence of $p$, $s_\theta$, $s_k$, etc.~satisfy
  the side conditions of each theorem.

\subsection{Finite-sample Gaussian approximation result for the SparKLIE+1}
\label{sec:theory:main}

\Cref{thm:main} gives a family of Gaussian approximation bounds for
\Cref{procedure:KLIEP+}.

Let $k \in [p]$.  Let $\check\thetab$ and $\check\omegab_k$ denote the
outputs of Steps~1 and 2 of \Cref{procedure:KLIEP+}.  For
$\lambda_\theta, \lambda_k, \delta_\theta, \delta_k, \delta_\sigma \in
[0,1)$, define an event
\begin{multline*}
\Ecal_\text{one}
= \Ecal_\text{one}(\lambda_\theta, \lambda_k, \delta_\theta, \delta_k, \delta_\sigma) =\\
\cbr{
\begin{array}{r c r c}
\textnormal{(G.1)} & 2 \|\nabla \lKLIEP(\thetab^*)\|_* \leq \lambda_\theta,&
\textnormal{(G.2)} & 2 \|\nabla^2 \lKLIEP(\thetab^*) \omegab_k^* - \eb_k\|_* \leq \lambda_k,\\
\textnormal{(E.1)} & \|\check\thetab - \thetab^*\| \leq \delta_\theta,&
\textnormal{(E.2)} & \|\check\omegab_k - \omegab_k^*\| \leq \delta_k,\\
\textnormal{(B.1)} & \abr{1 - \frac{\hat Z_y(\thetab^*)}{Z_y(\thetab^*)}} \lesssim \lambda_\theta,&
\textnormal{(B.2)} & \abr{\frac{1}{n_y} \sum_{j=1}^{n_y} \langle \omegab_k^*, \mub_\psi - \psib(\yb^{(j)}) \rangle \, r_{\theta^*}(\yb^{(j)})} \lesssim \lambda_k,\\
\textnormal{(V.1)} & 4 \|\hat\Sbb_\psi - \Sigmab_\psi\|_* \leq \delta_\sigma,&
\textnormal{(V.2)} & 4 \|\hat\Sbb_{\psi\hat r}(\thetab^*) - \Sigmab_{\psi r}\|_* \leq \delta_\sigma
\end{array}}.
\end{multline*}

\begin{theorem}
\label{thm:main}
Assume Conditions 1 and 2.
Let $\hat\theta_k$ be the estimator constructed by \Cref{procedure:KLIEP+} with one-step approximation as
\begin{equation*}
\hat\theta_k = \check\theta_k - \check\omegab_k^\top \nabla \lKLIEP(\check\thetab).
\end{equation*}
Suppose $\PP(\Ecal_\text{one}) \geq 1 - \varepsilon_{\text{one},n}$ for some $\lambda_\theta, \lambda_k, \delta_\theta, \delta_k, \delta_\sigma \in [0,1)$.
Then,
\begin{equation*}
\sup_{t \in \RR} \left| \PP\left\{ \sqrt{n} \, (\hat\theta_k - \theta_k^*) / \hat\sigma_k \leq t \right\} - \Phi(t) \right|\\
\leq \Delta_1 + \Delta_2 + \Delta_3 + \varepsilon_{\text{one},n},
\end{equation*}
where
\begin{gather*}
\Delta_1
\lesssim \sqrt{\frac{\bar\kappa^2 / \underline\kappa}{\eta_{x,n} \eta_{y,n}}} \frac{\|\omegab_k^*\|}{\sqrt{n}},\quad
\Delta_2
\lesssim \sqrt{\frac{\eta_{x,n} \eta_{y,n}}{\underline\kappa / \bar\kappa^2}} \rbr{(\delta_\theta+\lambda_\theta) (\delta_k+\lambda_k) + \|\omegab_k^*\| \delta_\theta^2} \sqrt{n},\\
\Delta_3
\lesssim (\bar\kappa^2 / \underline\kappa) \|\omegab_k^*\|^2 (\delta_\sigma+\delta_\theta) + \delta_k^2.
\end{gather*}
\end{theorem}

The proof is in Appendix~\ref{supp:proofs:thm:main}.  We highlight some of the main
technical difficulties.  To prove \Cref{thm:main}, we need to find a
linear approximation of $\sqrt{n} \, (\hat\theta_k - \theta_k^*)$ that
is easy to analyze.  This is not so obvious due to nonlinearity of
$\lKLIEP$.  Our results require a delicate control of the bias that
arises from using the empirical density ratio estimates, as we need to
make sure that the error terms are vanishing even after $\sqrt{n}$
scaling.  This is in contrast to \citet{Liu2017Support} or
\citet{Fazayeli2016Generalized} where it sufficed to control the
gradient in the dual norm.

\Cref{thm:main} gives a result for a general norm penalty in
\Cref{procedure:KLIEP+}.  When specialized to SparKLIE+1, we have the
following result.

\begin{theorem}
\label{cor:main}
Assume \Cref{cond:bdr} with the $\ell_1$-norm and
\Cref{cond:bddpopeigs}.  Let $\hat\theta_k$ be the SparKLIE+1
estimator with tuning parameters
\begin{equation*}
\lambda_\theta \asymp \rbr{\frac{\log p}{n}}^{1/2}
\quad\text{and}\quad
\lambda_k \asymp s_{k, q_k}^{1/(2-q_k)} \rbr{\frac{\log p}{n}}^{1/2}.
\end{equation*}
Let $s$ be a sequence of integers satisfying
$s \geq s_{\theta, 0}, s_{k, q_k} \lambda_k^{-q_k}$.  Let
$\varepsilon_{\text{RSC},n}$ be a sequence in $(0,1)$ decreasing to 0.
Then, subject to additional conditions on $n_y$ and the growth regime
detailed in Appendix~\ref{supp:l1proofs:cor:main},
\begin{multline*}
\sup_{t \in \RR} \left| \PP\left\{ \sqrt{n} \, (\hat\theta_k - \theta_k^*) / \hat\sigma_k \leq t \right\} - \Phi(t) \right|\\
\leq
O\rbr{s_{\theta, 0} s_{k, q_k}^{2+\tfrac{1-2q_k}{2-q_k}} \rbr{\frac{\log p}{n}}^{1-q_k} \sqrt{n}}
+ \varepsilon_{\text{RSC},n} + c\exp\rbr{-c' \log p},
\end{multline*}
where $c, c' > 0$ are constants that do not depend on $n$, $p$, $s_{\theta, 0}$ or $s_{k, q_k}$.
\end{theorem}

The proof in Appendix~\ref{supp:l1proofs:cor:main} relies on numerous technical lemmas to
derive the rates of $\check\thetab$ and $\check\omegab_k$.  In
particular, we prove a restricted strong convexity (RSC) of the
Hessian starting from a population-level assumption
(\Cref{cond:bddpopeigs}).  The proof is quite involved as the Hessian
is a weighted sample covariance where the weights are given by the
empirical density ratio estimates, which makes application of existing
results impossible.  The details are in Appendix~\ref{supp:auxiliary}.

\begin{remark}
\label{remark:sparsity}
\Cref{cor:main} gives a nontrivial bound only for sufficiently
(weakly) sparse $\thetab^*$ and $\omegab_k^*$.  The additional
condition on $n_y$ is a consequence of proving RSC from the
population-level assumptions.  In particular, it is linked to the
probability that the Hessian fails to satisfy RSC.  Analogous results
for other sparsity regimes can be obtained from Theorem~\ref{thm:main}
as well (see earlier version of this paper on arXiv).  Due to space
limitations, we have singled out this regime as being arguably the
most interesting.
\end{remark}

\begin{remark}
We note that the inverse of the Hessian
$\Sigmab_\psi^{-1}$ is determined by $\gammab_x$, since
$\Sigmab_\psi = \Cov_x[\psib(\xb)]$, and, therefore, the sparsity
of $\Sigmab_\psi^{-1}$ is related to that of $\gammab_x$.  In the
case of Gaussian graphical models, we can explicitly characterize
$\Sigmab_\psi^{-1}$ and we observe that the rows of the inverse of
the Hessian are sparse if the maximum degree of the underlying
graph is small. The proof strategy critically relies on the
properties of a Gaussian distribution and its log-partition
function and is intractable for general Markov random
fields. However, we further provide numerical evidence on the
relationship between the support of $\Sigmab_\psi^{-1}$ and that
of $\gammab_x$ for Ising models. For our method to perform well,
it suffices that the $\ell_q$-``norm'' is controlled for a small
$q \in [0,1)$, which we numerically verify.  See
Appendix~\ref{supp:rowsparsity}. Finally, we note that in some
cases the rows of $\Sigmab_\psi^{-1}$ are neither sparse nor
approximately sparse, but have bounded $\ell_1$ norm.  In this
case, a possible direction for developing a valid inference
procedure would be to modify the three step procedure in 
\citet{Ma2017Intersubject} or \citet{Yu2019Simultaneous}.
\end{remark}

\begin{remark}
\label{remark:asymmetry}
As pointed out by a reviewer, there is an inherent asymmetry in KLIEP,
and \Cref{cor:main} is one place where this can be observed.
Specifically, the quality of Gaussian approximation depends on which
set of observations is used as $\Xb$ and which as $\Yb$.  First,
$r_\theta$ may be more regular than $1 / r_\theta$ as measured by the
bounds.  This affects the magnitude of $\lambda_\theta$ or
$\lambda_k$.  Second, the larger sample will satisfy the sample
complexity condition with a smaller $\varepsilon_{\text{RSC}, n}$,
which is the probability that the Hessian fails to satisfy RSC.  For
the bounded sufficient statistics model we consider, we have found the
latter to have a larger impact on the results.  Therefore, we
recommend choosing $f_x$ and $f_y$ so that $n_x \leq n_y$.
In \Cref{sec:discussions}, we discuss alternative approaches
to differential network estimation that are not asymmetric in
nature. These, however, require imposing stronger conditions.
\end{remark}

\subsection{\label{sec:theory:bootstrap}Finite-sample consistency for Gaussian multiplier bootstrap sketched quantiles}

\Cref{thm:bootstrap} is a finite-sample consistency result for the Gaussian multiplier bootstrap.
Recall $T = \max_k \sqrt{n} \, |\hat\theta_k - \theta_k^*|$, and let $\hat c_{T, \alpha}$ denote the estimator of $(1-\alpha)$-quantile of $T$ from \Cref{procedure:bootstrap:Gaussian}.
Define $\Sigmab_\text{pooled}$ analogously as in \eqref{eq:varest}, and let
$\Omegab^* = \Sigmab_\psi^{-1}$. Recall that the $k$th column of
$\Omegab^*$ is $\omegab_k^*$.
For $\lambda_\theta, (\lambda_k)_{k=1}^p, \delta_\theta, (\delta_k)_{k=1}^p \in [0,1)$, define an event
\begin{multline*}
\Ecal_\text{all}
= \Ecal_\text{all}(\lambda_\theta, (\lambda_k)_{k=1}^p, \delta_\theta, (\delta_k)_{k=1}^p) =\\
\cbr{
\begin{array}{r c r c}
\textnormal{(G.1)} & 2 \|\nabla \lKLIEP(\thetab^*)\|_* \leq \lambda_\theta,&
\textnormal{(G.2)} & 2 \|\nabla^2 \lKLIEP(\thetab^*) \omegab_k^* - \eb_k\|_* \leq \lambda_k\ \forall\, k,\\
\textnormal{(E.1)} & \|\check\thetab - \thetab^*\| \leq \delta_\theta,&
\textnormal{(E.2)} & \|\check\omegab_k - \omegab_k^*\| \leq \delta_k\ \forall\, k,\\
\textnormal{(B.1)} & \abr{1 - \frac{\hat Z_y(\thetab^*)}{Z_y(\thetab^*)}} \lesssim \lambda_\theta,&
\textnormal{(B.2)} & \abr{\frac{1}{n_y} \sum_{j=1}^{n_y} \langle \omegab_k^*, \mub_\psi - \psib(\yb^{(j)}) \rangle \, r_{\theta^*}(\yb^{(j)})} \lesssim \lambda_k\ \forall\, k
\end{array}}.
\end{multline*}
Put $\nu_n = 1 \vee \max\{\|\omegab_k^*\| : k = 1, \dots, p\}$, and set
\begin{equation*}
B_n = \frac{(1 \vee \bar\kappa)^3 (1 \vee M_\psi)^3 M_r^3 \nu_n^{21/2}}{\sqrt{\underline\kappa^3 \eta_{x,n} \eta_{y,n}}}
\quad\text{and}\quad
\delta_n = \rbr{\frac{B_n^2 \log^7 (pn)}{n}}^{1/6}.
\end{equation*}

\begin{theorem}
\label{thm:bootstrap}
Assume Conditions 1 and 2.  Let $\hat\thetab$ be the estimator
constructed by \Cref{procedure:KLIEP+} with one-step approximation as
\begin{equation*}
\hat\thetab = \check\thetab - \check\Omegab^\top \nabla\lKLIEP(\check\thetab),
\end{equation*}
where $\check\Omegab = [\check\omegab_k]_{k=1}^p \in \RR^{p \times p}$
is the matrix with the $k$th column given by $\check\omegab_k$.
Suppose
\begin{align*}
D_1 & :=
\max_k \sqrt{\frac{\eta_{x,n} \eta_{y,n}}{\underline\kappa / \bar\kappa^2}} \rbr{(\delta_\theta+\lambda_\theta) (\delta_k+\lambda_k) + \|\omegab_k^*\| \delta_\theta^2} \sqrt{n}
\lesssim
\rbr{\frac{B_n^2 \log^4 (pn)}{n}}^{1/6},\\
D_2 & :=
\max_k \frac{\underline\kappa / \bar\kappa^2}{\eta_{x,n}^2 \eta_{y,n}^2} \rbr{\delta_k^2 + \eta_{x,n} \|\omegab_k\|^2 \rbr{\delta_\theta + \lambda_\theta}^2}
\lesssim
\rbr{\frac{B_{n}^2 \log (pn)}{n}}^{1/3}.
\end{align*}
If $\PP(\Ecal_\text{all}) \geq 1 - \varepsilon_{\text{all},n}$, then
\begin{equation}
\label{eq:bootstrap}
\sup_{\alpha \in (0,1)} \abr{\PP\cbr{T \leq \hat c_{T, \alpha}} - (1 - \alpha)}
= O(\delta_n + \varepsilon_{\text{all},n})
\end{equation}
with probability at least $1 - \varepsilon_{\text{all},n} - n^{-1}$.
\end{theorem}

The proof is in Appendix~\ref{supp:proofs:thm:bootstrap}. The bulk of hard work
  was done in establishing a linear approximation to
  $\sqrt{n} \, (\hat\theta_k - \theta_k^*)$ in the proof of
  \Cref{thm:main}. \Cref{thm:bootstrap} follows by showing that the
  error in the linear approximation can be controlled, allowing for
  application of results in \citet{Belloni2018High}. Due to the nonlinearity
  of $\lKLIEP$ \eqref{eq:lKLIEP} and the fact that we are using a two sample estimator,
  the detailed calculations are rather complicated.

As an application of \Cref{thm:bootstrap}, we evaluate the bound in \eqref{eq:bootstrap} in the case of SparKLIE+1 with $s_\theta = s_{\theta, 0} = \|\thetab^*\|_0$ and $s_k = s_{k, 0} = \|\omegab_k^*\|_0$.

\begin{theorem}
\label{cor:bootstrap}
Assume \Cref{cond:bdr} with $\ell_1$-norm and \Cref{cond:bddpopeigs}.
Suppose $T = \max_k \sqrt{n} \, |\hat\theta_k - \theta_k^*|$, where $\hat\thetab$ is the SparKLIE+1 estimator with tuning parameters
\begin{eqnarray*}
\lambda_\theta \asymp \rbr{\frac{\log p}{n}}^{1/2}
&\text{and}&
\lambda_k \asymp \rbr{\frac{s_{k, 0} \log p}{n}}^{1/2}, \quad k = 1, \dots, p.
\end{eqnarray*}
Let $s$ be a sequence of integers satisfying $s \geq s_{\theta, 0}, s_{k, 0}$, $k = 1, \dots, p$.
Let $\varepsilon_{\text{RSC},n}$ be a sequence in $(0,1)$ decreasing to 0.
Then, subject to an additional condition on $n_y$ detailed in Appendix~\ref{supp:l1proofs:cor:bootstrap},
\begin{equation*}
\sup_{\alpha \in (0,1)} \abr{\PP\cbr{T \leq \hat c_{T, \alpha}} - (1 - \alpha)}
= O(\delta_n + \varepsilon_{\text{RSC},n} + c\exp\rbr{-c' \log p})
\end{equation*}
with probability at least $1 - \varepsilon_{\text{RSC},n} - c\exp\rbr{-c' \log p} - n^{-1}$, where $c, c' > 0$ are constants that do not depend on $n$, $p$, $s_{\theta, 0}$ or $s_{k, 0}$.
\end{theorem}


\section{\label{sec:simulations}Simulation studies}

Through extensive simulations, we illustrate the finite-sample
performance of our methods: SparKLIE+ (\Cref{sec:simulations:1}) and
empirical bootstrap sketching (\Cref{sec:simulations:2}).

\subsection{\label{sec:simulations:1}Inference for a single edge via Gaussian approximation}

In Experiments~1 and 2, we look at the performance of statistical inference procedures based on Gaussian approximation when an edge has been fixed as a target of inferential interest.

\noindent \emph{Experiment~1.}
We check the coverage of the 95\% CI $\hat\theta_k \pm z_{0.975} \hat\sigma_k / \sqrt{n}$, where $k$ is a fixed edge of interest and $z_{0.975}$ is the 0.975-quantile of $\Ncal(0,1)$.
Here, SparKLIE+1 and +2 are compared with two other procedures: an oracle procedure with the knowledge of $\supp(\thetab^*)$ and a na\"{i}ve re-estimation procedure that re-fits the model based on the estimated support $\supp(\check\thetab)$, where $\check\thetab$ is a sparse KLIEP estimate.
See Appendix~\ref{supp:sec5:competing} for precise definitions.

The results below were obtained using autoscaling procedures for both initial estimation steps.
For each step, we used the canonical penalty level, which was $\lambda_{\theta 0} = 1.01 \Phi^{-1}(1 - 0.05 / p)$ for Step 1 and $\lambda_{0} = \sqrt{2 \log p / n_y}$ for Step 2.
However, we remark that even with the vanilla sparse KLIEP procedure \eqref{eq:spKLIEP} in Step 1, we have found the performance of \Cref{procedure:KLIEP+} to be robust to the choice of $\lambda_\theta$.
See \Cref{remark:alternatives,remark:lambda}, as well as Appendix~\ref{supp:exper5}.

The data are pairs of samples of i.i.d.~observations from a pair of Ising models $\gammab_x$ and $\gammab_y$. Eight pairs of $\gammab_x$ and $\gammab_y$ are compared, arising from all possible combinations of the number of nodes ($m = 25$ or $50$), the topology of $\gammab_x$ (a chain or a ternary tree), and two choices of $\thetab^*$ from which $\gamma_y = \gamma_x - \thetab^*$ is obtained. Each differential network has five nonzero edges, one of which has been fixed as the target of inference. For illustration, see Figures~\ref{fig:chain1} -- \ref{fig:tree2} in Appendix~\ref{supp:sec5:design:exper1}.

\begin{table}
  \label{table:exper1}
  \caption{Empirical coverage of the 95\% CI
    $\hat\theta_k \pm z_{0.975} \hat\sigma_k / \sqrt{n}$, where $k$ is
    a fixed edge of interest and $z_{0.975}$ is the 0.975-quantile of
    $\Ncal(0,1)$, of SparKLIE+1 and +2 estimators compared with the
    oracle and a na\"{i}ve re-fitted estimators. The results are
    averages over 1000 independent replications.}
\centering
\fbox{%
\begin{tabular}{c *{4}{c} *{4}{c}}
\multirow{2}{*}{$\gammab_x$} & \multirow{2}{*}{$\gammab_y$} & \multirow{2}{*}{$m$} & \multirow{2}{*}{$n_x$} & \multirow{2}{*}{$n_y$} & \multirow{2}{*}{oracle} & \multirow{2}{*}{na\"{i}ve} & \multicolumn{2}{c}{SparKLIE}\\
& & & & & & & {+1} & {+2}\\
\hline
\multirow{4}{*}{chain}
& \multirow{2}{*}{1} & 25 & 150 & 300 & 0.960 & 0.850 & 0.934 & 0.945\\
& & 50 & 300 & 600 & 0.946 & 0.822 & 0.943 & 0.948\\
& \multirow{2}{*}{2} & 25 & 150 & 300 & 0.962 & 0.907 & 0.948 & 0.948\\
& & 50 & 300 & 600 & 0.962 & 0.839 & 0.953 & 0.955\\
\multirow{4}{*}{\shortstack{ternary \\ tree}}
& \multirow{2}{*}{1} & 25 & 150 & 300 & 0.972 & 0.925 & 0.932 & 0.958\\
& & 50 & 300 & 600 & 0.976 & 0.874 & 0.973 & 0.979\\
& \multirow{2}{*}{2} & 25 & 150 & 300 & 0.972 & 0.946 & 0.957 & 0.977\\
& & 50 & 300 & 600 & 0.968 & 0.913 & 0.952 & 0.977
\end{tabular}}
\end{table}

Table~1 gives the proportions of successful coverage out of 1000 independent replications at the nominal confidence level of 95\%. In spite of the small sample sizes, the coverage of 95\% CIs based on either of the two SparKLIE+ estimators are close to the nominal level, and on par with the performance of the oracle procedure across all the data generating processes considered. By contrast, we see that the na\"{i}ve re-fitted estimator can undercover by as much as $\approx 13\%$.

In Appendix~\ref{supp:sec5:res:exper1}, we further provide normal Q-Q plots (Figures~\ref{fig:qqplots_chain1} -- \ref{fig:qqplots_tree2}) and empirical estimates of the biases (Table~\ref{table:bias}) for the four estimators.
These reveal that the inferior performance of the na\"{i}ve re-fitted estimator can be attributed to the larger bias.

In \emph{Experiment~2} in Appendix~\ref{supp:exper2}, we study the power SparKLIE+1 and +2 for testing the null hypothesis $\Hcal_0: \theta_k^* = 0$, where $k$ is a fixed edge of interest.

\subsection{\label{sec:simulations:2}Global inference with empirical bootstrap quantile estimates}

In Experiments~3 and 4, we look at the performance of \Cref{procedure:bootstrap:empirical} for making inferences about the entire differential network $\thetab^*$.

\noindent \emph{Experiment~3.}
We check that \Cref{procedure:bootstrap:empirical} produces consistent estimates of the quantiles $c_{T, 1-\alpha}$ of $T = \max_k \sqrt{n} \, |\hat\theta_k - \theta_k^*|$.
Here, we focus on the setting $\gammab = \gammab_x = \gammab_y$, i.e., $\thetab^* = \zero$. We generate a pair of samples of the same size $n_x = n_y = 500$ from the same Ising model with the parameter $\gammab$. The parameter $\gammab$ was generated as a disjoint union of $m / 5$ chains of length 5 for $m \in \cbr{25, 50, 100}$. The nonzero edge weights were drawn i.i.d.~from one of the three distributions: $\text{sgn} = 1$, $\Unif (0.2,0.4)$; $\text{sgn} = -1$, $\Unif (-0.4,-0.2)$; or $\text{sgn} = 0$, $\Unif (-0.4,-0.2) \cup (0.2,0.4)$.

For each draw of samples from $\gammab_x$ and $\gammab_y$, we use \Cref{procedure:bootstrap:empirical} with $n_b = 1000$ bootstrap replicates to estimate $\hat c_{T, 1-\alpha}$, and record $\ind\{T \leq \hat c_{T, 1-\alpha}\}$ for each $1-\alpha = 0.05, \dots, 0.95$.
Then, the results are averaged across 1000 independent draws of the pair of samples.
If \Cref{procedure:bootstrap:empirical} is consistent, $\ind\{T \leq \hat c_{T, 1-\alpha}\} \approx \ind\{T \leq c_{T, 1-\alpha}\}$, and hence the average over independent replicates would be close to $1-\alpha$.
This is indeed what we see in Figure~1.

\begin{figure}[!t]
  \label{fig:exper3}
  \caption{Consistency of the quantile estimates $\hat c_{T, 1-\alpha}$ from \Cref{procedure:bootstrap:empirical} in nine different settings, corresponding to all possible combinations of the number of nodes $m = 25$, $50$, or $100$ and the distribution of edge parameters $\text{sgn} = -1$, $0$, or $1$, where $\text{sgn} = 1$, $\Unif (0.2,0.4)$; $\text{sgn} = -1$, $\Unif (-0.4,-0.2)$; or $\text{sgn} = 0$, $\Unif (-0.4,-0.2) \cup (0.2,0.4)$. The blue line with $\bullet$ indicates SparKLIE+1. The orange line with $\blacktriangledown$ indicates SparKLIE+2. The $45^\circ$ line marks perfect calibration.}
\centering
\includegraphics[width=\linewidth]{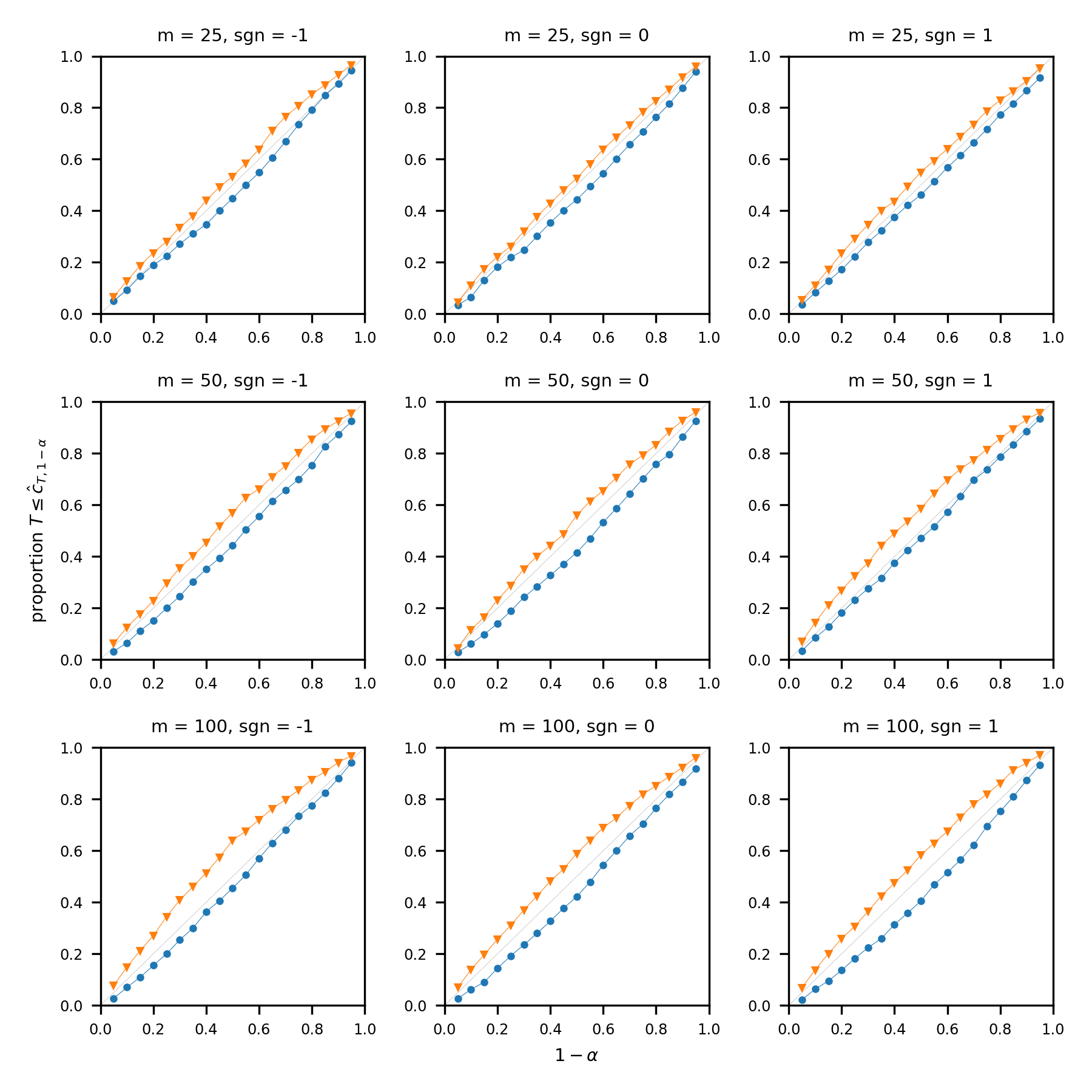}
\end{figure}

In \emph{Experiment~4} in Appendix~\ref{supp:exper4}, we study the power of the level-$\alpha$ test obtained by inverting the simultaneous confidence region $\hat\theta_k \pm \hat c_{T, 1-\alpha} / \sqrt{n}$ for testing the null hypothesis $\Hcal_0: \theta_k^* = 0$ for all $k$.


\section{\label{sec:real_data}Real data example: Alertness and motor control, an fMRI study}

We apply \Cref{procedure:KLIEP+} and
\Cref{procedure:bootstrap:empirical} to analyze a new fMRI dataset,
made available courtesy of Dr.~Jade Thai and Dr.~Christelle Langley at
the University of Bristol.
The dataset comes from a pilot study involving a multiple sclerosis
subject (MS) and a healthy control (HC) with the purpose of exploring
the relationship between alertness and motor control.  It consists of
two time series, one for each participant of the study, of fMRI
measurements at 0.906 second intervals from 116 regions of interest
(ROI) in the brain.  We further restrict to $m = 25$ ROIs
pre-specified by the neuroscientists.  The measurements were taken
while the participants were performing one of three types of tasks: a
sensorimotor task (T1), an intrinsic alertness task (T2), and an
extrinsic alertness task (T3).  For details concerning the study
design and data post-processing, see Appendix~\ref{supp:sec6}.

We model the fMRI measurements as independent observations from six
Gaussian graphical models, where the groups are given by the disease
status and the task type.  For example, the measurements collected
while the HC subject performed T1 are modeled as
\begin{equation*}
f_\text{HC,T1}(\xb) = \det(\Gb_\text{HC,T1} / (2 \pi))^{1/2} \exp\left( -(\xb - \mu_\text{HC,T1})^\top \Gb_\text{HC,T1} (\xb -  \mu_\text{HC,T1}) / 2\right).
\end{equation*}
Since we are interested in the difference in the graph structure, we
work with the data after centering by the group means.  The sample
sizes are given in Table~\ref{table:samplesizes}.

\begin{table}
\label{table:samplesizes}
\caption{Sample sizes by group}
\centering
\fbox{%
\begin{tabular}{c*{3} {c}}
& T1 & T2 & T3\\
\hline
HC & 342 & 300 & 306\\
MS & 342 & 300 & 311
\end{tabular}}
\end{table}

For either the HC or the MS subject, we study the pairwise differences
for the tasks.  Specifically, we would like to obtain, with FWER
control at $0.05$, six differential networks:
\begin{align*}
\Deltab_1^* &= \Gb_\text{HC, T1} - \Gb_\text{HC, T2}, &
\Deltab_2^* &= \Gb_\text{HC, T1} - \Gb_\text{HC, T3}, &
\Deltab_3^* &= \Gb_\text{HC, T2} - \Gb_\text{HC, T3}, \\
\Deltab_4^* &= \Gb_\text{MS, T1} - \Gb_\text{MS, T2}, &
\Deltab_5^* &= \Gb_\text{MS, T1} - \Gb_\text{MS, T3}, &
\Deltab_6^* &= \Gb_\text{MS, T2} - \Gb_\text{MS, T3}.
\end{align*}
The six differential networks $\hat\Deltab_g^*$, $g = 1, \dots, 6$, were estimated using \Cref{procedure:KLIEP+} with the autoscaling formulations for Steps 1 and 2 with the universal penalty levels as explained in Remark~\ref{remark:lambda} in Section~\ref{sec:methodology:single}.
The test statistic $T_0 = \max_{g = 1, \dots, 6} \max_{1 \leq u \leq v \leq 25} |\hat\Delta_{g, uv}|$ was used to test the null hypothesis $\Hcal_0: \Deltab_g^* = \zero$ for all $g = 1, \dots, 6$ at level 0.05 based on the rejection threshold $\hat c_{T_0, 0.05}$ obtained from \Cref{procedure:bootstrap:empirical}.
The test found \emph{no edges} to be statistically significant.
However, the conclusion is based on a pilot study from two individuals, and more data are needed.


\section{\label{sec:discussions}Discussion}

We have developed novel methods for making statistically valid comparisons of general Markov networks based on i.i.d.~observations from each. To our knowledge, this is the first work that allows one to conduct provably-valid inference using a direct estimate of the network difference for general Markov networks in high-dimensional settings. This means that our methods can deal with dense networks as long as their \emph{difference} is sparse. Also, our framework can easily handle non-Gaussian data. Furthermore, our theory does not require the conditions that are typically necessary to guarantee consistent support recovery, increasing applicability of our proposal. In addition, we develop the bootstrap sketching procedures to estimate the quantiles of extreme statistics accurately and in a computationally efficient manner even at large $p$.

As remarked by a reviewer, it is natural to ask whether it is possible to use other divergences to derive similar procedures. For closely-related varieties, such as the reverse and the symmetric KL, the answer is clearly yes. For arbitrary divergences, however, exact analogues may not exist. The derivation of KLIEP uses more than just the properties of a divergence. Indeed, the logarithm in KL plays an essential role in linearizing the ratio $f_x / (r_\theta f_y)$, yielding a population-level loss that involves expectations of only known functions of $\thetab$. In addition the loss is convex in $\thetab$, leading to a computationally attractive procedure. Using other divergences to measure discrepancy between $f_x$ and $r_\theta f_y$ would, to the best of our knowledge, lead to an estimator that is not convex in $\thetab$. Establishing statistical properties of such an estimator is beyond the scope of this paper.

It can be checked that the special case of the reverse KL reduces to KLIEP with the role of $f_x$ and $f_y$ swapped. The effect of switching the samples was discussed in \Cref{remark:asymmetry} in \Cref{sec:theory:main}. The symmetric KL leads to a procedure that minimizes the sum of the KLIEP and the reversed KLIEP loss functions. The theory developed in this paper extends in an obvious way to the symmetrized procedure. This means that the conditions that were previously imposed on only one of $f_x$ and $f_y$ now need to hold for both, reducing the applicability of our methods. Moreover, although the change is not expected to alter the order of error bounds, the constants are expected to be larger, and this is likely to result in a more brittle approximation at the same sample sizes, as corroborated by empirical evidence (Appendix~\ref{supp:exper5}).

In addition to the approach followed in this paper, where the density ratio is estimated by minimizing the divergence between one density and the product of the density ratio and another density, alternative approaches have been considered in the literature. For example, \citet{Nguyen2010Estimating} estimate the density ratio by maximizing a lower bound on an $f$-divergence. \citet{Kanamori2009least} estimate a density ratio by minimizing a squared loss between the true density ratio and the model of a density ratio. Developing inferential results for the parameters of differential networks obtained by such approaches is an interesting topic for future research.

Although we never place explicit assumptions on the form of dependence, some restraint is necessary in practice for good performance. This can already be seen from the results in \Cref{sec:theory}: the bounds deteriorate rapidly as $p$ increases to accommodate higher order dependencies. This is why we chose to focus exclusively on \emph{pairwise} models in our simulations and real data analysis. It is of future interest to develop an efficient search procedure to include only the relevant higher-order terms.

Finally, although it is a huge advantage of our methods that they can be used to compare general Markov networks, it may be possible to obtain more sample efficient procedures for particular models by utilizing distribution-specific properties. For example, it is of interest to develop inferential procedures for the network difference of Gaussian or Gaussian copula models.

\section*{Acknowledgments}

This work was completed in part with resources provided by the University of Chicago Research Computing Center. The fMRI dataset collection is funded by MS Research Treatment and Education whose Registered Charity Number is 1043280.

\bibliographystyle{my-plainnat}
\bibpunct{(}{)}{,}{a}{,}{,}
\bibliography{paper}

\begin{thebibliography}{77}
\providecommand{\natexlab}[1]{#1}
\providecommand{\url}[1]{\texttt{#1}}
\expandafter\ifx\csname urlstyle\endcsname\relax
  \providecommand{\doi}[1]{doi: #1}\else
  \providecommand{\doi}{doi: \begingroup \urlstyle{rm}\Url}\fi

\bibitem[Banerjee et~al.(2008)Banerjee, {El Ghaoui}, and
  d'Aspremont]{Banerjee2008Model}
O.~Banerjee, L.~{El Ghaoui}, and A.~d'Aspremont.
\newblock Model selection through sparse maximum likelihood estimation.
\newblock \emph{J. Mach. Learn. Res.}, 9\penalty0 (3):\penalty0 485--516, 2008.

\bibitem[Barber and Kolar(2018)]{Barber2015ROCKET}
R.~F. Barber and M.~Kolar.
\newblock Rocket: Robust confidence intervals via kendall's tau for
  transelliptical graphical models.
\newblock \emph{Ann. Statist.}, 46\penalty0 (6B):\penalty0 3422--3450, 2018.

\bibitem[Belilovsky et~al.(2016)Belilovsky, Varoquaux, and
  Blaschko]{Belilovsky2016Testing}
E.~Belilovsky, G.~Varoquaux, and M.~B. Blaschko.
\newblock Testing for differences in gaussian graphical models: Applications to
  brain connectivity.
\newblock In D.~D. Lee, M.~Sugiyama, U.~V. Luxburg, I.~Guyon, and R.~Garnett,
  editors, \emph{Advances in Neural Information Processing Systems 29}, pages
  595--603. Curran Associates, Inc., 2016.

\bibitem[Belloni et~al.(2011)Belloni, Chernozhukov, and
  Wang]{Belloni2011Square}
A.~Belloni, V.~Chernozhukov, and L.~Wang.
\newblock Square-root lasso: pivotal recovery of sparse signals via conic
  programming.
\newblock \emph{Biometrika}, 98\penalty0 (4):\penalty0 791--806, 2011.

\bibitem[Belloni and Chernozhukov(2013)]{Belloni2013Least}
A.~Belloni and V.~Chernozhukov.
\newblock Least squares after model selection in high-dimensional sparse
  models.
\newblock \emph{Bernoulli}, 19\penalty0 (2):\penalty0 521--547, 2013.

\bibitem[Belloni et~al.(2013)Belloni, Chernozhukov, and
  Hansen]{Belloni2012Inference}
A.~Belloni, V.~Chernozhukov, and C.~B. Hansen.
\newblock Inference on treatment effects after selection amongst
  high-dimensional controls.
\newblock \emph{Rev. Econ. Stud.}, 81\penalty0 (2):\penalty0 608--650, 2013.

\bibitem[Belloni et~al.(2014)Belloni, Chernozhukov, and
  Wang]{Belloni2014Pivotal}
A.~Belloni, V.~Chernozhukov, and L.~Wang.
\newblock {Pivotal estimation via square-root Lasso in nonparametric
  regression}.
\newblock \emph{The Annals of Statistics}, 42\penalty0 (2):\penalty0 757 --
  788, 2014.

\bibitem[Belloni et~al.(2016)Belloni, Chernozhukov, and Wei]{Belloni2013Honest}
A.~Belloni, V.~Chernozhukov, and Y.~Wei.
\newblock Post-selection inference for generalized linear models with many
  controls.
\newblock \emph{J. Bus. Econom. Statist.}, 34\penalty0 (4):\penalty0 606--619,
  2016.

\bibitem[Belloni et~al.(2018)Belloni, Chernozhukov, Chetverikov, Hansen, and
  Kato]{Belloni2018High}
A.~Belloni, V.~Chernozhukov, D.~Chetverikov, C.~Hansen, and K.~Kato.
\newblock High-dimensional econometrics and regularized {GMM}.
\newblock \emph{arxiv: 1806.01888}, 2018.

\bibitem[Belloni et~al.(2019)Belloni, Kaul, and Rosenbaum]{Belloni2019Pivotal}
A.~Belloni, A.~Kaul, and M.~Rosenbaum.
\newblock Pivotal estimation via self-normalization for high-dimensional linear
  models with error in variables, 2019.

\bibitem[Cai et~al.(2019)Cai, Li, Ma, and Xia]{Cai2019Differential}
T.~T. Cai, H.~Li, J.~Ma, and Y.~Xia.
\newblock Differential {M}arkov random field analysis with an application to
  detecting differential microbial community networks.
\newblock \emph{Biometrika}, 106\penalty0 (2):\penalty0 401--416, 2019.

\bibitem[Cai et~al.(2011)Cai, Liu, and Luo]{Cai2011Constrained}
T.~T. Cai, W.~Liu, and X.~Luo.
\newblock A constrained $\ell_1$ minimization approach to sparse precision
  matrix estimation.
\newblock \emph{J. Am. Stat. Assoc.}, 106\penalty0 (494):\penalty0 594--607,
  2011.

\bibitem[Chen et~al.(2011)Chen, Goldstein, and Shao]{Chen2011Normal}
L.~H.~Y. Chen, L.~Goldstein, and Q.-M. Shao.
\newblock \emph{Normal approximation by {S}tein's method}.
\newblock Probability and its Applications (New York). Springer, Heidelberg,
  2011.

\bibitem[Chen(2018)]{Chen2018Gaussian}
X.~Chen.
\newblock Gaussian and bootstrap approximations for high-dimensional
  {U}-statistics and their applications.
\newblock \emph{Ann. Statist.}, 46\penalty0 (2):\penalty0 642--678, 2018.

\bibitem[Chernozhukov et~al.(2013)Chernozhukov, Chetverikov, and
  Kato]{Chernozhukov2013Gaussian}
V.~Chernozhukov, D.~Chetverikov, and K.~Kato.
\newblock Gaussian approximations and multiplier bootstrap for maxima of sums
  of high-dimensional random vectors.
\newblock \emph{Ann. Stat.}, 41\penalty0 (6):\penalty0 2786--2819, 2013.

\bibitem[Chernozhukov et~al.(2015{\natexlab{a}})Chernozhukov, Chetverikov, and
  Kato]{Chernozhukov2015Comparison}
V.~Chernozhukov, D.~Chetverikov, and K.~Kato.
\newblock Comparison and anti-concentration bounds for maxima of {G}aussian
  random vectors.
\newblock \emph{Probab. Theory Related Fields}, 162\penalty0 (1-2):\penalty0
  47--70, 2015{\natexlab{a}}.

\bibitem[Chernozhukov et~al.(2015{\natexlab{b}})Chernozhukov, Hansen, and
  Spindler]{Chernozhukov2015Valid}
V.~Chernozhukov, C.~Hansen, and M.~Spindler.
\newblock Valid post-selection and post-regularization inference: An
  elementary, general approach.
\newblock \emph{Annual Review of Economics}, 7\penalty0 (1):\penalty0 649--688,
  2015{\natexlab{b}}.

\bibitem[Chernozhukov et~al.(2017)Chernozhukov, Chetverikov, and
  Kato]{Chernozhukov2017Central}
V.~Chernozhukov, D.~Chetverikov, and K.~Kato.
\newblock Central limit theorems and bootstrap in high dimensions.
\newblock \emph{Ann. Probab.}, 45\penalty0 (4):\penalty0 2309--2352, 2017.

\bibitem[Chiquet et~al.(2011)Chiquet, Grandvalet, and
  Ambroise]{Chiquet2011Inferring}
J.~Chiquet, Y.~Grandvalet, and C.~Ambroise.
\newblock Inferring multiple graphical structures.
\newblock \emph{Stat. Comput.}, 21\penalty0 (4):\penalty0 537--553, 2011.

\bibitem[{Danaher} et~al.(2014){Danaher}, {Wang}, and Witten]{Danaher2011Joint}
P.~{Danaher}, P.~{Wang}, and D.~M. Witten.
\newblock The joint graphical lasso for inverse covariance estimation across
  multiple classes.
\newblock \emph{J. R. Stat. Soc. B}, 76\penalty0 (2):\penalty0 373--397, 2014.

\bibitem[de~la Pe{\~{n}}a et~al.(2009)de~la Pe{\~{n}}a, Lai, and
  Shao]{delaPena2009Self}
V.~H. de~la Pe{\~{n}}a, T.~L. Lai, and Q.-M. Shao.
\newblock \emph{Self-Normalized Processes}.
\newblock Springer Berlin Heidelberg, 2009.

\bibitem[Deng and Zhang(2020)]{Deng2020Beyond}
H.~Deng and C.-H. Zhang.
\newblock {Beyond Gaussian approximation: Bootstrap for maxima of sums of
  independent random vectors}.
\newblock \emph{The Annals of Statistics}, 48\penalty0 (6):\penalty0 3643 --
  3671, 2020.

\bibitem[Dezeure et~al.(2017)Dezeure, B\"uhlmann, and Zhang]{Dezeure2017High}
R.~Dezeure, P.~B\"uhlmann, and C.-H. Zhang.
\newblock High-dimensional simultaneous inference with the bootstrap.
\newblock \emph{TEST}, 26\penalty0 (4):\penalty0 685--719, 2017.

\bibitem[Dobra et~al.(2004)Dobra, Hans, Jones, Nevins, Yao, and
  West]{dobra2004sparse}
A.~Dobra, C.~Hans, B.~Jones, J.~R. Nevins, G.~Yao, and M.~West.
\newblock Sparse graphical models for exploring gene expression data.
\newblock \emph{J. Multivariate Anal.}, 90\penalty0 (1):\penalty0 196--212,
  2004.

\bibitem[Drton and Maathuis(2017)]{Drton2017Structure}
M.~Drton and M.~H. Maathuis.
\newblock Structure learning in graphical modeling.
\newblock \emph{Annual Review of Statistics and Its Application}, 4\penalty0
  (1):\penalty0 365--393, 2017, \href
  {http://arxiv.org/abs/https://doi.org/10.1146/annurev-statistics-060116-053803}
  {\ttfamily arXiv:https://doi.org/10.1146/annurev-statistics-060116-053803}.

\bibitem[Fazayeli and Banerjee(2016)]{Fazayeli2016Generalized}
F.~Fazayeli and A.~Banerjee.
\newblock Generalized direct change estimation in {I}sing model structure.
\newblock In M.~F. Balcan and K.~Q. Weinberger, editors, \emph{Proceedings of
  The 33rd International Conference on Machine Learning}, volume~48 of
  \emph{Proceedings of Machine Learning Research}, pages 2281--2290, New York,
  New York, USA, 2016. PMLR.

\bibitem[Friedman et~al.(2008)Friedman, Hastie, and
  Tibshirani]{Friedman2008Sparse}
J.~H. Friedman, T.~J. Hastie, and R.~J. Tibshirani.
\newblock Sparse inverse covariance estimation with the graphical lasso.
\newblock \emph{Biostatistics}, 9\penalty0 (3):\penalty0 432--441, 2008.

\bibitem[{Geman} and {Geman}(1984)]{Geman1984Stochastic}
S.~{Geman} and D.~{Geman}.
\newblock Stochastic relaxation, gibbs distributions, and the bayesian
  restoration of images.
\newblock \emph{IEEE Transactions on Pattern Analysis and Machine
  Intelligence}, PAMI-6\penalty0 (6):\penalty0 721--741, 1984.

\bibitem[Guo et~al.(2011)Guo, Levina, Michailidis, and Zhu]{Guo2011Joint}
J.~Guo, E.~Levina, G.~Michailidis, and J.~Zhu.
\newblock Joint estimation of multiple graphical models.
\newblock \emph{Biometrika}, 98\penalty0 (1):\penalty0 1--15, 2011.

\bibitem[Hammersley and Clifford(1971)]{Hammersley1971Markov}
J.~M. Hammersley and P.~Clifford.
\newblock Markov fields on finite graphs and lattices.
\newblock \emph{Unpublished manuscript}, 46, 1971.

\bibitem[Hartemink et~al.(2001)Hartemink, Gifford, Jaakkola, and
  Young]{Hartemink2001Using}
A.~J. Hartemink, D.~K. Gifford, T.~S. Jaakkola, and R.~A. Young.
\newblock Using graphical models and genomic expression data to statistically
  validate models of genetic regulatory networks.
\newblock In R.~B. Altman, A.~K. Dunker, L.~Hunter, and T.~E. Klein, editors,
  \emph{Proceedings of the 6th Pacific Symposium on Biocomputing, {PSB} 2001,
  Hawaii, USA, January 3-7, 2001}, pages 422--433, 2001.

\bibitem[Jankov\'{a} and van~de Geer(2015)]{Jankova2014Confidence}
J.~Jankov\'{a} and S.~van~de Geer.
\newblock Confidence intervals for high-dimensional inverse covariance
  estimation.
\newblock \emph{Electron. J. Stat.}, 9\penalty0 (1):\penalty0 1205--1229, 2015.

\bibitem[Jankov\'a and van~de Geer(2017)]{Jankova2017Honest}
J.~Jankov\'a and S.~A. van~de Geer.
\newblock Honest confidence regions and optimality in high-dimensional
  precision matrix estimation.
\newblock \emph{TEST}, 26\penalty0 (1):\penalty0 143--162, 2017.

\bibitem[Javanmard and Montanari(2014)]{Javanmard2013Confidence}
A.~Javanmard and A.~Montanari.
\newblock Confidence intervals and hypothesis testing for high-dimensional
  regression.
\newblock \emph{J. Mach. Learn. Res.}, 15\penalty0 (Oct):\penalty0 2869--2909,
  2014.

\bibitem[Jing et~al.(2003)Jing, Shao, and Wang]{Jing2003Self}
B.-Y. Jing, Q.-M. Shao, and Q.~Wang.
\newblock {Self-normalized Cramér-type large deviations for independent random
  variables}.
\newblock \emph{The Annals of Probability}, 31\penalty0 (4):\penalty0 2167 --
  2215, 2003.

\bibitem[Kanamori et~al.(2009)Kanamori, Hido, and Sugiyama]{Kanamori2009least}
T.~Kanamori, S.~Hido, and M.~Sugiyama.
\newblock A least-squares approach to direct importance estimation.
\newblock \emph{J. Mach. Learn. Res.}, 10:\penalty0 1391--1445, 2009.

\bibitem[Koller and Friedman(2009)]{Koller2009Probabilistic}
D.~Koller and N.~Friedman.
\newblock \emph{Probabilistic graphical models: principles and techniques}.
\newblock MIT press, 2009.

\bibitem[Lauritzen(1996)]{Lauritzen1996Graphical}
S.~L. Lauritzen.
\newblock \emph{Graphical Models}, volume~17 of \emph{Oxford Statistical
  Science Series}.
\newblock The Clarendon Press Oxford University Press, New York, 1996.
\newblock Oxford Science Publications.

\bibitem[Liu et~al.(2014)Liu, Quinn, Gutmann, Suzuki, and
  Sugiyama]{Liu2014Direct}
S.~Liu, J.~A. Quinn, M.~U. Gutmann, T.~Suzuki, and M.~Sugiyama.
\newblock Direct learning of sparse changes in {M}arkov networks by density
  ratio estimation.
\newblock \emph{Neural Comput.}, 26\penalty0 (6):\penalty0 1169--1197, 2014.

\bibitem[Liu et~al.(2017)Liu, Suzuki, Relator, Sese, Sugiyama, and
  Fukumizu]{Liu2017Support}
S.~Liu, T.~Suzuki, R.~Relator, J.~Sese, M.~Sugiyama, and K.~Fukumizu.
\newblock Support consistency of direct sparse-change learning in {M}arkov
  networks.
\newblock \emph{Ann. Statist.}, 45\penalty0 (3):\penalty0 959--990, 2017.

\bibitem[Liu(2017)]{Liu2017Structural}
W.~Liu.
\newblock Structural similarity and difference testing on multiple sparse
  {G}aussian graphical models.
\newblock \emph{Ann. Statist.}, 45\penalty0 (6):\penalty0 2680--2707, 2017.

\bibitem[Loh and Wainwright(2013)]{Loh2013Structure}
P.-L. Loh and M.~J. Wainwright.
\newblock Structure estimation for discrete graphical models: generalized
  covariance matrices and their inverses.
\newblock \emph{Ann. Stat.}, 41\penalty0 (6):\penalty0 3022--3049, 2013.

\bibitem[Lu et~al.(2018)Lu, Kolar, and Liu]{Lu2015Posta}
J.~Lu, M.~Kolar, and H.~Liu.
\newblock Post-regularization inference for time-varying nonparanormal
  graphical models.
\newblock \emph{Journal of Machine Learning Research}, 18\penalty0
  (203):\penalty0 1--78, 2018.

\bibitem[Ma et~al.(2017)Ma, Lu, and Liu]{Ma2017Intersubject}
C.~Ma, J.~Lu, and H.~Liu.
\newblock Inter-subject analysis: Inferring sparse interactions with dense
  intra-graphs, 2017.

\bibitem[Ma and Michailidis(2016)]{Ma2016Joint}
J.~Ma and G.~Michailidis.
\newblock Joint structural estimation of multiple graphical models.
\newblock \emph{Journal of Machine Learning Research}, 17\penalty0
  (166):\penalty0 1--48, 2016.

\bibitem[MacKay(2003)]{mackay2003information}
D.~J.~C. MacKay.
\newblock \emph{Information theory, inference and learning algorithms}.
\newblock Cambridge University Press, New York, 2003.

\bibitem[Majumdar and Michailidis(2018)]{Majumdar2018Joint}
S.~Majumdar and G.~Michailidis.
\newblock Joint estimation and inference for data integration problems based on
  multiple multi-layered gaussian graphical models.
\newblock \emph{arxiv: 1803.03348}, 2018.

\bibitem[Meinshausen(2015)]{Meinshausen2013Assumption}
N.~Meinshausen.
\newblock Group bound: confidence intervals for groups of variables in sparse
  high dimensional regression without assumptions on the design.
\newblock \emph{J. R. Stat. Soc. Ser. B. Stat. Methodol.}, 77\penalty0
  (5):\penalty0 923--945, 2015.

\bibitem[Mohan et~al.(2014)Mohan, London, Fazel, Witten, and
  Lee]{Mohan2014Node}
K.~Mohan, P.~London, M.~Fazel, D.~M. Witten, and S.-I. Lee.
\newblock Node-based learning of multiple gaussian graphical models.
\newblock \emph{J. Mach. Learn. Res.}, 15:\penalty0 445--488, 2014.

\bibitem[Negahban et~al.(2012)Negahban, Ravikumar, Wainwright, and
  Yu]{negahban2010unified}
S.~N. Negahban, P.~Ravikumar, M.~J. Wainwright, and B.~Yu.
\newblock A unified framework for high-dimensional analysis of $ m $-estimators
  with decomposable regularizers.
\newblock \emph{Stat. Sci.}, 27\penalty0 (4):\penalty0 538--557, 2012.

\bibitem[Nguyen et~al.(2010)Nguyen, Wainwright, and
  Jordan]{Nguyen2010Estimating}
X.~Nguyen, M.~J. Wainwright, and M.~I. Jordan.
\newblock Estimating divergence functionals and the likelihood ratio by convex
  risk minimization.
\newblock \emph{IEEE Trans. Inform. Theory}, 56\penalty0 (11):\penalty0
  5847--5861, 2010.

\bibitem[Ning and Liu(2017)]{Ning2014General}
Y.~Ning and H.~Liu.
\newblock A general theory of hypothesis tests and confidence regions for
  sparse high dimensional models.
\newblock \emph{Ann. Statist.}, 45\penalty0 (1):\penalty0 158--195, 2017.

\bibitem[Ravikumar et~al.(2011)Ravikumar, Wainwright, Raskutti, and
  Yu]{Ravikumar2011High}
P.~Ravikumar, M.~J. Wainwright, G.~Raskutti, and B.~Yu.
\newblock High-dimensional covariance estimation by minimizing
  $\ell_1$-penalized log-determinant divergence.
\newblock \emph{Electron. J. Stat.}, 5:\penalty0 935--980, 2011.

\bibitem[Ren et~al.(2015)Ren, Sun, Zhang, and Zhou]{Ren2013Asymptotic}
Z.~Ren, T.~Sun, C.-H. Zhang, and H.~H. Zhou.
\newblock Asymptotic normality and optimalities in estimation of large
  {G}aussian graphical models.
\newblock \emph{Ann. Stat.}, 43\penalty0 (3):\penalty0 991--1026, 2015.

\bibitem[Roy et~al.(2017)Roy, Atchad\'{e}, and Michailidis]{Roy2017Change}
S.~Roy, Y.~Atchad\'{e}, and G.~Michailidis.
\newblock Change point estimation in high dimensional {M}arkov random-field
  models.
\newblock \emph{J. R. Stat. Soc. Ser. B. Stat. Methodol.}, 79\penalty0
  (4):\penalty0 1187--1206, 2017.

\bibitem[Shojaie(2021)]{Shojaie2020Differential}
A.~Shojaie.
\newblock Differential network analysis: a statistical perspective.
\newblock \emph{Wiley Interdiscip. Rev. Comput. Stat.}, 13\penalty0
  (2):\penalty0 e1508, 16, 2021.

\bibitem[Sugiyama et~al.(2008)Sugiyama, Nakajima, Kashima, Buenau, and
  Kawanabe]{Sugiyama2008Direct}
M.~Sugiyama, S.~Nakajima, H.~Kashima, P.~V. Buenau, and M.~Kawanabe.
\newblock Direct importance estimation with model selection and its application
  to covariate shift adaptation.
\newblock In J.~C. Platt, D.~Koller, Y.~Singer, and S.~T. Roweis, editors,
  \emph{Advances in Neural Information Processing Systems 20}, pages
  1433--1440. Curran Associates, Inc., 2008.

\bibitem[Sugiyama et~al.(2012)Sugiyama, Suzuki, and
  Kanamori]{Sugiyama2012Density}
M.~Sugiyama, T.~Suzuki, and T.~Kanamori.
\newblock \emph{Density ratio estimation in machine learning}.
\newblock Cambridge University Press, Cambridge, 2012.
\newblock With a foreword by Thomas G. Dietterich.

\bibitem[Sun and Zhang(2012)]{Sun2012Scaled}
T.~Sun and C.-H. Zhang.
\newblock Scaled sparse linear regression.
\newblock \emph{Biometrika}, 99\penalty0 (4):\penalty0 879--898, 2012.

\bibitem[Sun and Zhang(2013)]{Sun2012Sparse}
T.~Sun and C.-H. Zhang.
\newblock Sparse matrix inversion with scaled lasso.
\newblock \emph{J. Mach. Learn. Res.}, 14:\penalty0 3385--3418, 2013.

\bibitem[Supekar et~al.(2008)Supekar, Menon, Rubin, Musen, and
  Greicius]{Supekar2008Network}
K.~Supekar, V.~Menon, D.~Rubin, M.~Musen, and M.~D. Greicius.
\newblock Network analysis of intrinsic functional brain connectivity in
  alzheimer's disease.
\newblock \emph{{PLoS} Computational Biology}, 4\penalty0 (6):\penalty0
  e1000100, 2008.

\bibitem[Tzourio-Mazoyer et~al.(2002)Tzourio-Mazoyer, Landeau, Papathanassiou,
  Crivello, Etard, Delcroix, Mazoyer, and Joliot]{Tzourio-Mazoyer2002Automated}
N.~Tzourio-Mazoyer, B.~Landeau, D.~Papathanassiou, F.~Crivello, O.~Etard,
  N.~Delcroix, B.~Mazoyer, and M.~Joliot.
\newblock Automated anatomical labeling of activations in {SPM} using a
  macroscopic anatomical parcellation of the {MNI} {MRI} single-subject brain.
\newblock \emph{{NeuroImage}}, 15\penalty0 (1):\penalty0 273--289, 2002.

\bibitem[van~de Geer et~al.(2014)van~de Geer, B\"{u}hlmann, Ritov, and
  Dezeure]{Geer2013asymptotically}
S.~A. van~de Geer, P.~B\"{u}hlmann, Y.~Ritov, and R.~Dezeure.
\newblock On asymptotically optimal confidence regions and tests for
  high-dimensional models.
\newblock \emph{Ann. Stat.}, 42\penalty0 (3):\penalty0 1166--1202, 2014.

\bibitem[van~der Vaart(1998)]{Vaart1998Asymptotic}
A.~W. van~der Vaart.
\newblock \emph{Asymptotic statistics}, volume~3 of \emph{Cambridge Series in
  Statistical and Probabilistic Mathematics}.
\newblock Cambridge University Press, Cambridge, 1998.

\bibitem[Wainwright and Jordan(2008{\natexlab{a}})]{Wainwright2008Graphical}
M.~J. Wainwright and M.~I. Jordan.
\newblock Graphical models, exponential families, and variational inference.
\newblock \emph{Foundations and Trends{\textregistered} in Machine Learning},
  1\penalty0 (1{\textendash}2):\penalty0 1--305, 2008{\natexlab{a}}.

\bibitem[Wainwright and Jordan(2008{\natexlab{b}})]{wainwright08graphical}
M.~J. Wainwright and M.~I. Jordan.
\newblock Graphical models, exponential families, and variational inference.
\newblock \emph{Found. and Trends Mach. Learn.}, 1\penalty0 (1-2):\penalty0
  1--305, 2008{\natexlab{b}}.

\bibitem[Wang and Kolar(2016)]{Wang2016Inference}
J.~Wang and M.~Kolar.
\newblock Inference for high-dimensional exponential family graphical models.
\newblock In A.~Gretton and C.~C. Robert, editors, \emph{Proc. of AISTATS},
  volume~51, pages 751--760, 2016.

\bibitem[Xia et~al.(2015)Xia, Cai, and Cai]{Xia2015Testing}
Y.~Xia, T.~Cai, and T.~T. Cai.
\newblock Testing differential networks with applications to the detection of
  gene-gene interactions.
\newblock \emph{Biometrika}, 102\penalty0 (2):\penalty0 247--266, 2015.

\bibitem[Xu and Gu(2016)]{Xu2016Semiparametric}
P.~Xu and Q.~Gu.
\newblock Semiparametric differential graph models.
\newblock In D.~D. Lee, M.~Sugiyama, U.~V. Luxburg, I.~Guyon, and R.~Garnett,
  editors, \emph{Advances in Neural Information Processing Systems 29}, pages
  1064--1072. Curran Associates, Inc., 2016.

\bibitem[Xue and Yao(2020)]{Xue2020Distribution}
K.~Xue and F.~Yao.
\newblock Distribution and correlation-free two-sample test of high-dimensional
  means.
\newblock \emph{Ann. Statist.}, 48\penalty0 (3):\penalty0 1304--1328, 2020.

\bibitem[Yang et~al.(2015)Yang, Ravikumar, Allen, and Liu]{Yang2013Graphical}
E.~Yang, P.~Ravikumar, G.~I. Allen, and Z.~Liu.
\newblock On graphical models via univariate exponential family distributions.
\newblock \emph{J. Mach. Learn. Res.}, 16:\penalty0 3813--3847, 2015.

\bibitem[Yu et~al.(2016)Yu, Gupta, and Kolar]{Yu2016Statistical}
M.~Yu, V.~Gupta, and M.~Kolar.
\newblock Statistical inference for pairwise graphical models using score
  matching.
\newblock In \emph{Advances in Neural Information Processing Systems 29}.
  Curran Associates, Inc., 2016.

\bibitem[Yu et~al.(2020)Yu, Gupta, and Kolar]{Yu2019Simultaneous}
M.~Yu, V.~Gupta, and M.~Kolar.
\newblock Simultaneous inference for pairwise graphical models with generalized
  score matching.
\newblock \emph{Journal of Machine Learning Research}, 21\penalty0
  (91):\penalty0 1--51, 2020.

\bibitem[Yuan(2010)]{Yuan2010High}
M.~Yuan.
\newblock High dimensional inverse covariance matrix estimation via linear
  programming.
\newblock \emph{J. Mach. Learn. Res.}, 11:\penalty0 2261--2286, 2010.

\bibitem[Yuan and Lin(2007)]{Yuan2007Model}
M.~Yuan and Y.~Lin.
\newblock Model selection and estimation in the gaussian graphical model.
\newblock \emph{Biometrika}, 94\penalty0 (1):\penalty0 19--35, 2007.

\bibitem[Zhang and Zhang(2013)]{Zhang2011Confidence}
C.-H. Zhang and S.~S. Zhang.
\newblock Confidence intervals for low dimensional parameters in high
  dimensional linear models.
\newblock \emph{J. R. Stat. Soc. B}, 76\penalty0 (1):\penalty0 217--242, 2013.

\bibitem[Zhao et~al.(2014)Zhao, Cai, and Li]{Zhao2014Direct}
S.~D. Zhao, T.~T. Cai, and H.~Li.
\newblock Direct estimation of differential networks.
\newblock \emph{Biometrika}, 101\penalty0 (2):\penalty0 253--268, 2014.

\end{thebibliography}

\newpage

\appendix

\section{The empirical KLIEP loss $\lKLIEP$}

\subsection{\label{supp:lKLIEP:derivation}Derivation of KLIEP}

We derive the form for $\thetab^*$ in Section~2.2. We have
\begin{align}
\thetab^*
&= \argmin_\theta D_\text{KL}(f_x \| f_y r_\theta)\\
&= \argmin_\theta \phantom{-}\int \log\rbr{\frac{f_x(\xb)}{r_\theta(\xb) f_y(\xb)}} f_x(\xb) \, d\xb\\
&= \argmin_\theta -\int \log r_\theta(\xb) \, f_x(\xb) \, d\xb \label{eq:derivation:1}\\
&= \argmin_\theta -\int \thetab^\top \psib(\xb) \, f_x(\xb) \, d\xb + \log Z_y(\thetab) \label{eq:derivation:2}\\
&= \argmin_\theta -\EE_x\sbr{\thetab^\top \psib(\xb)} + \log \EE_y \sbr{\exp\rbr{\thetab^\top \psib(\yb)}}
\label{eq:derivation:3}.
\end{align}
\eqref{eq:derivation:1} applies $\log$ to the ratio, and then notes that $\log r_\theta$ is the only term with dependency on $\thetab$.
\eqref{eq:derivation:2} follows from the definition of $r_\theta$.
\eqref{eq:derivation:3} is just the definition of expectation.
In particular, $Z_y(\thetab) = \EE_y[\exp( \thetab^\top \psib(\yb) )] = \int \exp( \thetab^\top \psib(\xb) ) f_y(\thetab) \, d\xb$.

\subsection{\label{supp:lKLIEP:derivatives}Derivatives and approximate moment-matching}

Recall
\begin{align*}
\hat Z_y(\thetab)
&= \frac{1}{n_y} \sum_{j=1}^{n_y} \exp\rbr{\thetab^\top \psib(\yb^{(j)})},&
\hat r_\theta(\yb)
&= \frac{\exp\rbr{\thetab^\top \psib(\yb)}}{\hat Z_y(\thetab)},&
\hat\mub(\thetab)
&= \frac{1}{n_y} \sum_{j=1}^{n_y} \psib(\yb^{(j)}) \hat r_\theta(\yb^{(j)}).
\end{align*}
The following identities hold:
\begingroup
\allowdisplaybreaks
\begin{align}
\frac{\partial \log \hat Z_y(\thetab) }{\partial \theta_k}
&= \hat\mu_k(\thetab),\\
\frac{\partial \hat r_\theta(\yb)}{\partial \theta_k}
&= \rbr{\psi_k(\yb_k) - \hat\mu_k(\thetab)} \hat r_\theta(\yb),\\
\frac{\partial \lKLIEP(\thetab)}{\partial \theta_k}
&= -\frac{1}{n_x} \sum_{i=1}^{n_x} \psi_k(\xb_k^{(i)}) + \hat\mu_k(\thetab),
\label{eq:lKLIEP:1st}\\
\frac{\partial^2 \lKLIEP(\thetab)}{\partial \theta_{k'} \partial \theta_k}
&= \frac{1}{n_y} \sum_{j=1}^{n_y} \psi_{k'}(\yb_{k'}^{(j)}) \psi_k(\yb_k^{(j)}) \hat r_\theta(\yb^{(j)}) - \hat\mu_{k'}(\thetab) \hat\mu_k(\thetab)
\label{eq:lKLIEP:2nd}\\
&= \frac{1}{n_y^2} \sum_{1 \leq j < j' \leq n_y} \left( \psi_{k'}(\yb_{k'}^{(j)}) - \psi_{k'}(\yb_{k'}^{(j')}) \right)\left( \psi_k(\yb_k^{(j)}) - \psi_k(\yb_k^{(j')}) \right) \hat r_\theta(\yb^{(j)}) \hat r_\theta(\yb^{(j')}),
\label{eq:lKLIEP:2nd:Ustat}\\
\frac{\partial^3 \lKLIEP(\thetab)}{\partial \theta_{k''} \partial \theta_{k'} \partial \theta_k}
&=
\begin{multlined}[t]
\frac{1}{n_y} \sum_{j=1}^{n_y} \psi_k(\yb_k^{(j)}) \psi_{k'}(\yb_{k'}^{(j)}) \psi_{k''}(\yb_{k''}^{(j)}) \hat r_\theta(\yb^{(j)})\\
- \hat\mu_{k''}(\thetab) \times \frac{1}{n_y} \sum_{j=1}^{n_y} \psi_k(\yb_k^{(j)}) \psi_{k'}(\yb_{k'}^{(j)}) \hat r_\theta(\yb^{(j)})\\
- \hat\mu_{k'} \nabla_{{k''}k}^2 \lKLIEP(\thetab) - \hat\mu_k(\thetab) \nabla_{{k''}k'}^2 \lKLIEP(\thetab).
\end{multlined}
\label{eq:lKLIEP:3rd}
\end{align}
\endgroup
These identities are useful in obtaining various uniform bounds.

The suggestive notation is by design: $\hat Z_y(\thetab) \approx Z_y(\thetab)$ and $\hat r_\theta(\yb) \approx r_\theta(\yb)$, obviously.
In fact, it is the message of \Cref{lem:moment_matching} below that
\begin{eqnarray*}
\hat\mu(\thetab) \approx \EE_{\theta+\gamma_y}[\psib(\xb)]
&\text{and}&
\EE_{\theta + \gamma_y}[ \nabla^2 \lKLIEP(\thetab) ] \approx \Cov_{\theta + \gamma_y}[ \psib(\xb) ].
\end{eqnarray*}
$\nabla^3 \lKLIEP(\thetab)$ also approximates the third central moment tensor of $\psib(\xb)$ under $\gammab = \thetab + \gammab_y$, but we will not need this fact.

\begin{lemma} \label[lemma]{lem:moment_matching}
\begin{equation*}
\EE_y[ \psib(\yb) r_\theta(\yb) ] = \EE_{\theta+\gamma_y}[\psib(\xb)].
\end{equation*}
and
\begin{equation} \label{eq:EyH}
\EE_y[\Hb(\thetab)]
:= \EE_y\left[ \frac{\hat Z_y(\thetab)^2}{Z_y(\thetab)^2} \nabla^2 \lKLIEP(\thetab) \right]\\
= \left( 1-\frac{1}{n_y} \right) \Cov_{\theta+\gamma_y}[\psib(\xb)].
\end{equation}
\end{lemma}

\begin{proof}
Let $f_\theta = f_{\theta + \gamma_y}$.
To prove the first identity,
\[
\EE_y[ \psi_k(\yb_k) r_\theta(\yb) ]
= \int \psi_k(\yb_k) r_\theta(\yb) f_y(\yb) \, d\yb
= \int \psi_k(\yb_k) f_\theta(\yb) \, d\yb
= \EE_{\theta + \gamma_y} \psi(\yb_k)
= \mu_k(\thetab).
\]
To prove the second identity, let $\yb, \yb' \sim f_y$ be independent, so that
\begin{multline*}
\EE_y\sbr{\rbr{\psi_{k'}(\yb_{k'}) - \psi_{k'}(\yb_{k'}')} \rbr{\psi_k(\yb_k) - \psi_k(\yb_k')} r_\theta(\yb) r_\theta(\yb')}\\
\begin{aligned}[t]
&= \iint \rbr{\psi_{k'}(\yb_{k'}) - \psi_{k'}(\yb_{k'}')} \rbr{\psi_k(\yb_k) - \psi_k(\yb_k')} r_\theta(\yb) r_\theta(\yb') f_y(\yb) f_y(\yb') \, d\yb \, d\yb'\\
&=
\begin{multlined}[t]
2 \iint \psi_{k'}(\yb_{k'}) \psi_k(\yb_k) r_\theta(\yb) r_\theta(\yb') f_y(\yb) f_y(\yb') \, d\yb \, d\yb'\\
- 2 \iint \psi_{k'}(\yb_{k'}) \psi_k(\yb_k') r_\theta(\yb) r_\theta(\yb') f_y(\yb) f_y(\yb') \, d\yb \, d\yb'.
\end{multlined}
\end{aligned}
\end{multline*}
The first integral is
\begin{multline*}
\iint \psi_{k'}(\yb_{k'}) \psi_k(\yb_k) r_\theta(\yb) r_\theta(\yb') f_y(\yb) f_y(\yb') \, d\yb \, d\yb'\\
\begin{aligned}[t]
&= \int \psi_{k'}(\yb_{k'}) \psi_k(\yb_k) r_\theta(\yb) f_y(\yb) \, d\yb \int r_\theta(\yb') f_y(\yb') d\yb'\\
&= \int \psi_{k'}(\yb_{k'}) \psi_k(\yb_k) \tilde f_{\theta + \gamma_y}(\yb) \, d\yb \int \tilde f_{\theta + \gamma_y}(\yb') \, d\yb'
\end{aligned}\\
= \EE_{\theta + \gamma_y}[ \psi_{k'}(\yb_{k'}) \psi_k(\yb_k) ].
\end{multline*}
As for the second integral,
\begin{multline*}
\iint \psi_{k'}(\yb_{k'}) \psi_k (\yb_k') r_\theta(\yb) r_\theta(\yb') f_y(\yb) f_y(\yb') \, d\yb \, d\yb'\\
\begin{aligned}[t]
&= \int \psi_{k'}(\yb_{k'}) r_\theta(\yb) f_y(\yb) \, d\yb \int \psi_k(\yb_k') r_\theta(\yb') f_y(\yb') \, d\yb'\\
&= \int \psi_{k'}(\yb_{k'}) \tilde f_{\theta + \gamma_y}(\yb) \, d\yb \int \psi_k(\yb_k') \tilde f_{\theta + \gamma_y}(\yb') \, d\yb'
\end{aligned}\\
= \EE_{\theta + \gamma_y} [ \psi_{k'}(\yb_{k'}) ] \EE_{\theta + \gamma_y} [ \psi_k(\yb_k) ]
= \mu_{k'}(\thetab) \mu_k(\thetab).
\end{multline*}
Thus,
\begin{multline*}
\EE_y \sbr{\rbr{\psi_{k'}(\yb_{k'}) - \psi_{k'}(\yb_{k'}')} \rbr{\psi_k(\yb_k) - \psi_k(\yb_k')} r_\theta(\yb) r_\theta(\yb')}\\
= \EE_{\theta + \gamma_y} [ \psi_{k'}(\yb_{k'})\psi_k(\yb_k) ] - \mu_{k'}(\thetab) \mu_k(\thetab)
= \Cov_{\theta + \gamma_y} [\psi_{k'}(\yb_{k'}),\psi_k(\yb_k)],
\end{multline*}
and therefore,
\begin{equation*}
\begin{aligned}
\EE_y [\Hb(\thetab)]
&= \EE_y \sbr{\frac{\hat Z_y^2(\thetab)}{Z_y^2(\thetab)} \nabla^2 \lKLIEP(\thetab)}\\
&=
\begin{multlined}[t]
\EE_y \sbr{\frac{1}{n_y^2} \sum_{1 \leq j < j' \leq n_y} \rbr{\psib(\yb^{(j)}) - \psib(\yb^{(j')})} \rbr{\psib(\yb^{(j)}) - \psib(\yb^{(j')})} r_\theta(\yb^{(j)}) r_\theta(\yb^{(j')})}\\
= \rbr{1-\frac{1}{n_y}} \Sigmab(\thetab).
\end{multlined}
\end{aligned}
\end{equation*}
\end{proof}

In KLIEP, the difference $\thetab^*$ of $\gammab_x$ from $\gammab_y$ is estimated by matching the moments of $\psib(\xb)$ with $\psib(\yb)$ by exponential tilting of the baseline pdf $f_y$.

KLIEP can be viewed as an approximate version of maximum-likelihood estimation.
Fixing $f_y \in \Fcal_\gamma$, define a new parametrization of the family by $\thetab = \gammab - \gammab_y$.
Abusing the notation somewhat,
\begin{equation*}
f_\theta(\xb)
= r_\theta(\xb) f_y(\xb)
= Z_y(\thetab)^{-1} \exp\left( \thetab^\top \psib(\xb) \right) f_y(\xb),
\end{equation*}
where $Z_y(\thetab)$ normalizes $f_\theta$ albeit with respect to the baseline $f_y$, that is to say,
\begin{equation*}
Z_y(\thetab)
= \int_{\XX} \exp\left( \thetab^\top \psib(\xb) \right) f_y(\xb) \, d\xb
= Z(\thetab +\gammab_y) / Z(\gammab_y).
\end{equation*}
Clearly, each $\gammab$ in the original parameter space corresponds to a unique $\thetab$ in the new parameter space.

Given $\xb^{(1)}, \dots, \xb^{(n_x)} \iidsim f_x$, the negative log-likelihood of the data with respect to the difference parametrization $\thetab$ is
\begin{equation*}
\ell_y(\thetab;\Xb_{n_x}) = -\frac{1}{n_x} \sum_{i=1}^{n_x} \thetab^\top \psib(\xb^{(i)}) + \log Z_y(\thetab).
\end{equation*}
Let $\mub(\thetab)$ and $\Sigmab(\thetab)$ be, respectively, the mean and the covariance of $\psib(\xb)$ under $f_\theta$:
\begin{equation*}
\mub(\thetab) = \int_{\XX} \psib(\xb) r_\theta(\xb) f_y(\xb) \, d\xb
\end{equation*}
and
\begin{equation*}
\Sigmab(\thetab)
= \int_{\XX} \psib(\xb) \psib(\xb)^\top r_\theta(\xb) f_y(\xb) \, d\xb - \left( \int_{\XX} \psib(\xb) r_\theta(\xb) f_y(\xb) \, d\xb \right) \left( \int_{\XX} \psib(\xb) r_\theta(\xb) f_y(\xb) \, d\xb \right)^\top.
\end{equation*}
It is straightforward to compute
\begin{equation*}
\nabla \ell_y(\thetab;\Xb_{n_x}) = -\frac{1}{n} \sum_{i=1}^n \psib(\xb^{(i)}) + \mub(\thetab)
\quad\text{and}\quad
\nabla^2 \ell_y(\thetab) = \Sigmab(\thetab).
\end{equation*}
Clearly, when $\xb^{(1)}, \dots, \xb^{(n_x)} \iidsim f_{\theta^*}$, $\thetab^* = \gammab_x - \gammab_y$ is the unique minimizer of $\EE_x[ \ell_y(\thetab;\Xb_{n_x}) ]$.
In this setting, $\nabla^2 \ell_y$ is a deterministic function of the parameter and thus does not depend on the data.
However, it is in general hard to minimize $\ell_y$ directly, so we look to minimize $\lKLIEP$ instead.

Using $\hat Z_y(\thetab)$ in place of $Z_y(\thetab)$ recovers $\lKLIEP$.
As
\begin{equation*}
\sup_{\thetab} |\lKLIEP(\thetab) - \ell_y(\thetab)|
= \left| \log \left\{ \frac{1}{n_y} \sum_{j=1}^{n_y} r_\theta(\yb^{(j)}) \right\} \right|,
\end{equation*}
$\lKLIEP$ converges to $\ell_y$ pointwise a.s.~as $n_y \to \infty$.
Consequently, the minimizer of $\lKLIEP$ and the minimizer of $\ell_y$ are asymptotically equivalent.

\section{Proofs of the general results}

In the below, positive constants that depend only on the fixed problem parameters are denoted as $c_0, c_1, \dots$, $c_0', c_1', \dots$, $K_0, K_1, \dots$, and their precise definitions may change from line to line. They are never allowed to depend on the sample sizes $n_x$ and $n_y$, the number of nodes $m$ or the number of parameters $p$ (usually $p = {m \choose 2}$), or the sparsity level of the true parameters $s_\theta = s_{\theta, q_\theta} = \|\thetab^*\|_{q_\theta}$ or $s_k = s_{k, q_k} = \|\omegab_k^*\|_{q_k}$ for $k \in [p]$ and for fixed $q_\theta, q_k \in [0,1)$.

\subsection{\label{supp:proofs:debiasing}Explaining de-biasing}

Suppose we wish to construct an unbiased estimator of $\theta_k^*$ for some $k \in [p]$.
Let $\thetab_{k^c}^* = \thetab_{[p] \setminus \{k\}}^* \in \RR^{p-1}$ denote the vector of remaining $p-1$ parameters; this is the nuisance parameter for carrying out statistical inference for $\theta_k^*$.
Abusing the notation somewhat, we write the resulting partition as $\thetab = (\theta_k, \thetab_{k^c})$.
Let $n = n_x + n_y$.

We consider estimators that arise as zeros of modified score functions of the form $g(\theta_k; \check\thetab_{k^c}, \omegab_k) = \omegab_k^\top \nabla \lKLIEP(\theta_k; \thetab_{k^c})$, where $\check\thetab_{k^c}$ is a consistent, but not necessarily $\sqrt{n}$-consistent estimator of $\thetab_{k^c}^*$ and $\omegab_k \in \RR^p$ is a fixed vector. $g$ has the first-order Taylor expansion
\begin{equation}
\label{eq:modified:Taylor}
	\omegab_k^\top \nabla \lKLIEP(\theta_k; \check\thetab_{k^c})
	= \omegab_k^\top \nabla \lKLIEP(\thetab^*) + \omegab_k^\top \nabla^2 \lKLIEP(\thetab^*) \begin{bmatrix} \theta_k - \theta_k^* \\ \check\thetab_{k^c} - \thetab_{k^c}^*\end{bmatrix} + \text{REM}.
\end{equation}
In an oracle setting where the true nuisance parameter $\thetab_{k^c}^*$ is known, the choice $\omegab_k = \eb_k$, which corresponds to the modified score function $g(\theta_k; \thetab_{k^c}^*, \eb_k) = \nabla_k \lKLIEP(\theta_k; \thetab_{k^c}^*)$, leads to an approximately normal and unbiased estimation of $\theta_k^*$.
When $\thetab_{k^c}^*$ is unknown and has to be estimated from the data as well, $\thetab_{k^c}^*$ is replaced by an estimate $\check\thetab_{k^c}$ in the na\"{i}ve plug-in approach.
This is acceptable when $\check\thetab_{k^c}$ is $\sqrt{n}$-consistent; for example, this would be the case in low dimensions with $n \gg p$.
Unfortunately, this na\"{i}ve plug-in approach ceases to work when $\check\thetab_{k^c} \to \thetab_{k^c}^*$ at a rate slower than $n^{-1/2}$, which is typically the case for regularized estimators used in high dimensions.
This is because estimation based on the modified score with $\omegab_k = \eb_k$ is in general \emph{not} insensitive to the error in $\check\thetab_{k^c}$.
To see why, plug in $\omegab_k = \eb_k$ in \eqref{eq:modified:Taylor}:
\begin{multline}
\label{eq:naive:Taylor}
	\nabla_k \lKLIEP(\theta_k; \check\thetab_{k^c})
	= \nabla_k \lKLIEP(\thetab^*) + \nabla_{kk}^2 \lKLIEP(\thetab^*) (\theta_k - \theta_k^*)\\
	+ \nabla_{kk^c}^2 \lKLIEP(\thetab^*) (\check\thetab_{k^c} - \thetab_{k^c}^*) + \text{REM}.
\end{multline}
As $n \to \infty$, $\nabla_{kk}^2 \lKLIEP(\thetab^*) \to \Sigmab_{\psi, kk}$ and $\nabla_{kk^c}^2 \lKLIEP(\thetab^*) \to \Sigmab_{\psi, kk^c}$, where $\Sigmab_\psi = \Cov_x[\psib(\xb)]$, by the Law of Large Numbers and \Cref{lem:moment_matching}.
Thus, the estimator $\tilde\theta_k$, defined as a root of $g(\theta_k; \check\thetab_{k^c}, \eb_k)$, has the asymptotic linear approximation
\begin{equation}
\label{eq:naive:linear}
	\sqrt{n} \, (\tilde\theta_k - \theta_k^*)
	= -\sqrt{n} \, \Sigmab_{\psi, kk}^{-1} \nabla_k \lKLIEP(\thetab^*) - \sqrt{n} \, \Sigmab_{\psi, kk}^{-1} \Sigmab_{\psi, kk^c} (\check\thetab_{k^c} - \thetab_{k^c}^*) + \text{REM}.
\end{equation}
Looking at the right-hand side, although the first term $\sqrt{n} \, \Sigmab_{\psi, kk}^{-1} \nabla_k \lKLIEP(\thetab^*)$ is approximately mean-zero and normal, the same cannot be guaranteed for the second term $\sqrt{n} \, \Sigmab_{\psi, kk}^{-1} \Sigmab_{\psi, kk^c} (\check\thetab_{k^c} - \thetab_{k^c}^*)$ in many high-dimensional settings. Indeed, when $\check\thetab_{k^c}$ is given by the sparse KLIEP \eqref{eq:spKLIEP}, we can only guarantee $\|\check \thetab_{k^c} - \thetab_{k^c}^*\|_1 \leq \sqrt{\|\thetab^*\|_0 \log p / n}$; this is insufficient if we are looking to use normal approximation as a basis of inference.

Comparing \eqref{eq:modified:Taylor} and \eqref{eq:naive:Taylor}, it is clear that the reason that $\omegab_k = \eb_k$ is a poor choice is because the nuisance components of $\Sigmab_{\psi} \eb_k$ are in general nonzero, and hence, the product $\sqrt{n} \, \Sigmab_{\psi, kk^c} (\check\thetab_{k^c} - \thetab_{k^c}^*)$ is non-negligible whenever $\sqrt{n} \, (\check\thetab_{k^c} - \thetab_{k^c}^*)$ is non-negligible.
Therefore, if $\omegab_k$ is chosen to satisfy $\Sigmab_{\psi} \omegab_k \approx \eb_k$ instead --- i.e., $\omegab_k$ approximates the $k$th column of the \emph{inverse} of $\Sigmab_{\psi}$ --- it would have the effect of counterbalancing the error in $\check\thetab_{k^c}$.

Of course, since $\Sigmab_{\psi}$ is itself an unknown parameter, $\omegab_k$ would have to be estimated from the data as well. However, the good news is that even in high-dimensions, it is often possible to find $\omegab_k$ satisfying
\begin{equation}
\label{eq:omega_consistency}
	\cbr{\nabla^2 \lKLIEP(\thetab^*) \omegab_k - \eb_k}^\top \begin{bmatrix}\hat \theta_k - \theta_k^* \\ \check \thetab_{k^c} - \thetab_{k^c}^*\end{bmatrix} = o_{\PP}\rbr{n^{-1/2}},
\end{equation}
provided that structural assumptions are reasonable for $\Sigmab_{\psi}^{-1}$.

With this $\omegab_k$, $g$ has the first-order Taylor expansion
\begin{multline}
\label{eq:Neymanized:Taylor}
	\omegab_k^\top \nabla \lKLIEP(\theta_k; \check\thetab_{k^c})\\
	= \omegab_k^\top \nabla \lKLIEP(\thetab^*) + (\theta_k - \theta_k^*) + \cbr{\nabla^2 \lKLIEP(\thetab^*) \omegab_k - \eb_k}^\top \begin{bmatrix}\theta_k - \theta_k^* \\ \check \thetab_{k^c} - \thetab_{k^c}^*\end{bmatrix} + \text{REM}.
\end{multline}
The resulting estimator $\hat\theta_k$ then has the asymptotic linear approximation
\begin{equation}
\label{eq:Neymanized:linear}
	\sqrt{n} \, (\hat\theta_k - \theta_k^*)
	= -\sqrt{n} \, \omegab_k^\top \nabla \lKLIEP(\thetab^*) - \sqrt{n} \cbr{\nabla^2 \lKLIEP(\thetab^*) \omegab_k - \eb_k}^\top \begin{bmatrix}\hat \theta_k - \theta_k^* \\ \check \thetab_{k^c} - \thetab_{k^c}^*\end{bmatrix} + \text{REM}.
\end{equation}
In contrast to \eqref{eq:naive:linear}, the bias terms in \eqref{eq:Neymanized:linear} are still vanishing after $\sqrt{n}$-scaling.

\subsection{\label{supp:proofs:thm:main}Proof of \Cref{thm:main}}

Let $k \in [p]$. Let $\check\thetab$ and $\check\omegab_k$ denote the
outputs of Steps~1 and 2 of \Cref{procedure:KLIEP+}. For
$\lambda_\theta, \lambda_k, \delta_\theta, \delta_k, \delta_\sigma \in
[0,1)$, define an event
\begin{multline*}
\Ecal_\text{one}
= \Ecal_\text{one}(\lambda_\theta, \lambda_k, \delta_\theta, \delta_k, \delta_\sigma) =\\
\cbr{
\begin{array}{r c r c}
\textnormal{(G.1)} & 2 \|\nabla \lKLIEP(\thetab^*)\|_* \leq \lambda_\theta,&
\textnormal{(G.2)} & 2 \|\nabla^2 \lKLIEP(\thetab^*) \omegab_k^* - \eb_k\|_* \leq \lambda_k,\\
\textnormal{(E.1)} & \|\check\thetab - \thetab^*\| \leq \delta_\theta,&
\textnormal{(E.2)} & \|\check\omegab_k - \omegab_k^*\| \leq \delta_k,\\
\textnormal{(B.1)} & \abr{1 - \frac{\hat Z_y(\thetab^*)}{Z_y(\thetab^*)}} \lesssim \lambda_\theta,&
\textnormal{(B.2)} & \abr{\frac{1}{n_y} \sum_{j=1}^{n_y} \langle \omegab_k^*, \mub_\psi - \psib(\yb^{(j)}) \rangle \, r_{\theta^*}(\yb^{(j)})} \lesssim \lambda_k,\\
\textnormal{(V.1)} & 4 \|\hat\Sbb_\psi - \Sigmab_\psi\|_* \leq \delta_\sigma,&
\textnormal{(V.2)} & 4 \|\hat\Sbb_{\psi\hat r}(\thetab^*) - \Sigmab_{\psi r}\|_* \leq \delta_\sigma
\end{array}}.
\end{multline*}

\begin{theorem}[Re-statement of \Cref{thm:main}]
Assume Conditions 1 and 2.
Let $\hat\theta_k$ be the estimator constructed by \Cref{procedure:KLIEP+} with one-step approximation as
\begin{equation*}
\hat\theta_k = \check\theta_k - \check\omegab_k^\top \nabla \lKLIEP(\check\thetab).
\end{equation*}
Suppose $\PP(\Ecal_\text{one}) \geq 1 - \varepsilon_{\text{one},n}$ for some $\lambda_\theta, \lambda_k, \delta_\theta, \delta_k, \delta_\sigma \in [0,1)$.
Then,
\begin{equation*}
\label{eq:main:final_bd}
\sup_{t \in \RR} \left| \PP\left\{ \sqrt{n} \, (\hat\theta_k - \theta_k^*) / \hat\sigma_k \leq t \right\} - \Phi(t) \right|\\
\leq \Delta_1 + \Delta_2 + \Delta_3 + \varepsilon_{\text{one},n},
\end{equation*}
where
\begin{gather*}
\Delta_1
\lesssim \sqrt{\frac{\bar\kappa^2 / \underline\kappa}{\eta_{x,n} \eta_{y,n}}} \frac{\|\omegab_k^*\|}{\sqrt{n}},\quad
\Delta_2
\lesssim \sqrt{\frac{\eta_{x,n} \eta_{y,n}}{\underline\kappa / \bar\kappa^2}} \rbr{(\delta_\theta+\lambda_\theta) (\delta_k+\lambda_k) + \|\omegab_k^*\| \delta_\theta^2} \sqrt{n}, \\
\Delta_3
\lesssim (\bar\kappa^2 / \underline\kappa) \|\omegab_k^*\|^2 (\delta_\sigma+\delta_\theta) + \delta_k^2.
\end{gather*}
\end{theorem}

\begin{proof}
The proof combines two lemmas: a Berry-Esseen-type result for the leading linear term in the decomposition of $\sqrt{n} \, (\hat\theta_k - \theta_k^*)$ (\Cref{lem:Berry-Esseen}) and a technical lemma for incorporating error bounds (\Cref{lem:GAR}).

Recall $\mub_\psi = \EE_x[\psib(\xb)] = \EE_y[\psib(\yb) r_{\theta^*}(\yb)]$.
To use the lemmas, we break up $\hat\theta_ k - \theta_k^*$ as
\begin{equation*}
\hat\theta_k - \theta_k^* = A + B,
\end{equation*}
where
\begin{equation*}
A = \frac{1}{n_x} \sum_{i=1}^{n_x} \langle \omegab_k^*, \psib(\xb^{(i)})-\mub_\psi \rangle + \frac{1}{n_y} \sum_{j=1}^{n_y} \langle \omegab_k^*, \mub_\psi-\psib(\yb^{(j)}) \rangle \, r_{\theta^*}(\yb^{(j)}),
\end{equation*}
and $B = (\hat\theta_k - \theta_k^*) - A$.
Also, we let $C = (\hat\sigma_k / \sigma_k) - 1$, so that
\begin{equation*}
\sqrt{n} \, (\hat\theta_k - \theta_k^*) / \hat\sigma_k = \sqrt{n} \, \{(A + B) / \sigma_k\} / (1 + C).
\end{equation*}
Since $A$ is a sum of two i.i.d.~sums, $\sqrt{n} \, A / \sigma_k$ is well-approximated by a Gaussian law.
Indeed, \Cref{lem:Berry-Esseen} says
\begin{equation}
\sup_{t \in \RR} \left| \PP\left\{ \sqrt{n} \, A / \sigma_k \leq t \right\} - \Phi(t) \right|
\lesssim \sqrt{\frac{\bar\kappa^2 / \underline\kappa}{\eta_{x,n} \eta_{y,n}}} \frac{\|\omegab_k^*\|}{\sqrt{n}}
:= \Delta_1.
\end{equation}
The remainder of the proof is about obtaining the bounds $\delta_B$,
$\delta_C$, and $\varepsilon_{BC}$ that can be used with
\Cref{lem:GAR}.

First, we need an exact expression for $B$.
By definition,
\begin{align}
\hat\theta_k
&= \check\theta_k - \check\omegab_k^\top \nabla \lKLIEP(\check\thetab) \nonumber\\
&= \check\theta_k - \omegab_k^{*\top} \nabla \lKLIEP(\check\thetab) - \rbr{\check\omegab_k - \omegab_k^*}^\top \nabla \lKLIEP(\check\thetab).
\label{eq:1Step:rewrite}
\end{align}
Expand $\check\theta_k - \omegab_k^{*\top} \nabla
\lKLIEP(\check\thetab)$ about $\thetab^*$:
\begin{equation}
\check\theta_k - \omegab_k^{*\top} \nabla \lKLIEP(\check\thetab)
= \theta_k^* - \omegab_k^{*\top} \nabla \lKLIEP(\thetab^*) - \cbr{\nabla^2 \lKLIEP(\thetab^*) \omegab_k^* - \eb_k}^\top \rbr{\check\thetab - \thetab^*} - \omegab_k^{*\top} \rb,
\label{eq:1Step:Taylor:1}
\end{equation}
where by Taylor's theorem, $\rb$ is given by
\begin{equation*}
r_k
= \frac{1}{2} \sum_{k'' = 1}^p \sum_{k' = 1}^p \left\{ \int_0^1 (1-t) \, \partial^3_{k'' k' k} \lKLIEP(\thetab^* + t(\check\thetab - \thetab^*)) \, dt \right\} (\check\theta_{k''} - \theta_{k''}^*) (\check\theta_{k'} - \theta_{k'}^*).
\end{equation*}
Combining \eqref{eq:1Step:rewrite} and \eqref{eq:1Step:Taylor:1}, and rearranging,
\begin{multline*}
\hat\theta_k - \theta_k^*
= -\omegab_k^{*\top} \nabla\lKLIEP(\thetab^*) \\
- \rbr{\check\omegab_k - \omegab_k^*}^\top \nabla\lKLIEP(\check\thetab)
- \cbr{\nabla^2\lKLIEP(\thetab^*) \omegab_k^* - \eb_k}^\top \rbr{\check\thetab - \thetab^*}
- \omegab_k^{*\top} \rb.
\end{multline*}
The leading term is
\begin{align*}
\omegab_k^{*\top} \nabla \lKLIEP(\thetab^*)
&= \left\langle \omegab_k^*, \frac{1}{n_x} \sum_{i=1}^{n_x} \psib(\xb^{(i)}) - \frac{1}{n_y} \sum_{j=1}^{n_y} \psib(\yb^{(j)}) \hat r_{\theta^*}(\yb^{(j)}) \right\rangle \nonumber\\
&= \left\langle \omegab_k^*, \frac{1}{n_x} \sum_{i=1}^{n_x} \rbr{\psib(\xb^{(i)})-\mub_\psi} + \frac{1}{n_y} \sum_{j=1}^{n_y} \rbr{\mub_\psi-\psib(\yb^{(j)})} \hat r_{\theta^*}(\yb^{(j)}) \right\rangle \nonumber\\
&= \left\langle \omegab_k^*, \frac{1}{n_x} \sum_{i=1}^{n_x} \rbr{\psib(\xb^{(i)})-\mub_\psi} + \frac{Z_y(\thetab^*)}{\hat Z_y(\thetab^*)} \cdot \frac{1}{n_y} \sum_{j=1}^{n_y} \rbr{\mub_\psi-\psib(\yb^{(j)})} \, r_{\theta^*}(\yb^{(j)}) \right\rangle \nonumber\\
&=
\begin{multlined}[t]
\left\langle \omegab_k^*, \frac{1}{n_x} \sum_{i=1}^{n_x} \rbr{\psib(\xb^{(i)})-\mub_\psi} + \frac{1}{n_y} \sum_{j=1}^{n_y} \rbr{\mub_\psi-\psib(\yb^{(j)})} r_{\theta^*}(\yb^{(j)}) \right\rangle \nonumber\\
+ \rbr{\frac{Z_y(\thetab^*)}{\hat Z_y(\thetab^*)} - 1} \frac{1}{n_y} \sum_{j=1}^{n_y} \langle \omegab_k^*,\mub_\psi-\psib(\yb^{(j)}) \rangle \, r_{\theta^*}(\yb^{(j)}),
\end{multlined}
\end{align*}
where in the second step, we have used $n_y^{-1} \sum_{j=1}^{n_y} \hat r_\theta(\yb^{(j)}) \equiv 1$ for any $\thetab$.
Recognizing $A$ as the first term of the last line, we have
\begin{multline*}
B
=
\underbrace{\rbr{\frac{Z_y(\thetab^*)}{\hat Z_y(\thetab^*)} - 1} \frac{1}{n_y} \sum_{j=1}^{n_y} \langle \omegab_k^*,\mub_\psi-\psib(\yb^{(j)}) \rangle \, r_{\theta^*}(\yb^{(j)})}_{B_0} \\
- \underbrace{\rbr{\check\omegab_k - \omegab_k^*}^\top \nabla \lKLIEP(\check\thetab)}_{B_1}
- \underbrace{\cbr{\nabla^2 \lKLIEP(\thetab^*)\omegab_k^* - \eb_k}^\top \rbr{\check\thetab - \thetab^*}}_{B_2}
- \underbrace{\omegab_k^{*\top} \rb}_{B_3}.
\end{multline*}

We proceed to bound $|B|$ on $\Ecal_\text{one}$ using the defining conditions.
(B.1) and (B.2) imply a bound on $B_0$:
\begin{multline} \label{eq:B0}
|B_0|
= \abr{\rbr{\frac{Z_y(\thetab^*)}{\hat Z_y(\thetab^*)} - 1} \frac{1}{n_y} \sum_{j=1}^{n_y} \langle \omegab_k^*,\mub_\psi-\psib(\yb^{(j)}) \rangle \, r_{\theta^*}(\yb^{(j)})}\\
= \abr{\frac{Z_y(\thetab^*)}{\hat Z_y(\thetab^*)}} \abr{1 - \frac{\hat Z_y(\thetab^*)}{Z_y(\thetab^*)}} \abr{\frac{1}{n_y} \sum_{j=1}^{n_y} \langle \omegab_k^*,\mub_\psi-\psib(\yb^{(j)}) \rangle \, r_{\theta^*}(\yb^{(j)})}
\leq K_1 \lambda_\theta \lambda_k,
\end{multline}
because $Z_y(\thetab^*) / \hat Z_y(\thetab^*) \in [M_r^{-1}, M_r]$ under \Cref{cond:bdr}.
$B_1$ is further decomposed as
\begin{equation*}
B_1
= \underbrace{\rbr{\check\omegab_k - \omegab_k^*}^\top \nabla \lKLIEP(\thetab^*)}_{B_{11}}
+ \underbrace{\rbr{\check\omegab_k - \omegab_k^*}^\top \rbr{\nabla \lKLIEP(\check\thetab) - \nabla \lKLIEP(\thetab^*)}}_{B_{12}}.
\end{equation*}
Using (G.1) and (E.2) for \eqref{eq:B11}, and (G.2) and (E.1) for \eqref{eq:B2},
\begin{gather}
|B_{11}|
\leq \|\check\omegab_k - \omegab_k^*\| \|\nabla \lKLIEP(\thetab^*)\|_*
\leq \lambda_\theta \delta_k,
\label{eq:B11}\\
|B_2|
\leq \|\nabla^2\lKLIEP(\thetab^*) \omegab_k^* - \eb_k\|_* \|\check\thetab - \thetab^*\|
\leq \lambda_k \delta_\theta.
\label{eq:B2}
\end{gather}
For $B_{12}$, we use the mean value theorem to express each component of $\nabla \lKLIEP(\check\thetab) - \nabla \lKLIEP(\thetab^*)$ as
\begin{equation*}
\partial_k \lKLIEP(\check\thetab) - \partial_k \lKLIEP(\thetab^*)
= \sum_{k' = 1}^p \partial^2_{k k'} \lKLIEP(\bar\thetab_k) (\check\theta_{k'} - \theta_{k'}^*),
\end{equation*}
where $\bar\thetab_k$ is on the line segment connecting $\check\thetab$ and $\thetab^*$.
Using \eqref{eq:lKLIEP:2nd}, this can be written as
\begin{multline*}
\partial_k \lKLIEP(\check\thetab) - \partial_k \lKLIEP(\thetab^*)\\
\begin{aligned}[t]
&= \sum_{k' = 1}^p \cbr{\frac{1}{n_y} \sum_{j=1}^{n_y} \hat r_{\bar\theta_k}(\yb^{(j)}) \psi_k(\yb_k^{(j)}) \psi_{k'}(\yb_{k'}^{(j)}) - \hat\mu_k(\bar\thetab_k) \hat\mu_{k'}(\bar\thetab_k)} (\check\theta_{k'} - \theta_{k'}^*)\\
&= \frac{1}{n_y} \sum_{j=1}^{n_y} \hat r_{\bar\theta_k}(\yb^{(j)}) \psi_k(\yb_k^{(j)}) \cbr{\sum_{k' = 1}^p \psi_{k'}(\yb_{k'}^{(j)}) (\check\theta_{k'} - \theta_{k'}^*)} - \hat\mu_k(\bar\thetab_k) \cbr{\sum_{k' = 1}^p \hat\mu_{k'}(\bar\thetab_k) (\check\theta_{k'} - \theta_{k'}^*)}.
\end{aligned}
\end{multline*}
Under \Cref{cond:bdr},
\begin{eqnarray*}
\sum_{k' = 1}^p \psi_{k'}(\yb_{k'}^{(j)}) (\check\theta_{k'} - \theta_{k'}^*)
\leq M_\psi \|\check\thetab - \thetab^*\|
&\text{and}&
\sum_{k' = 1}^p \hat\mu_{k'}(\bar\thetab_k) (\check\theta_{k'} - \theta_{k'}^*)
\leq M_\psi M_r^2 \|\check\thetab - \thetab^*\|,
\end{eqnarray*}
so that
\begin{equation*}
\|\nabla \lKLIEP(\check\thetab) - \nabla \lKLIEP(\thetab^*)\|_*
\leq K_2 \|\check\thetab - \thetab^*\|.
\end{equation*}
With (E.1) and (E.2),
\begin{equation}
|B_{12}|
\leq \|\check\omegab_k - \omegab_k^*\| \|\nabla \lKLIEP(\check\thetab) - \nabla \lKLIEP(\thetab^*)\|_*
\leq K_2 \delta_\theta \delta_k.
\label{eq:B12}
\end{equation}
We turn to $B_3$.
Under \Cref{cond:bdr}, \eqref{eq:lKLIEP:3rd} implies a uniform bound on the third-order tensor, so
\begin{equation}
|B_3|
\leq \|\omegab_k^*\| \|\rb\|_*
\leq K_3 \|\omegab_k^*\| \delta_\theta^2.
\label{eq:B3}
\end{equation}
Combining \eqref{eq:B0} to \eqref{eq:B3},
\begin{equation}
\sqrt{n} \, |B| / \sigma_k
\lesssim \sqrt{\frac{\eta_{x,n} \eta_{y,n}}{\underline\kappa / \bar\kappa^2}} \rbr{(\delta_\theta+\lambda_\theta) (\delta_k+\lambda_k) + \|\omegab_k^*\| \delta_\theta^2} \sqrt{n}
:= \Delta_2.
\end{equation}
Next, we bound $|C|$ on $\Ecal_\text{one}$.
Using (E.1), (E.2), (V.1), (V.2),
\begin{equation}
\abr{\frac{\hat\sigma_k}{\sigma_k} - 1}
\leq \abr{\frac{\hat\sigma_k^2 - \sigma_k^2}{\sigma_k^2}}\\
\lesssim (\bar\kappa^2 / \underline\kappa) \|\omegab_k^*\|^2 (\delta_\sigma+\delta_\theta) + \delta_k^2
:= \Delta_3
\end{equation}
by \Cref{lem:varest}.

Taking $A = \sqrt{n} \, A / \sigma_k$, $B = \sqrt{n} \, B / \sigma_k$ $C = (\hat\sigma_k / \sigma_k) - 1$, $\varepsilon_A = \Delta_1$, $\delta_B = \Delta_2$, $\delta_C = \Delta_3$, $\varepsilon_{BC} = \varepsilon_\text{one}$ in \Cref{lem:GAR} concludes the proof.
\end{proof}

\begin{remark}
In the last step, one could just as well apply \Cref{lem:GAR} with $A = \sqrt{n} \, A / \sigma_k$, $B = \sqrt{n} \, B / \sigma_k$ $C = 0$, $\varepsilon_A = \Delta_1$, $\delta_B = \Delta_2$, $\delta_C = 0$, $\varepsilon_{BC} = \varepsilon_\text{one}$ to end up with
\begin{equation*}
\sup_{t \in \RR} \left| \PP\left\{ \sqrt{n} \, (\hat\theta_k - \theta_k^*) / \sigma_k \leq t \right\} - \Phi(t) \right|\\
\leq \Delta_1 + \Delta_2 + \varepsilon_{\text{one}},
\end{equation*}
but this result is not as useful.
\end{remark}

\subsection{\label{supp:proofs:thm:bootstrap}Proof of \Cref{thm:bootstrap}}

For $k = 1, \dots, p$, let
\begin{equation} \label{eq:def_L_hat_boot}
\hat L^B_{n_x,n_y,k}
= -\frac{1}{\sqrt{n}} \, \check\omegab_k^\top \left\{ \eta_{x,n}^{-1} \sum_{i=1}^{n_x} \rbr{\psib(\xb^{(i)}) - \overline{\psib}} \xi_x^{(i)} - \eta_{y,n}^{-1} \sum_{j=1}^{n_y} \rbr{\psib(\yb^{(j)}) \hat r_{\check\theta}(\yb^{(j)}) - \hat\mub(\check\thetab)}\xi_y^{(j)} \right\},
\end{equation}
where $\check\thetab$ is a consistent estimator of $\thetab$, and
\begin{equation*}
\xi_x^{(1)}, \dots, \xi_x^{(n_x)}, \xi_y^{(1)}, \dots, \xi_y^{(n_y)} \iidsim \Ncal(0,1).
\end{equation*}
Note that
\begin{equation*}
\overline{\psib} = \frac{1}{n_x} \sum_{i=1}^{n_x} \psib(\xb^{(i)})
\quad\text{and}\quad
\hat\mub(\check\thetab) = \frac{1}{n_y} \sum_{j=1}^{n_y} \psib(\yb^{(j)}) \hat r_{\check\theta}(\yb^{(j)})
\end{equation*}
are the two components of $\nabla \lKLIEP(\check\thetab)$. The
centering is necessary for variance-matching, because
$\nabla \lKLIEP(\check\thetab) \not\equiv \zero$ for high-dimensional
estimators. In terms of \eqref{eq:def_L_hat_boot}, the statistic to be bootstrapped is written as
\begin{equation*}
\hat T_{n_x,n_y} = \max_k |\hat L^B_{n_x,n_y,k}|.
\end{equation*}

The multiplier bootstrap scheme presupposes that the conditional
distribution of $\hat L^B_{n_x,n_y,k}$ is a good proxy for the
distribution of $\sqrt{n} \, (\hat\theta_k - \theta_k^*)$. It is not
difficult to imagine that the conditional distribution of
$\hat L^B_{n_x,n_y,k}$ is a good proxy for the distribution of
$L_{n_x,n_y,k}$, where
\begin{equation*}
L_{n_x,n_y,k}
= -\frac{1}{\sqrt{n}} \, \omegab_k^{*\top} \left\{ \eta_{x,n}^{-1} \sum_{i=1}^{n_x} \rbr{\psib(\xb^{(i)}) - \mub_\psi} - \eta_{y,n}^{-1} \sum_{j=1}^{n_y} \rbr{\psib(\yb^{(j)}) r_{\theta^*}(\yb^{(j)}) - \mub_\psi} \right\}.
\end{equation*}
Because $L_{n_x,n_y,k}$ is the leading term in the first-order Taylor
approximation of $\sqrt{n} \, (\hat\theta_k - \theta_k^*)$, the
distribution of the former is also close to the distribution of the
latter, and hence, the conditional distribution of
$\hat L^B_{n_x,n_y,k}$ can be used to estimate the quantiles of
$\sqrt{n} \, (\hat\theta_k - \theta_k^*)$.

Let $\Sigmab_\text{pooled}$ be defined as in \eqref{eq:varest}, and let
$\Omegab^* = \Sigmab_\psi^{-1}$. Recall that the $k$th column of
$\Omegab^*$ is $\omegab_k^*$.
For $\lambda_\theta, (\lambda_k)_{k=1}^p, \delta_\theta, (\delta_k)_{k=1}^p \in [0,1)$, define an event
\begin{multline*}
\Ecal_\text{all}
= \Ecal_\text{all}(\lambda_\theta, (\lambda_k)_{k=1}^p, \delta_\theta, (\delta_k)_{k=1}^p) =\\
\cbr{
\begin{array}{r c r c}
\textnormal{(G.1)} & 2 \|\nabla \lKLIEP(\thetab^*)\|_* \leq \lambda_\theta,&
\textnormal{(G.2)} & 2 \|\nabla^2 \lKLIEP(\thetab^*) \omegab_k^* - \eb_k\|_* \leq \lambda_k\ \forall\, k,\\
\textnormal{(E.1)} & \|\check\thetab - \thetab^*\| \leq \delta_\theta,&
\textnormal{(E.2)} & \|\check\omegab_k - \omegab_k^*\| \leq \delta_k\ \forall\, k,\\
\textnormal{(B.1)} & \abr{1 - \frac{\hat Z_y(\thetab^*)}{Z_y(\thetab^*)}} \lesssim \lambda_\theta,&
\textnormal{(B.2)} & \abr{\frac{1}{n_y} \sum_{j=1}^{n_y} \langle \omegab_k^*, \mub_\psi - \psib(\yb^{(j)}) \rangle \, r_{\theta^*}(\yb^{(j)})} \lesssim \lambda_k\ \forall\, k
\end{array}}.
\end{multline*}
Put $\nu_n = 1 \vee \max\{\|\omegab_k^*\| : k = 1, \dots, p\}$, and set
\begin{equation*}
B_n = \frac{(1 \vee \bar\kappa)^3 (1 \vee M_\psi)^3 M_r^3 \nu_n^{21/2}}{\sqrt{\underline\kappa^3 \eta_{x,n} \eta_{y,n}}}
\quad\text{and}\quad
\delta_n = \rbr{\frac{B_n^2 \log^7 (pn)}{n}}^{1/6}.
\end{equation*}

\begin{theorem}[Re-statement of \Cref{thm:bootstrap}]
Assume Conditions 1 and 2. Let $\hat\thetab$ be the estimator
constructed by \Cref{procedure:KLIEP+} with one-step approximation as
\begin{equation*}
\hat\thetab = \check\thetab - \check\Omegab^\top \nabla\lKLIEP(\check\thetab),
\end{equation*}
where $\check\Omegab = [\check\omegab_k]_{k=1}^p \in \RR^{p \times p}$
is the matrix with the $k$th column given by $\check\omegab_k$.
Suppose
\begin{align*}
D_1 & :=
\max_k \sqrt{\frac{\eta_{x,n} \eta_{y,n}}{\underline\kappa / \bar\kappa^2}} \rbr{(\delta_\theta+\lambda_\theta) (\delta_k+\lambda_k) + \|\omegab_k^*\| \delta_\theta^2} \sqrt{n}
\lesssim
\rbr{\frac{B_n^2 \log^4 (pn)}{n}}^{1/6},\\
D_2 & :=
\max_k \frac{\underline\kappa / \bar\kappa^2}{\eta_{x,n}^2 \eta_{y,n}^2} \rbr{\delta_k^2 + \eta_{x,n} \|\omegab_k\|^2 \rbr{\delta_\theta + \lambda_\theta}^2}
\lesssim
\rbr{\frac{B_{n}^2 \log (pn)}{n}}^{1/3}.
\end{align*}
If $\PP(\Ecal_\text{all}) \geq 1 - \varepsilon_{\text{all},n}$, then
\begin{equation*}
\sup_{\alpha \in (0,1)} \abr{\PP\cbr{T_{n_x,n_y} \leq \hat c_{T, \alpha}} - (1 - \alpha)}
= O(\delta_n + \varepsilon_{\text{all},n})
\end{equation*}
with probability at least $1 - \varepsilon_{\text{all},n} - n^{-1}$.
\end{theorem}

\begin{proof}
We prove the result for the case where $\mub_\psi = \EE_x[ \psib(\xb) ] = \EE_y[ \psib(\yb) r_{\theta^*}(\yb) ] = \zero$.
The general result follows by the consistency of empirical averages.

The proof is by Theorems 2.1 and 2.2 of \cite{Belloni2018High}.
The two theorems are Gaussian approximation results for approximate means over the class $\Acal$ of hyper-rectangles in $\RR^p$, in other words, $\Acal$ contains sets of the form
\begin{equation*}
A = \cbr{\vb \in \RR^p : l_k \leq v_k \leq u_k \text{ for all } k = 1, \dots, p},
\end{equation*}
where $-\infty \leq l_k \leq u_k \leq +\infty$ for all $k$.
In the proof of \Cref{thm:main}, we saw that $\sqrt{n} \, (\hat\thetab - \thetab^*)$ may be decomposed as
\begin{equation*}
\sqrt{n} \, (\hat\thetab - \thetab^*) = L_{n} + R_{n},
\end{equation*}
where the leading linear term has the form
\begin{align*}
L_{n}
&= -\frac{1}{\sqrt{n}} \, \Omegab^{*\top} \cbr{\eta_{x,n}^{-1} \sum_{i=1}^{n_x} \psib(\xb^{(i)}) - \eta_{y,n}^{-1} \sum_{j=1}^{n_y} \psib(\yb^{(j)}) r_{\theta^*}(\yb^{(j)})}
\intertext{and the remainder is given by}
R_{n}
&=
\begin{multlined}[t]
\sqrt{n} \, \Bigg\{ \Omegab^{*\top} \rbr{\frac{Z_y(\thetab^*)}{\hat Z_y(\thetab^*)} - 1} \frac{1}{n_y} \sum_{j=1}^{n_y} \psib(\yb^{(j)}) r_{\theta^*}(\yb^{(j)})\\
- \cbr{\check\Omegab - \Omegab^*}^\top \nabla \lKLIEP(\check\thetab) - \cbr{\nabla^2 \lKLIEP(\thetab^*) \Omegab^* - \Ib}^\top \rbr{\check\thetab - \thetab^*} + \Omegab^{*\top} \rb \Bigg\}.
\end{multlined}
\end{align*}
This demonstrates that our problem also falls under the approximate means framework.

Let $\Pb = \PP[\dummy \mid \Xb_{n_x}, \Yb_{n_y}]$ denote the conditional probability given the data.
If applicable, Theorem 2.1 would give us
\begin{equation*}
\sup_{A \in \Acal} \abr{\PP\cbr{\sqrt{n} \, (\hat\thetab - \thetab^*) \in A} - \PP\cbr{\Ncal(\zero, \Omegab^{*\top} \Sigmab_\text{pooled} \Omegab^*) \in A}}
= O\rbr{\delta_n + \varepsilon_{\text{all},n}},
\end{equation*}
and Theorem 2.2 would give us
\begin{equation*}
\sup_{A \in \Acal} \abr{\Pb\cbr{\hat L_n^B \in A} - \PP\cbr{\Ncal(\zero, \Omegab^{*\top} \Sigmab_\text{pooled} \Omegab^*) \in A}}
= O\rbr{\delta_n}
\end{equation*}
with probability at least $1 - \varepsilon_{\text{all},n} - n^{-1}$.
Combining the two statements,
\begin{equation} \label{eq:comparison}
\sup_{A \in \Acal} \abr{\PP\cbr{\sqrt{n} \, (\hat\thetab - \thetab^*) \in A}- \Pb\cbr{\hat L_n^B \in A}}
= O\rbr{\delta_n + \varepsilon_{\text{all},n}}
\end{equation}
with probability at least $1 - \varepsilon_{\text{all},n} - n^{-1}$.
Once \eqref{eq:comparison} is established for $\Acal$, then \emph{a fortiori} \eqref{eq:comparison} is established for the sub-collection
\begin{equation*}
A = \cbr{\vb \in \RR^p : \max_k |v_k| \leq t \text{ for all } k = 1, \dots, p},
\end{equation*}
so that in particular
\begin{equation}
\sup_{A \in \Acal} \left| \PP\left\{ T_{n_x,n_y} \leq \hat c_{T,\alpha} \right\} - (1-\alpha) \right|
= O\rbr{\delta_n + \varepsilon_{\text{all},n}},
\end{equation}
which is the statement of the theorem.

Thus, in a nutshell, our work here boils down to checking that our problem satisfies the conditions of Theorems 2.1 and 2.2 of \cite{Belloni2018High} --- Conditions M, E, and A --- which we restate in context below.

Before we proceed, let
\begin{equation*}
\hat L_{n}
= -\frac{1}{\sqrt{n}} \, \check\Omegab^{\top} \cbr{\eta_{x,n}^{-1} \sum_{i=1}^{n_x} \psib(\xb^{(i)}) - \eta_{y,n}^{-1} \sum_{j=1}^{n_y} \psib(\yb^{(j)}) \hat r_{\check\theta}(\yb^{(j)})}.
\end{equation*}
This is a feasible approximation to $L_{n}$, and this is what we actually bootstrap as $\hat L_{n}^B$.

\paragraph*{Condition M.}
Translated to our problem, Condition M of \cite{Belloni2018High} is
\begin{gather}
\Var[ L_{n,k} ]
= \omegab_k^{*\top} \left\{ \eta_{x,n}^{-1} \Sigmab_\psi + \eta_{y,n}^{-1} \Sigmab_{\psi r} \right\} \omegab_k^*
\geq c \text{ for some } c > 0, \label{eq:cond_M1}\\
\eta_{x,n}^{-2} \EE_x\sbr{|\omegab_k^{*\top} \psib(\xb)|^3} + \eta_{y,n}^{-2} \EE_y\sbr{|\omegab_k^{*\top} \psib(\yb) r_{\theta^*}(\yb)|^3} \leq c^{3/2} B_{n}, \label{eq:cond_M2}\\
\eta_{x,n}^{-3} \EE_x\sbr{|\omegab_k^{*\top} \psib(\xb)|^4} + \eta_{y,n}^{-3} \EE_y\sbr{|\omegab_k^{*\top} \psib(\yb) r_{\theta^*}(\yb)|^4} \leq c^{2\phantom{/3}} B_{n}^2\phantom{,} \label{eq:cond_M3}
\end{gather}
for each $k \in [p]$.

Under \Cref{cond:bddpopeigs}, \eqref{eq:bd_on_var} says
\begin{equation*}
\Var[ L_{n,k} ]
= \omegab_k^{*\top} \left\{ \eta_{x,n}^{-1} \Sigmab_\psi + \eta_{y,n}^{-1} \Sigmab_{\psi r} \right\} \omegab_k^*
\geq \underline{\kappa} / (\bar\kappa^2 \eta_{x,n} \eta_{y,n}) \quad \forall \, k.
\end{equation*}
Thus, \eqref{eq:cond_M1} is satisfied with $c = \underline{\kappa} / (\bar\kappa^2 \eta_{x,n} \eta_{y,n})$.

By \eqref{eq:omega_k:l1}, for all $k$,
\begin{equation} \label{eq:bd_on_x_summand}
|\omegab_k^{*\top} \psib(\xb)|
\leq M_\psi \|\omegab_k^*\|
\end{equation}
and
\begin{equation} \label{eq:bd_on_y_summand}
|\omegab_k^{*\top} \psib(\yb) r_{\theta^*}(\yb)|
\leq M_r M_\psi \|\omegab_k^*\|.
\end{equation}
So,
\begin{equation*}
c^{-3/2} \rbr{\eta_{x,n}^{-2} \EE_x\sbr{|\omegab_k^{*\top} \psib(\xb)|^3} + \eta_{y,n}^{-2} \EE_y\sbr{|\omegab_k^{*\top} \psib(\yb) r_{\theta^*}(\yb)|^3}}\\
\leq \frac{\bar\kappa^3 M_r^3 M_\psi^3 \nu_n^3}{\sqrt{\underline\kappa^3 \eta_{x,n} \eta_{y,n}}}
\leq B_{n}
\end{equation*}
and
\begin{equation*}
c^{-2} \rbr{\eta_{x,n}^{-3} \EE_x\sbr{|\omegab_k^{*\top} \psib(\xb)|^4} + \eta_{y,n}^{-3} \EE_y\sbr{|\omegab_k^{*\top} \psib(\yb) r_{\theta^*}(\yb)|^4}}\\
\leq \frac{\bar\kappa^4 M_r^4 M_\psi^4 \nu_n^4}{\underline\kappa^2 \eta_{x,n} \eta_{y,n}}
\leq B_{n}^2.
\end{equation*}
Thus, both \eqref{eq:cond_M2} and \eqref{eq:cond_M3} are satisfied with $B_n$ as defined in Section~\ref{sec:theory:bootstrap}.

\paragraph*{Condition E.}
Translated to our problem, Condition E of \cite{Belloni2018High} is
\begin{equation*}
\EE_x\sbr{\exp\cbr{\abr{\omegab_k^{*\top} \psib(\xb)} \left/ \rbr{\eta_{x,n} c^{1/2} B_{n} \right.}}}
\leq 2
\end{equation*}
and
\begin{equation*}
\EE_y\sbr{\exp\cbr{\abr{\omegab_k^{*\top} \psib(\yb) r_{\theta^*}(\yb)} \left/ \rbr{\eta_{y,n} c^{1/2} B_{n} \right.}}}
\leq 2
\end{equation*}
with
\begin{equation*}
\rbr{\frac{B_{n}^2 \log^7 (pn)}{n}}^{1/6} \leq \delta_{n}.
\end{equation*}
But these are all immediate by \eqref{eq:bd_on_x_summand}, \eqref{eq:bd_on_y_summand}, and how we defined $B_{n}$ and $\delta_{n}$ in Section~\ref{sec:theory:bootstrap}.

\paragraph*{Condition A.}
Translated to our problem, Condition A of \cite{Belloni2018High} is
\begin{equation} \label{eq:cond_A1}
\PP\cbr{\max_k |R_{n,k}| > c^{1/2} \delta_{n} / \sqrt{\log (pn)}}\leq \varepsilon_{\text{all},n}
\end{equation}
and
\begin{equation} \label{eq:cond_A2}
\PP\cbr{\max_k v_k^2 > c \, \delta_{n}^2 / \log^2 (pn)} \leq \varepsilon_{\text{all},n}
\end{equation}
where
\begin{multline*}
v_k^2
= v_{x,k}^2 + v_{y,k}^2
= \frac{\eta_{x,n}^{-1}}{n_x} \sum_{i=1}^{n_x} \langle \check\omegab_k - \omegab_k^*, \psib(\xb^{(i)}) \rangle^2\\
+ \frac{\eta_{y,n}^{-1}}{n_y} \sum_{j=1}^{n_y} \left( \langle \check\omegab_k, \psib(\yb^{(j)}) \hat r_{\check\theta}(\yb^{(j)}) \rangle - \langle \omegab_k^*, \psib(\yb^{(j)}) r_{\theta^*}(\yb^{(j)}) \rangle \right)^2.
\end{multline*}

We saw in the proof of \Cref{thm:main} that on $\Ecal_\text{all}$,
\begin{equation*}
c^{-1/2} |R_{n,k}|
\lesssim \sqrt{\frac{\eta_{x,n} \eta_{y,n}}{\underline\kappa / \bar\kappa^2}} \rbr{(\delta_\theta+\lambda_\theta) (\delta_k+\lambda_k) + \|\omegab_k^*\| \delta_\theta^2} \sqrt{n}
\quad \forall \, k.
\end{equation*}
Under the conditions of the theorem,
\begin{equation*}
c^{-1/2} |R_{n,k}|
\lesssim \rbr{\frac{B_n^2 \log^4 (pn)}{n}}^{1/6}
= \left. \rbr{\frac{B_n^2 \log^7 (pn)}{n}}^{1/6} \right/ \sqrt{\log (pn)}
\lesssim \delta_n / \sqrt{\log (pn)}
\quad \forall \, k.
\end{equation*}
$v_k^2$ is controlled by obtaining separate bounds for $v_{x,k}^2$ and $v_{y,k}^2$.
For the former,
\begin{equation*}
v_{x,k}^2
= \frac{1}{\eta_{x,n} n_x} \sum_{i=1}^{n_x} \langle \check\omegab_k - \omegab_k^*, \psib(\xb^{(i)}) \rangle^2
\leq \eta_{x,n}^{-1} M_\psi^2 \|\check\omegab_k-\omegab_k^*\|^2
\lesssim \eta_{x,n}^{-1} \delta_k^2
\end{equation*}
In the case of the latter, we first decompose each summand using
\begin{multline*}
\langle \check\omegab_k, \psib(\yb^{(j)}) \hat r_{\check\theta}(\yb^{(j)}) \rangle - \langle \omegab_k^*, \psib(\yb^{(j)}) r_{\theta^*}(\yb^{(j)}) \rangle \\
= \langle \check\omegab_k - \omegab_k^*, \psib(\yb^{(j)}) \hat r_{\check\theta}(\yb^{(j)}) \rangle + \langle \omegab_k^*, \psib(\yb^{(j)}) \rangle \, \left( \hat r_{\check\theta}(\yb^{(j)}) - r_{\theta^*}(\yb^{(j)}) \right).
\end{multline*}
Then,
\begin{equation*}
|\langle \check\omegab_k - \omegab_k^*, \psib(\yb^{(j)}) \hat r_{\check\theta}(\yb^{(j)}) \rangle|
\leq M_\psi M_r^2 \|\check\omegab_k - \omegab_k^*\|,
\end{equation*}
and
\begin{multline*}
\left| \langle \omegab_k^*, \psib(\yb^{(j)}) \rangle \, \left( \hat r_{\check\theta}(\yb^{(j)}) - r_{\theta^*}(\yb^{(j)}) \right) \right| \\
= \left| \langle \omegab_k^*, \psib(\yb^{(j)}) \rangle \left\{ \left( \hat r_{\check\theta}(\yb^{(j)}) - \hat r_{\theta^*}(\yb^{(j)}) \right) + \left( \hat r_{\theta^*}(\yb^{(j)}) - r_{\theta^*}(\yb^{(j)}) \right) \right\} \right| \\
\leq M_\psi \|\omegab_k^*\| \left( L_1 \|\check\thetab - \thetab^*\| + M_r^2 \left| 1 - \frac{\hat Z_y(\thetab^*)}{Z_y(\thetab^*)} \right| \right),
\end{multline*}
where we have used \Cref{lem:Lips}, as well as \eqref{eq:Zhat_over_Z} and \eqref{eq:omega_k:l1}.
Hence,
\begin{align*}
v_{y,k}^2
&= \frac{n}{n_y^2} \sum_{j=1}^{n_y} \left( \langle \check\omegab_k, \psib(\yb^{(j)}) \hat r_{\check\theta}(\yb^{(j)}) \rangle - \langle \omegab_k^*, \psib(\yb^{(j)}) r_{\theta^*}(\yb^{(j)}) \rangle \right)^2\\
&\leq \eta_{y,n}^{-1} \Bigg\{ M_\psi M_r^2 \|\check\omegab_k - \omegab_k^*\| + M_\psi \|\omegab_k^*\| \left( L_1 \|\check\thetab - \thetab^*\| + M_r^2 \left| 1 - \frac{\hat Z_y(\thetab^*)}{Z_y(\thetab^*)} \right| \right) \Bigg\}^2.\\
&\lesssim \eta_{y,n}^{-1} \cbr{\delta_k + \|\omegab_k^*\| \rbr{\delta_\theta+\lambda_\theta}}^2.
\end{align*}
Thus,
\begin{equation*}
v_k^2
\lesssim (\eta_{x,n} \eta_{y,n})^{-1} \delta_k^2 + \eta_{y,n}^{-1} \|\omegab_k\|^2 \rbr{\delta_\theta + \lambda_\theta}^2.
\end{equation*}
Under the conditions of the theorem,
\begin{equation*}
c \, v_k^2
\lesssim \rbr{\frac{B_{n}^2 \log (pn)}{n}}^{1/3}\\
= \left. \rbr{\frac{B_{n}^2 \log^7 (pn)}{n}}^{1/3} \right/ \log^2 (pn)
\lesssim \delta_{n}^2 / \log (pn)
\quad \forall \, k.
\end{equation*}
Clearly,
\begin{equation}
\PP\cbr{\max_k |R_{n,k}| > c^{1/2} \delta_{n} / \sqrt{\log (pn)}} \leq \PP(\Ecal^c) \leq \varepsilon_{\text{all},n}
\end{equation}
and
\begin{equation}
\PP\cbr{\max_k v_k^2 > c \, \delta_{n}^2 / \log^2 (pn)} \leq \PP(\Ecal^c) \leq \varepsilon_{\text{all},n}.
\end{equation}

\paragraph*{Conclusion.}
Subject to some growth constraints, all three of Conditions M, E, and A are satisfied by our problem.
The result follows by the discussion at the start of the proof.
\end{proof}

\section{Proofs for the $\ell_1$-penalty case}

\subsection{\label{supp:l1proofs:cor:main}Proof of \Cref{cor:main}}

\begin{theorem}[Re-statement of \Cref{cor:main}]
Assume \Cref{cond:bdr} with $\ell_1$-norm and \Cref{cond:bddpopeigs}.
Assume additionally that
\begin{eqnarray}
\label{cond:main:lasso:sparsity}
\frac{s_{\theta, 0}}{s_{k, q_k}} \rbr{\frac{n}{\log p}}^{\tfrac{q_k}{4}} \lesssim 1
&\text{and}&
\frac{1}{s_{k, q_k}} \rbr{\frac{\log p}{n}}^{\tfrac{q_k}{4}\tfrac{2-q_k}{1-q_k}} \lesssim 1.
\end{eqnarray}
Let $\hat\theta_k$ be the SparKLIE+1 estimator with tuning parameters
\begin{eqnarray}
\label{cond:main:lasso:lambda}
\lambda_\theta \asymp \rbr{\frac{\log p}{n}}^{1/2}
&\text{and}&
\lambda_k \asymp s_{k, q_k}^{1/(2-q_k)} \rbr{\frac{\log p}{n}}^{1/2}.
\end{eqnarray}
Let $s$ be a sequence of integers satisfying
\begin{equation*}
s \geq s_{\theta, 0} \vee s_{k, q_k} \lambda_k^{-q_k}.
\end{equation*}
Let $\varepsilon_{\text{RSC},n}$ be a sequence in $(0,1)$ decreasing
to 0.
Then, provided that
\begin{equation}
\label{cond:main:lasso:sample}
n_y \geq C' (\bar\kappa / \underline\kappa^2) M_\psi^2 M_r^2 s \log^2 (s) \log (p \vee n_y) \log (n_y) / \varepsilon_{\text{RSC},n}^2,
\end{equation}
where $C' > 0$ is the known, absolute constant determined in \Cref{lem:sparse_norm_H_minus_EH}, we have
\begin{multline*}
\sup_{t \in \RR} \left| \PP\left\{ \sqrt{n} \, (\hat\theta_k - \theta_k^*) / \hat\sigma_k \leq t \right\} - \Phi(t) \right|\\
\leq
O\rbr{s_{\theta, 0} s_{k, q_k}^{2+\tfrac{1-2q_k}{2-q_k}} \rbr{\frac{\log p}{n}}^{1-q_k} \sqrt{n}}
+ \varepsilon_{\text{RSC},n} + c\exp\rbr{-c' \log p}.
\end{multline*}
\end{theorem}

\begin{proof}
For the sake of clarity, we ignore the factors of $\bar\kappa$, $\underline\kappa$, $\eta_{x,n}$, and $\eta_{y,n}$ in calculations.
Detailed bounds are, albeit tedious, not difficult to derive.

By \Cref{thm:main}, it suffices to find an event $\Ecal \subseteq \Ecal_\text{one}$ with $\PP(\Ecal^c) \searrow 0$.
Let
\begin{align*}
\Hb(\thetab):
&= \frac{\hat Z_y^2(\thetab)}{Z_y^2(\thetab)} \nabla^2 \lKLIEP(\thetab)\\
&= \frac{1}{n_y^2} \sum_{1 \leq j < j' \leq n_y} \rbr{\psib(\yb^{(j)}) - \psib(\yb^{(j')})} \rbr{\psib(\yb^{(j)}) - \psib(\yb^{(j')})}^\top r_\theta(\yb^{(j)}) r_\theta(\yb^{(j')}).
\end{align*}
Consider the event
\begin{multline*}
\Ecal_\text{one}^\text{L} =\\
\cbr{
\begin{array}{c}
\begin{array}{r c r c}
\textnormal{(G.1)} & 2 \|\nabla \lKLIEP(\thetab^*)\|_\infty \leq \lambda_\theta,&
\textnormal{(G.2)} & 2 \|\nabla^2 \lKLIEP(\thetab^*) \omegab_k^* - \eb_k\|_\infty \leq \lambda_k,
\end{array}\\
\begin{array}{r c}
\textnormal{(B.1)} & \abr{1 - \frac{\hat Z_y(\thetab^*)}{Z_y(\thetab^*)}} \lesssim \lambda_\theta,\\
\textnormal{(B.2)} & \abr{\frac{1}{n_y} \sum_{j=1}^{n_y} \langle \omegab_k^*, \mub_\psi - \psib(\yb^{(j)}) \rangle r_{\theta^*}(\yb^{(j)})} \lesssim \lambda_k,
\end{array}\\
\begin{array}{r c r c}
\textnormal{(V.1)} & \|\hat\Sbb_\psi - \Sigmab_\psi\|_\infty \lesssim s_{\theta, 0} \lambda_\theta&
\textnormal{(V.2)} & \|\hat\Sbb_{\psi\hat r}(\thetab^*) - \Sigmab_{\psi r}\|_\infty \lesssim s_{\theta, 0} \lambda_\theta,
\end{array}\\
\begin{array}{r c}
\textnormal{(SE)} & \vertiii{\Hb(\thetab^*) - \EE_y\Hb(\thetab^*)}_s \leq \underline\kappa / 128
\end{array}
\end{array}}.
\end{multline*}
Note that (SE) replaces (E.1) and (E.2) in the definition of $\Ecal_\text{one}$.
We shall show
\begin{itemize}[nosep]
\item (G.1) and (SE) imply (E.1), and
\item (G.2) and (SE) in conjunction with (E.1) imply (E.2),
\end{itemize}
so that $\Ecal_\text{one}^\text{L} \subseteq \Ecal_\text{one}$.

Define
\begin{equation*}
\Kcal(S,\beta,\rho) = \cbr{\vb \in \RR^p : \|\vb_{S^c}\|_1 \leq \beta \|\vb_S\|_1 + (1+\beta) \rho,\ \|\vb\| \leq 1}
\end{equation*}
for any $S \subseteq [p]$, $S \neq \varnothing$, $\beta \geq 0$, $\rho \geq 0$.
We shall use this with
\begin{eqnarray*}
S_\theta = \cbr{k' : |\theta^*_{k'}| > \lambda_\theta},&
s_\theta = |S_\theta|,&
\rho_\theta = \|\thetab^*_{S_\theta^c}\|_1
\end{eqnarray*}
and
\begin{eqnarray*}
S_k = \cbr{k' : |\omega^*_{k,k'}| > \lambda_k},&
s_k = |S_k|,&
\rho_k = \|\omegab_{k,S_k^c}^*\|_1.
\end{eqnarray*}

By the first part of \Cref{lem:RSC}, (B.1) and (SE) imply
\begin{equation*}
\vb^\top \nabla^2 \lKLIEP(\thetab^*) \vb \geq c_1\underline{\kappa} \|\vb\|^2 - c_2 \rho_\theta^2 / s_\theta
\quad\text{for all}\quad \vb \in \Kcal(S_\theta,3,\rho_\theta).
\end{equation*}
Combining this with (G.1), \Cref{lem:consistency:1} gives us
\begin{equation} \label{eq:Step1:l1rate}
\|\check\thetab - \thetab^*\|_1
\lesssim s_{\theta, 0} \lambda_\theta
\asymp s_{\theta, 0} \rbr{\frac{\log p}{n}}^{1/2},
\end{equation}
where we have used the condition on $\lambda_\theta$ \eqref{cond:main:lasso:lambda}.
Under the conditions of the corollary, the second part of \Cref{lem:RSC} imply
\begin{equation*}
\vb^\top \nabla^2 \lKLIEP(\thetab^*) \vb \geq c_3\underline{\kappa} \|\vb\|^2
\quad\text{for all}\quad \vb \in \Kcal(S_k,6,0).
\end{equation*}
Combining this with (G.2), \Cref{lem:consistency:2} gives us
\begin{align*}
\|\check\omegab_k - \omegab_k^*\|_1
&\lesssim \|\check\thetab - \thetab^*\|_1^2 s_{k, q_k} \lambda_k^{-1-q_k}
+ s_{k, q_k}^2 \lambda_k^{1-2q_k}
+ s_{k, q_k} \lambda_k^{1-q_k}\\
&\lesssim s_{\theta, 0}^2 \lambda_\theta^{2} s_{k, q_k} \lambda_k^{-1-q_k}
+ s_{k, q_k}^2 \lambda_k^{1-2q_k}
+ s_{k, q_k} \lambda_k^{1-q_k}\\
&\lesssim
\begin{multlined}[t]
s_{\theta, 0}^2 s_{k, q_k}^{1-\tfrac{1+q_k}{2-q_k}} \rbr{\frac{\log p}{n}}^{(1-q_k)/2}\\
+ s_{k, q_k}^{2+\tfrac{1-2q_k}{2-q_k}} \rbr{\frac{\log p}{n}}^{(1-2q_k)/2}
+ s_{k, q_k}^{1+\tfrac{1-q_k}{2-q_k}} \rbr{\frac{\log p}{n}}^{(1-q_k)/2}
\end{multlined}\\
&\lesssim s_{k, q_k}^{2+\tfrac{1-2q_k}{2-q_k}} \rbr{\frac{\log p}{n}}^{(1-2q_k)/2}.
\end{align*}
where we have used the assumptions \eqref{cond:main:lasso:sparsity} and \eqref{cond:main:lasso:lambda}, as well as \eqref{eq:Step1:l1rate}.
Thus,
\begin{equation}
\Delta_2 \lesssim
s_{\theta, 0} s_{k, q_k}^{2+\tfrac{1-2q_k}{2-q_k}} \rbr{\frac{\log p}{n}}^{1-q_k} \sqrt{n}.
\label{eq:main:lasso:Delta2:case2}
\end{equation}
The terms corresponding to $\Delta_1$ and $\Delta_3$ are of smaller order, so we ignore them.

Next, we bound $\PP({\Ecal_\text{one}^\text{L}}^c)$.
Let
\begin{align*}
\Ecal_1 &= \cbr{2 \|\nabla \lKLIEP(\thetab^*)\|_\infty \leq \lambda_\theta},\\
\Ecal_2 &= \cbr{2 \|\nabla^2 \lKLIEP(\thetab^*) \omegab_k^* - \eb_k\|_\infty \leq \lambda_k},\\
\Ecal_3 &= \cbr{\abr{1 - \frac{\hat Z_y(\thetab^*)}{Z_y(\thetab^*)}} \lesssim \lambda_\theta},\\
\Ecal_4 &= \cbr{\abr{\frac{1}{n_y} \sum_{j=1}^{n_y} \langle \omegab_k^*, \mub_\psi - \psib(\yb^{(j)}) \rangle r_{\theta^*}(\yb^{(j)})} \lesssim \lambda_k},\\
\Ecal_5 &= \cbr{\|\hat\Sbb_\psi - \Sigmab_\psi\|_\infty \lesssim s_{\theta, 0} \lambda_\theta},\\
\Ecal_6 &= \cbr{\|\hat\Sbb_{\psi\hat r}(\thetab^*) - \Sigmab_{\psi r}\|_\infty \lesssim s_{\theta, 0} \lambda_\theta},\\
\Ecal_7 &= \cbr{\vertiii{\Hb(\thetab^*) - \EE_y\Hb(\thetab^*)}_2 \leq \underline\kappa / 128}.
\end{align*}
Clearly,
\begin{equation*}
\PP({\Ecal_\text{one}^\text{L}}^c) \leq \sum_{\ell=1}^7 \PP(\Ecal_\ell^c).
\end{equation*}
Under the conditions of the corollary, \Cref{lem:grad:1} and \Cref{lem:grad:2} indicate that
\begin{gather*}
\PP(\Ecal_1^c)
= \PP\left\{ 2\|\nabla \lKLIEP(\thetab^*)\|_\infty > \lambda_\theta \right\}
\leq c_4\exp\rbr{-c_4' \log p},\\
\PP(\Ecal_2^c)
= \PP\left\{ 2\|\hat\Hb(\thetab^*) \omegab_k^* - \eb_k\|_\infty > \lambda_k \right\}
\leq c_5\exp\rbr{-c_5' \log p}.
\end{gather*}
\Cref{lem:Hoeffding_r} says
\begin{equation*}
\PP(\Ecal_3^c)
= \PP\cbr{\abr{\frac{\hat Z_y(\thetab^*)}{Z_y(\thetab^*)} - 1} \gtrsim \lambda_\theta}
\leq c_6\exp\rbr{-c_6' \log p}.
\end{equation*}
Because $\{\langle \omegab_k^*, \mub_\psi-\psib(\yb^{(j)}) \rangle \, r_{\theta^*}(\yb^{(j)})\}_{j=1}^{n_y}$ are bounded mean-zero i.i.d.~random variables, we also have the following Hoeffding bound
\begin{equation*}
\PP(\Ecal_4^c)
= \PP\cbr{\abr{\frac{1}{n_y} \sum_{j=1}^{n_y} \langle \omegab_k^*, \mub_\psi - \psib(\yb^{(j)}) \rangle r_{\theta^*}(\yb^{(j)})} \gtrsim \lambda_k}
\leq c_7\exp\rbr{-c_7' \log p}.
\end{equation*}
\Cref{lem:sample_cov:psi} and \Cref{lem:sample_cov:psi_rhat} indicate that
\begin{gather*}
\PP(\Ecal_5^c)
= \PP\left\{ \|\Sbb_\psi - \Sigmab_\psi\|_\infty \gtrsim s_{\theta, 0} \lambda_\theta \right\}
\leq c_8\exp\rbr{-c_8' \log p},\\
\PP(\Ecal_6^c)
= \PP\left\{ \|\hat\Sbb_{\psi\hat r}(\thetab^*) - \Sigmab_{\psi r}\|_\infty \gtrsim s_{\theta, 0} \lambda_\theta \right\}
\leq c_9\exp\rbr{-c_9' \log p}.
\end{gather*}
Furthermore, \Cref{lem:sparse_norm_H_minus_EH} gives
\begin{equation*}
\PP(\Ecal_7^c) \leq \varepsilon_{\text{RSC},n}.
\end{equation*}
Therefore,
\begin{equation} \label{eq:main:lasso:eps}
\PP({\Ecal_\text{one}^\text{L}}^c) \leq \varepsilon_{\text{RSC},n} + c\exp\rbr{-c' \log p}
\end{equation}
for some constants $c, c' > 0$.

We complete the proof by combining the bound from \eqref{eq:main:lasso:Delta2:case2} and the bound from \eqref{eq:main:lasso:eps} with \eqref{eq:main:final_bd}:
\begin{multline*}
\sup_{t \in \RR} \left| \PP\left\{ \sqrt{n} \, (\hat\theta_k - \theta_k^*) / \hat\sigma_k \leq t \right\} - \Phi(t) \right|\\
\leq
O\rbr{s_{\theta, 0} s_{k, q_k}^{2+\tfrac{1-2q_k}{2-q_k}} \rbr{\frac{\log p}{n}}^{1-q_k} \sqrt{n}}
+ \varepsilon_{\text{RSC},n} + c\exp\rbr{-c' \log p}.
\end{multline*}
\end{proof}

\subsection{\label{supp:l1proofs:cor:bootstrap}Proof of \Cref{cor:bootstrap}}

\begin{theorem}[Re-statement of \Cref{cor:bootstrap}]
Assume \Cref{cond:bdr} with $\ell_1$-norm and \Cref{cond:bddpopeigs}.
Suppose $T_{n_x, n_y} = \max_k \sqrt{n} \, |\hat\theta_k - \theta_k^*|$, where $\hat\thetab$ is the SparKLIE+1 estimator with tuning parameters
\begin{eqnarray*}
\lambda_\theta \asymp \rbr{\frac{\log p}{n}}^{1/2}
&\text{and}&
\lambda_k \asymp \rbr{\frac{s_{k, 0} \log p}{n}}^{1/2}, \quad k = 1, \dots, p.
\end{eqnarray*}
Let $s$ be a sequence of integers satisfying $s \geq s_{\theta, 0}, s_{k, 0}$, $k = 1, \dots, p$.
Let $\varepsilon_{\text{RSC},n}$ be a sequence in $(0,1)$ decreasing to 0.
Then, provided that
\begin{equation*}
n_y \geq C' (\bar\kappa / \underline\kappa^2) M_\psi^2 M_r^2 s \log^2 (s) \log (p \vee n_y) \log (n_y) / \varepsilon_{\text{RSC},n}^2,
\end{equation*}
where $C' > 0$ is the known, absolute constant determined in \Cref{lem:sparse_norm_H_minus_EH}, we have
\begin{equation*}
\sup_{\alpha \in (0,1)} \abr{\PP\cbr{T_{n_x,n_y} \leq \hat c_{T, \alpha}} - (1 - \alpha)}
= O(\delta_n + \varepsilon_{\text{RSC},n} + c\exp\rbr{-c' \log p})
\end{equation*}
with probability at least $1 - \varepsilon_{\text{RSC},n} - c\exp\rbr{-c' \log p} - n^{-1}$.
\end{theorem}

\begin{proof}
For the sake of clarity, we ignore the factors of $\bar\kappa$, $\underline\kappa$, $\eta_{x,n}$, and $\eta_{y,n}$ in calculations.
Detailed bounds are, albeit tedious, not difficult to derive.

As in the proof of \Cref{cor:main}, the key to the proof is in finding an event $\Ecal \subseteq \Ecal_\text{all}$ with $\PP(\Ecal^c) \searrow 0$.
Let $\Hb(\thetab) = (\hat Z_y^2(\thetab) / Z_y^2(\thetab)) \nabla^2 \lKLIEP(\thetab)$.
Consider
\begin{multline*}
\Ecal_\text{all}^\text{L} =\\
\cbr{
\begin{array}{c}
\begin{array}{r c r c}
\textnormal{(G.1)} & 2 \|\nabla \lKLIEP(\thetab^*)\|_\infty \leq \lambda_\theta,&
\textnormal{(G.2)} & 2 \|\nabla^2 \lKLIEP(\thetab^*) \omegab_k^* - \eb_k\|_\infty \leq \lambda_k \ \forall \, k,
\end{array}\\
\begin{array}{r c}
\textnormal{(B.1)} & \abr{1 - \frac{\hat Z_y(\thetab^*)}{Z_y(\thetab^*)}} \lesssim \lambda_\theta,\\
\textnormal{(B.2)} & \abr{\frac{1}{n_y} \sum_{j=1}^{n_y} \langle \omegab_k^*, \mub_\psi - \psib(\yb^{(j)}) \rangle r_{\theta^*}(\yb^{(j)})} \lesssim \lambda_k\ \forall \, k,\\
\textnormal{(SE)} & \vertiii{\Hb(\thetab^*) - \EE_y\Hb(\thetab^*)}_s \leq \underline\kappa / 128
\end{array}
\end{array}}.
\end{multline*}
Following the argument of the proof of \Cref{cor:main}, on $\Ecal_\text{all}^\text{L}$,
\begin{eqnarray*}
\delta_\theta \lesssim \rbr{\frac{s^2 \log p}{n}}^{1/2}
&\text{and}&
\delta_k \lesssim \rbr{\frac{s^5 \log p}{n}}^{1/2}
\quad \forall \, k,
\end{eqnarray*}
and hence,
\begin{eqnarray*}
D_1
\lesssim \frac{s^{7/2} \log p}{\sqrt{n}}
\lesssim \rbr{\frac{B_n^2 \log^4 (pn)}{n}}^{1/6}
&\text{and}&
D_2
\lesssim \frac{s^5 \log p}{n}
\lesssim \rbr{\frac{B_n^2 \log (pn)}{n}}^{1/3}.
\end{eqnarray*}

We finish the proof by finding a bound for $\varepsilon_{\text{all},n}$.
Let
\begin{align*}
\Ecal_1 &= \cbr{2 \|\nabla \lKLIEP(\thetab^*)\|_\infty \leq \lambda_\theta},\\
\Ecal_{2k} &= \cbr{2 \|\nabla^2 \lKLIEP(\thetab^*) \omegab_k^* - \eb_k\|_\infty \leq \lambda_k},\\
\Ecal_3 &= \cbr{\abr{1 - \frac{\hat Z_y(\thetab^*)}{Z_y(\thetab^*)}} \lesssim \lambda_\theta},\\
\Ecal_{4k} &= \cbr{\abr{\frac{1}{n_y} \sum_{j=1}^{n_y} \langle \omegab_k^*, \mub_\psi - \psib(\yb^{(j)}) \rangle r_{\theta^*}(\yb^{(j)})} \lesssim \lambda_k},\\
\Ecal_5 &= \cbr{\vertiii{\Hb(\thetab^*) - \EE_y\Hb(\thetab^*)}_s \leq \underline\kappa / 128},
\end{align*}
so that
\begin{equation*}
\varepsilon_{\text{all},n}
\leq \PP({\Ecal_\text{all}^\text{L}}^c)
\leq \PP(\Ecal_1^c) + \sum_{k=1}^p \PP(\Ecal_{2k}^c) + \PP(\Ecal_3^c) + \sum_{k=1}^p \PP(\Ecal_{4k}^c) + \PP(\Ecal_5^c).
\end{equation*}
By a sequence of arguments similar to that in the proof of \Cref{cor:main},
\begin{equation*}
\varepsilon_{\text{all},n} \leq \varepsilon_{\text{RSC},n} + c\exp\rbr{-c' \log p}.
\end{equation*}
\end{proof}

\subsection{\label{supp:consistency:l1}Consistency of $\ell_1$-penalized estimators}

In the following,
\begin{equation*}
\Kcal(S,\beta,\rho) = \{ \vb \in \RR^p : \|\vb_{S^c}\|_1 \leq \beta \|\vb_S\|_1 + (1+\beta) \rho,\ \|\vb\| \leq 1\},
\end{equation*}
where $S \subseteq [p]$ is nonempty, $\beta \geq 0$, and $\rho \geq 0$.

\begin{lemma} \label[lemma]{lem:consistency:1}
Consider the optimization problem \eqref{eq:KLIEP+:1} using $\ell_1$-penalty and a regularization parameter $\lambda_\theta$ satisfying
\begin{equation*}
\lambda_\theta \geq 2\|\nabla \lKLIEP(\thetab^*)\|_\infty.
\end{equation*}
Suppose, in addition, it holds that
\begin{equation*}
\vb^\top \nabla^2 \lKLIEP(\thetab^*) \vb \geq c \underline\kappa \|\vb\|_2^2 - c' \frac{\rho_\theta^2}{s_{\theta, 0}} \quad\text{for}\quad \vb \in \Kcal(S_\theta,3,\rho_\theta),
\end{equation*}
for some $c, c' > 0$, where
\begin{eqnarray*}
S_\theta = \cbr{k' : |\theta^*_{k'}| > \lambda_\theta}, &
s_\theta = |S_\theta|, &
\rho_\theta = \|\thetab^*_{S_\theta^c}\|_1.
\end{eqnarray*}
Then any solution $\check\thetab$ satisfies
\begin{equation*}
\|\check\thetab - \thetab^*\|_1 \lesssim
(1+\underline\kappa^{-1}) \|\thetab^*\|_{q_\theta} \lambda_\theta^{1-q_\theta}.
\end{equation*}
\end{lemma}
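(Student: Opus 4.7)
The argument follows the now-standard template for $\ell^1$-regularized M-estimation with weakly sparse targets (cf.\ Negahban--Ravikumar--Wainwright--Yu), specialized to the convex, non-quadratic KLIEP loss. Let $\hat\Delta = \check\thetab - \thetab^*$. The plan is to (i) show that $\hat\Delta$ lies in a suitable $\ell^1$-cone, (ii) leverage the restricted strong convexity hypothesis on that cone to bound $\|\hat\Delta\|$, and (iii) convert the $\ell^2$-bound to an $\ell^1$-bound that is further expressed in the $\ell^{q_\theta}$-norm of $\thetab^*$.

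First I would write down the basic inequality arising from the optimality of $\check\thetab$ for \eqref{eq:1Step:1}, namely $\lKLIEP(\check\thetab) - \lKLIEP(\thetab^*) \leq \lambda_\theta (\|\thetab^*\|_1 - \|\check\thetab\|_1)$. Splitting $\|\thetab^* + \hat\Delta\|_1$ against $\|\thetab^*\|_1$ across $S_\theta$ and $S_\theta^c$, and using the dual-norm bound $|\langle\nabla\lKLIEP(\thetab^*),\hat\Delta\rangle| \leq \tfrac{1}{2}\lambda_\theta\|\hat\Delta\|_1$ afforded by the hypothesis $\lambda_\theta \geq 2\|\nabla\lKLIEP(\thetab^*)\|_\infty$, a routine manipulation gives $\|\hat\Delta_{S_\theta^c}\|_1 \leq 3\|\hat\Delta_{S_\theta}\|_1 + 4\rho_\theta$, so that $\hat\Delta/\|\hat\Delta\| \in \Kcal(S_\theta, 3, \rho_\theta)$ (after trivial rescaling so $\|\vb\|\leq 1$).

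To invoke the RSC assumption, I need to lower bound $\lKLIEP(\check\thetab) - \lKLIEP(\thetab^*) - \langle \nabla \lKLIEP(\thetab^*), \hat\Delta \rangle$ by $\tfrac12 \hat\Delta^\top \nabla^2 \lKLIEP(\thetab^*)\hat\Delta$ up to an absorbable remainder. The Taylor expansion writes this as $\int_0^1(1-t)\hat\Delta^\top\nabla^2\lKLIEP(\thetab^* + t\hat\Delta)\hat\Delta\,dt$; the uniform bound on the third derivative of $\lKLIEP$ on $\bar\Bcal_\varrho(\thetab^*)$ (invoked in the proof of \Cref{thm:main}) together with \Cref{cond:bdr} controls the deviation of $\nabla^2\lKLIEP(\thetab^* + t\hat\Delta)$ from $\nabla^2\lKLIEP(\thetab^*)$ by a multiple of $\|\hat\Delta\|_1$. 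A short bootstrapping step --- either a standard contradiction argument using the convexity of the penalized objective, or comparing $\check\thetab$ to the boundary of $\bar\Bcal_\varrho(\thetab^*)$ --- confirms $\hat\Delta \in \bar\Bcal_\varrho(\thetab^*)$, so the Hessian perturbation can be folded into the constant $c$ in the RSC inequality. Applying RSC then yields
\[
c\underline\kappa\|\hat\Delta\|^2 - c'\rho_\theta^2/s_\theta \lesssim \lambda_\theta\|\hat\Delta\|_1 \lesssim \lambda_\theta(\sqrt{s_\theta}\|\hat\Delta\| + \rho_\theta),
\]
from which $\|\hat\Delta\| \lesssim \underline\kappa^{-1}\sqrt{s_\theta}\,\lambda_\theta + \underline\kappa^{-1/2}\rho_\theta/\sqrt{s_\theta}$ by the quadratic formula. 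Combining this with the cone bound $\|\hat\Delta\|_1 \leq 4\sqrt{s_\theta}\|\hat\Delta\| + 4\rho_\theta$ gives $\|\hat\Delta\|_1 \lesssim (1+\underline\kappa^{-1})(s_\theta\lambda_\theta + \rho_\theta)$. Closing the proof is then purely mechanical: the weak-sparsity inequalities $s_\theta \leq \|\thetab^*\|_{q_\theta}\lambda_\theta^{-q_\theta}$ and $\rho_\theta \leq \|\thetab^*\|_{q_\theta}\lambda_\theta^{1-q_\theta}$, which follow from the definitions of $S_\theta$ and $\rho_\theta$, turn both summands into $(1+\underline\kappa^{-1})\|\thetab^*\|_{q_\theta}\lambda_\theta^{1-q_\theta}$, as claimed.

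The main obstacle is the transition from the hypothesis, which is stated only in terms of $\nabla^2\lKLIEP(\thetab^*)$, to an inequality on the Bregman divergence of $\lKLIEP$ itself. For a quadratic loss this step is trivial, but here it requires verifying that $\hat\Delta$ stays inside the region where the Hessian is close to its value at $\thetab^*$. The bootstrapping argument that controls $\|\hat\Delta\|_1$ a priori (using convexity and the smallness of $\lambda_\theta$) is what allows the local Hessian to serve as a proxy for the full Bregman divergence; everything else reduces to bookkeeping with cone inequalities.
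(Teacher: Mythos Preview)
Your proposal is correct and follows the same Negahban--Ravikumar--Wainwright--Yu template that the paper uses: the paper's proof simply invokes Theorem~1 of \cite{negahban2010unified} as a black box to obtain the $\ell^2$-bound $\|\check\thetab-\thetab^*\|_2^2 \lesssim s_\theta\lambda_\theta^2/\underline\kappa^2 + \lambda_\theta\rho_\theta/\underline\kappa + \rho_\theta^2/(\underline\kappa s_\theta)$, then applies the weak-sparsity inequalities $s_\theta \leq \|\thetab^*\|_{q_\theta}\lambda_\theta^{-q_\theta}$ and $\rho_\theta \leq \|\thetab^*\|_{q_\theta}\lambda_\theta^{1-q_\theta}$, and finally converts via $\|\hat\Delta\|_1 \leq 4\sqrt{s_\theta}\|\hat\Delta\| + 4\rho_\theta$, exactly as you outline.

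The one point where you go further than the paper is the ``main obstacle'' you flag: the hypothesis is on $\nabla^2\lKLIEP(\thetab^*)$, whereas the Negahban et~al.\ theorem requires RSC of the Bregman divergence. The paper does not address this step within the lemma's proof at all; it is implicitly handled elsewhere (the second part of \Cref{lem:RSC} shows the Hessian bound persists along the segment $\thetab^* + t\hat\Delta$, up to a factor $\exp(-2M_\psi(M_r^2+1)\|\hat\Delta\|_1)$, which is bounded once $\hat\Delta$ is in $\bar\Bcal_\varrho(\thetab^*)$). Your proposed bootstrapping via the third-derivative bound is an equally valid way to close the gap, and arguably cleaner than the paper's reliance on an external result plus a separate Hessian-transport lemma.
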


\begin{proof}
By a direct application of Theorem 1 of \cite{negahban2010unified}.,
\begin{equation} \label{eq:consistency1}
\|\check\thetab - \thetab^*\|_2^2
\leq \frac{9 s_\theta \lambda_\theta^2}{c^2 \underline\kappa^2} + \frac{4 \lambda_\theta \rho_\theta}{c \underline\kappa} + \frac{2 c' \lambda_\theta \rho_\theta^2}{c \underline\kappa s_\theta}.
\end{equation}
By \eqref{eq:bd_on_strong_signals} and \eqref{eq:bd_on_weak_signals},
\begin{equation*}
s_\theta \leq \|\thetab^*\|_{q_\theta} \lambda_\theta^{-q_\theta}
\quad\text{and}\quad
\rho_\theta \leq \|\thetab^*\|_{q_\theta} \lambda_\theta^{1-q_\theta},
\end{equation*}
so that
\begin{multline*}
\|\check\thetab - \thetab^*\|_2^2
\leq \frac{9 \|\thetab^*\|_{q_\theta} \lambda_\theta^{2-q_\theta}}{c^2 \underline\kappa^2}
+ \frac{4 \|\thetab^*\|_{q_\theta} \lambda_\theta^{2-q_\theta}}{c \underline\kappa}
+ \frac{2 c' \|\thetab^*\|_{q_\theta}^2 \lambda_\theta^{3-2q_\theta}}{c \underline\kappa s_\theta}\\
= \underline\kappa^{-2} \|\thetab^*\|_{q_\theta} \lambda_\theta^{2-q_\theta} \rbr{\frac{9}{c^2} + \frac{4}{c} \underline\kappa +\frac{2 c'}{c} \underline\kappa \|\thetab^*\|_{q_\theta} \lambda^{1-q_\theta}}
\leq K_1 \underline\kappa^{-2} \|\thetab^*\|_{q_\theta} \lambda_\theta^{2-q_\theta}
\end{multline*}
for an appropriate choice of $K_1 > 0$.
Therefore,
\begin{equation} \label{event:consistency1}
\|\check\thetab - \thetab^*\|_1
\leq 4 \sqrt{s_\theta} \|\check\thetab - \thetab^*\| + 4 \rho_\theta
\leq K_2 \underline\kappa^{-1} \|\thetab^*\|_{q_\theta} \lambda_\theta^{1-q_\theta} + 4\|\thetab^*\|_{q_\theta} \lambda_\theta^{1-q_\theta}
\leq K_3 (1+\underline\kappa^{-1}) \|\thetab^*\|_{q_\theta} \lambda_\theta^{1-q_\theta}.
\end{equation}
\end{proof}

\begin{lemma} \label[lemma]{lem:consistency:2}
Assume \Cref{cond:bdr}.
Consider the optimization problem \eqref{eq:KLIEP+:2} using $\ell_1$-penalty and a regularization parameter $\lambda_k$ satisfying
\begin{equation*}
\lambda_k \geq 2\|\nabla^2 \lKLIEP(\thetab^*) \omegab_k^* - \eb_k\|_\infty.
\end{equation*}
Suppose, in addition, it holds that
\begin{equation*}
\vb^\top \nabla^2 \lKLIEP(\check\thetab) \vb \geq c \underline\kappa \|\vb\|_2^2 \quad\text{for}\quad \vb \in \Kcal(S_k,6,0),
\end{equation*}
for some $c > 0$, where $S_k = \cbr{k' : |\omega^*_{k'}| > \lambda_k}$.
Then any solution $\check\omegab_k$ satisfies
\begin{equation*}
\|\check\omegab_k - \omegab_k^*\|_1 \lesssim
\underline\kappa^{-2} \|\check\thetab - \thetab^*\|_1^2 s_{k, q_k} \lambda_k^{-1-q_k}
+ s_{k, q_k}^2 \lambda_k^{1-2q_k}
+ \underline\kappa^{-1} s_{k, q_k} \lambda_k^{1-q_k}.
\end{equation*}
\end{lemma}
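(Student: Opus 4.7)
The approach is to treat \eqref{eq:1Step:2} as an $\ell^1$-regularized $M$-estimation problem in $\omegab$ with target $\omegab_k^*$, and to invoke Theorem~1 of \cite{negahban2010unified} just as in the proof of \Cref{lem:consistency:1}. Two wrinkles arise: (i) the Hessian in the objective is the sample Hessian at the \emph{data-dependent} point $\check\thetab$ rather than at $\thetab^*$, so the ``score'' of the objective at the target is not directly dominated by $\lambda_k$; and (ii) the RSC hypothesis is already stated at $\check\thetab$, which spares us from transferring RSC from population to sample. Let $\rho_k := \|\omegab_{k,S_k^c}^*\|_1$ in parallel with the previous lemma.

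The gradient of the objective at $\omegab_k^*$ is $\nabla^2\lKLIEP(\check\thetab)\omegab_k^* - \eb_k$. I split
\[
\nabla^2\lKLIEP(\check\thetab)\omegab_k^* - \eb_k
= \bigl[\nabla^2\lKLIEP(\thetab^*)\omegab_k^* - \eb_k\bigr]
+ \bigl[\nabla^2\lKLIEP(\check\thetab) - \nabla^2\lKLIEP(\thetab^*)\bigr]\omegab_k^*,
\]
where the first bracket has $\ell^\infty$-norm $\leq\lambda_k/2$ by hypothesis, and the second is controlled via the uniform bound on the third-order tensor of $\lKLIEP$ (the same device used for the terms $B_{12}$ and $B_3$ in the proof of \Cref{thm:main}), giving $O(\|\check\thetab-\thetab^*\|_1\,\|\omegab_k^*\|_1)$. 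This yields an inflated effective score bound $\tilde\lambda_k \asymp \lambda_k + \|\check\thetab-\thetab^*\|_1\,\|\omegab_k^*\|_1$, and the usual subgradient/KKT manipulation forces $\check\omegab_k-\omegab_k^*$ into the cone $\Kcal(S_k,3,\rho_k)\subseteq\Kcal(S_k,6,0)$, placing us in the domain of the assumed RSC.

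With $\tilde\lambda_k$ and the RSC hypothesis plugged into the Negahban framework, I obtain an $\ell^2$-bound of order $\underline\kappa^{-2}(s_k\tilde\lambda_k^{\,2}+\tilde\lambda_k\rho_k)$; the cone inequality $\|\Delta\|_1\leq 4\sqrt{s_k}\|\Delta\|_2+4\rho_k$ then transfers it to $\ell^1$. Substituting the weak-sparsity bookkeeping $s_k\leq\|\omegab_k^*\|_{q_k}\lambda_k^{-q_k}$ and $\rho_k\leq\|\omegab_k^*\|_{q_k}\lambda_k^{1-q_k}$, and absorbing $\|\omegab_k^*\|_1$ via $\|\omegab_k^*\|_\infty\leq\|\omegab_k^*\|_2\leq\underline\kappa^{-1}$ (from \Cref{cond:bddpopeigs} since $\omegab_k^*=\Sigmab_\psi^{-1}\eb_k$) combined with the support partition, three surviving contributions emerge: the pure-$\lambda_k$ term $\underline\kappa^{-1}\|\omegab_k^*\|_{q_k}\lambda_k^{1-q_k}$ from $s_k\lambda_k/\underline\kappa$; the $\rho_k$-driven term $\|\omegab_k^*\|_{q_k}^2\lambda_k^{1-2q_k}$ from the $\sqrt{s_k\lambda_k\rho_k}$-type cross term; and the Hessian-perturbation term $\underline\kappa^{-2}\|\check\thetab-\thetab^*\|_1^2\|\omegab_k^*\|_{q_k}\lambda_k^{-1-q_k}$ from the $s_k\cdot(\|\check\thetab-\thetab^*\|_1\|\omegab_k^*\|_1)^2$ piece of the expanded $\tilde\lambda_k^{\,2}$.

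The main obstacle is keeping $\tilde\lambda_k$ explicit throughout instead of collapsing it to a deterministic $\lambda_k$-only bound (which would erase the first summand and trivially reduce the lemma to the previous one), and carrying out the cone-feasibility inequality for $\Delta_k$ against the inflated score --- essentially a careful repeat of the subgradient argument behind \eqref{event:consistency1} adapted to the perturbed objective. A secondary subtlety is making sure that $\|\omegab_k^*\|_1$ is absorbed only through the $\|\omegab_k^*\|_{q_k}$/$\underline\kappa$ pair and not via extra sparsity assumptions, which is why the $\|\omegab_k^*\|_\infty\leq\underline\kappa^{-1}$ bound from \Cref{cond:bddpopeigs} plays a central role.
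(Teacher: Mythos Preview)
There is a genuine gap. Your plan treats the Hessian perturbation as an additive $\ell^\infty$ contribution to the score, inflating $\lambda_k$ to $\tilde\lambda_k\asymp\lambda_k+\|\check\thetab-\thetab^*\|_1\|\omegab_k^*\|_1$. But the optimization \eqref{eq:1Step:2} is penalized with $\lambda_k$, not $\tilde\lambda_k$, so the basic-inequality/subgradient manipulation yields
\[
\tfrac12\,\Delta^\top\hat\Hb(\check\thetab)\Delta
\;\le\;(3\lambda_k/2+\eta)\|\Delta_{S_k}\|_1+(\eta-\lambda_k/2)\|\Delta_{S_k^c}\|_1+2\lambda_k\rho_k,
\qquad \eta:=C\|\check\thetab-\thetab^*\|_1\|\omegab_k^*\|_1.
\]
This produces a cone constraint only when $\eta<\lambda_k/2$, which is nowhere assumed; indeed the first term in the lemma's conclusion is precisely the regime where $\eta\gg\lambda_k$. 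Separately, the claimed inclusion $\Kcal(S_k,3,\rho_k)\subseteq\Kcal(S_k,6,0)$ is false whenever $\rho_k>0$ (take $\vb$ supported on $S_k^c$ with $\|\vb_{S_k^c}\|_1=4\rho_k$), so even in the favorable case you cannot invoke the RSC hypothesis, which is stated only over the pure cone $\Kcal(S_k,6,0)$.

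The paper avoids both obstacles by two devices you are missing. First, it compares the objective at the \emph{thresholded} target $\omegab_{k,S_k}^*$ rather than $\omegab_k^*$, so that $\|\omegab_{k,S_k}^*\|_1-\|\check\omegab_k\|_1\le\|\db_{S_k}\|_1-\|\db_{S_k^c}\|_1$ with no $\rho_k$ slack in the cone. Second, the cross terms $\db^\top\{\hat\Hb(\check\thetab)-\hat\Hb(\thetab^*)\}\omegab_{k,S_k}^*$ and $\db^\top\hat\Hb(\thetab^*)\omegab_{k,S_k^c}^*$ are handled not by H\"older (which would reintroduce a linear-in-$\|\db\|_1$ term) but by an AM--GM/Cauchy--Schwarz bound (the paper's \Cref{lem:tilting}) of the form $|\db^\top M\vb|\le\tfrac18\db^\top\hat\Hb(\check\thetab)\db+K\|\vb\|_1^2$. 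This absorbs a fraction of the cross term back into the curvature and leaves only \emph{additive constants} $K_1\|\check\thetab-\thetab^*\|_1^2\|\omegab_{k,S_k}^*\|_1^2$ and $K_2\rho_k^2$ on the right. A two-case split (either these constants dominate $\tfrac{3\lambda_k}{2}\|\db_{S_k}\|_1$, giving a direct $\ell^1$ bound; or they don't, placing $\db$ in the pure cone $\Kcal(S_k,6,0)$ where the assumed RSC applies) then delivers exactly the three terms in the statement.
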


\begin{proof}
Put $\hat\Hb(\thetab) = \nabla^2 \lKLIEP(\thetab)$.
The objective function is
\begin{equation*}
\frac{1}{2} \omegab^\top \hat\Hb(\check\thetab) \omegab - \omegab^\top \eb_k + \lambda_k \|\omegab\|_1.
\end{equation*}
For $S_k$ in the statement of the theorem, set
\begin{eqnarray*}
s_k = |S_k|
&\text{and}&
\rho_k = \|\omegab^*_{S_k^c}\|_1.
\end{eqnarray*}
Since $\check\omegab_k$ is the solution to \eqref{eq:KLIEP+:2} using $\ell_1$-penalty,
\begin{equation*}
\frac{1}{2} \check\omegab_k^\top \hat\Hb(\check\thetab) \check\omegab_k - \check\omegab_k^\top \eb_k + \lambda_k \|\check\omegab_k\|_1
\leq \frac{1}{2} \omegab_{k,S_k}^{*\top} \hat\Hb(\check\thetab) \omegab_{k,S_k}^* - \omegab_{k,S_k}^{*\top} \eb_k + \lambda_k \|\omegab_{k,S_k}^*\|_1.
\end{equation*}
Setting $\db = \check\omegab_k - \omegab_{k,S_k}^*$, the above can be rearranged as
\begin{multline}
\frac{1}{2} \db^\top \hat \Hb(\check\thetab) \db
\leq
\lambda_k \rbr{\|\omegab_{k,S_k}^*\|_1 - \|\check\omegab_k\|_1}
- \db^\top \{\hat \Hb(\thetab^*) \omegab_k^* - \eb_k\}\\
- \db^\top \{\hat \Hb(\check\thetab) - \hat \Hb(\thetab^*)\} \omegab_{k,S_k}^*
+ \db^\top \hat\Hb(\thetab^*) \omegab_{k,S_k^c}^*.
\label{eq:step2:basic_ineq:1}
\end{multline}
By Cauchy-Schwarz, the condition of the lemma implies
\begin{equation}
|\db^\top \{\hat \Hb(\thetab^*) \omegab_k^* - \eb_k\}|
\leq \|\db\|_1 \|\hat \Hb(\thetab^*) \omegab_k^* - \eb_k\|_\infty
\leq \frac{\lambda_k}{2} \|\db\|_1.
\label{eq:step2:rhs:2}
\end{equation}
\eqref{eq:additive_quadform_bd:Hess_diff} of \Cref{lem:tilting} yields
\begin{equation}
|\db^\top \{\hat \Hb(\check\thetab) - \hat \Hb(\thetab^*)\} \omegab_{k,S_k}^*|
\leq \frac18 \db^\top \hat \Hb(\check\thetab) \db + K_1 \|\check\thetab - \thetab^*\|_1^2 \|\omegab_{k,S_k}^*\|_1^2.
\label{eq:step2:rhs:3}
\end{equation}
\eqref{eq:additive_quadform_bd:Hess} of \Cref{lem:tilting} yields
\begin{equation}
|\db^\top \hat \Hb(\thetab^*) \omegab_{k,S_k^c}^*|
\leq \frac18 \db^\top \hat \Hb(\check\thetab) \db + K_2 \rho_k^2.
\label{eq:step2:rhs:4}
\end{equation}
Combining \eqref{eq:step2:rhs:2} to \eqref{eq:step2:rhs:4} with \eqref{eq:step2:basic_ineq:1}, and noting $\|\omegab_{k,S_k}^*\|_1 - \|\check\omegab_k\|_1 \leq \|\db_{S_k}\|_1 - \|\db_{S_k^c}\|_1$,
\begin{equation}
\frac14 \db^\top \hat \Hb(\thetab^*) \db
+ \frac{\lambda_k}{2} \|\db_{S_k^c}\|_1
\leq
\frac{3\lambda_k}{2} \|\db_{S_k}\|_1
+ K_1 \|\check\thetab - \thetab^*\|_1^2 \|\omegab_{k,S_k}^*\|_1^2 + K_2 \rho_k^2.
\label{eq:step2:basic_ineq:2}
\end{equation}

We consider two cases.
First, suppose that
\begin{equation*}
\frac{3 \lambda_k}{2} \|\db_{S_k}\|_1 \leq K_1 \|\check\thetab - \thetab^*\|_1^2 \|\omegab_{k,S_k}^*\|_1^2 + K_2 \rho_k^2.
\end{equation*}
Then,
\begin{equation*}
\frac{\lambda_k}{2} \|\db_{S_k^c}\|_1
\leq 2\rbr{K_1 \|\check\thetab - \thetab^*\|_1^2 \|\omegab_{k,S_k}^*\|_1^2 + K_2 \rho_k^2}.
\end{equation*}
easily, and hence
\begin{equation}
\|\db\|_1
\leq K_3 \|\check\thetab - \thetab^*\|_1^2 \|\omegab_{k,S_k}^*\|_1^2 \lambda_k^{-1} + K_4 \rho_k^2 \lambda_k^{-1}.
\label{eq:error_bd:2}
\end{equation}
in the this case.

Next, suppose that
\begin{equation*}
\frac{3 \lambda_k}{2} \|\db_{S_k}\|_1 \geq K_1 \|\check\thetab - \thetab^*\|_1^2 \|\omegab_{k,S_k}^*\|_1^2 + K_2 \rho_k^2.
\end{equation*}
Then, \eqref{eq:step2:basic_ineq:2} yields $\db \in \Kcal(S_k, 6, 0)$, and hence
\[
\|\db\|_1 \leq 7 \|\db_{S_k}\|_1 \leq 7 \sqrt{s_k} \|\db\|.
\]
We are able to apply the restricted strong convexity assumption to \eqref{eq:step2:basic_ineq:2}, which yields
\begin{equation}
\|\db\|_1
\leq K_5 \underline\kappa^{-1} s_k \lambda_k.
\label{eq:error_bd:1}
\end{equation}

Finally, combining the two error bounds \eqref{eq:error_bd:1} and \eqref{eq:error_bd:2},
\begin{align*}
\|\check\omegab_k - \omegab_k^*\|_1
&\leq \|\db\|_1 + \rho_k\\
&\leq
K_3 \|\check\thetab - \thetab^*\|_1^2 \|\omegab_{k,S_k}^*\|_1^2 \lambda_k^{-1}
+ K_4 \rho_k^2 \lambda_k^{-1}
+ K_5 \underline\kappa^{-1} s_k \lambda_k + \rho_k.
\end{align*}
By \eqref{eq:bd_on_strong_signals} and \eqref{eq:bd_on_weak_signals},
\begin{eqnarray}
s_k \leq s_{k, q_k} \lambda_k^{-q_k}
& \text{and} &
\rho_k \leq s_{k, q_k} \lambda_k^{1-q_k}.
\end{eqnarray}
Thus,
\begin{equation*}
\|\check\omegab_k - \omegab_k^*\|_1
\leq K_6 \underline\kappa^{-2} \|\check\thetab - \thetab^*\|_1^2 s_{k, q_k} \lambda_k^{-1-q_k}
+ K_7 s_{k, q_k}^2 \lambda_k^{1-2q_k}
+ K_8 \underline\kappa^{-1} s_{k, q_k} \lambda_k^{1-q_k}.
\end{equation*}
\end{proof}

\begin{lemma} \label[lemma]{lem:tilting}
Let $\thetab \in \bar\Bcal_\varrho(\thetab^*)$, $c > 0$.
Under \Cref{cond:bdr},
\begin{equation}
|\db^\top \hat \Hb(\thetab^*) \vb|
\leq \frac 1 {2c} \, \db^\top \hat \Hb(\thetab) \db + c M_\psi^2 M_r^{16} \|\vb\|_1^2
\label{eq:additive_quadform_bd:Hess}
\end{equation}
and
\begin{equation}
|\db^\top \{\hat \Hb(\check\thetab) - \hat \Hb(\thetab^*)\} \vb|
\leq \frac 1 {2c} \, \db^\top \hat \Hb(\thetab) \db + 4c {L_1}^2 M_\psi^2 M_r^{12} \|\check\thetab - \thetab\|_1^2 \|\vb\|_1^2.
\label{eq:additive_quadform_bd:Hess_diff}
\end{equation}
\end{lemma}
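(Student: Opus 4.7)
The plan is to prove both inequalities by the same strategy: exploit the positive semi-definiteness of $\hat\Hb$, compare Hessians at different parameter values using the bounded density ratio condition \Cref{cond:bdr}, and conclude by Cauchy--Schwarz followed by Young's inequality $ab \leq \alpha a^2/2 + b^2/(2\alpha)$. The key structural identity is the representation
\begin{equation*}
\hat\Hb(\thetab) = \frac{1}{n_y^2} \sum_{j<j'} \hat r_\theta(\yb^{(j)}) \hat r_\theta(\yb^{(j')}) \rbr{\psib(\yb^{(j)}) - \psib(\yb^{(j')})} \rbr{\psib(\yb^{(j)}) - \psib(\yb^{(j')})}^\top,
\end{equation*}
derived in the proof of \Cref{cor:main:lasso}. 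Under \Cref{cond:bdr} and \eqref{eq:bdrhat}, $\hat r_\theta(\yb), \hat r_{\theta^*}(\yb) \in [M_r^{-2}, M_r^2]$ a.s.\ for any $\thetab \in \bar\Bcal_\varrho(\thetab^*)$, so that summand-wise comparison yields the PSD sandwich $\hat\Hb(\thetab^*) \preceq M_r^{c_0} \hat\Hb(\thetab)$ for an explicit constant $c_0$.

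For the first inequality, apply the Cauchy--Schwarz bound for the PSD matrix $\hat\Hb(\thetab^*)$:
\begin{equation*}
|\db^\top \hat\Hb(\thetab^*) \vb| \leq \sqrt{\db^\top \hat\Hb(\thetab^*) \db} \cdot \sqrt{\vb^\top \hat\Hb(\thetab^*) \vb}.
\end{equation*}
The PSD sandwich turns $\db^\top \hat\Hb(\thetab^*) \db$ into $M_r^{c_0} \db^\top \hat\Hb(\thetab) \db$. For the second factor, observe that each entry of $\hat\Hb(\thetab^*)$ is a $w(\thetab^*)$-weighted covariance of two components of $\psib$, hence bounded in absolute value by $M_\psi^2$ via \Cref{prop:bss}; therefore $\vb^\top \hat\Hb(\thetab^*) \vb \leq M_\psi^2 \|\vb\|_1^2$. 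Young's inequality with $\alpha = 1/(cM_r^{c_0})$ then produces the claimed split, with the ambient $M_r$-powers conservatively collected into $M_r^{16}$.

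For the second inequality, invoke Taylor's theorem in the parameter to write
\begin{equation*}
\db^\top \rbr{\hat\Hb(\check\thetab) - \hat\Hb(\thetab^*)} \vb = \int_0^1 \sum_m (\check\theta_m - \theta_m^*) \, \partial_m\big(\db^\top \hat\Hb(\thetab_t) \vb\big) \, dt, \qquad \thetab_t = (1-t)\thetab^* + t\check\thetab.
\end{equation*}
The key trick is to recognize $\partial^3_{klm} \lKLIEP(\thetab_t) = \EE_{w(\thetab_t)}[\tilde\psi_k \tilde\psi_l \tilde\psi_m]$ with centered statistics $\tilde\psi = \psib - \bar\psi(\thetab_t)$, turning the integrand into the triple product $\EE_{w(\thetab_t)}[(\db^\top \tilde\psi)(\vb^\top \tilde\psi)(\ub^\top \tilde\psi)]$ with $\ub = \check\thetab - \thetab^*$. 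H\"older's inequality (pairing as two $L^2$-factors and one $L^\infty$-factor) bounds this by
\begin{equation*}
\sqrt{\db^\top \hat\Hb(\thetab_t) \db} \cdot \|\vb^\top \tilde\psi\|_\infty \cdot \sqrt{\ub^\top \hat\Hb(\thetab_t) \ub}
\leq 2 M_\psi^2 M_r^{c_0/2} \|\vb\|_1 \|\check\thetab - \thetab^*\|_1 \sqrt{\db^\top \hat\Hb(\thetab) \db},
\end{equation*}
using $\|\vb^\top \tilde\psi\|_\infty \leq 2 M_\psi \|\vb\|_1$, the crude estimate $\ub^\top \hat\Hb(\thetab_t) \ub \leq M_\psi^2 \|\ub\|_1^2$, and the PSD comparison. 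One more application of Young's inequality delivers the claim, with $L_1$ identifiable as the Lipschitz constant of $\hat\Hb$ in max-entry norm that emerges from this tracking of $M_\psi$ and $M_r$ powers.

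The main obstacle lies in Part 2: a direct sup-norm Lipschitz bound $\|\hat\Hb(\check\thetab) - \hat\Hb(\thetab^*)\|_\infty \leq L_1 \|\check\thetab - \thetab^*\|_1$ combined with the naive inequality $|\db^\top M \vb| \leq \|M\|_\infty \|\db\|_1 \|\vb\|_1$ yields a $\|\db\|_1^2$ factor that \emph{cannot} be recovered as $\db^\top \hat\Hb(\thetab)\db$ without an a priori curvature assumption. Expressing the third derivative as a central moment and choosing the H\"older pairing so that the $\db$-factor appears as $\sqrt{\db^\top \hat\Hb(\thetab_t)\db}$ is precisely what allows the density-ratio comparison to convert this into $\sqrt{\db^\top \hat\Hb(\thetab)\db}$, which is exactly the structural form demanded by the statement.
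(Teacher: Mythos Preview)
Your approach to the first inequality is essentially the paper's: Cauchy--Schwarz for the PSD form of $\hat\Hb(\thetab^*)$, a reweighting comparison $\hat\Hb(\thetab^*) \preceq M_r^{c_0}\hat\Hb(\thetab)$ coming from the bounded density ratio, an entrywise bound $\|\hat\Hb(\thetab^*)\|_\infty \lesssim M_\psi^2 M_r^4$ to control $\vb^\top\hat\Hb(\thetab^*)\vb$, and AM--GM/Young to split the product. The paper routes the $\vb$-factor through $\vb^\top\Hb(\thetab)\vb$ and then bounds $\|\Hb(\thetab)\|_\infty$, but this is a cosmetic difference.

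For the second inequality your route is genuinely different. The paper stays at the level of the U-statistic representation: it writes
\[
\hat\Hb(\check\thetab)-\hat\Hb(\thetab^*)
= \frac{1}{n_y^2}\sum_{j<j'}\bigl(\hat r_{\check\theta}(\yb^{(j)})\hat r_{\check\theta}(\yb^{(j')})-\hat r_{\theta^*}(\yb^{(j)})\hat r_{\theta^*}(\yb^{(j')})\bigr)\,\Delta_{jj'}\Delta_{jj'}^\top,
\]
applies the \emph{same} Cauchy--Schwarz-with-reweighting trick as in Part~1 (now with the factor $\max_{j,j'}|\hat r_{\check\theta}\hat r_{\check\theta}-\hat r_{\theta^*}\hat r_{\theta^*}|/(r_\theta r_\theta)$ in place of the pure ratio), and then invokes the Lipschitz bound $|\hat r_{\check\theta}(\yb)-\hat r_{\theta^*}(\yb)|\le L_1\|\check\thetab-\thetab^*\|_1$ from \Cref{lem:Lips}. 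Your Taylor-plus-third-cumulant argument also works: the identity $\partial^3_{klm}\lKLIEP(\thetab_t)=\EE_{w(\thetab_t)}[\tilde\psi_k\tilde\psi_l\tilde\psi_m]$ is exactly the log-partition derivative formula, and the H\"older pairing $(2,\infty,2)$ legitimately isolates $\sqrt{\db^\top\hat\Hb(\thetab_t)\db}$, which the PSD comparison then moves to $\thetab$. The paper's route is shorter and exposes the parallelism with Part~1; your route is more structural and would generalize to settings where the weight difference is not easily controlled pointwise.

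One correction: your closing identification of $L_1$ is off. In the paper $L_1$ is \emph{not} a Lipschitz constant for $\hat\Hb$ in entrywise norm; it is the Lipschitz constant of $\hat r_\theta(\yb)$ from \Cref{lem:Lips}, and it enters the paper's proof precisely through the bound on $|\hat r_{\check\theta}\hat r_{\check\theta}-\hat r_{\theta^*}\hat r_{\theta^*}|$. Your third-moment argument does not naturally produce an $L_1$ factor at all --- it produces a constant purely in $M_\psi$ and $M_r$ --- so you would obtain the stated inequality with a different (but comparable) constant rather than the specific $4cL_1^2 M_\psi^2 M_r^{12}$.
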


\begin{proof}
Because the geometric mean of nonnegative numbers is dominated by the arithmetic mean,
\begin{align*}
|\db^\top \hat \Hb(\thetab^*) \vb|
&\leq
\rbr{\phantom{c^{-2}} \db^\top \hat \Hb(\thetab) \db}^{1/2}
\rbr{\phantom{c^2} \max_{j,j'} \rbr{\frac{\hat r_{\theta^*}(\yb^{(j)}) \hat r_{\theta^*}(\yb^{(j')})}{\hat r_{\theta}(\yb^{(j)}) \hat r_{\theta}(\yb^{(j')})}}^2 \vb^\top \hat \Hb(\thetab) \vb}^{1/2} \nonumber\\
&=
\rbr{c^{-2} \db^\top \hat \Hb(\thetab) \db}^{1/2}
\rbr{c^2 \max_{j,j'} \rbr{\frac{\hat r_{\theta^*}(\yb^{(j)}) \hat r_{\theta^*}(\yb^{(j')})}{\hat r_{\theta}(\yb^{(j)}) \hat r_{\theta}(\yb^{(j')})}}^2 \frac{Z_y^2(\thetab)}{\hat Z_y^2(\thetab)} \, \vb^\top \Hb(\thetab) \vb}^{1/2} \nonumber\\
&\leq \frac 1 {2c} \, \db^\top \hat \Hb(\thetab) \db + \frac c 2 \max_{j,j'} \rbr{\frac{\hat r_{\theta^*}(\yb^{(j)}) \hat r_{\theta^*}(\yb^{(j')})}{r_{\theta}(\yb^{(j)}) r_{\theta}(\yb^{(j')})}}^2 \frac{\hat Z_y^2(\thetab)}{Z_y^2(\thetab)} \|\Hb(\thetab)\|_\infty \|\vb\|_1^2
\end{align*}
and
\begin{multline*}
|\db^\top \{\hat \Hb(\check\thetab) - \hat \Hb(\thetab^*)\} \vb|\\
\begin{aligned}[t]
&\leq
\rbr{\db^\top \hat \Hb(\thetab) \db}^{1/2}
\rbr{\max_{j,j'} \rbr{\frac{\hat r_{\check\theta}(\yb^{(j)}) \hat r_{\check\theta}(\yb^{(j')}) - \hat r_{\theta^*}(\yb^{(j)}) \hat r_{\theta^*}(\yb^{(j')})}{\hat r_{\theta}(\yb^{(j)}) \hat r_{\theta}(\yb^{(j')})}}^2
\frac{Z_y^2(\thetab)}{\hat Z_y^2(\thetab)} \, \vb^\top \Hb(\thetab) \vb}^{1/2}\\
&\leq \frac 1 {2c} \, \db^\top \hat \Hb(\check\thetab) \db + \frac c 2 \max_{j,j'} \rbr{\frac{\hat r_{\check\theta}(\yb^{(j)}) \hat r_{\check\theta}(\yb^{(j')}) - \hat r_{\theta^*}(\yb^{(j)}) \hat r_{\theta^*}(\yb^{(j')})}{r_{\theta}(\yb^{(j)}) r_{\theta}(\yb^{(j')})}}^2 \frac{\hat Z_y^2(\thetab)}{Z_y^2(\thetab)} \|\Hb(\thetab)\|_\infty \|\vb\|_1^2.
\end{aligned}
\end{multline*}
Under \Cref{cond:bdr}, $\|\Hb(\thetab)\|_\infty \leq 2 M_\psi^2 M_r^2$ for all $\thetab \in \bar\Bcal_\varrho(\thetab^*)$.
Furthermore,
\begin{equation*}
M_r^{-6} \leq \frac{\hat r_{\theta^*}(\yb^{(j)}) \hat r_{\theta^*}(\yb^{(j')})}{r_{\theta}(\yb^{(j)}) r_{\theta}(\yb^{(j')})} \leq M_r^6,
\end{equation*}
and
\begin{multline*}
\abr{\frac{\hat r_{\check\theta}(\yb^{(j)}) \hat r_{\check\theta}(\yb^{(j')}) - \hat r_{\theta^*}(\yb^{(j)}) \hat r_{\theta^*}(\yb^{(j')})}{r_{\theta}(\yb^{(j)}) r_{\theta}(\yb^{(j')})}}\\
\begin{aligned}[t]
&= \frac{\hat r_{\check\theta}(\yb^{(j)}) \Big| \hat r_{\check\theta}(\yb^{(j')}) - \hat r_{\theta^*}(\yb^{(j')}) \Big| + \Big| \hat r_{\check\theta}(\yb^{(j)}) - \hat r_{\theta^*}(\yb^{(j)}) \Big| \hat r_{\theta^*}(\yb^{(j')})}{r_{\theta}(\yb^{(j)}) r_{\theta}(\yb^{(j')})}\\
&\leq 2 L_1 M_r^4 \|\check\thetab - \thetab\|_1.
\end{aligned}
\end{multline*}
The inequalities follow.
\end{proof}

\section{Model assumptions}

In this section, we go over some of the implications of the assumptions in \Cref{sec:conditions}.
Appendix~\ref{supp:pfs_for_bdr} discusses the properties of the bounded density ratio model of \Cref{cond:bdr}.
In Appendix~\ref{supp:bddpopeigs}, we derive bounds on the $\ell_2$- and $\ell_1$-norms of $\omegab_k^* = \Sigmab_\psi^{-1} \eb_k$, as well as lower- and upper-bounds on the variance of the linearization $\sigma_{n,k}^2$, as direct consequences of \Cref{cond:bddpopeigs}.
In Appendix~\ref{supp:rowsparsity} we characterize the sparsity of the rows of $\Sigmab_\psi^{-1}$.

\subsection{\label{supp:pfs_for_bdr}Properties of the bounded density ratio model}

\begin{proposition}[Re-statement of \Cref{prop:bss}]
Condition 1 is satisfied if and only if $\|\psib(\xb)\|_* \leq M_\psi$ a.s.~for some $M_\psi < \infty$.
\end{proposition}

\begin{proof}
We shall first treat the case $\thetab^* = \zero$, and then show how the general case follows from the special one.
Assume $\|\psib(\xb)\|_* \leq M_\psi$ for some $M_\psi < \infty$.
For each $\xb$, by the definition of the dual norm,
\begin{equation*}
|\langle \psib(\xb), \thetab \rangle|
= |\langle \psib(\xb), \thetab/\|\thetab\| \rangle| \|\thetab\|
\leq \|\psib(\xb)\|_* \|\thetab\|
\leq \varrho M_\psi.
\end{equation*}
It is easy to see that for each $\thetab \in \bar\Bcal_\varrho(\thetab^*)$,
\begin{eqnarray*}
e^{-\varrho M_\psi} \leq e^{\langle \psib(\xb), \thetab \rangle} \leq e^{\varrho M_\psi}
&\text{and}&
e^{-\varrho M_\psi} \leq Z_y(\thetab) \leq e^{\varrho M_\psi},
\end{eqnarray*}
and hence,
\begin{equation*}
e^{-2\varrho M_\psi} \leq r_\theta(\xb) \leq e^{2\varrho M_\psi}.
\end{equation*}
In particular, one may choose $M_r = M_r(\varrho) = e^{2\varrho M_\psi}$.

This proves one direction of the claim.
For the other direction, first note that \Cref{cond:bdr} implies
\begin{equation*}
\langle \psib(\xb), \thetab \rangle \leq \log M_r(\varrho) + \log Z_y(\thetab) \quad\text{for all}\quad \thetab \in \bar\Bcal_\varrho(\thetab^*).
\end{equation*}
For each $\xb$, $\varrho \|\psib(\xb)\|_* = \langle \psib(\xb),\thetab_x\rangle$ for some $\thetab_x \in \bar\Bcal_\varrho(\thetab^*)$ by compactness, so
\begin{equation*}
\|\psib(\xb)\|_* \leq \varrho^{-1} \big( \log M_r(\varrho) + \log Z_y(\thetab_x) \big).
\end{equation*}
Using compactness again,
\begin{equation*}
\|\psib(\xb)\|_* \leq \varrho^{-1} \left( \log M_r(\varrho) + \max_{\|\thetab\| \leq \varrho} \log Z_y(\thetab) \right),
\end{equation*}
and the bound is finite by assumption.
Now, the right-hand side is a function of $\varrho$ only, whereas the left-hand side is independent of $\varrho$.
Thus,
\begin{equation*}
\|\psib(\xb)\|_* \leq \inf_{\varrho > 0} \varrho^{-1} \left( \log M_r(\varrho) + \max_{\|\thetab\| \leq \varrho} \log Z_y(\thetab) \right).\end{equation*}

This completes the proof for the case $\thetab^* = \zero$.
For general $\thetab^*$,
\begin{equation*}
|\langle \psib(\xb),\thetab \rangle|
\leq |\langle \psib(\xb), \thetab-\thetab^* \rangle| + |\langle \psib(\xb), \thetab^* \rangle|
\leq \|\psib\|_* (\varrho + \|\thetab^*\|),
\end{equation*}
and
\begin{equation*}
\langle \psib(\xb), \thetab-\thetab^* \rangle \leq \log \big( M_r^2 Z_y(\thetab) / Z_y(\thetab^*) \big),
\end{equation*}
and the proof goes through as before.
\end{proof}

Under the bounded density ratio model, $\hat Z_y(\thetab)$, $\hat r_\theta(\yb)$, and $\mub(\thetab)$ are all locally Lipschitz continuous in $\thetab$.

\begin{lemma} \label[lemma]{lem:Lips}
There exist $L_0, L_1, L_2 > 0$ such that for all $\thetab \in \bar\Bcal_\varrho(\thetab^*)$,
\begin{align}
|\hat Z_y(\thetab) - \hat Z_y(\thetab^*)| &\leq L_0\|\thetab - \thetab^*\|, \\
\label{eq:Lips_rhat} |\hat r_{\theta}(\yb) - \hat r_{\theta^*}(\yb)| &\leq L_1 \|\thetab - \thetab^*\|, \\
\label{eq:Lips_muhat} \left\| \hat\mub(\thetab) - \hat\mub(\thetab^*) \right\|_* &\leq L_2 \|\thetab - \thetab^*\|.
\end{align}
\end{lemma}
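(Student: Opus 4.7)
The plan is to use the bounded sufficient statistics $\|\psib(\yb)\|_\infty \leq M_\psi$ a.s., guaranteed under \Cref{cond:bdr} by \Cref{prop:bss}, to reduce each Lipschitz claim to an elementary estimate on exponentials restricted to a compact interval. Throughout, for $\thetab \in \bar\Bcal_\varrho(\thetab^*)$ the triangle inequality gives $\|\thetab\|_1 \leq 2\varrho$ (using $\varrho = \|\thetab^*\|_1$), hence $|\thetab^\top\psib(\yb)| \leq 2\varrho M_\psi$ almost surely, so the relevant exponents lie in the compact interval $[-2\varrho M_\psi,\,2\varrho M_\psi]$.

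For the first bound, I would apply the mean value theorem inside the sum defining $\hat Z_y$: for each $j$,
\[
\exp(\thetab^\top\psib(\yb^{(j)})) - \exp(\thetab^{*\top}\psib(\yb^{(j)})) = \exp(\tilde\thetab_j^\top\psib(\yb^{(j)}))\,(\thetab-\thetab^*)^\top\psib(\yb^{(j)})
\]
for some $\tilde\thetab_j$ on the segment connecting $\thetab$ and $\thetab^*$. The exponential factor is bounded by $e^{2\varrho M_\psi}$, while H\"older gives $|(\thetab-\thetab^*)^\top\psib(\yb^{(j)})| \leq M_\psi\|\thetab-\thetab^*\|_1$. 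Averaging over $j$ yields $L_0 = M_\psi e^{2\varrho M_\psi}$.

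For the second bound, I would use the decomposition
\[
\hat r_\theta(\yb) - \hat r_{\theta^*}(\yb)
= \frac{\exp(\thetab^\top\psib(\yb)) - \exp(\thetab^{*\top}\psib(\yb))}{\hat Z_y(\thetab)}
+ \exp(\thetab^{*\top}\psib(\yb))\cdot\frac{\hat Z_y(\thetab^*)-\hat Z_y(\thetab)}{\hat Z_y(\thetab)\,\hat Z_y(\thetab^*)}.
\]
The first numerator is controlled exactly as in the previous step; the second numerator is controlled by the already-established Lipschitz estimate for $\hat Z_y$. For the denominators, each summand in $\hat Z_y(\thetab)$ is deterministically bounded below by $e^{-2\varrho M_\psi}$, so the same bound holds for the average. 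Combining these pieces yields a constant $L_1$ depending only on $M_\psi$ and $\varrho$.

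Finally, for $\hat\mub$ I would write
\[
\hat\mub(\thetab) - \hat\mub(\thetab^*) = \frac{1}{n_y}\sum_{j=1}^{n_y}\psib(\yb^{(j)})\bigl(\hat r_\theta(\yb^{(j)})-\hat r_{\theta^*}(\yb^{(j)})\bigr),
\]
take the $\ell^\infty$-norm componentwise, bound $|\psi_k(\yb^{(j)})|\leq M_\psi$, and invoke \eqref{eq:Lips_rhat} pointwise to obtain $L_2 = M_\psi L_1$. There is no real obstacle here: everything reduces to the smoothness of $t\mapsto e^t$ on a compact interval determined by $M_\psi$ and $\varrho$, so the only care needed is in bookkeeping the dependence of the three constants on these parameters.
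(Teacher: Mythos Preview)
Your proposal is correct and follows essentially the same approach as the paper: the paper's proof simply notes that $\hat Z_y(\thetab)$, $\hat r_\theta(\yb)$, and $\hat\mub(\thetab)$ are differentiable in $\thetab$, so the mean value theorem together with the boundedness assumption yields the Lipschitz bounds. Your write-up is a fully fleshed-out version of that sketch, with explicit constants depending on $M_\psi$ and $\varrho$.
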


\begin{proof}
$\hat Z_y(\thetab)$, $\hat r_\theta(\yb)$, and $\hat\mub(\thetab)$ are all differentiable functions of $\thetab$, and hence the mean value theorem and the boundedness assumption can be used to derive the required bounds.
\end{proof}

It is not difficult to imagine that under the bounded density ratio model, all the relevant sample quantities concentrate sufficiently fast.
The following lemma proves this intuition.
It is always true that for any $\thetab$,
\begin{equation} \label{eq:Zhat_over_Z}
\frac{r_\theta(\yb)}{\hat r_\theta(\yb)}
= \frac{\hat Z_y(\thetab)}{Z_y(\thetab)}
= \frac{1}{n_y} \sum_{j=1}^{n_y} \frac{\exp\left( \thetab^\top \psib(\yb^{(j)}) \right)}{Z_y(\thetab)}
= \frac{1}{n_y} \sum_{j=1}^{n_y} r_\theta(\yb^{(j)}),
\end{equation}
and
\begin{equation} \label{eq:expectation_of_r}
\EE_y [r_\theta(\yb)]
= \int r_\theta(\yb) f_y(\yb) \, d\yb
= \int f(\yb; \thetab + \gammab_y) \, d\yb
= 1.
\end{equation}
If, in addition, $r_\theta(\yb)$ is bounded, then \eqref{eq:Zhat_over_Z} and \eqref{eq:expectation_of_r} can be used to derive the following results.

\begin{lemma} \label[lemma]{lem:Hoeffding_r}
Suppose $\thetab \in \bar\Bcal_\varrho(\thetab^*)$.
For any $t > 0$,
\begin{align*}
\PP\left\{ \frac{\hat Z_y(\thetab)}{Z_y(\thetab)} - 1 > \phantom{-}t \right\}
&\leq \exp\left( -\frac{2 t^2 n_y}{(M_r-M_r^{-1})^2} \right)
\intertext{and}
\PP\left\{ \frac{\hat Z_y(\thetab)}{Z_y(\thetab)} - 1 < -t \right\}
&\leq \exp\left( -\frac{2 t^2 n_y}{(M_r-M_r^{-1})^2} \right).
\end{align*}
\end{lemma}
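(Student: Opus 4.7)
The plan is to recognize $\hat Z_y(\thetab)/Z_y(\thetab)$ as an empirical average of bounded i.i.d.~random variables with known mean, and then invoke the classical Hoeffding inequality. The statement is essentially a direct corollary of Hoeffding's inequality once the right representation is identified, so the ``work'' of the proof is really in assembling the three inputs displayed just above the lemma: the identity~\eqref{eq:Zhat_over_Z}, the moment identity~\eqref{eq:expectation_of_r}, and the boundedness provided by \Cref{cond:bdr}.

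First, I would use~\eqref{eq:Zhat_over_Z} to rewrite
\begin{equation*}
\frac{\hat Z_y(\thetab)}{Z_y(\thetab)} - 1
= \frac{1}{n_y} \sum_{j=1}^{n_y} \bigl( r_\theta(\yb^{(j)}) - 1 \bigr),
\end{equation*}
which makes the quantity of interest a centered empirical average of i.i.d.~variables, since $\yb^{(1)}, \dots, \yb^{(n_y)}$ are i.i.d.~$f_y$ and by~\eqref{eq:expectation_of_r} each $r_\theta(\yb^{(j)})$ has mean~$1$. Next, because we are assuming $\thetab \in \bar\Bcal_\varrho(\thetab^*)$, \Cref{cond:bdr} guarantees that $r_\theta(\yb^{(j)}) \in [M_r^{-1}, M_r]$ almost surely, so each centered summand lies in an interval of width $M_r - M_r^{-1}$.

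The concluding step is a single application of Hoeffding's inequality for an i.i.d.~bounded sum, which yields the upper-tail bound $\exp\!\bigl(-2 t^2 n_y / (M_r - M_r^{-1})^2\bigr)$; the matching lower-tail bound follows by applying the same inequality to the sequence $\{-r_\theta(\yb^{(j)})\}$, which also takes values in an interval of width $M_r - M_r^{-1}$. There is no genuine technical obstacle here; the lemma is included precisely to package this standard concentration bound in the form that is invoked repeatedly in the proofs of \Cref{thm:main} and \Cref{cor:main:lasso} when verifying condition (B.1) and controlling the bias contribution $B_0$.
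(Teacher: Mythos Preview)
Your proposal is correct and matches the paper's approach exactly: the paper's proof is the one-liner ``Apply Hoeffding's inequality to the random variable $r_\theta(\yb) \in [M_r^{-1},M_r]$, $\EE_y[r_\theta(\yb)] = 1$,'' and you have simply unpacked this by explicitly invoking~\eqref{eq:Zhat_over_Z}, \eqref{eq:expectation_of_r}, and \Cref{cond:bdr} before citing Hoeffding.
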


\begin{proof}
Apply Hoeffding's inequality to the random variable $r_\theta(\yb) \in [M_r^{-1},M_r]$, $\EE_y[r_\theta(\yb)] = 1$.
\end{proof}

Having highlighted a few of the features of the
bounded density ratio model, we proceed to explain why
\eqref{eq:KLIEP+:1} or \eqref{eq:KLIEP+:2} are expected to yield consistent estimators of $\thetab^*$ or $\omegab_k^*$ under \Cref{cond:bdr}.

The optimization problem described by \eqref{eq:KLIEP+:1} or \eqref{eq:KLIEP+:2} has a convex objective with $\ell_1$-penalty.
It is well-understood that given a regularization level $\lambda > 0$, a minimizer of the corresponding regularized objective is consistent for the population optimum,
provided that the gradient at the population optimum is bounded by $\lambda / 2$ in $\ell_\infty$-norm
(the dual norm of the $\ell_1$-norm), and the Hessian behaves like a
positive definite matrix when restricted to the right set. The
boundedness of the density ratio and sufficient statistics help guarantee both.

The gradient of $\lKLIEP$ at $\thetab^*$ is
\begin{equation} \label{eq:gradient}
\nabla\lKLIEP(\thetab^*)
= -\frac{1}{n_x} \sum_{i=1}^{n_x} \psib(\xb^{(i)}) + \frac{1}{n_y} \sum_{j=1}^{n_y} \psib(\yb^{(j)}) \hat r_{\theta^*}(\yb^{(j)}).
\end{equation}
Since $\mub_\psi = \EE_x[ \psib(\xb) ] = \EE_y[ \psib(\yb) r_{\theta^*}(\yb) ]$, $\hat r_{\theta^*}(\yb) = (Z_y(\thetab^*) / \hat Z_y(\thetab^*)) r_{\theta^*}(\yb)$, and $\hat Z_y(\thetab^*) / Z_y(\thetab^*) \xrightarrow{P} 1$, each average in the gradient is a consistent estimator of $\mub_\psi$, so that the gradient as a whole is converging to a zero vector.
Because both $\psib(\xb^{(i)})$'s and $\psib(\yb^{(j)}) \hat r_{\theta^*}(\yb^{(j)})$'s are bounded, a Hoeffding-type bound can be used to control the gradient.

The gradient of the quadratic part of \eqref{eq:KLIEP+:2}, as well as the curvature of both \eqref{eq:KLIEP+:1} and \eqref{eq:KLIEP+:2}, involves the Hessian of $\lKLIEP$:
\begin{equation*}
\nabla^2\lKLIEP(\thetab)
= \frac{1}{n_y^2} \sum_{1 \leq j < j' \leq n_y} \rbr{\psib(\yb^{(j)}) - \psib(\yb^{(j')})} \rbr{\psib(\yb^{(j)}) - \psib(\yb^{(j')})}^\top \hat r_\theta(\yb^{(j)}) \hat r_\theta(\yb^{(j')}).
\end{equation*}
Note that the above only uses the samples from $f_y$.
The form of the Hessian makes it clear that if too many of $\hat r_\theta(\yb^{(j)})$'s are small, this results in a loss of curvature.
Moreover, when many $\hat r_\theta(\yb^{(j)})$'s are small, the identity $n_y^{-1} \sum_{j=1}^{n_y} \hat r_\theta(\yb^{(j)}) \equiv 1$ makes it likely that many $\hat r_\theta(\yb^{(j)})$'s are also large to balance the sum.
This is likely to lead to the Hessian becoming ill-conditioned.
As before, the boundedness of the density ratio provides a protection against this kind of degeneracy.

\subsection{\label{supp:bddpopeigs}Consequences of the bounds on the population eigenvalues}

\subsubsection{\label{supp:bds_on_omega_k}Bounds on $\omega_k^*$}

It is an easy consequence of the definitions of $\omegab_k^*$, $\underline{\kappa}$, and $\bar\kappa$ that
\begin{equation} \label{eq:omega_k:l2}
\bar{\kappa}^{-1} \leq \|\omegab_k^*\|_2 \leq \underline{\kappa}^{-1} \quad\text{for all}\quad k = 1, \dots, p.
\end{equation}

Before we turn to bounding the $\ell_1$-norm of $\omegab_k^*$ in terms of its $\ell_{q_k}$-``norm", we look at some useful inequalities related to $\ell_q$-``norms".
Fix $\lambda > 0$, and let $S_\lambda = \{ k : |v_k| > \lambda \}$ and $s_\lambda = |S_\lambda|$.
Then,
\begin{equation*}
\|\vb\|_q \geq \sum_{k \in S_\lambda} |v_k|^q \geq s_\lambda \lambda^q,
\end{equation*}
so that
\begin{equation} \label{eq:bd_on_strong_signals}
s_\lambda \leq \lambda^{-q} \|\vb\|_q.
\end{equation}
Moreover,
\begin{equation} \label{eq:bd_on_weak_signals}
\|\vb_{S_\lambda^c}\|_1
= \sum_{k \notin S_\lambda} |v_k|
= \sum_{k \notin S_\lambda} |v_k|^{1-q} |v_k|^q
\leq \lambda^{1-q} \|\vb\|_q.
\end{equation}
Thus,
\begin{equation} \label{eq:l1bd_initial_form}
\|\vb\|_1
= \|\vb_{S_\lambda}\|_1 + \|\vb_{S_\lambda^c}\|_1
\leq \sqrt{s_\lambda} \|\vb\|_2 + \|\vb_{S_\lambda^c}\|_1
\leq \lambda^{-q/2} \|\vb\|_q^{1/2} \|\vb\|_2 + \lambda^{1-q} \|\vb\|_q.
\end{equation}
To simplify the form of the upper bound, we balance the two terms by seeking $r \in \RR$ such that
\begin{equation*}
\lambda \asymp \|\vb\|_q^r \quad\text{and}\quad \lambda^{-q/2} \|\vb\|_q^{1/2} \asymp \lambda^{1-q} \|\vb\|_q.
\end{equation*}
This is solved by $r = -1/(2-q)$.
Substituting this into \cref{eq:l1bd_initial_form},
\begin{equation} \label{eq:l1bd_final_form}
\|\vb\|_1 \leq (1+\|\vb\|_2) \|\vb\|_q^{1/(2-q)}.
\end{equation}
Applying \eqref{eq:l1bd_final_form} to $\omegab_k^*$,
\begin{equation} \label{eq:omega_k:l1}
\|\omegab_k^*\|_1 \leq (1+\|\omegab_k^*\|_2) s_{k, q_k}^{1/(2-q_k)} \leq (1+\underline{\kappa}^{-1}) s_{k, q_k}^{1/(2-q_k)}
\quad\text{for}\quad k = 1, \dots, p.
\end{equation}

\subsubsection{\label{supp:bd_var}Bounds on $\sigma_k^2$}

Define
\begin{multline*}
\sigma_{n,k}^2
= \Var\left[ \sqrt{n} \, \left\langle \omegab_k^*, \frac{1}{n_x} \sum_{i=1}^{n_x} \psib(\xb^{(i)}) - \frac{1}{n_y} \sum_{j=1}^{n_y} \psib(\yb^{(j)}) r_{\theta^*}(\yb^{(j)}) \right\rangle \right]\\
= \omegab_k^{*\top} \left\{ \eta_{x,n}^{-1} \Sigmab_\psi + \eta_{y,n}^{-1} \Sigmab_{\psi r} \right\} \omegab_k^*,
\end{multline*}
where $\Sigmab_\psi = \Cov_x[\psib(\xb)]$ and $\Sigmab_{\psi r} = \Cov_y[(\psib(\yb) - \mub_\psi) r_{\theta^*}(\yb)]$.
Since $\Sigmab_\psi$ and $\Sigmab_{\psi r}$ are symmetric and positive
definite by \Cref{cond:bddpopeigs},
we have
\begin{equation*}
\lambda_{\max}\left( \eta_{x,n}^{-1} \Sigmab_\psi + \eta_{y,n}^{-1} \Sigmab_{\psi r} \right)
\leq \eta_{x,n}^{-1} \lambda_{\max}(\Sigmab_\psi) + \eta_{y,n}^{-1} \lambda_{\max}(\Sigmab_{\psi r})
\leq \bar{\kappa} / (\eta_{x,n} \eta_{y,n}),
\end{equation*}
and, similarly,
\begin{equation*}
\lambda_{\min}\left( \eta_{x,n}^{-1} \Sigmab_\psi + \eta_{y,n}^{-1} \Sigmab_{\psi r} \right)
\geq \underline{\kappa} / (\eta_{x,n} \eta_{y,n}).
\end{equation*}
Thus,
\begin{equation} \label{eq:bd_on_var}
\frac{\underline{\kappa}}{\bar\kappa^2 \eta_{x,n} \eta_{y,n}}
\leq \frac{\underline{\kappa} \|\omegab_k^*\|_2^2}{\eta_{x,n} \eta_{y,n}}
\leq \sigma_k^2
\leq \frac{\bar{\kappa} \|\omegab_k^*\|_2^2}{\eta_{x,n} \eta_{y,n}}
\leq \frac{\bar{\kappa}}{\underline{\kappa}^2 \eta_{x,n} \eta_{y,n}},
\end{equation}
where the outer-most pair of inequalities use \eqref{eq:omega_k:l2}.

\subsection{\label{supp:rowsparsity}When are the rows of ${\Sigma}_\psi^{-1}$ sparse?}

\begin{figure}[p]
	\centering
	\caption{\label{fig:rowsparsity:ising}\textbf{The row sparsity of $\Sigmab_{\psi}^{-1}$ for some Ising models.} We examine the sparsity patterns of $\Gammab_x$ (first row), $\Sigmab_{\psi}^{-1}$ (second row), and the symmetric difference of supports of $\Sigmab_{\psi, \textnormal{Gaussian}}^{-1}$ and the edge interaction diagonal block of $\Sigmab_\psi^{-1}$ (third row) for some Ising models for $m = 5, 6, \dots, 12$. $\Sigmab_{\psi}^{-1}$ is observed to have sparse rows when the maximum degree of $\Gammab_x$ is small. In addition, the edge interaction diagonal block of $\Sigmab_{\psi}^{-1}$ is observed to be structurally similar to that of $\Sigmab_{\psi, \textnormal{Gaussian}}^{-1}$. The results here suggest that some form of row sparsity assumption on $\Sigmab_{\psi}^{-1}$ is reasonable even for Ising models if the maximum degree is expected to be small.}
	
	\vspace{\baselineskip}
	
	\begin{minipage}[b]{\linewidth}
	\centering
	{\small (a) chains}
	\includegraphics[width=\linewidth]{invhess_ising_chains.png}
	\end{minipage}
	
	\begin{minipage}[b]{\linewidth}
	\centering
	{\small (b) cycles}
	\includegraphics[width=\linewidth]{invhess_ising_cycles.png}
	\end{minipage}
	
	\begin{minipage}[b]{\linewidth}
	\centering
	{\small (c) stars}
	\includegraphics[width=\linewidth]{invhess_ising_stars.png}
	\end{minipage}
\end{figure}

For our method, one sufficient condition for theoretical validity is consistent estimation of both $\thetab^*$ and $\Sigmab_\psi^{-1}$. It is well-understood that when parameters satisfy structural assumptions, they can be estimated consistently even in high-dimensional settings; this is what motivated us to use $\ell_1$-regularized procedures for sparse or approximately sparse $\thetab^*$ and $\Sigmab_\psi^{-1}$. However, we have $\Sigmab_\psi^{-1} = \Cov_{x}[\psib(\xb)]^{-1}$, and hence $\Sigmab_\psi^{-1}$ is determined by $\gammab_x$. Therefore, to see whether it is plausible to assume $\Sigmab_\psi^{-1}$ is a row-sparse matrix, it is helpful to understand how $\Sigmab_\psi^{-1}$ is related to $\gammab_x$.

Recall that $f_x$ is an exponential family. Lemma~\ref{lemma:lw} gives the map $\gammab_x \mapsto \Sigmab_\psi^{-1}(\gammab_x)$ under the condition of regularity and minimality.

\begin{lemma}[Essentially Lemma 1 in \cite{Loh2013Structure}\label{lemma:lw}]
Consider a regular, minimal exponential family
\begin{align*}
	f_x(\xb) &= \exp\rbr{\langle \gammab_x, \psib(\xb) \rangle - A(\gammab_x)}, &
	A(\gammab_x) &= \log \rbr{\int \exp\rbr{\langle \gammab_x, \psib(\xb) \rangle} d\xb}.
\end{align*}
Then,
\[
	\rbr{\Cov_{x}[\psib(\xb)]}^{-1} = \nabla^2 A^* \circ \nabla A(\gammab_x),
\]
where $A^*$ is the convex dual function to $A$
\[
	A^*(\mub) = \sup_{\gammab \in \Omega} \cbr{\langle \mub, \gammab \rangle - A(\gammab)}.
\]
\end{lemma}

\begin{proof}
The proof in \citet{Loh2013Structure} is a direct consequence of combining Proposition B.2 and Theorem 3.4 in \citet{Wainwright2008Graphical}; the former holds for \emph{any} regular, minimal exponential family, and the latter, more generally.
\end{proof}

Lemma~\ref{lemma:lw} can be used to show that in the case of Gaussian graphical models, $\Sigmab_\psi^{-1}$ has sparse rows when the maximum degree of the underlying graph is small.

\begin{example}[Gaussian graphical models]
Suppose $\xb \sim \Ncal(\zero, \Sigmab)$ for some covariance matrix $\Sigmab \in \RR^{m \times m}$. Then, the probability density function is given by $f_x(\xb) = \exp(\tr[\Gammab_x \Psib(\xb)] - A(\Gammab_x))$, where $\Gammab_x = 2^{-1} \Sigmab^{-1}$, $\Psib(\xb) = \xb \xb^\top$, and
\[
	A(\Gammab_x) = \log Z(\Gammab_x) = \frac m2 \log (2 \pi) - \frac 12 \log \det(-2 \Gammab_x).
\]
By direct computation,
\[
	\nabla A(\Gammab_x) = \frac 12 \Gammab_x^{-1} = \Sigmab,
\]
and
\begin{gather*}
	A^*(\Mb) = -\frac m2 \log(2 \pi e) - \frac 12 \log \det (\Mb),\\
	\nabla^2 A^*(\Mb) = \frac 12 \Db_m^\top \rbr{\Mb^{-1} \otimes \Mb^{-1}} \Db_m.
\end{gather*}
where $\Db_m: \RR_{{m+1 \choose 2}} \to \RR^{m^2}$ is the \emph{duplication matrix}, which is defined by the property
\[
	\Db_m \vech(\Mb) = \vect(\Mb).
\]
(Here, $\vech: \mathbb{S}^{m} \to \RR^{{m+1 \choose 2}}$ is the \emph{half-vectorization} map that vectorizes only the lower-triangular part of a matrix.) Thus,
\[
	\Sigmab_\psi^{-1} = \Sigmab_\psi^{-1}(\Gammab) = 2 \Db_m^\top \rbr{\Gammab \otimes \Gammab} \Db_m,
\]
so that $\Sigmab_\psi^{-1}$ is row sparse if $\Sigmab^{-1}$ is row sparse. In particular, the $(ab,cd)$-th component of $\Sigmab_\psi^{-1}$ is nonzero if and only if both $\gamma_{x,ab}$ and $\gamma_{x,cd}$ are nonzero.
\end{example}

For general Markov random fields, the usefulness of Lemma~\ref{lemma:lw} is limited due to intractability of $A$. For the case of \emph{discrete} Markov random fields, \citet{Loh2013Structure} study sufficient conditions under which the inverse of a \emph{submatrix} of $\Sigmab_\psi$ reflects the structure of the underlying graph, but their proof techniques do not apply to the inverse of the full matrix.

Thus, we turn to numerical tools for verifying the plausibility of the row-sparsity assumption in the case of Ising models. For small values of the number of nodes $m = 5, 6, \dots, 12$, we first generate a graph by fixing a topology and drawing weights $\iidsim \Unif([-0.5,-0.2] \cup [0.2,0.5])$. We then explicitly evaluate the population $\Sigmab_\psi^{-1}$ under an Ising model. We looked at three different topologies: a chain, a cycle, or a star.

The graph structures are displayed in the first rows of Figure~\ref{fig:rowsparsity:ising} (a - c). The sparsity patterns of $\Sigmab_\psi^{-1}$'s are in the second rows. Note that here, the sufficient statistics include the node potentials; the edge interaction parameters are associated with the last ${m \choose 2}$ rows of $\Sigmab_\psi^{-1}$. For ease of comparison, in the third rows, we also plot the symmetric differences of the support of $\Sigmab_{\psi, \textnormal{Gaussian}}^{-1}$, which is computed assuming a Gaussian model, and the support of the lower diagonal block of $\Sigmab_{\psi}^{-1}$. (We ignored entries with magnitudes $< 10^{-10}$.) It is clear from the plots in the last rows that the edge interaction diagonal block of $\Sigmab_\psi^{-1}$ has a structure similar to that of $\Sigmab_{\psi, \textnormal{Gaussian}}^{-1}$. $\Sigmab_\psi^{-1}$ is typically denser compared to $\Sigmab_{\psi, \textnormal{Gaussian}}^{-1}$, but some form of row sparsity assumption still appears to be quite reasonable, at least for the examples we have considered.

As a further check, we tracked the evolution of $\max_k \|\omegab_k^*\|_{q_k}$ over the edge interaction rows of $\Sigmab_\psi^{-1}$ as $m$ was increased. (No thresholding was applied.) This resulted in Figure~\ref{fig:lqnorms:ising}. We observe that although $\Sigmab_{\psi}^{-1}$ may violate exact sparsity --- as evidenced by the curve corresponding to $q = 0$ --- many sparse ``norms" remain well-controlled even as $m$ is increased. In fact, for chains and cycles, the plots are flat for $q = 0.5, 0.25, 0.125$.

Finally, following the ideas in \citet{Ma2017Intersubject} and \citet{Yu2019Simultaneous}, we remark that a modified procedure that uses sample splitting can be used to construct provably de-biased and asymptotically Gaussian estimators of the difference in situations when the rows of $\Sigmab_\psi^{-1}$ are only bounded in $\ell_1$-norm (without being sparse or approximately sparse). The modified procedure first splits the $f_y$-sample into two, and then uses only one part to obtain $\check\thetab$, and the other part to obtain $\check\Omegab$.

\begin{figure}
	\centering
	\caption{\label{fig:lqnorms:ising}\textbf{The growth of $\max_k \|\omegab_k^*\|_{q_k}$ for some Ising models.} We plot $\max_k \|\omegab_k^*\|_{q}$ as a function of the number of nodes $m$ for $q = 1, 0.5, 0.25, 0.125, 0$. Except for $q = 0$, most sparse ``norms" remain well-controlled even as $m$ is increased. Thus, the assumption that $\max_k \|\omegab_k^*\|_{q_k}$ is small appears to be reasonable for many Ising models.}
	
	\vspace{\baselineskip}
	
	\begin{minipage}[b]{0.30\linewidth}
	\centering
	{\small (a) chains}
	\includegraphics[width=\linewidth]{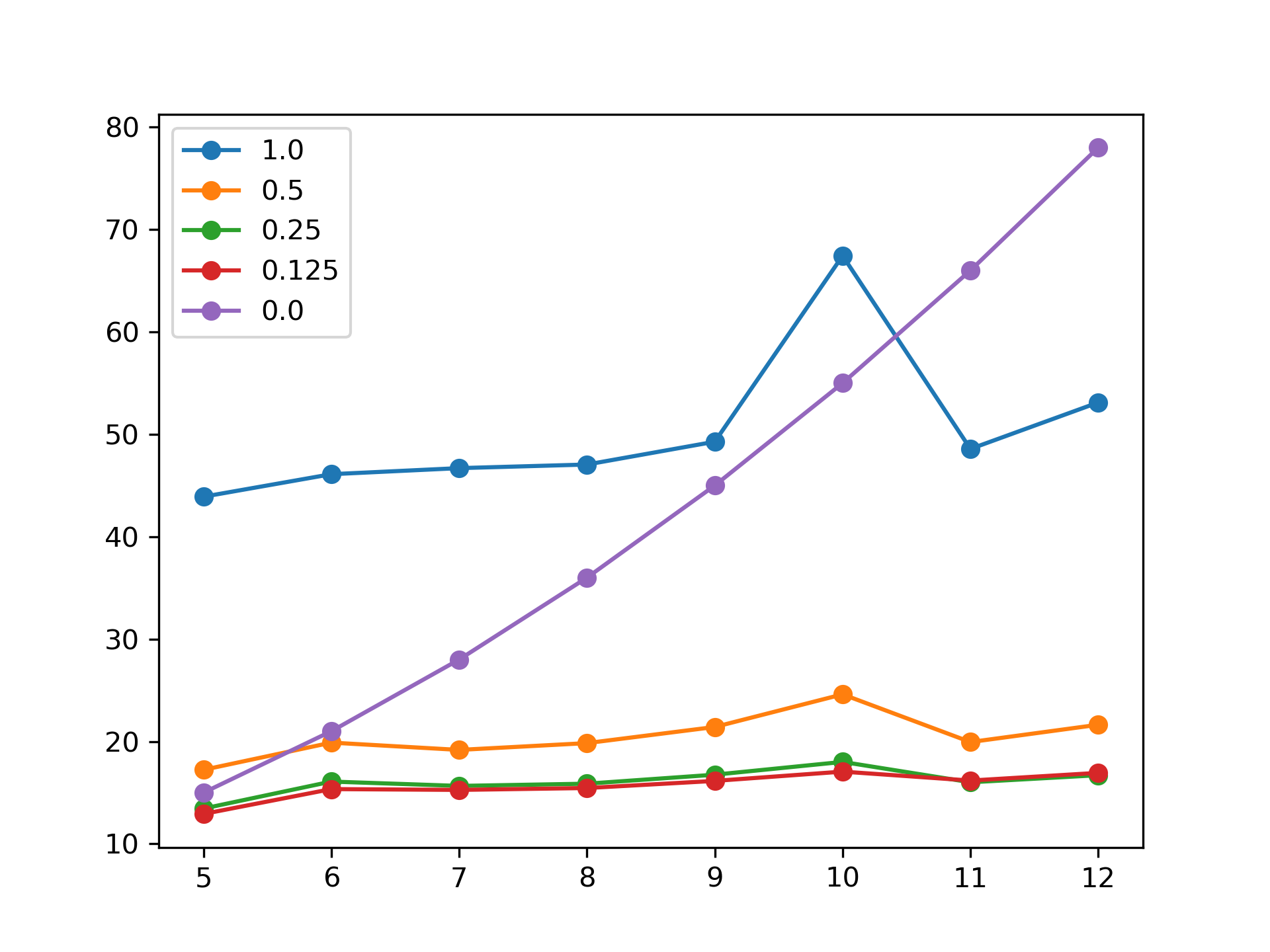}
	\end{minipage}
	\hfill
	\begin{minipage}[b]{0.30\linewidth}
	\centering
	{\small (b) cycles}
	\includegraphics[width=\linewidth]{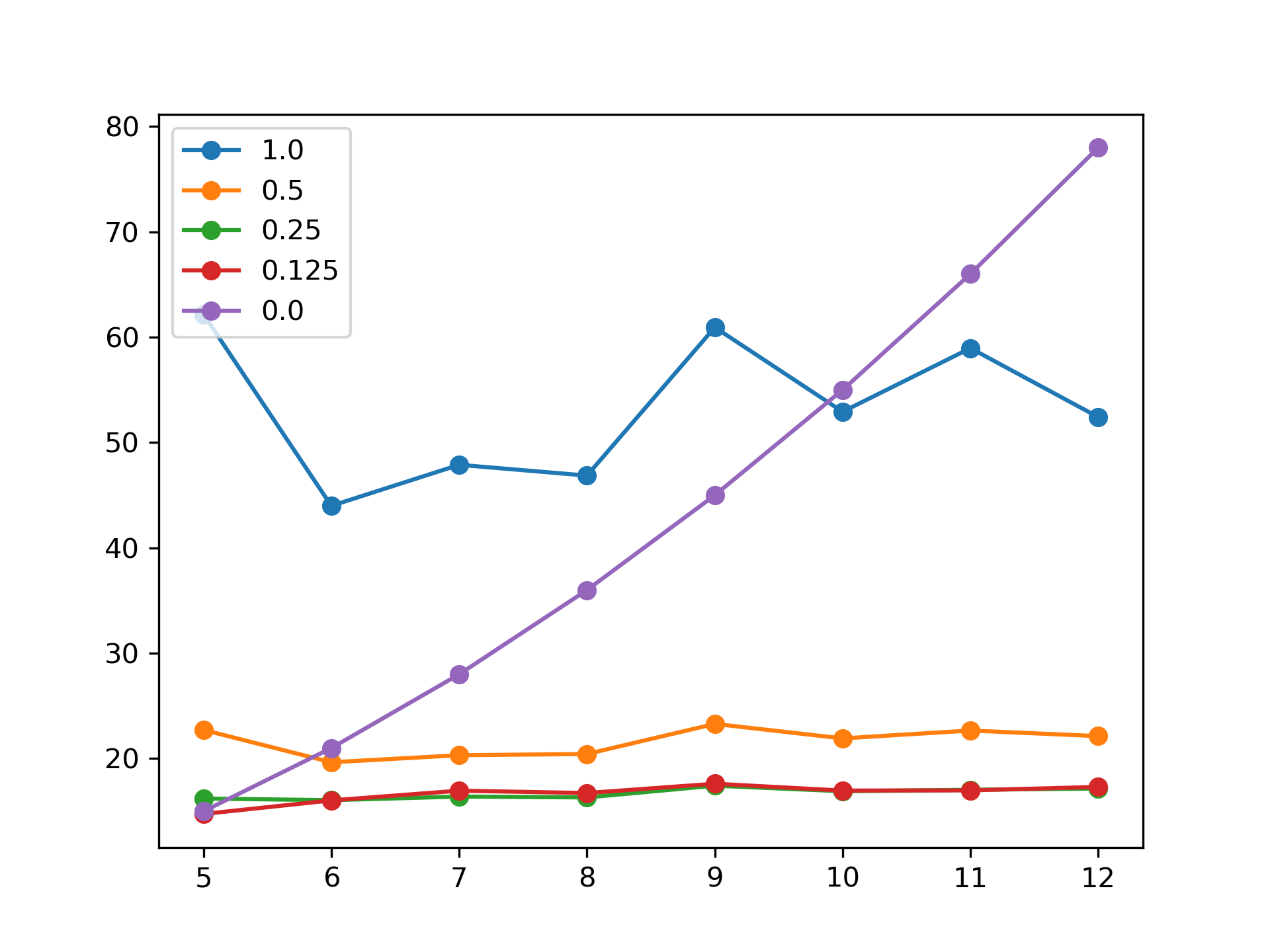}
	\end{minipage}
	\hfill
	\begin{minipage}[b]{0.30\linewidth}
	\centering
	{\small (c) stars}
	\includegraphics[width=\linewidth]{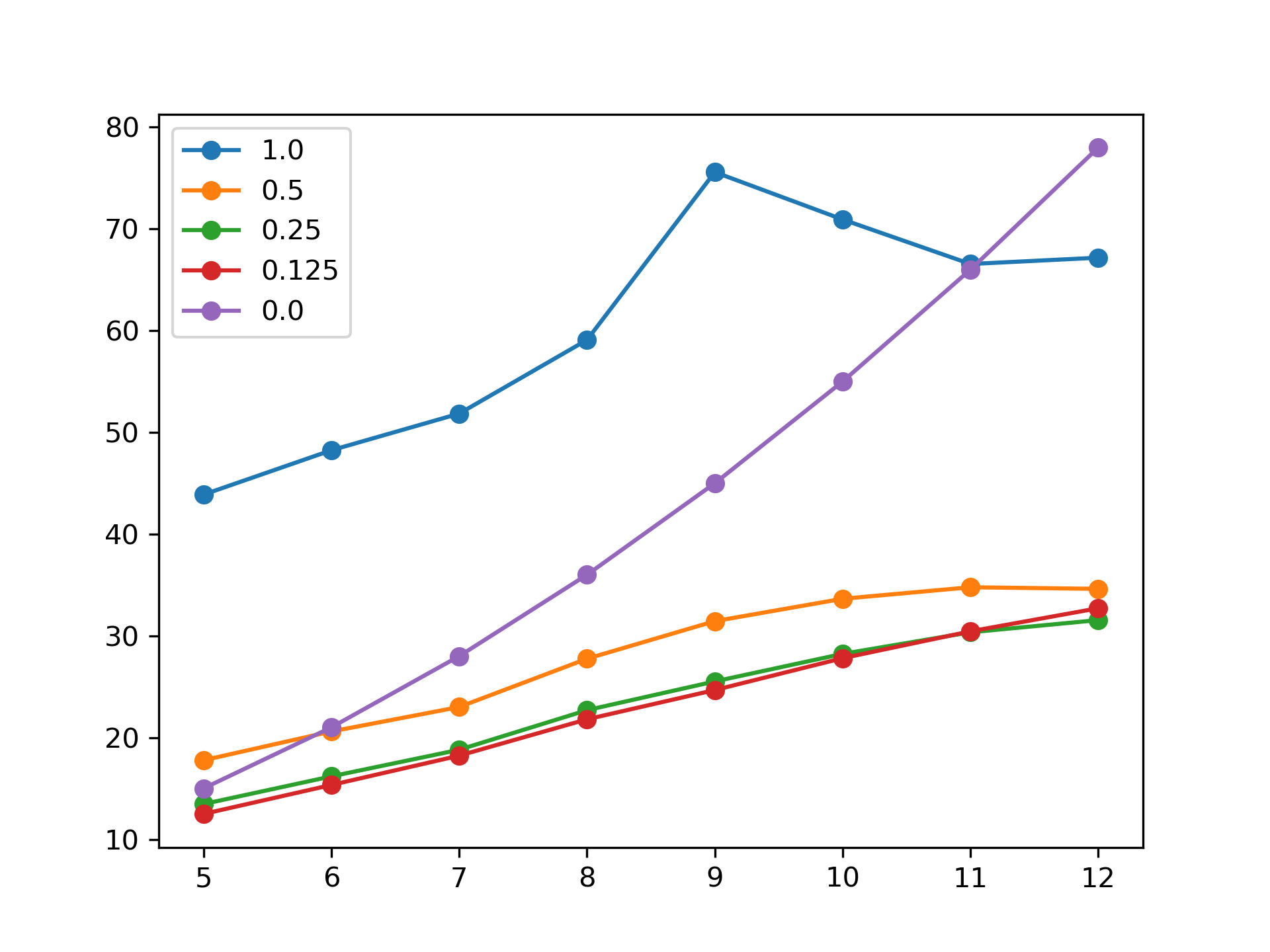}
	\end{minipage}
\end{figure}

\FloatBarrier
\section{\label{supp:auxiliary}Auxiliary results for the $\ell_1$-penalty case}

\subsection{\label{supp:grads}Bounds on the gradients}

The two lemmas in this section bound the gradients of the loss functions in \eqref{eq:KLIEP+:1} and \eqref{eq:KLIEP+:2}.

\begin{lemma} \label[lemma]{lem:grad:1}
Under \Cref{cond:bdr} with $\ell_1$-norm,
\begin{equation*}
\PP\left\{ \|\nabla \lKLIEP(\thetab^*)\|_\infty > t \right\}
\leq 4p \exp(-c t^2 n)
\end{equation*}
for some $c > 0$ depending on $M_r, M_\psi$ only.
In particular, if
\begin{equation*}
\lambda_\theta \geq K \sqrt{\frac{\log p}{n}},
\end{equation*}
for some $K \geq \sqrt{2 / c}$,
then
\begin{equation*}
\PP\left\{ 2\|\nabla \lKLIEP(\thetab^*)\|_\infty > \lambda_\theta \right\} \leq 4\exp(-c' \lambda_\theta^2 n),
\end{equation*}
for some $c' > 0$.
\end{lemma}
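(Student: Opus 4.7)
The plan is to decompose the $k$-th coordinate of the gradient into three pieces, each controlled by a Hoeffding-type bound, and then union-bound over $k = 1, \dots, p$.

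Using $\hat r_{\theta^*}(\yb) = (Z_y(\thetab^*)/\hat Z_y(\thetab^*)) r_{\theta^*}(\yb)$ and the identity $\mub_\psi = \EE_x[\psib(\xb)] = \EE_y[\psib(\yb) r_{\theta^*}(\yb)]$, I would write
\begin{equation*}
\partial_k \lKLIEP(\thetab^*) = A_k + B_k + C_k,
\end{equation*}
where
\begin{align*}
A_k &= -\frac{1}{n_x} \sum_{i=1}^{n_x} \bigl(\psi_k(\xb_k^{(i)}) - \mu_{\psi,k}\bigr),\\
B_k &= \frac{1}{n_y} \sum_{j=1}^{n_y} \bigl(\psi_k(\yb_k^{(j)}) r_{\theta^*}(\yb^{(j)}) - \mu_{\psi,k}\bigr),\\
C_k &= \biggl(\frac{Z_y(\thetab^*)}{\hat Z_y(\thetab^*)} - 1\biggr) \cdot \frac{1}{n_y} \sum_{j=1}^{n_y} \psi_k(\yb_k^{(j)}) r_{\theta^*}(\yb^{(j)}).
\end{align*}

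The term $A_k$ is a sum of i.i.d.\ centered random variables bounded by $2 M_\psi$ (using \Cref{prop:bss}), so Hoeffding's inequality yields $\PP\{|A_k| > t/3\} \leq 2\exp(-c_1 t^2 n_x)$ with $c_1 > 0$ depending only on $M_\psi$. The same reasoning applied to $B_k$ --- whose summands are bounded by $2 M_\psi M_r$ by \Cref{cond:bdr} --- gives $\PP\{|B_k| > t/3\} \leq 2\exp(-c_2 t^2 n_y)$. For $C_k$, the sample average factor is deterministically bounded by $M_\psi M_r$, so $|C_k| \leq M_\psi M_r \, |\hat Z_y(\thetab^*)/Z_y(\thetab^*) - 1|$, and then \Cref{lem:Hoeffding_r} gives $\PP\{|C_k| > t/3\} \leq 2\exp(-c_3 t^2 n_y)$.

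Combining these three bounds, applying $n_x, n_y \asymp n$, and taking a union bound over $k \in [p]$ yields
\begin{equation*}
\PP\bigl\{ \|\nabla \lKLIEP(\thetab^*)\|_\infty > t \bigr\} \leq 4p \exp(-c t^2 n)
\end{equation*}
for an appropriate $c > 0$ depending only on $M_\psi, M_r$. The stated corollary about $\lambda_\theta \gtrsim \sqrt{\log p / n}$ follows by substituting $t = \lambda_\theta / 2$ and choosing $K$ large enough so that $c K^2 / 4 \geq 2$, giving $4p \exp(-c \lambda_\theta^2 n / 4) \leq 4 \exp(-c' \lambda_\theta^2 n)$. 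No step is particularly subtle; the only point that requires care is the handling of $C_k$, where one must exploit the fact that $\hat Z_y / Z_y$ is itself a bounded empirical average of density ratios so that \Cref{lem:Hoeffding_r} applies cleanly.
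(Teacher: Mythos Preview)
Your proof is correct and follows essentially the same strategy as the paper: decompose the gradient coordinatewise, apply Hoeffding to each piece, and union-bound over $k$. The only difference is in bookkeeping: the paper uses the exact identity $n_y^{-1}\sum_j \hat r_{\theta^*}(\yb^{(j)}) = 1$ to center the $\yb$-sum at $\mub_\psi$ \emph{before} factoring out $Z_y(\thetab^*)/\hat Z_y(\thetab^*)$, then bounds that ratio deterministically by $M_r$ and applies Hoeffding to the remaining centered sum---so only two terms and two Hoeffding applications are needed, with no appeal to \Cref{lem:Hoeffding_r}. Your three-term split is slightly less economical but perfectly valid; just note that in your bound on $C_k$ you need an extra $M_r$ factor when converting $|Z_y/\hat Z_y - 1|$ to $|\hat Z_y/Z_y - 1|$ before invoking \Cref{lem:Hoeffding_r}, which of course is harmless for the constants.
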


\begin{proof}
Let $\mub_\psi = (\mu_{\psi_k})_{k=1}^p = \EE_x[ \psib(\xb) ] = \EE_y[ \psib(\yb) r_{\theta^*}(\yb) ]$.
Using $n_y^{-1} \sum_{j=1}^{n_y} \hat r_\theta(\yb^{(j)}) = 1$,
\begin{equation*}
\begin{aligned}
\nabla \lKLIEP(\thetab^*)
&= -\frac{1}{n_x} \sum_{i=1}^{n_x} \psib(\xb^{(i)}) + \frac{1}{n_y} \sum_{j=1}^{n_y} \psib(\yb^{(j)}) \hat r_{\theta^*}(\yb^{(j)}) \\
&= -\frac{1}{n_x} \sum_{i=1}^{n_x} \psib(\xb^{(i)}) + \mub_\psi + \frac{1}{n_y}\sum_{j=1}^{n_y} \{\psib(\yb^{(j)}) - \mub_\psi\} \, \hat r_{\theta^*}(\yb^{(j)}) \\
&= -\frac{1}{n_x} \sum_{i=1}^{n_x} \psib(\xb^{(i)}) + \mub_\psi + \frac{Z_y(\thetab^*)}{\hat Z_y(\thetab^*)} \frac{1}{n_y}\sum_{j=1}^{n_y} \{\psib(\yb^{(j)}) - \mub_\psi\} \, r_{\theta^*}(\yb^{(j)}).
\end{aligned}
\end{equation*}
\Cref{cond:bdr} implies that
$Z_y(\thetab^*) / \hat Z_y(\thetab^*) \in [M_r^{-1},M_r]$.
For any $t > 0$,
\begin{multline*}
\PP \left\{ \|\nabla \lKLIEP(\thetab^*)\|_\infty > t \right\}\\
\begin{aligned}[t]
&\leq \PP\left\{ \left\| \frac{1}{n_x} \sum_{i=1}^{n_x} \psib(\xb^{(i)}) - \mub_\psi \right\|_{\infty} > \frac{t}{2} \right\} + \PP\left\{ M_r \left\| \frac{1}{n_y}\sum_{j=1}^{n_y} \{\psib(\yb^{(j)})-\mub_\psi\} \, r_{\theta^*}(\yb^{(j)}) \right\|_{\infty} > \frac{t}{2} \right\} \\
&\leq
\begin{multlined}[t]
\sum_{k=1}^p \PP\left\{ \left| \frac{1}{n_x} \sum_{i=1}^{n_x} \psi_k(\xb_k^{(i)}) - \mu_{\psi_k} \right| > \frac{t}{2} \right\}\\
+ \sum_{k=1}^p \PP\left\{ M_r \left| \frac{1}{n_y}\sum_{j=1}^{n_y} \{\psi_k(\yb_k^{(j)})-\mu_{\psi_k}\} \, r_{\theta^*}(\yb^{(j)}) \right| > \frac{t}{2} \right\}.
\end{multlined}
\end{aligned}
\end{multline*}
Since $\{(\psi_k(\xb_k^{(i)})-\mu_{\psi_k})\}_{i=1}^{n_x}$ and $\{(\psi_k(\yb_k^{(j)}) - \mu_{\psi_k}) \,r_{\theta^*}(\yb^{(j)})\}_{j=1}^{n_y}$ are each i.i.d.~bounded and mean zero random variables,
\begin{equation*}
\PP\left\{ \left| \frac{1}{n_x} \sum_{i=1}^{n_x} \psi_k(\xb_k^{(i)}) - \mu_k^* \right| > \frac{t}{2} \right\}
\leq 2\exp(-c_1 t^2 n_x)
\end{equation*}
and
\begin{equation*}
\PP\left\{ M_r \left| \frac{1}{n_y}\sum_{j=1}^{n_y} \{\psi_k(\yb_k^{(j)}) - \mu_{\psi_k}\} \, r_{\theta^*}(\yb^{(j)}) \right| > \frac{t}{2} \right\}
\leq 2\exp(-c_2 t^2 n_y)
\end{equation*}
by Hoeffding's inequality, where $c_1, c_2 > 0$ are constants depending on $M_r, M_\psi$ only.
Thus,
\begin{equation*}
\PP\left\{ \|\nabla \lKLIEP(\thetab^*)\|_\infty > t \right\}
\leq 2p \exp(-c_1 t^2 n_x) + 2p \exp(-c_2 t^2 n_y)
\leq 4p \exp(-c t^2 n)
\end{equation*}
for some $c > 0$.
\end{proof}

\begin{lemma} \label[lemma]{lem:grad:2}
For $t \geq 2 / n_y$,
\begin{multline*}
\PP\left\{ \|\hat\Hb(\thetab^*) \omegab_k^* - \eb_k\|_\infty > t \right\}\\
\leq 2\exp\rbr{-\frac{c t^2 n_y}{(1+\underline\kappa^{-1})^2 s_{k, q_k}^{2/(2-q_k)}}}
+ 2p \exp\rbr{-\frac{c' t^2 n_y}{(1+\underline\kappa^{-1})^2 s_{k, q_k}^{2/(2-q_k)}}}
\end{multline*}
for some $c, c' > 0$ depending on $M_r, M_\psi$ only.
In particular, if
\begin{equation*}
\lambda_k \geq K (1+\underline\kappa^{-1}) s_{k, q_k}^{1/(2-q_k)} \sqrt{\frac{\log p}{n_y}},
\end{equation*}
for some $K \geq \sqrt{2 / (c \wedge c')}$,
then
\begin{equation*}
\PP\left\{ 2\|\hat\Hb(\thetab^*) \omegab_k^* - \eb_k\|_\infty > \lambda_k \right\}
\leq 4\exp\rbr{-\frac{c'' \lambda_k^{*2} n_y}{(1+\underline\kappa^{-1})^2 s_{k, q_k}^{2/(2-q_k)}}}.
\end{equation*}
for some $c'' > 0$.
\end{lemma}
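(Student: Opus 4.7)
The plan is to decompose $\hat\Hb(\thetab^*) \omegab_k^* - \eb_k$ into three pieces, each of which can be controlled in $\ell^\infty$-norm by a tail inequality. Writing $\hat\Hb(\thetab) = \nabla^2 \lKLIEP(\thetab) = (\hat Z_y^2(\thetab)/Z_y^2(\thetab))\Hb(\thetab)$ for $\Hb(\thetab)$ as in \Cref{app:proof_of_main:lasso}, and using the population identity $\Sigmab_\psi \omegab_k^* = \eb_k$ together with the moment-matching identity $\EE_y \Hb(\thetab^*) = (1-1/n_y)\,\Sigmab_\psi$ from \Cref{lem:moment_matching}, I would write
\begin{equation*}
\hat\Hb(\thetab^*)\omegab_k^* - \eb_k
= \underbrace{\rbr{\tfrac{\hat Z_y^2(\thetab^*)}{Z_y^2(\thetab^*)} - 1} \Hb(\thetab^*)\omegab_k^*}_{\text{(I)}}
+ \underbrace{\rbr{\Hb(\thetab^*) - \EE_y \Hb(\thetab^*)}\omegab_k^*}_{\text{(II)}}
- \underbrace{\tfrac{1}{n_y}\eb_k}_{\text{(III)}}.
\end{equation*}
Term (III) is deterministic and contributes $\|\cdot\|_\infty = 1/n_y$, which is why the hypothesis $t \geq 2/n_y$ is imposed; the remaining budget $t - 1/n_y \geq t/2$ is split evenly between (I) and (II).

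For term (I), first observe that under \Cref{cond:bdr} together with \eqref{eq:omega_k:l1}, each entry of $\Hb(\thetab^*)\omegab_k^*$ is a bounded (by $C M_\psi^2 M_r^2 (1+\underline\kappa^{-1})\|\omegab_k^*\|_{q_k}^{1/(2-q_k)}$) average of the U-statistic kernel, so $\|\Hb(\thetab^*)\omegab_k^*\|_\infty$ admits a deterministic bound of that size. Then I would apply \Cref{lem:Hoeffding_r} (and the trivial $|\hat Z_y^2/Z_y^2 - 1| \lesssim |\hat Z_y/Z_y - 1|$ on the bounded-ratio event) to obtain a sub-Gaussian concentration with variance proxy $\asymp 1/n_y$, giving the first tail term in the stated bound.

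For term (II), each coordinate $[\rbr{\Hb(\thetab^*)-\EE_y \Hb(\thetab^*)}\omegab_k^*]_\ell$ is (up to the $(n_y-1)/(2n_y)$ factor) a centered second-order U-statistic whose kernel $(\psi_\ell(\yb_\ell) - \psi_\ell(\yb'_\ell))(\psib(\yb)-\psib(\yb'))^\top \omegab_k^* \, r_{\theta^*}(\yb) r_{\theta^*}(\yb')$ is bounded in absolute value by $4 M_\psi^2 M_r^2 (1+\underline\kappa^{-1})\|\omegab_k^*\|_{q_k}^{1/(2-q_k)}$. Hoeffding's inequality for bounded U-statistics of order 2 (Hoeffding, 1963) gives a sub-Gaussian tail with effective sample size $n_y$ and the same boundedness constant, and a union bound over $\ell \in [p]$ introduces the $p$ factor. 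The main obstacle is keeping the $\|\omegab_k^*\|_{q_k}$-dependence clean in the kernel bound, which is where \eqref{eq:omega_k:l1} is essential.

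Combining the two tail bounds and absorbing the $(1-1/n_y)$ factor into constants yields the stated probability bound. The ``in particular'' clause then follows by taking $t = \lambda_k/2$, noting that the assumed lower bound on $\lambda_k$ dominates $2/n_y$ whenever $n_y \gtrsim (\log p)^{-1}$, and folding the two exponentials into a single one with a smaller constant $c''$.
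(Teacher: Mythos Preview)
Your proposal is correct and matches the paper's proof: the same three-term decomposition is used, with the normalizer-ratio piece controlled via \Cref{lem:Hoeffding_r} (this is exactly \Cref{lem:Hhat_minus_H}) and the centered U-statistic piece via Hoeffding's inequality for bounded order-2 U-statistics together with a union bound over coordinates (this is exactly \Cref{lem:H_Chernoff_bd}). The only slip is that the ratio in your identity $\hat\Hb = (\hat Z_y^2/Z_y^2)\Hb$ is inverted (it should read $Z_y^2/\hat Z_y^2$, consistent with $\hat r_\theta = (Z_y/\hat Z_y) r_\theta$), but this is immaterial since $|Z_y^2/\hat Z_y^2 - 1|$ and $|\hat Z_y^2/Z_y^2 - 1|$ are controlled identically under \Cref{cond:bdr}.
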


\begin{proof}
Let $\hat\Hb(\thetab) = \nabla^2 \lKLIEP(\thetab)$, and $\Hb(\thetab) = (\hat Z_y^2(\thetab) / Z_y^2(\thetab)) \hat\Hb(\thetab)$.
We have $\Sigmab_\psi \omegab_k^* = \eb_k$ by definition,
and $\EE_y \Hb(\thetab^*) = (1-n_y^{-1}) \Sigmab_\psi$ by \eqref{eq:EyH}.
Therefore,
\begin{equation*}
\hat\Hb(\thetab^*) \omegab_k^* - \eb_k
= \{\hat\Hb(\thetab^*) - \Hb(\thetab^*)\} \omegab_k^* + \{\Hb(\thetab^*) - \EE_y \Hb(\thetab^*)\} \omegab_k^* - n_y^{-1} \eb_k.
\end{equation*}
For $t \geq 2/n_y$,
\begin{multline*}
\PP\left\{ \|\hat\Hb(\thetab^*) \omegab_k^* - \eb_k\|_\infty > t \right\}
\leq \PP\left\{ \|\hat\Hb(\thetab^*) \omegab_k^* - (1-n_y^{-1}) \eb_k\|_\infty > \frac{t}{2} \right\} \\
\leq \PP\left\{ \| \{\hat\Hb(\thetab^*) - \Hb(\thetab^*)\} \omegab_k^* \|_\infty > \frac{t}{4} \right\} + \PP\left\{ \| \{\Hb(\thetab^*) - \EE_y \Hb(\thetab^*)\} \omegab_k^* \|_\infty > \frac{t}{4} \right\}.
\end{multline*}
By \Cref{lem:Hhat_minus_H},
\begin{equation*}
\PP\left\{ \| \{\hat\Hb(\thetab^*) - \Hb(\thetab^*)\} \omegab_k^* \|_\infty > \frac{t}{4} \right\}
\leq 2\exp\rbr{-\frac{c t^2 n_y}{(1+\underline\kappa^{-1})^2 s_{k, q_k}^{2/(2-q_k)}}},
\end{equation*}
where $c  > 0$ is a constant depending only on $M_r, M_\psi$.
By \Cref{lem:H_Chernoff_bd},
\begin{equation*}
\PP\left\{ \| \{\Hb(\thetab^*) - \EE_y \Hb(\thetab^*)\} \omegab_k^* \|_\infty > \frac{t}{4} \right\}
\leq 2p \exp\rbr{-\frac{c' t^2 n_y}{(1+\underline\kappa^{-1})^2 s_{k, q_k}^{2/(2-q_k)}}},
\end{equation*}
where $c' > 0$ is a constant depending only on $M_r, M_\psi$.
Thus,
\begin{multline*}
\PP\left\{ \|\hat\Hb(\thetab^*) \omegab_k^* - \eb_k\|_\infty > t \right\}\\
\leq 2\exp\rbr{-\frac{c t^2 n_y}{(1+\underline\kappa^{-1})^2 s_{k, q_k}^{2/(2-q_k)}}}
+ 2p \exp\rbr{-\frac{c' t^2 n_y}{(1+\underline\kappa^{-1})^2 s_{k, q_k}^{2/(2-q_k)}}}.
\end{multline*}
\end{proof}

\subsection{\label{supp:Hessian}Bounds on the Hessian}

This section contains the technical lemmas that go into bounding the $\ell_1 \to \ell_\infty$ operator norm --- a.k.a.~the maximum magnitude component --- of the Hessian.
The ultimate goal is to control the $\ell_\infty$-norm of the matrix-vector product $\nabla^2 \lKLIEP(\thetab^*) \, \omegab_k^*$.
Since a bound on the $\ell_1$-norm of $\omegab_k^*$ is easily implied by our structural assumptions on $\omegab_k^*$, it is natural to consider the $\ell_1 \to \ell_\infty$ operator norm of the Hessian in bounding the matrix-vector product.

To compute the bound, we first observe that $\nabla^2 \lKLIEP(\thetab^*) \approx \Sigmab_\psi$, and decompose the Hessian into a sum of three terms:
\begin{equation*}
\hat\Hb(\thetab^*)
=
\underbrace{\{\hat\Hb(\thetab^*) - \Hb(\thetab^*)\}}_\text{\Cref{lem:Hhat_minus_H}}
+ \underbrace{\{\Hb(\thetab^*) - \EE_y \Hb(\thetab^*)\}}_\text{\Cref{lem:H_Chernoff_bd}}
+ (1 - n_y^{-1}) \Sigmab_\psi,
\end{equation*}
where $\hat\Hb(\thetab) = \nabla^2 \lKLIEP(\thetab)$, and $\Hb(\thetab) = (\hat Z_y^2(\thetab) / Z_y^2(\thetab)) \hat\Hb(\thetab)$.

\Cref{lem:Hhat_minus_H} reduces the difference $\hat\Hb(\thetab^*) - \Hb(\thetab^*)$ to the deviation of the sample average of the ratios from their expectation.
\Cref{lem:H_Chernoff_bd} is the usual concentration bound for U-statistics applied to our problem.

\begin{lemma} \label[lemma]{lem:Hhat_minus_H}
Suppose \Cref{cond:bdr} holds, and let $\thetab \in \bar\Bcal_\varrho(\thetab^*)$.
For any $\vb \in \RR^p$,
\begin{equation*}
\PP\{ \|\{\hat\Hb(\thetab) - \Hb(\thetab)\} \vb\|_\infty > t \}
\leq 2\exp\left( -\frac{t^2 n_y}{2M_\psi^4 M_r^8 (M_r+1)^2 (M_r-M_r^{-1})^2 \|\vb\|_1^2} \right).
\end{equation*}
In particular,
\begin{equation*}
\PP\{ \|\hat\Hb(\thetab) - \Hb(\thetab)\|_\infty > t \}
\leq 2\exp\left( -\frac{t^2 n_y}{2M_\psi^4 M_r^8 (M_r+1)^2 (M_r-M_r^{-1})^2} \right).
\end{equation*}
\end{lemma}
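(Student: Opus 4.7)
The plan is to reduce the matrix deviation to a scalar concentration of the normalizer ratio $\hat Z_y(\thetab)/Z_y(\thetab)$ and apply Hoeffding's inequality. The key algebraic observation is that $\hat\Hb$ and $\Hb$ differ by a multiplicative scalar: since $\Hb(\thetab) = (\hat Z_y^2(\thetab)/Z_y^2(\thetab)) \hat\Hb(\thetab)$, writing $\alpha = \hat Z_y(\thetab)/Z_y(\thetab)$ we get
\begin{equation*}
\hat\Hb(\thetab) - \Hb(\thetab) = (1-\alpha^2) \hat\Hb(\thetab) = (1-\alpha)(1+\alpha) \hat\Hb(\thetab).
\end{equation*}
Therefore $\|\{\hat\Hb(\thetab)-\Hb(\thetab)\}\vb\|_\infty = |1-\alpha|\,|1+\alpha|\,\|\hat\Hb(\thetab)\vb\|_\infty$, and it remains to bound each factor.

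First I would give deterministic bounds on $|1+\alpha|$ and $\|\hat\Hb(\thetab)\vb\|_\infty$ using \Cref{cond:bdr}. By \eqref{eq:Zhat_over_Z}, $\alpha = n_y^{-1}\sum_{j=1}^{n_y} r_\theta(\yb^{(j)}) \in [M_r^{-1}, M_r]$, so $|1+\alpha| \leq 1+M_r$. For the Hessian, the U-statistic form \eqref{eq:lKLIEP:2nd:Ustat} together with $|\psi_k| \leq M_\psi$ and $\hat r_\theta \leq M_r^2$ (from \eqref{eq:bdrhat}) gives the entrywise bound $\|\hat\Hb(\thetab)\|_\infty \leq 2 M_\psi^2 M_r^4$, so that $\|\hat\Hb(\thetab)\vb\|_\infty \leq 2 M_\psi^2 M_r^4 \|\vb\|_1$.

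Next I would apply Hoeffding's inequality to the sample average $\alpha = n_y^{-1}\sum_{j=1}^{n_y} r_\theta(\yb^{(j)})$, which has mean $1$ by \eqref{eq:expectation_of_r} and range width $M_r - M_r^{-1}$. This is exactly the content of \Cref{lem:Hoeffding_r}, yielding the two-sided bound
\begin{equation*}
\PP\{|1-\alpha| > s\} \leq 2\exp\!\left(-\frac{2 s^2 n_y}{(M_r-M_r^{-1})^2}\right).
\end{equation*}

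Finally I would combine the pieces: for the event $\|\{\hat\Hb(\thetab)-\Hb(\thetab)\}\vb\|_\infty > t$ to occur, we must have $|1-\alpha| > t/\{2 M_\psi^2 M_r^4 (M_r+1) \|\vb\|_1\}$. Substituting this value of $s$ into the Hoeffding bound yields exactly
\begin{equation*}
2\exp\!\left(-\frac{t^2 n_y}{2 M_\psi^4 M_r^8 (M_r+1)^2 (M_r-M_r^{-1})^2 \|\vb\|_1^2}\right),
\end{equation*}
as claimed. The operator-norm statement follows by optimizing over unit $\ell^1$-vectors, i.e., taking $\vb = \eb_k$ for the worst column and using $\|\eb_k\|_1 = 1$, then union-bounding over $k$ if needed; in fact the stated bound is obtained directly by setting $\|\vb\|_1 = 1$ in the first inequality. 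There is no real obstacle here: the only mild care needed is in confirming the entrywise Hessian bound with the correct power of $M_r$, which is handled cleanly by the U-statistic representation.
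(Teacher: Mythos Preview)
Your proposal is correct and follows essentially the same route as the paper: factor $\hat\Hb(\thetab)-\Hb(\thetab)=(1-\alpha)(1+\alpha)\hat\Hb(\thetab)$ with $\alpha=\hat Z_y(\thetab)/Z_y(\thetab)$, use the deterministic bounds $\alpha\in[M_r^{-1},M_r]$ and $\|\hat\Hb(\thetab)\|_\infty\leq 2M_\psi^2 M_r^4$, and finish with \Cref{lem:Hoeffding_r}. The second display indeed follows directly from the first with $\|\vb\|_1=1$ (no union bound is needed, since the scalar factor $|1-\alpha^2|$ is common to every entry).
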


\begin{proof}
\Cref{cond:bdr} implies that $\hat Z_y(\thetab) / Z_y(\thetab) \in [M_r^{-1},M_r]$, and that $\hat\Hb(\thetab)$ has uniformly bounded components. In particular, on $\bar\Bcal_\varrho(\thetab^*)$, for any $k, \ell \in [p]$,
\begin{equation*}
\begin{aligned}
|\hat H_{k\ell}(\thetab)|
&= \left| \frac{1}{n_y^2} \sum_{1 \leq j < j' \leq n_y} \left( \psi_k(\yb_k^{(j)}) - \psi_k(\yb_k^{(j')}) \right) \left( \psi_\ell(\yb_\ell^{(j)}) - \psi_\ell(\yb_\ell^{(j')}) \right) \hat r_\theta(\yb^{(j)}) \hat r_\theta(\yb^{(j')}) \right| \\
&\leq \frac{1}{n_y^2} \sum_{1 \leq j < j' \leq n_y} \left| \psi_k(\yb_k^{(j)}) - \psi_k(\yb_k^{(j')}) \right| \left| \psi_\ell(\yb_\ell^{(j)}) - \psi_\ell(\yb_\ell^{(j')}) \right| \hat r_\theta(\yb^{(j)}) \hat r_\theta(\yb^{(j')})
\leq 2M_\psi^2 M_r^4.
\end{aligned}
\end{equation*}
Now,
\begin{equation*}
\hat\Hb(\thetab) - \Hb(\thetab)
= \left( 1 - \frac{\hat Z_y^2(\thetab)}{Z_y^2(\thetab)} \right) \hat\Hb(\thetab)
= \left( 1 - \frac{\hat Z_y(\thetab)}{Z_y(\thetab)} \right) \left( 1 + \frac{\hat Z_y(\thetab)}{Z_y(\thetab)} \right) \hat\Hb(\thetab),
\end{equation*}
so that
\begin{align*}
 \PP\{ \|\{\hat\Hb(\thetab) - \Hb(\thetab)\} \vb\|_\infty > t \}
 &
\leq \PP\left\{ \|\hat\Hb(\thetab)\|_\infty \|\vb\|_1 \left| \frac{\hat Z_y(\thetab)}{Z_y(\thetab)} + 1 \right| \left| \frac{\hat Z_y(\thetab)}{Z_y(\thetab)} - 1 \right| > t \right\} \\
& \leq \PP\left\{ 2M_\psi^2 M_r^4 (M_r+1) \|\vb\|_1 \left| \frac{\hat Z_y(\thetab)}{Z_y(\thetab)} - 1 \right| > t \right\}.
\end{align*}
It then follows by \Cref{lem:Hoeffding_r} that
\begin{equation*}
\PP\{ \|\{\hat\Hb(\thetab) - \Hb(\thetab)\} \vb\|_\infty > t \}
\leq 2\exp\left( -\frac{t^2 n_y}{2 M_\psi^4 M_r^8 (M_r+1)^2 (M_r-M_r^{-1})^2 \|\vb\|_1^2} \right).
\end{equation*}
\end{proof}

\begin{lemma} \label[lemma]{lem:H_Chernoff_bd}
 Suppose \Cref{cond:bdr} holds, and let $\thetab \in \bar\Bcal_\varrho(\thetab^*)$.
For any $\vb \in \RR^p$ and any $k \in [p]$,
\begin{equation*}
\PP\left\{ \left| \eb_k^\top \{\Hb(\thetab) - \EE_y\Hb(\thetab)\} \vb \right| > t \right\}
\leq 2\exp\left( -\frac{t^2 n_y}{16 M_\psi^4 M_r^4 \|\vb\|_1^2} \right).
\end{equation*}
In particular,
\begin{equation*}
\PP\left\{ \|\{\Hb(\thetab) - \EE_y\Hb(\thetab)\} \vb\|_\infty > t \right\}
\leq 2\exp\left( -\frac{t^2 n_y}{16 M_\psi^4 M_r^4 \|\vb\|_1^2} + \log p \right).
\end{equation*}
and
\begin{equation*}
\PP\left\{ \|\{\Hb(\thetab) - \EE_y\Hb(\thetab)\}\|_\infty > t \right\}
\leq 2\exp\left( -\frac{t^2 n_y}{16 M_\psi^4 M_r^4} + \log p \right).
\end{equation*}
\end{lemma}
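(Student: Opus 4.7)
The plan is to recognize $\eb_k^\top \Hb(\thetab)\vb$ as a rescaled $U$-statistic of order $2$ in the i.i.d.~sample $\yb^{(1)},\ldots,\yb^{(n_y)}$, and then invoke Hoeffding's inequality for $U$-statistics with a bounded kernel.

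First, combining the $U$-statistic form of $\hat\Hb(\thetab)$ in \eqref{eq:lKLIEP:2nd:Ustat} with the identity $\hat r_\theta(\yb)\cdot \hat Z_y(\thetab)/Z_y(\thetab) = r_\theta(\yb)$, one obtains
\begin{equation*}
\eb_k^\top \Hb(\thetab)\vb = \frac{1}{n_y^{2}}\sum_{1\le j<j'\le n_y} h(\yb^{(j)},\yb^{(j')}),
\end{equation*}
where the scalar kernel is
\begin{equation*}
h(\yb,\yb') = \bigl(\psi_k(\yb_k)-\psi_k(\yb'_k)\bigr)\,\langle \vb,\psib(\yb)-\psib(\yb')\rangle\, r_\theta(\yb)\, r_\theta(\yb').
\end{equation*}
By \Cref{cond:bdr} and \Cref{prop:bss}, $|\psi_\ell(\yb_\ell)|\le M_\psi$ for every $\ell$, and $r_\theta(\yb)\le M_r$ on $\bar\Bcal_\varrho(\thetab^*)$; Hölder's inequality then gives the uniform bound $|h(\yb,\yb')|\le 4 M_\psi^{2} M_r^{2}\|\vb\|_1$, so $h$ takes values in an interval of length at most $8 M_\psi^{2} M_r^{2}\|\vb\|_1$.

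Next, pass to the symmetrized $U$-statistic $U = \binom{n_y}{2}^{-1}\sum_{j<j'} h(\yb^{(j)},\yb^{(j')})$, for which $\eb_k^\top \Hb(\thetab)\vb = \tfrac{n_y-1}{2n_y}\, U$. To control $U-\EE U$ I use Hoeffding's classical device: write $U$ as the average over permutations $\pi\in S_{n_y}$ of ``forward-pair" sums
\begin{equation*}
V_\pi \;=\; \frac{1}{\lfloor n_y/2\rfloor}\sum_{i=1}^{\lfloor n_y/2\rfloor} h\bigl(\yb^{(\pi(2i-1))},\yb^{(\pi(2i))}\bigr),
\end{equation*}
each of which is an average of $\lfloor n_y/2\rfloor$ i.i.d.~bounded summands. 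Applying the ordinary Hoeffding bound to each $V_\pi$ and Jensen's inequality to the moment-generating function of $U = |S_{n_y}|^{-1}\sum_\pi V_\pi$ yields
\begin{equation*}
\PP\{|U-\EE U|>s\}\;\le\; 2\exp\!\left(-\frac{\lfloor n_y/2\rfloor\, s^{2}}{2\,(4 M_\psi^{2} M_r^{2}\|\vb\|_1)^{2}}\right).
\end{equation*}
Substituting $s = \tfrac{2n_y}{n_y-1}t$ and using $\lfloor n_y/2\rfloor\cdot (2n_y/(n_y-1))^{2}\ge n_y$ for $n_y\ge 2$ collapses the exponent to $t^{2}n_y / (16 M_\psi^{4} M_r^{4}\|\vb\|_1^{2})$, which is the asserted bound.

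Finally, the second and third displays follow by straightforward union bounds: the second over the $p$ choices of $k$ for fixed $\vb$; the third by specializing $\vb=\eb_\ell$ (with $\|\eb_\ell\|_1=1$) and taking a union bound over $(k,\ell)$, absorbing the factor of $p^{2}$ as $\log p$ after using the stated constants. There is no real obstacle in the argument — everything is driven by the boundedness of the kernel — and the only point that requires care is the constant accounting when moving from $U-\EE U$ to $\eb_k^\top\{\Hb-\EE\Hb\}\vb$ and from $\lfloor n_y/2\rfloor$ to $n_y$.
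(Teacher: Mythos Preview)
Your proposal is correct and follows essentially the same approach as the paper: both recognize $\eb_k^\top\Hb(\thetab)\vb$ as a rescaled order-2 $U$-statistic with a bounded kernel and apply Hoeffding's inequality via the standard permutation decomposition $U = (n_y!)^{-1}\sum_\pi V_\pi$; the paper merely packages the sub-Gaussian moment-generating-function bound into two auxiliary lemmas (\Cref{lem:subG_quadform,lem:subGV}) while you carry it out inline. One small arithmetic point: to reach the constant $16$ you in fact need $\lfloor n_y/2\rfloor\,(2n_y/(n_y-1))^{2}\ge 2n_y$ rather than $\ge n_y$; the stronger inequality also holds for all $n_y\ge 2$, so the conclusion stands.
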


\begin{proof}
For any $k \in [p]$ and for any $a > 0$,
\begin{equation*}
\begin{aligned}
\PP\left\{ \eb_k^\top \{\Hb(\thetab) - \EE_y\Hb(\thetab)\} \vb > t \right\}
&= \PP\left\{ \|\vb\|_1 a \cdot \eb_k^\top \{\Hb(\thetab) - \EE_y\Hb(\thetab)\} (\vb / \|\vb\|_1) > a t \right\} \\
&\leq \PP\left\{ \exp\left( \|\vb\|_1 a \cdot \eb_k^\top \{\Hb(\thetab) - \EE_y\Hb(\thetab)\} (\vb / \|\vb\|_1) \right) > \exp(at) \right\} \\
&\leq \exp(-at) \, \EE_y \left[ \exp\left( \|\vb\|_1 a \cdot \eb_k^\top \{\Hb(\thetab) - \EE_y\Hb(\thetab)\} (\vb / \|\vb\|_1) \right) \right] \\
&\leq \exp\left( -a t + 4 M_\psi^4 M_r^4 \|\vb\|_1^2 a^2 / n_y \right),
\end{aligned}
\end{equation*}
where in the last line, we have used \Cref{lem:subG_quadform}.
Optimizing the bound, we get
\begin{equation*}
\PP\left\{ \eb_k^\top \{\Hb(\thetab) - \EE_y\Hb(\thetab)\} \vb > t \right\}
\leq \exp\left( -\frac{t^2 n_y}{16 M_\psi^4 M_r^4 \|\vb\|_1^2} \right).
\end{equation*}
A similar argument applied to the other side gives us
\begin{equation*}
\PP\left\{ \left| \eb_k^\top \{\Hb(\thetab) - \EE_y\Hb(\thetab)\} \vb \right| > t \right\}
\leq 2\exp\left( -\frac{t^2 n_y}{16 M_\psi^4 M_r^4 \|\vb\|_1^2} \right).
\end{equation*}
Taking the union bound over all $k \in [p]$,
\begin{equation*}
\PP\left\{ \|\{\Hb(\thetab) - \EE_y\Hb(\thetab)\} \vb\|_\infty > t \right\}
\leq 2\exp\left( -\frac{t^2 n_y}{16 M_\psi^4 M_r^4 \|\vb\|_1^2} + \log p \right).
\end{equation*}
\end{proof}

\begin{lemma} \label[lemma]{lem:subG_quadform}
Suppose \Cref{cond:bdr} holds, and let $\thetab \in \bar\Bcal_\varrho(\thetab^*)$.
For any $\ub, \vb \in \RR^p$ with $\|\ub\|_1=\|\vb\|_1=1$ and any $t \in \RR$,
\begin{equation*}
\EE_y \left[ \exp\left( t\cdot \ub^\top \{\Hb(\thetab)-\EE_y\Hb(\thetab)\} \vb \right) \right]
\leq \exp(4 M_\psi^4 M_r^4 t^2 / n_y).
\end{equation*}
\end{lemma}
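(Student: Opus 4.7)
The plan is to recognize $\ub^\top \{\Hb(\thetab) - \EE_y\Hb(\thetab)\}\vb$ as a centered U-statistic of order 2 and then apply the classical sub-Gaussian bound for bounded U-statistics due to Hoeffding. Concretely, from the U-statistic form of the Hessian given in \eqref{eq:lKLIEP:2nd:Ustat} together with the definition $\Hb(\thetab) = (\hat Z_y^2(\thetab)/Z_y^2(\thetab)) \hat\Hb(\thetab)$, one has
\begin{equation*}
\ub^\top \Hb(\thetab) \vb = \frac{1}{n_y^2} \sum_{1 \leq j < j' \leq n_y} h(\yb^{(j)}, \yb^{(j')}), \qquad h(\yb, \yb') := \ub^\top (\psib(\yb) - \psib(\yb'))(\psib(\yb) - \psib(\yb'))^\top \vb \, r_\theta(\yb) r_\theta(\yb'),
\end{equation*}
which is symmetric in its two arguments. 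So first I would introduce $h$ and the centered version $\tilde h := h - \EE_y h$, and then rewrite $V := \ub^\top\{\Hb(\thetab) - \EE_y\Hb(\thetab)\}\vb = \frac{1}{n_y^2}\sum_{j<j'} \tilde h(\yb^{(j)}, \yb^{(j')})$.

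Next I would bound the kernel uniformly. Using $\|\ub\|_1 = \|\vb\|_1 = 1$, H\"older's inequality gives
\begin{equation*}
|h(\yb, \yb')| \leq |\langle \psib(\yb) - \psib(\yb'), \ub\rangle| \cdot |\langle \psib(\yb) - \psib(\yb'), \vb\rangle| \cdot r_\theta(\yb) r_\theta(\yb') \leq (2M_\psi)^2 M_r^2 = 4 M_\psi^2 M_r^2,
\end{equation*}
where the bound on the density ratio comes from \Cref{cond:bdr} (via \Cref{prop:bss}). Hence $\tilde h$ is bounded, taking values in an interval of length at most $2B$ with $B = 4 M_\psi^2 M_r^2$.

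The third step is Hoeffding's averaging trick. Assuming WLOG $n_y$ is even (a standard padding argument handles the odd case), one has
\begin{equation*}
\frac{1}{\binom{n_y}{2}} \sum_{j<j'} \tilde h(\yb^{(j)}, \yb^{(j')}) = \frac{1}{n_y!} \sum_{\pi \in S_{n_y}} \frac{2}{n_y} \sum_{\ell=1}^{n_y/2} \tilde h(\yb^{(\pi(2\ell-1))}, \yb^{(\pi(2\ell))}),
\end{equation*}
so by Jensen's inequality, $\EE_y[e^{tV}]$ is dominated by the mgf of an average of $n_y/2$ independent centered variables each bounded by $2B$. A single application of Hoeffding's lemma to each pair and independence then yields
\begin{equation*}
\EE_y[e^{tV}] \leq \exp\!\left(\frac{t^2(n_y-1)^2}{4 n_y^2} \cdot \frac{(2B)^2}{2 \cdot (n_y/2)}\right) \leq \exp\!\left(\frac{t^2 B^2}{4 n_y}\right) \cdot \exp\!\left(O(1/n_y^2)\right),
\end{equation*}
where the prefactor $(n_y-1)^2/n_y^2 \leq 1$ absorbs into the leading constant. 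Substituting $B^2 = 16 M_\psi^4 M_r^4$ gives the claimed bound $\exp(4 M_\psi^4 M_r^4 t^2 / n_y)$.

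The only mild obstacle is bookkeeping the constant: one must carefully track the factor $\binom{n_y}{2}/n_y^2 = (n_y-1)/(2 n_y)$ that converts the $1/n_y^2$ normalization in the definition of $\Hb$ to the standard U-statistic normalization, and choose the $(b-a)^2/8$ constant in Hoeffding's lemma rather than the looser $(b-a)^2/2$ bound; doing so gives exactly the factor 4 in the statement. No technology beyond the Hoeffding U-statistic bound is required, since the boundedness of the kernel and hence of $\tilde h$ is immediate from \Cref{cond:bdr}.
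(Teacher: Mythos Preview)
Your proposal is correct and follows essentially the same route as the paper: both recognize $\ub^\top\Hb(\thetab)\vb$ as a bounded, symmetric order-2 U-statistic with kernel bounded by $4M_\psi^2M_r^2$, apply Hoeffding's averaging representation over permutations, and then use Hoeffding's lemma on the resulting i.i.d.\ sum of $\lfloor n_y/2\rfloor$ pairs (the paper isolates this last step as \Cref{lem:subGV}). One caution on bookkeeping: your intermediate display should read $\frac{(2B)^2}{8\cdot(n_y/2)}$ rather than $\frac{(2B)^2}{2\cdot(n_y/2)}$ to be consistent with the sharp $(b-a)^2/8$ form of Hoeffding's lemma you invoke; with that correction the constant $4$ falls out exactly as you claim.
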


\begin{proof}
Define
\begin{equation*}
U
:= \frac{2}{1-1/n_y} \ub^\top \Hb(\thetab) \vb
= \frac{2}{n_y(n_y-1)} \sum_{1 \leq j < j' \leq n_y} g(\yb^{(j)},\yb^{(j')}),
\end{equation*}
where
\begin{equation*}
g(\yb,\yb')
= \langle \psib(\yb) - \psib(\yb'), \ub \rangle \, \langle \psib(\yb) - \psib(\yb'), \vb \rangle \, r_\theta(\yb) \, r_\theta(\yb').
\end{equation*}
Let
\begin{equation*}
V(\yb^{(1)},\dots,\yb^{(n_y)})
:= \frac{1}{\lfloor n_y/2\rfloor} \left( g(\yb^{(1)},\yb^{(2)}) + g(\yb^{(3)},\yb^{(4)}) + \cdots + g(\yb^{(2\lfloor n_y/2\rfloor-1)},\yb^{(2\lfloor n_y/2\rfloor)}) \right)
\end{equation*}
and write
\begin{equation*}
U
= \frac{1}{n_y!} \sum_{\sigma \in \mathfrak{S}_{n_y}} V(\yb^{(\sigma(1))},\dots,\yb^{(\sigma(n_y))}),
\end{equation*}
where $\mathfrak{S}_{n_y}$ is the group of permutations on $[n_y]$.
For any $t \in \RR$,
\begin{multline*}
\EE_y \left[ \exp\left( t\cdot \ub^\top \{\Hb(\thetab) - \EE_y\Hb(\thetab)\} \vb \right) \right]\\
\begin{aligned}[t]
&= \EE_y \left[ \exp\left( \frac{1-1/n_y}{2} \, t \cdot \left( U - \EE_y U \right) \right) \right] \\
&=
\begin{multlined}[t]
\EE_y \left[ \exp\left( \frac{1-1/n_y}{2} \, t \right.\right.\\
\left.\left. \times \frac{1}{n_y!} \left( \sum_{\sigma \in \mathfrak{S}_{n_y}} \left( V(\yb^{(\sigma(1))},\dots,\yb^{(\sigma(n_y))}) - \EE_y \left[ V(\yb^{(\sigma(1))},\dots,\yb^{(\sigma(n_y))}) \right] \right) \right) \right) \right]
\end{multlined}\\
&\leq
\begin{multlined}
\frac{1}{n_y!} \sum_{\sigma \in \mathfrak{S}_{n_y}} \EE_y \left[ \exp\left( \frac{1-1/n_y}{2} t \right.\right.\\
\left.\left. \times \left( V(\yb^{(\sigma(1))},\dots,\yb^{(\sigma(n_y))}) - \EE_y \left[ V(\yb^{(\sigma(1))},\dots,\yb^{(\sigma(n_y))}) \right] \right) \right) \right]
\end{multlined}\\
&\leq \exp(4 M_\psi^4 M_r^4 t^2 / n_y),
\end{aligned}
\end{multline*}
where the second-to-last inequality follows from the Jensen's inequality and
the last inequality follows from \Cref{lem:subGV}.
\end{proof}

\begin{lemma} \label[lemma]{lem:subGV}
Let $V(\yb^{(1)},\dots,\yb^{(n_y)})$ be as in the proof of \Cref{lem:subG_quadform}. For any $t \in \RR$,
\begin{equation*}
\EE_y \left[ \exp\left( t\cdot\left(V(\yb^{(1)},\dots,\yb^{(n_y)})-\EE_y \left[ V(\yb^{(1)},\dots,\yb^{(n_y)}) \right] \right) \right) \right]
\leq \exp(16 M_\psi^4 M_r^4 t^2 / n_y).
\end{equation*}
\end{lemma}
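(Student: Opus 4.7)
The plan is to exploit the key structural feature of $V$: it is defined as an average over \emph{disjoint} pairs $(\yb^{(2k-1)}, \yb^{(2k)})$, so the summands $g_k := g(\yb^{(2k-1)}, \yb^{(2k)})$ for $k = 1, \ldots, \lfloor n_y/2 \rfloor$ are \emph{mutually independent} (each observation appears in at most one pair). This converts the U-statistic complication into a textbook application of the Hoeffding MGF bound for bounded independent random variables.

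First I would establish a uniform bound on each summand. Since $\|\ub\|_1 = \|\vb\|_1 = 1$ and $\|\psib\|_\infty \leq M_\psi$ by \Cref{prop:bss} (which follows from \Cref{cond:bdr}), each inner product $|\langle \psib(\yb) - \psib(\yb'), \ub \rangle|$ is at most $2 M_\psi$, and similarly for $\vb$. Combined with $r_\theta(\yb) r_\theta(\yb') \leq M_r^2$ under \Cref{cond:bdr}, this yields $|g_k| \leq 4 M_\psi^2 M_r^2$, so $g_k - \EE_y g_k$ takes values in an interval of length $8 M_\psi^2 M_r^2$.

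Next I apply Hoeffding's lemma to each centered summand: for each $k$,
\begin{equation*}
\EE_y\sbr{\exp\rbr{s \rbr{g_k - \EE_y g_k}}} \leq \exp\rbr{s^2 (8 M_\psi^2 M_r^2)^2 / 8} = \exp\rbr{8 M_\psi^4 M_r^4 s^2}.
\end{equation*}
Writing $V - \EE_y V = \lfloor n_y/2 \rfloor^{-1} \sum_{k=1}^{\lfloor n_y/2 \rfloor} (g_k - \EE_y g_k)$ and using the independence of the pairs to factorize the joint MGF,
\begin{equation*}
\EE_y\sbr{\exp\rbr{t \rbr{V - \EE_y V}}}
= \prod_{k=1}^{\lfloor n_y/2 \rfloor} \EE_y\sbr{\exp\rbr{(t/\lfloor n_y/2 \rfloor)(g_k - \EE_y g_k)}}
\leq \exp\rbr{\frac{8 M_\psi^4 M_r^4 t^2}{\lfloor n_y/2 \rfloor}}.
\end{equation*}
Finally I would use $\lfloor n_y/2 \rfloor \geq n_y/4$ (which is immediate for $n_y \geq 2$; the case $n_y = 1$ is trivial since $V$ is then vacuous) to obtain the claimed bound $\exp(16 M_\psi^4 M_r^4 t^2 / n_y)$, absorbing the parity-dependent factor of $2$ into the constant. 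No step here is really hard: once the pairing trick is identified, everything reduces to Hoeffding's lemma and independence, so the only subtlety is the bookkeeping of the constant $16$, which is loose enough to accommodate the floor function.
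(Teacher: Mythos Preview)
Your proposal is correct and follows essentially the same route as the paper: bound $|g_k|\le 4M_\psi^2 M_r^2$, apply Hoeffding's lemma to each centered summand, and factorize by independence of the disjoint pairs; the paper merely re-derives Hoeffding's lemma from scratch rather than citing it. One small bookkeeping slip: your crude bound $\lfloor n_y/2\rfloor \ge n_y/4$ turns $8/\lfloor n_y/2\rfloor$ into $32/n_y$, not $16/n_y$; to land on the stated constant $16$ you need $\lfloor n_y/2\rfloor \ge n_y/2$, which is how the paper (implicitly) argues---strictly exact only for even $n_y$, so both proofs are equally loose on this point and the constant is inessential.
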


\begin{proof}
Consider a random variable $G$ with $|G|\leq D$ and $\EE G=g$.
Using the convexity of the exponential function,
\begin{equation*}
e^{tG} \leq \frac{D-G}{2D} e^{-Dt} + \frac{G+D}{2D} e^{Dt},
\end{equation*}
so that
\begin{equation*}
\begin{aligned}
\EE[ e^{t(G-g)} ]
&\leq e^{-tg} \frac{(D-g) e^{-Dt} + (D+g) e^{Dt}}{2D} \\
&= e^{-tg} \frac{e^{-Dt} (D - g + (D + g) e^{2Dt})}{2D} \\
&=\exp\left( -(D+g) t + \log\left( 1 - \frac{D+g}{2D} + \frac{D+g}{2D} e^{2Dt} \right) \right).
\end{aligned}
\end{equation*}
Put $\tilde t = 2Dt$ and $p = (D+g) / 2D$, and write
\begin{equation*}
h(\tilde t) = -p \tilde t + \log(1 - p + pe^{\tilde t}).
\end{equation*}
Then,
\begin{equation*}
h'(\tilde t) = -p + \frac{p e^{\tilde t}}{1 - p + p e^{\tilde t}}
\end{equation*}
and
\begin{equation*}
h''(\tilde t)
= \frac{(1-p) p e^{\tilde t}}{(1 - p + p e^{\tilde t})^2}
= \left( \frac{p e^{\tilde t}}{1 - p + p e^{\tilde t}} \right) \left( 1 - \frac{p e^{\tilde t}}{1 - p + p e^{\tilde t}} \right)
\leq \frac{1}{4},
\end{equation*}
since $p \exp(\tilde t) / (1 - p + p \exp(\tilde t)) \in (0,1)$.
By Taylor's theorem,
\begin{equation*}
h(\tilde t) \leq h(0) + h'(0) \tilde t + \frac{1}{8} \tilde t^2 = \frac{1}{8} \tilde t^2,
\end{equation*}
so that
\begin{equation} \label{ineq:mgf_bdd_rv}
\EE[e^{t(G-g)}] \leq e^{D^2t^2/2}.
\end{equation}
Now, $g(\yb^{(j)},\yb^{(j')})$'s occurring in $V(\yb^{(1)},\dots\yb^{(n_y)})$ are i.i.d.~with
\begin{multline} \label{ineq:g}
|g(\yb^{(j)},\yb^{(j')})|
= \left| \left\langle \psib(\yb^{(j)}) - \psib(\yb^{(j')}), \ub \right\rangle \left\langle \psib(\yb^{(j)}) - \psib(\yb^{(j')}), \vb \right\rangle r_\theta(\yb^{(j)}) r_\theta(\yb^{(j')}) \right| \\
\leq \left\| \psib(\yb^{(j)}) - \psib(\yb^{(j')}) \right\|_\infty^2 r_\theta(\yb^{(j)}) r_\theta(\yb^{(j')})
\leq 4 M_\psi^2 M_r^2,
\end{multline}
since $\|\ub\|_1=\|\vb\|_1=1$.
Applying \eqref{ineq:mgf_bdd_rv} to the random variable $g(\yb^{(1)},\yb^{(2)})$,
\begin{equation*}
\EE_y \left[ \exp\left( \frac{t}{\lfloor n_y/2 \rfloor}\cdot\left(g(\yb^{(1)},\yb^{(2)}) - \EE_y \left[ g(\yb^{(1)},\yb^{(2)}) \right] \right) \right) \right]
\leq \exp(32 M_\psi^4 M_r^4 t^2 / n_y^2).
\end{equation*}
By independence,
\begin{multline*}
\EE_y \left[ \exp\left( t \cdot \left( V(\yb^{(1)},\dots,\yb^{(n_y)}) - \EE_y \left[ V(\yb^{(1)},\dots,\yb^{(n_y)}) \right] \right) \right) \right] \\
= \EE_y \left[ \exp\left( \frac{t}{\lfloor n_y/2 \rfloor} \cdot \left( g(\yb^{(1)},\yb^{(2)}) - \EE_y \left[ g(\yb^{(1)},\yb^{(2)}) \right] \right) \right) \right]^{\lfloor n_y/2 \rfloor}
\leq \exp(16 M_\psi^4 M_r^4 t^2 / n_y).
\end{multline*}
\end{proof}

\subsection{\label{supp:RSC}Restricted strong convexity}

In the following,
\begin{equation*}
\Kcal(S,\beta,\rho) = \{ \vb \in \RR^p : \|\vb_{S^c}\|_1 \leq \beta \|\vb_S\|_1 + (1+\beta) \rho,\ \|\vb\| \leq 1\},
\end{equation*}
where $S \subseteq [p]$ is nonempty, $\beta \geq 0$, and $\rho \geq 0$.

\begin{lemma} \label[lemma]{lem:RSC}
Suppose $Z_y^2(\thetab^*) / \hat Z_y^2(\thetab^*) \geq c$ for some $c > 0$, and
\begin{equation*}
\vertiii{\Hb(\thetab^*) - \EE_y\Hb(\thetab^*)}_s \leq \underline\kappa / (2 (2+\beta)^2)
\end{equation*}
for some $s \in [p]$ and $\beta \geq 0$.
Then for all nonempty $S \subseteq [p]$ with $|S| \leq s$ and for all $\rho \geq 0$,
\begin{equation*}
\vb^\top \hat \Hb(\thetab^*) \vb
\geq \frac{c \underline\kappa}{2} \rbr{\|\vb\|^2 - \frac{\rho^2}{s}}
\quad\text{for all}\quad \vb \in \Kcal(S,\beta,\rho),
\end{equation*}
as well as
\begin{equation*}
\vb^\top \hat \Hb(\thetab) \vb
\geq \exp\rbr{-2M_\psi (M_r^2+1) \|\thetab - \thetab^*\|_1} \cdot \frac{c \underline\kappa}{2} \rbr{\|\vb\|^2 - \frac{\rho^2}{s}}
\quad\text{for all}\quad \vb \in \Kcal(S,\beta,\rho).
\end{equation*}
\end{lemma}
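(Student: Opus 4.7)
The plan is to first peel off the scalar factor $Z_y^2(\thetab^*)/\hat Z_y^2(\thetab^*) \geq c$, then reduce the bound on $\Hb(\thetab^*)$ to a deterministic part plus a sparse-norm-controlled fluctuation, and finally handle the extension to $\thetab \neq \thetab^*$ through a pointwise multiplicative comparison of the Hessians obtained from the U-statistic representation.

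Since $\Hb(\thetab^*)$ is positive semidefinite (it is a sum of rank-one outer products with nonnegative weights), the identity $\hat\Hb(\thetab^*) = (Z_y^2(\thetab^*)/\hat Z_y^2(\thetab^*))\,\Hb(\thetab^*)$ together with $Z_y^2(\thetab^*)/\hat Z_y^2(\thetab^*) \geq c$ gives $\vb^\top \hat\Hb(\thetab^*) \vb \geq c\,\vb^\top \Hb(\thetab^*) \vb$ for all $\vb$, so it will suffice to prove a lower bound of $(\underline{\kappa}/2)(\|\vb\|^2 - \rho^2/s)$ for $\vb^\top \Hb(\thetab^*) \vb$. I will decompose $\Hb(\thetab^*) = \EE_y\Hb(\thetab^*) + \Delta$ and invoke \Cref{lem:moment_matching} to identify $\EE_y\Hb(\thetab^*) = (1 - 1/n_y)\Sigmab_\psi$, which by \Cref{cond:bddpopeigs} contributes at least $(1 - 1/n_y)\underline{\kappa}\|\vb\|^2$. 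For the fluctuation $\Delta$, I will use that for $\vb \in \Kcal(S,\beta,\rho)$ with $|S| \leq s$, the bound $\|\vb_S\|_1 \leq \sqrt{s}\|\vb\|$ and the cone inequality give $\|\vb\|_1 \leq (1+\beta)(\sqrt{s}\|\vb\| + \rho)$, hence $\|\vb\|_1^2/s \leq 2(1+\beta)^2(\|\vb\|^2 + \rho^2/s)$. A standard sparse-to-cone peeling argument---decomposing $\vb$ into successive blocks of its $s$-largest remaining coordinates, applying polarization of the sparse eigenvalue bound across all pairs of blocks, and summing the resulting geometric series---then yields $|\vb^\top\Delta\vb| \lesssim \vertiii{\Delta}_s\,(\|\vb\|^2 + \|\vb\|_1^2/s)$. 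Substituting the cone estimate and invoking the hypothesis $\vertiii{\Delta}_s \leq \underline{\kappa}/(2(2+\beta)^2)$, the $\|\vb\|^2$-contributions absorb into the deterministic term to leave at least $(\underline{\kappa}/2)\|\vb\|^2$, while the residual $\rho^2/s$-piece produces the required slack, proving the first claim after multiplication by $c$.

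For the second claim, the U-statistic representation \eqref{eq:lKLIEP:2nd:Ustat} of the Hessian shows that
\[
\hat\Hb(\thetab) \;\succeq\; \left(\min_{j,j'}\;\frac{\hat r_\theta(\yb^{(j)})\,\hat r_\theta(\yb^{(j')})}{\hat r_{\theta^*}(\yb^{(j)})\,\hat r_{\theta^*}(\yb^{(j')})}\right)\hat\Hb(\thetab^*)
\]
in the PSD order. Writing $\hat r_\theta(\yb)/\hat r_{\theta^*}(\yb) = \exp((\thetab - \thetab^*)^\top \psib(\yb)) \cdot \hat Z_y(\thetab^*)/\hat Z_y(\thetab)$, I will bound the exponent by $M_\psi\|\thetab-\thetab^*\|_1$ via \Cref{prop:bss}, and bound $\hat Z_y(\thetab)/\hat Z_y(\thetab^*)$ above using the mean value theorem applied to $\log \hat Z_y$: since $\nabla \log \hat Z_y(\cdot) = \hat\mub(\cdot)$ has $\ell^\infty$-norm at most $M_\psi M_r^2$ by \eqref{eq:bdrhat}, this gives $\hat Z_y(\thetab)/\hat Z_y(\thetab^*) \leq \exp(M_\psi M_r^2\|\thetab-\thetab^*\|_1)$. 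Combining, each ratio is at least $\exp(-M_\psi(M_r^2+1)\|\thetab-\thetab^*\|_1)$, so the product is at least $\exp(-2M_\psi(M_r^2+1)\|\thetab-\thetab^*\|_1)$, and the second claim follows by PSD monotonicity from the first. The main obstacle will be the sparse-to-cone conversion for $\Delta$: the peeling argument must track all cross-block bilinear forms, control the geometric sums uniformly, and the implicit absolute constant has to be tuned against the specific threshold $\underline{\kappa}/(2(2+\beta)^2)$ so that the $\|\vb\|^2$-coefficient in the final estimate emerges as exactly $\underline{\kappa}/2$ rather than something smaller.
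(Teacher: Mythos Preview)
Your proposal is correct and follows essentially the same route as the paper: peel off the $Z_y^2/\hat Z_y^2$ factor, split $\Hb(\thetab^*)$ into $(1-1/n_y)\Sigmab_\psi$ plus a fluctuation, control the fluctuation on the cone via a sparse-eigenvalue bound combined with the estimate $\|\vb\|_1 \leq (1+\beta)(\sqrt{s}\|\vb\|+\rho)$, and then transfer to general $\thetab$ by the multiplicative PSD comparison of Hessians through the ratios $\hat r_\theta/\hat r_{\theta^*}$.

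The one place where the paper is tighter than your sketch is precisely the point you flag as the main obstacle: rather than redoing a block-peeling argument with an unspecified absolute constant, the paper invokes \Cref{lem:sparse_norm} (Lemma 4.9 of Barber--Kolar, 2015), which gives the sharp inequality $\vb^\top\Mb\vb \leq \vertiii{\Mb}_s\,(\|\vb\|+\|\vb\|_1/\sqrt{s})^2$ with constant $1$. This is what makes the threshold $\underline\kappa/(2(2+\beta)^2)$ line up exactly with the output coefficient $c\underline\kappa/2$, since on $\Kcal(S,\beta,\rho)$ one has $\|\vb\|+\|\vb\|_1/\sqrt{s} \leq (2+\beta)(\|\vb\|+\rho/\sqrt{s})$. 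For the second part, the paper bounds $\log(\hat Z_y(\thetab)/\hat Z_y(\thetab^*))$ via convexity of LogSumExp rather than the mean value theorem, but both yield the same bound $M_\psi M_r^2\|\thetab-\thetab^*\|_1$ on the increment, so your argument there is equivalent.
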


\begin{proof}
We have
\begin{align*}
\vb^\top \hat \Hb(\thetab^*) \vb
= \frac{Z_y^2(\thetab^*)}{\hat Z_y^2(\thetab^*)} \vb^\top \Hb(\thetab^*) \vb
= \sbr{\rbr{1-\frac{1}{n_y}} \vb^\top \Sigmab_\psi \vb + \vb^\top \{\Hb(\thetab^*)-\EE_y\Hb(\thetab^*)\} \vb}.
\end{align*}
For $n_y$ large enough, under the conditions of the lemma and applying \Cref{lem:sparse_norm},
\begin{align}
\vb^\top \hat \Hb(\thetab^*) \vb
&\geq c \rbr{\underline \kappa \|\vb\|^2 - \frac{\underline \kappa}{2 (2+\beta)^2} \rbr{\|\vb\| + \frac{\|\vb\|_1}{\sqrt{s}}}^2}
\nonumber\\
&\geq c \rbr{\underline \kappa \|\vb\|^2 - \frac{\underline \kappa}{2} \rbr{\|\vb\| + \frac{\rho}{\sqrt{s}}}^2}
\nonumber\\
&\geq \frac{c \underline\kappa}{2} \rbr{\|\vb\|^2 - \frac{\rho^2}{s}}.
\label{eq:RSC_at_theta_star}
\end{align}
For the second part of the statement, first note
\begin{align*}
\vb^\top \hat \Hb(\thetab) \vb
&\geq
\min_{j,j'} \frac{\hat r_{\theta}(\yb^{(j)}) \hat r_{\theta} (\yb^{(j')})}{\hat r_{\theta^*}(\yb^{(j)}) \hat r_{\theta^*} (\yb^{(j')})}
\vb^\top \hat\Hb(\thetab^*) \vb\\
&=
\min_{j,j'} \exp\cbr{\rbr{\psi(\yb^{(j)}) + \psi(\yb^{(j')})}^\top \rbr{\thetab-\thetab^*} - 2 \log \frac{\hat Z_y(\thetab\phantom{^*})}{\hat Z_y(\thetab^*)}}
\vb^\top \hat\Hb(\thetab^*) \vb.
\end{align*}
By convexity of LogSumExp,
\begin{align*}
-\log \hat Z_y(\thetab) + \log \hat Z_y(\thetab^*)
&\geq -\nabla[\log \hat Z_y(\thetab)]^\top \rbr{\thetab-\thetab^*}\\
&= -\frac{1}{n_y} \sum_{j=1}^{n_y} \hat r_\theta(\yb^{(j)}) \psib(\yb^{(j)})^\top \rbr{\thetab-\thetab^*}
\geq -M_\psi M_r^2 \|\thetab - \thetab^*\|_1,
\end{align*}
so that
\begin{equation*}
\exp\cbr{\rbr{\thetab-\thetab^*}^\top \rbr{\psi(\yb^{(j)}) + \psi(\yb^{(j')})} - 2 \log \frac{\hat Z_y(\thetab\phantom{^*})}{\hat Z_y(\thetab^*)}}
\geq -2M_\psi (M_r^2+1) \|\thetab - \thetab^*\|_1,
\end{equation*}
and hence,
\begin{equation*}
\vb^\top \hat \Hb(\thetab) \vb
\geq \exp\rbr{-2M_\psi (M_r^2+1)\|\thetab - \thetab^*\|_1} \vb^\top \hat\Hb(\thetab^*) \vb.
\end{equation*}
Combining with \eqref{eq:RSC_at_theta_star} from the first part finishes the proof.
\end{proof}

\begin{lemma} \label[lemma]{lem:sparse_norm_H_minus_EH}
For $c > 0$, $\beta \geq 0$, $\varepsilon \in (0,1)$, whenever
\[
n_y \geq C (\bar\kappa / \underline\kappa^2) M_\psi^2 M_r^2 s \log^2 (s) \log (p \vee n_y) \log (n_y) c^2 (2+\beta)^4 / \varepsilon^2,
\]
where $C > 0$ denotes a known, absolute constant, we have
\[
\vertiii{\Hb(\thetab^*) - \EE_y\Hb(\thetab^*)}_s
= \sup_{\|\vb\|_0 \leq s, \|\vb\| = 1} |\vb^\top \{\Hb(\thetab^*)-\EE_y\Hb(\thetab^*)\} \vb|
\leq \underline\kappa / (c (2+\beta)^2)
\]
with probability $1-\varepsilon$.
\end{lemma}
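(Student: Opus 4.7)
The plan is to reduce the supremum of the quadratic form $\vb^\top(\Hb(\thetab^*)-\EE_y\Hb(\thetab^*))\vb$ over $s$-sparse unit vectors to a pointwise concentration bound via an $\epsilon$-net and union-bound argument, and then pay the combinatorial cost of ranging over supports. First I would discretize: for each $S\subseteq[p]$ with $|S|=s$, pick a $(1/4)$-net $\mathcal{N}_S$ of the unit sphere in $\RR^S$ with $|\mathcal{N}_S|\le 9^s$, so that by the standard dilation identity
\[
\vertiii{\Ab}_s \;\le\; 2\max_{|S|=s}\max_{\vb\in\mathcal{N}_S}|\vb^\top\Ab\vb|
\]
for any symmetric $\Ab$. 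This reduces the problem to a pointwise estimate on the finite set $\bigcup_{|S|=s}\mathcal{N}_S$, of cardinality at most $(9ep/s)^s$.

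Second, $\Hb(\thetab^*)-\EE_y\Hb(\thetab^*)$ is a centered U-statistic of order two whose rank-one PSD kernel has $\ell^\infty$-entries uniformly bounded by $4M_\psi^2 M_r^2$ under \Cref{cond:bdr}. For each fixed $s$-sparse $\vb$ with $\|\vb\|=1$ (hence $\|\vb\|_1\le\sqrt{s}$), the rescaled version of \Cref{lem:subG_quadform} gives a sub-Gaussian tail with variance proxy $\lesssim M_\psi^4 M_r^4 s^2/n_y$. A union bound over the net followed by inversion of the exponent at the target level $t=\underline\kappa/(c(2+\beta)^2)$ produces a preliminary sample-size requirement of the crude form $n_y\gtrsim c^2(2+\beta)^4 M_\psi^4 M_r^4 s^3 \log(p\vee 1/\varepsilon)/\underline\kappa^2$.

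The main obstacle is tightening this $s^3\log p$ scaling to the sharper $s\log^2(s)\log(p\vee n_y)\log(n_y)/\varepsilon^2$ dependence, and replacing the fourth-moment constant $M_\psi^4 M_r^4$ by $\bar\kappa\,M_\psi^2 M_r^2$. The plan for this step is to swap the pointwise sub-Gaussian estimate for a matrix-Bernstein-type bound applied to the $s\times s$ principal submatrix $\Hb_S(\thetab^*)-\EE_y\Hb_S(\thetab^*)$, after Hoeffding-decomposing the degree-two U-statistic into a linear i.i.d.\ sum plus a degenerate quadratic remainder. Matrix Bernstein on the linear part, with variance proxy $\lesssim \bar\kappa\,M_\psi^2 M_r^2$ (\Cref{cond:bddpopeigs}) and uniform operator-norm proxy $\lesssim M_\psi^2 M_r^2$ (\Cref{cond:bdr}), accounts for the $\bar\kappa/\underline\kappa^2$ and $M_\psi^2 M_r^2$ factors together with one $\log s$ term. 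Handling the degenerate quadratic part via a standard decoupling argument and a $q$-th moment bound yields the extra $\log(s)\log(n_y)$ factor, with the $1/\varepsilon^2$ dependence arising from converting the moment bound into a probability tail via Chebyshev's inequality rather than Chernoff. A final union bound over the $\binom{p}{s}\le(ep/s)^s$ supports supplies the $\log(p\vee n_y)$ factor, and inverting for $n_y$ at the threshold $\underline\kappa/(c(2+\beta)^2)$ completes the proof.
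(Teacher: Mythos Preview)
Your plan diverges from the paper's argument, and the final step contains a genuine gap.

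The paper's proof is short and direct. It first uses Hoeffding's \emph{permutation trick} (not the Hoeffding decomposition into degenerate components) at the level of the supremum: writing $U_\vb = (n_y!)^{-1}\sum_\sigma V_\vb(\yb^{(\sigma(1))},\dots,\yb^{(\sigma(n_y))})$ and pushing the expectation and sup through gives
\[
\EE_y\Big[\sup_\vb |U_\vb - \EE_y U_\vb|\Big]\le \EE_y\Big[\sup_\vb |V_\vb - \EE_y V_\vb|\Big],
\]
which reduces the U-statistic supremum to a supremum over an i.i.d.\ sample covariance of the vectors $\zb^{(i)}=(\psib(\yb^{(2i-1)})-\psib(\yb^{(2i)}))\sqrt{r_{\theta^*}(\yb^{(2i-1)})r_{\theta^*}(\yb^{(2i)})}$. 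The paper then invokes Lemma~11 of \cite{Belloni2013Least} as a black box for the $s$-sparse operator norm of a sample covariance, obtaining an \emph{expectation} bound of the form $a_n^2+a_n\sqrt{\bar\kappa}$ with $a_n^2\asymp M_\psi^2 M_r^2 s\log^2(s)\log(p\vee n_y)\log(n_y)/n_y$. The $1/\varepsilon^2$ dependence then comes from a single Markov inequality applied to this expectation bound. No union bound over supports, no matrix Bernstein, no Hoeffding decomposition.

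The gap in your plan is the incompatibility between the union bound over $\binom{p}{s}$ supports and the Chebyshev/moment tail you propose for the degenerate part. A Chebyshev bound gives only polynomial decay in the deviation level, so after multiplying by $(ep/s)^s$ the bound is vacuous; there is no way to recover a nontrivial probability statement. You would need sub-exponential tails uniformly over supports (which yields $\log(1/\varepsilon)$, not $1/\varepsilon^2$), or you must avoid the union bound altogether by bounding the expectation of the supremum directly---which is exactly what the paper does via the permutation trick and the Belloni--Chernozhukov lemma. The specific log factors $\log^2(s)\log(p\vee n_y)\log(n_y)$ and the $\bar\kappa$ in place of $M_\psi^2 M_r^2$ are inherited verbatim from that lemma (itself a Rudelson-type argument), and are not reproducible from your matrix-Bernstein-plus-union-bound route.
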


\begin{proof}
Similar to the proof of \Cref{lem:subG_quadform}, let
\begin{equation*}
U_{\vb}
:= \frac{2}{1-1/n_y} \vb^\top \Hb(\thetab^*) \vb
= \frac{2}{n_y(n_y-1)} \sum_{1 \leq j < j' \leq n_y} g_{\vb}(\yb^{(j)},\yb^{(j')}),
\end{equation*}
where
\begin{equation*}
g_{\vb}(\yb,\yb')
= \langle \psib(\yb) - \psib(\yb'), \vb \rangle \, \langle \psib(\yb) - \psib(\yb'), \vb \rangle \, r_\theta(\yb) \, r_\theta(\yb').
\end{equation*}
Let
\begin{multline*}
V_{\vb}(\yb^{(1)}, \dots, \yb^{(n_y)})\\
:= \frac{1}{\lfloor n_y/2\rfloor} \left( g_{\vb}(\yb^{(1)},\yb^{(2)}) + g_{\vb}(\yb^{(3)},\yb^{(4)}) + \cdots + g_{\vb}(\yb^{(2\lfloor n_y/2\rfloor-1)},\yb^{(2\lfloor n_y/2\rfloor)}) \right),
\end{multline*}
and write
\begin{equation*}
U_{\vb}
= \frac{1}{n_y!} \sum_{\sigma \in \mathfrak{S}_{n_y}} V_{\vb}(\yb^{(\sigma(1))},\dots,\yb^{(\sigma(n_y))}),
\end{equation*}
where $\mathfrak{S}_{n_y}$ is the group of permutations on $[n_y]$.
Then
\begin{multline*}
\EE_y\sbr{
\sup_{\substack{\|\vb\|_0 \leq s \\ \|\vb\| = 1}}
\abr{U_{\vb} - \EE_y U_{\vb} } }\\
\begin{aligned}
&= \EE_y\sbr{
\sup_{\substack{\|\vb\|_0 \leq s \\ \|\vb\| = 1}}
\abr{
\frac{1}{n_y!} \sum_{\sigma \in \mathfrak{S}_{n_y}} V_{\vb}(\yb^{(\sigma(1))},\dots,\yb^{(\sigma(n_y))}) - \EE_y V_{\vb}(\yb^{(\sigma(1))},\dots,\yb^{(\sigma(n_y))}) } }\\
&\leq \EE_y\sbr{\sup_{\substack{\|\vb\|_0 \leq s \\ \|\vb\| = 1}}
\abr{
V_{\vb}(\yb^{(1)},\dots,\yb^{(n_y)}) - \EE_y V_{\vb}(\yb^{(1)},\dots,\yb^{(n_y)}) } }.
\end{aligned}
\end{multline*}
Denoting $\zb^{(i)} = \rbr{\psib(\yb^{(2i-1)}) - \psib(\yb^{(2i)})}\sqrt{r_\theta(\yb^{(2i-1)}) r_\theta(\yb^{(2i)})}$, we have
\begin{multline*}
\EE_y\sbr{
\sup_{\substack{\|\vb\|_0 \leq s \\ \|\vb\| = 1}} \abr{\vb^\top \{\Hb(\thetab^*)-\EE_y\Hb(\thetab^*)\} \vb } }\\
\leq
\frac{1-1/n_y}{2}
\EE_y\sbr{
\sup_{\substack{\|\vb\|_0 \leq s \\ \|\vb\| = 1}}
\abr{\vb^\top \rbr{\sum_{i \in [\lfloor n_y/2\rfloor]} \zb^i\zb^{i\top} - \EE_y\sbr{ \zb^i\zb^{i\top} } } \vb } }.
\end{multline*}
Note that $\|\zb^{i}\|_{\infty} \leq 2 M_\psi M_r$.
Then an application of Lemma 11 of \cite{Belloni2013Least} gives us
\begin{equation*}
\EE_y\sbr{
\sup_{\substack{\|\vb\|_0 \leq s \\ \|\vb\| = 1}}
\abr{\vb^\top \{\Hb(\thetab^*)-\EE_y\Hb(\thetab^*)\} \vb } }
\leq a_n^2 + a_n \sqrt{\bar\kappa},
\end{equation*}
where $a_n^2 = C M_\psi^2 M_r^2 s \log^2 (s) \log (p\vee n_y) \log(n_y) / n_y$, $C > 0$ is a known, absolute constant inherited from the lemma.
Using Markov's inequality, we get that
\[
\sup_{\substack{\|\vb\|_0 \leq s \\ \|\vb\| = 1}}
\abr{\vb^\top \{\Hb(\thetab^*)-\EE_y\Hb(\thetab^*)\} \vb }
\leq \underline\kappa / (c (2+\beta)^2)
\]
with probability $1-\varepsilon$.
\end{proof}

\begin{lemma}[Lemma 4.9 of \cite{Barber2015ROCKET}] \label[lemma]{lem:sparse_norm}
For any $\Mb \in \RR^{p \times p}$ and $s \geq 1$,
\begin{equation*}
\vb^\top \Mb \vb
\leq \vertiii{\Mb}_s \rbr{\|\vb\| + \frac{\|\vb\|_1}{\sqrt{s}}}^2
\quad\text{for all}\quad \vb \in \RR^P.
\end{equation*}
\end{lemma}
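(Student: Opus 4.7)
The plan is to use the classical ``shelling'' decomposition from sparse recovery. Without loss of generality, assume $\Mb$ is symmetric; otherwise replace $\Mb$ by $(\Mb+\Mb^\top)/2$, which leaves both $\vb^\top \Mb \vb$ and $\vertiii{\Mb}_s$ unchanged (the sparse norm being defined via a quadratic form).

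First, I would order the coordinates of $\vb$ by magnitude and partition $[p]$ into consecutive blocks $T_0, T_1, T_2, \dots$ of size $s$ (with the last block possibly smaller). Letting $\vb^{(i)}$ denote the restriction of $\vb$ to $T_i$, this yields a decomposition $\vb = \sum_{i \geq 0} \vb^{(i)}$ into $s$-sparse pieces with pairwise disjoint supports. The key magnitude estimate is that for $i \geq 1$, every entry of $\vb^{(i)}$ is bounded in absolute value by every entry of $\vb^{(i-1)}$, hence $\|\vb^{(i)}\|_\infty \leq \|\vb^{(i-1)}\|_1/s$ and $\|\vb^{(i)}\| \leq \sqrt{s}\,\|\vb^{(i)}\|_\infty \leq \|\vb^{(i-1)}\|_1/\sqrt{s}$. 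Summing telescopically gives $\sum_{i \geq 1}\|\vb^{(i)}\| \leq \|\vb\|_1/\sqrt{s}$; combined with the trivial bound $\|\vb^{(0)}\| \leq \|\vb\|$, this delivers $\sum_{i}\|\vb^{(i)}\| \leq \|\vb\| + \|\vb\|_1/\sqrt{s}$.

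Next, expand
\begin{equation*}
\vb^\top \Mb \vb = \sum_{i,j} (\vb^{(i)})^\top \Mb \vb^{(j)},
\end{equation*}
and control each bilinear term by $\vertiii{\Mb}_s \|\vb^{(i)}\|\,\|\vb^{(j)}\|$. Using the identity $\sum_{i,j} a_i a_j = (\sum_i a_i)^2$ with $a_i = \|\vb^{(i)}\|$ and plugging in the shelling bound of the previous paragraph then yields
\begin{equation*}
\vb^\top \Mb \vb \leq \vertiii{\Mb}_s \Bigl(\sum_i \|\vb^{(i)}\|\Bigr)^2 \leq \vertiii{\Mb}_s\Bigl(\|\vb\| + \|\vb\|_1/\sqrt{s}\Bigr)^2,
\end{equation*}
which is the desired inequality.

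The main obstacle is establishing the bilinear control $|(\vb^{(i)})^\top \Mb \vb^{(j)}| \leq \vertiii{\Mb}_s \|\vb^{(i)}\|\,\|\vb^{(j)}\|$ from the purely quadratic definition of $\vertiii{\Mb}_s$. A naive polarization applied to $\vb^{(i)} \pm \vb^{(j)}$---which, being disjointly supported, live in a $2s$-sparse subspace---would only yield the weaker constant $\vertiii{\Mb}_{2s}$. The sharper bound follows the template of Lemma 4.9 of \citet{Barber2015ROCKET}, where one exploits the symmetry of $\Mb$ together with a variational argument in which the signs of $\vb^{(i)}$ and $\vb^{(j)}$ are chosen to extract the correct quadratic-form representation on an $s$-sparse support; I would carry out that step by following their derivation verbatim and cite it, since our sparse norm is defined identically.
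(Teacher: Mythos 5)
The paper offers no proof of this lemma---it is quoted verbatim from \citet{Barber2015ROCKET}---so your argument has to stand on its own, and it does not: the one step you defer to the citation, namely the bilinear control $|(\vb^{(i)})^\top \Mb \vb^{(j)}| \leq \vertiii{\Mb}_s \|\vb^{(i)}\|\,\|\vb^{(j)}\|$ for disjointly supported $s$-sparse blocks, is the entire nontrivial content of the lemma, and with the \emph{quadratic-form} definition of $\vertiii{\,\cdot\,}_s$ given in this paper's preliminaries it is actually false. Take $p=2$, $s=1$, and $\Mb$ symmetric with zero diagonal and off-diagonal entries equal to $1$: then $\vertiii{\Mb}_1 = \max_k |M_{kk}| = 0$, yet $\eb_1^\top \Mb \eb_2 = 1$. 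The same example defeats the displayed inequality of the lemma itself, since $\vb = (1,1)^\top/\sqrt{2}$ gives $\vb^\top \Mb \vb = 1$ while the right-hand side is $0$. No ``variational choice of signs'' can repair this: the supremum of $|\vb^\top \Mb \vb|$ over $s$-sparse $\vb$ simply carries no information about entries of $\Mb$ whose row and column indices cannot be covered by a single support of size $s$.

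What your shelling argument actually proves is a correct but slightly different statement. Polarization applied to $\vb^{(i)} \pm \vb^{(j)}$ (followed by the rescaling $\vb^{(i)} \mapsto \lambda \vb^{(i)}$, $\vb^{(j)} \mapsto \vb^{(j)}/\lambda$ to turn $\tfrac12(\|\vb^{(i)}\|^2+\|\vb^{(j)}\|^2)$ into $\|\vb^{(i)}\|\|\vb^{(j)}\|$) gives the bound with $\vertiii{\Mb}_{2s}$ in place of $\vertiii{\Mb}_s$, as you note; alternatively, shelling into blocks of size $\lfloor s/2 \rfloor$ keeps $\vertiii{\Mb}_s$ at the cost of replacing $\|\vb\|_1/\sqrt{s}$ by (roughly) $\sqrt{2}\,\|\vb\|_1/\sqrt{s}$. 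Either variant suffices for the only place the lemma is invoked, in \Cref{lem:RSC}, after adjusting constants. The statement as written holds verbatim if $\vertiii{\Mb}_s$ is instead defined as the bilinear supremum $\sup\{|\ub^\top \Mb \vb| : \|\ub\|_0 \leq s, \|\vb\|_0 \leq s, \|\ub\| = \|\vb\| = 1\}$, which is presumably the convention in the cited source; under that definition your decomposition closes immediately. Everything else in your write-up---the symmetrization, the magnitude ordering, the telescoping bound $\sum_{i \geq 1} \|\vb^{(i)}\| \leq \|\vb\|_1/\sqrt{s}$, and the expansion of $\bigl(\sum_i \|\vb^{(i)}\|\bigr)^2$---is correct.
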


\section{Auxiliary results}

\subsection{\label{app:GAR}Gaussian approximation lemmas}

\begin{lemma} \label[lemma]{lem:Berry-Esseen}
For $\omegab \in \RR^p$, let
\begin{equation*}
A_n = A_n(\omegab) = \left\langle \omegab, \frac{1}{n_x} \sum_{i=1}^{n_x} \rbr{\psib(\xb^{(i)})-\mub_\psi} + \frac{1}{n_y} \sum_{j=1}^{n_y} \rbr{\mub_\psi-\psib(\yb^{(j)})} r_{\theta^*}(\yb^{(j)}) \right\rangle,
\end{equation*}
and
\begin{equation*}
\sigma_n^2
= \sigma_n^2(\omegab)
= \Var\sbr{\sqrt{n} \, A_n(\omegab)}.
\end{equation*}
Then,
\begin{equation*}
\sup_{t \in \RR} \left| \PP\left\{ \sqrt{n} \, A_n / \sigma_n \leq t \right\} - \Phi(t) \right|
\leq
\frac{2 C M_r M_\psi \|\omegab\|}{\eta_{x,n} \eta_{y,n} \sigma_n \sqrt{n}},
\end{equation*}
where $C > 0$ denotes a known, absolute constant.
\end{lemma}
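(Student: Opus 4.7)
The plan is to recognize $\sqrt{n}\,A_n$ as a sum of $n_x+n_y$ mutually independent, mean-zero, almost surely bounded summands, and then invoke the classical Berry--Esseen theorem for non-identically distributed independent summands. First I would set
\begin{equation*}
U_i = \langle \omegab, \psib(\xb^{(i)})-\mub_\psi\rangle,\ i=1,\dots,n_x,\quad V_j = \langle \omegab, \mub_\psi-\psib(\yb^{(j)})\rangle r_{\theta^*}(\yb^{(j)}),\ j=1,\dots,n_y.
\end{equation*}
These are independent across $i$, $j$, and independent between the two groups since the $\xb$- and $\yb$-samples are independent. Each is centered: $\EE_x U_i = 0$ trivially, and $\EE_y V_j = 0$ by the identity $\mub_\psi = \EE_x[\psib(\xb)] = \EE_y[\psib(\yb)r_{\theta^*}(\yb)]$ (itself a consequence of the density-ratio characterization). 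Defining
\begin{equation*}
W_k = \tfrac{\sqrt{n}}{n_x}\, U_k \text{ for } 1\le k\le n_x, \qquad W_k = \tfrac{\sqrt{n}}{n_y}\, V_{k-n_x} \text{ for } n_x< k\le n_x+n_y,
\end{equation*}
we have $\sqrt{n}\,A_n = \sum_{k=1}^{n_x+n_y} W_k$ and $\sum_k \Var(W_k) = (n/n_x)\Var(U_1) + (n/n_y)\Var(V_1) = \sigma_n^2$.

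Next I would obtain a uniform sup-norm bound on each $W_k$ using the regularity provided by \Cref{cond:bdr} and its consequence \Cref{prop:bss}. Since $\|\psib\|_\infty \le M_\psi$ a.s., we have $|\mu_{\psi,k}|\le M_\psi$ for each $k$, hence $|U_i|\le 2M_\psi\|\omegab\|_1$; combined with $r_{\theta^*}\le M_r$, we also get $|V_j|\le 2M_\psi M_r\|\omegab\|_1$. Writing $n_x = \eta_{x,n} n$ and $n_y = \eta_{y,n} n$, this gives
\begin{equation*}
\max_{k} \|W_k\|_\infty \ \le\ \frac{2 M_\psi M_r \|\omegab\|_1}{\sqrt{n}\,\min(\eta_{x,n},\eta_{y,n})}.
\end{equation*}
Using the crude bound $\EE|W_k|^3 \le \|W_k\|_\infty\cdot \EE W_k^2$ and summing,
\begin{equation*}
\sum_{k=1}^{n_x+n_y} \EE |W_k|^3 \ \le\ \frac{2 M_\psi M_r\|\omegab\|_1}{\sqrt{n}\,\min(\eta_{x,n},\eta_{y,n})} \cdot \sigma_n^2.
\end{equation*}

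Finally, I would apply the classical Berry--Esseen inequality for independent (not necessarily identically distributed) summands, which yields, with absolute constant $C>0$,
\begin{equation*}
\sup_{t\in\RR}\left|\PP\{\sqrt{n}\,A_n/\sigma_n \le t\} - \Phi(t)\right| \ \le\ \frac{C}{\sigma_n^3}\sum_k \EE|W_k|^3 \ \le\ \frac{2C M_\psi M_r\|\omegab\|_1}{\min(\eta_{x,n},\eta_{y,n})\, \sigma_n\sqrt{n}}.
\end{equation*}
The claimed bound follows from $\min(\eta_{x,n},\eta_{y,n}) \ge \eta_{x,n}\eta_{y,n}$ (both factors being in $(0,1)$). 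No step here presents a real obstacle: the main conceptual point is simply the double-sample independence that lets us treat all $n_x+n_y$ summands as a single independent collection; everything else reduces to boundedness and a textbook theorem.
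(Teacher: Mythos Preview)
Your proposal is correct and follows essentially the same approach as the paper: decompose $\sqrt{n}\,A_n$ into $n_x+n_y$ independent, mean-zero, almost surely bounded summands and apply the Berry--Esseen inequality for non-identically distributed terms. The only cosmetic difference is that the paper keeps the two per-summand bounds $2M_\psi\|\omegab\|_1/(\eta_{x,n}\sigma_n)$ and $2M_\psi M_r\|\omegab\|_1/(\eta_{y,n}\sigma_n)$ separate and cites Theorem~3.4 of Chen et al.\ (2011) directly, whereas you take a uniform maximum and pass through $\min(\eta_{x,n},\eta_{y,n})\ge \eta_{x,n}\eta_{y,n}$; both routes land on the same final bound.
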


\begin{proof}
Write
\begin{equation*}
\sqrt{n} \, A_n / \sigma_n
= \frac{1}{\sqrt{n}} \left\{ \sum_{i=1}^{n_x} \frac{\langle \omegab, \psib(\xb^{(i)})-\mub_\psi \rangle}{\eta_{x,n} \sigma_n} + \sum_{j=1}^{n_y} \frac{\langle \omegab, \mub_\psi-\psib(\yb^{(j)}) \rangle \, r_{\theta^*}(\yb^{(j)})}{\eta_{y,n} \sigma_n} \right\}.
\end{equation*}
Now,
\begin{eqnarray*}
\frac{\abr{\langle \omegab, \psib(\xb)-\mub_\psi \rangle}}{\eta_{x,n} \sigma_n}
\leq \frac{2M_\psi \|\omegab\|}{\eta_{x,n} \sigma_n}
&\text{and}&
\frac{\abr{\langle \omegab, \mub_\psi-\psib(\yb) \rangle \, r_{\theta^*}(\yb)}}{\eta_{y,n} \sigma_n}
\leq \frac{2M_r M_\psi \|\omegab\|}{\eta_{y,n} \sigma_n}.
\end{eqnarray*}
Noting that $M_r \geq 1$, the Berry-Esseen inequality (Theorem 3.4 of \cite{Chen2011Normal}) yields
\begin{equation*}
\sup_{t \in \RR} \left| \PP\left\{ \sqrt{n} \, A_n / \sigma_n \leq t \right\} - \Phi(t) \right|
\leq \frac{2 C M_r M_\psi \|\omegab\|}{\eta_{x,n} \eta_{y,n} \sigma_n \sqrt{n}},
\end{equation*}
where $C > 0$ is a known, absolute constant from the theorem.
\end{proof}

\begin{lemma}[Lemma D.3 of \cite{Barber2015ROCKET}] \label[lemma]{lem:GAR}
If
\begin{eqnarray*}
\sup_{z \in \RR} \left| \PP\{ A \leq z \} - \Phi(z) \right| \leq \varepsilon_A
& \text{and} &
\PP\{ |B| \leq \delta_B, |C| \leq \delta_C \} \geq 1 - \varepsilon_{BC}
\end{eqnarray*}
for some $\delta_B, \delta_C, \varepsilon_A, \varepsilon_{BC} \in [0,1)$, then
\begin{equation*}
\sup_{z \in \RR} \left| \PP\{ (A + B) / (1 + C) \leq z \} - \Phi(z) \right| \leq \delta_B + \frac{\delta_C}{1-\delta_C} + \varepsilon_A + \varepsilon_{BC}.
\end{equation*}
\end{lemma}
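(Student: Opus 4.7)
The plan is a localization argument on the good event $\mathcal{G} = \{|B| \leq \delta_B,\ |C| \leq \delta_C\}$, which by hypothesis has probability at least $1 - \varepsilon_{BC}$. Since $|C| \leq \delta_C < 1$ on $\mathcal{G}$, one has $1+C > 0$ there, so the event $\{(A+B)/(1+C) \leq z\} \cap \mathcal{G}$ coincides with $\{A \leq z(1+C) - B\} \cap \mathcal{G}$. Using $|z(1+C) - B - z| \leq |z|\delta_C + \delta_B$ on $\mathcal{G}$, I would first obtain the straightforward sandwich
\begin{equation*}
\{A \leq z - |z|\delta_C - \delta_B\} \cap \mathcal{G}
\ \subseteq\ \{(A+B)/(1+C) \leq z\} \cap \mathcal{G}
\ \subseteq\ \{A \leq z + |z|\delta_C + \delta_B\} \cap \mathcal{G},
\end{equation*}
and then combine it with $\PP(\mathcal{G}^c) \leq \varepsilon_{BC}$ and the Kolmogorov hypothesis on $A$ to produce
\begin{equation*}
\Phi(z - |z|\delta_C - \delta_B) - \varepsilon_A - \varepsilon_{BC}
\ \leq\ \PP\bigl((A+B)/(1+C) \leq z\bigr)
\ \leq\ \Phi(z + |z|\delta_C + \delta_B) + \varepsilon_A + \varepsilon_{BC}.
\end{equation*}

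The real work is then the uniform anti-concentration estimate
\begin{equation*}
\Phi(z + |z|\delta_C + \delta_B) - \Phi(z)\ \leq\ \delta_B + \frac{\delta_C}{1-\delta_C}
\qquad (\text{and symmetrically for the lower tail}),
\end{equation*}
which must hold for all $z \in \RR$. This is the delicate step: the crude bound $\phi \leq (2\pi)^{-1/2}$ merely gives $(|z|\delta_C + \delta_B)(2\pi)^{-1/2}$, which blows up with $|z|$, so uniformity must be recovered by exploiting the decay of $\phi$, and it is here that the factor $\delta_C/(1-\delta_C)$ is forced to appear. For $z \geq 0$, $\phi$ is decreasing on $[z, z(1+\delta_C) + \delta_B]$ and the scale-invariant inequality $|t|\phi(t) \leq (2\pi e)^{-1/2}$ applied at $t=z$ turns the $|z|\delta_C$ piece into the harmless constant $\delta_C/\sqrt{2\pi e}$, while the remaining $\delta_B\phi(z)$ contributes at most $\delta_B/\sqrt{2\pi}$.

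The hard part will be $z < 0$. There I would split $\int_z^{z(1-\delta_C) + \delta_B} \phi(t)\,dt$ at $t = z(1-\delta_C)$. On the left sub-interval $[z, z(1-\delta_C)] \subset (-\infty, 0]$, one can write any mean-value witness $\xi$ as satisfying $|\xi| \geq |z|(1-\delta_C)$, so
\begin{equation*}
|z|\delta_C\,\phi(\xi)
\ =\ \delta_C\cdot\frac{|z|}{|\xi|}\cdot|\xi|\phi(\xi)
\ \leq\ \frac{\delta_C}{(1-\delta_C)\sqrt{2\pi e}},
\end{equation*}
which is exactly where the $1/(1-\delta_C)$ emerges. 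The right sub-interval $[z(1-\delta_C), z(1-\delta_C)+\delta_B]$ has length $\delta_B$ and is bounded by $\delta_B/\sqrt{2\pi}$ via the crude estimate. The subcase where $z(1-\delta_C) + \delta_B \geq 0$ is easier: the total interval length is then at most $\delta_B/(1-\delta_C)$, and a direct application of $\phi \leq (2\pi)^{-1/2}$ suffices. Assembling these pieces gives the anti-concentration estimate, which combined with the sandwich above yields the stated bound $\delta_B + \delta_C/(1-\delta_C) + \varepsilon_A + \varepsilon_{BC}$. The only non-routine ingredient is the anti-concentration step, and specifically the use of the scale-invariant identity $|t|\phi(t) \leq (2\pi e)^{-1/2}$ to tame the linear-in-$|z|$ blow-up; everything else is bookkeeping.
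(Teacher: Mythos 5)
Your proof is correct. Note that the paper itself gives no argument for this lemma --- it is imported verbatim by citation from Barber and Kolar (2015, Lemma D.3) --- so there is no in-paper proof to compare against; your write-up is essentially the standard argument behind results of this type: localize to the good event, convert $\{(A+B)/(1+C)\leq z\}$ into a sandwich of Kolmogorov events for $A$, and then control $\Phi(z\pm(|z|\delta_C+\delta_B))-\Phi(z)$ uniformly in $z$. The only genuinely delicate point is the one you identify, namely taming the $|z|\delta_C$ term via $|t|\phi(t)\leq(2\pi e)^{-1/2}$, and your case analysis for $z<0$ (splitting at $z(1-\delta_C)$, with the separate subcase $z(1-\delta_C)+\delta_B\geq 0$ forcing $|z|\leq\delta_B/(1-\delta_C)$) closes every gap: in each branch the resulting constants $(2\pi e)^{-1/2}$ and $(2\pi)^{-1/2}$ are absorbed into $\delta_B+\delta_C/(1-\delta_C)$ using $\delta_B,\delta_C<1$. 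The lower-tail estimate follows by the symmetry $\Phi(z)-\Phi(z-h)=\Phi(-z+h)-\Phi(-z)$ with $h=|z|\delta_C+\delta_B$, as you indicate, so nothing is missing.
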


\subsection{\label{app:consistency_var_est}Consistency of the variance estimator}

\begin{lemma} \label[lemma]{lem:varest}
On the event that
\begin{align*}
\|\thetab - \thetab^*\| &\leq \delta_\theta,&
\|\check\omegab_k - \omegab_k^*\| &\leq \delta_k,&
&\text{and}&
\topnorm{\hat\Sbb_\psi - \Sigmab_\psi},\ \topnorm{\hat\Sbb_{\psi\hat r}(\thetab^*) - \Sigmab_{\psi r}} \leq \delta_\sigma / 4,
\end{align*}
the variance estimate \eqref{eq:varest} satisfies
\begin{equation*}
|\hat\sigma_k^2 - \sigma_k^2|
\leq \rbr{\eta_{x,n} \eta_{y,n}}^{-1}
\cbr{\|\omegab_k^*\|^2 \rbr{\delta_\sigma + 2 L_3 \, \delta_\theta}
+ \rbr{\delta_\sigma + 2 L_3 \, \delta_\theta + \topnorm{\Sigmab_\psi} + \topnorm{\Sigmab_{\psi r}}} \delta_k^2}.
\end{equation*}
\end{lemma}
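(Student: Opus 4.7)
Let me write $\hat\Sbb := \hat\Sbb_\text{pooled}$ and introduce the corresponding population object $\Sigmab_\text{pooled} := \eta_{x,n}^{-1}\Sigmab_\psi + \eta_{y,n}^{-1}\Sigmab_{\psi r}$, so that $\sigma_k^2 = \omegab_k^{*\top}\Sigmab_\text{pooled}\omegab_k^*$ and $\hat\sigma_k^2 = \check\omegab_k^\top \hat\Sbb\check\omegab_k$ (recall that $n/n_x = \eta_{x,n}^{-1}$ and $n/n_y = \eta_{y,n}^{-1}$). Setting $\db = \check\omegab_k - \omegab_k^*$, I expand around $(\omegab_k^*,\Sigmab_\text{pooled})$:
\[
\hat\sigma_k^2 - \sigma_k^2
= \omegab_k^{*\top}(\hat\Sbb - \Sigmab_\text{pooled})\omegab_k^* + 2\omegab_k^{*\top}\hat\Sbb\db + \db^\top \hat\Sbb\db,
\]
and bound each of the three pieces through $\ell^1$--$\ell^\infty$ duality.

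The first piece is $\leq \|\omegab_k^*\|_1^2 \|\hat\Sbb - \Sigmab_\text{pooled}\|_\infty$. To control the operator-norm difference I split it as
\[
\hat\Sbb - \Sigmab_\text{pooled}
= \eta_{x,n}^{-1}(\hat\Sbb_\psi - \Sigmab_\psi) + \eta_{y,n}^{-1}\bigl[\hat\Sbb_{\psi\hat r}(\hat\thetab) - \hat\Sbb_{\psi\hat r}(\thetab^*)\bigr] + \eta_{y,n}^{-1}\bigl[\hat\Sbb_{\psi\hat r}(\thetab^*) - \Sigmab_{\psi r}\bigr].
\]
The two sample-vs-population terms are each $\leq \delta_\sigma/4$ by hypothesis, while for the middle ``plug-in" term I use that $\thetab \mapsto \hat\Sbb_{\psi\hat r}(\thetab)$ is locally Lipschitz in $\|\,\cdot\,\|_\infty$ with constant $L_3$ (which follows from \Cref{lem:Lips}, since $\hat r_\theta$ and $\hat\mub$ are each Lipschitz and the sufficient statistics are bounded by $M_\psi$), giving $\leq L_3\delta_\theta$. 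Hence $\|\hat\Sbb - \Sigmab_\text{pooled}\|_\infty \lesssim (\eta_{x,n}\eta_{y,n})^{-1}(\delta_\sigma + 2L_3\delta_\theta)$.

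The third piece is bounded by $\|\db\|_1^2 \|\hat\Sbb\|_\infty$ with $\|\hat\Sbb\|_\infty \leq \|\hat\Sbb - \Sigmab_\text{pooled}\|_\infty + \|\Sigmab_\text{pooled}\|_\infty$, and $\|\Sigmab_\text{pooled}\|_\infty \leq (\eta_{x,n}\eta_{y,n})^{-1}(\|\Sigmab_\psi\|_\infty + \|\Sigmab_{\psi r}\|_\infty)$ by the triangle inequality after pulling the $\eta$-factors out. This produces exactly the coefficient $\bigl(\delta_\sigma + 2L_3\delta_\theta + \|\Sigmab_\psi\|_\infty + \|\Sigmab_{\psi r}\|_\infty\bigr)$ in front of $\delta_k^2$. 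The cross term $2\omegab_k^{*\top}\hat\Sbb\db$ is then handled by AM--GM after Hölder's inequality, $2|\omegab_k^{*\top}\hat\Sbb\db| \leq \alpha\|\omegab_k^*\|_1^2\|\hat\Sbb\|_\infty + \alpha^{-1}\|\db\|_1^2\|\hat\Sbb\|_\infty$, with $\alpha$ chosen so the $\|\omegab_k^*\|_1^2$ contribution is absorbed into the first piece and the $\|\db\|_1^2$ contribution into the third piece.

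The main obstacle is purely bookkeeping: ensuring that the $\|\Sigmab_\psi\|_\infty + \|\Sigmab_{\psi r}\|_\infty$ factor ends up attached only to $\delta_k^2$ and not to $\|\omegab_k^*\|_1^2$, which requires the AM--GM splitting of the cross term to be coordinated with the bounds on the boundary pieces so that all the ``large'' population-norm contributions flow into the $\db$-side. Once this is done, collecting the $\eta_{x,n}^{-1}$ and $\eta_{y,n}^{-1}$ into the common $(\eta_{x,n}\eta_{y,n})^{-1}$ prefactor yields the stated inequality.
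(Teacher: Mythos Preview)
Your expansion around $(\omegab_k^*, \Sigmab_\text{pooled})$ is correct algebra, and the bounds on the first and third pieces are fine. The gap is the cross term. Your AM--GM plan requires $\alpha\|\hat\Sbb\|_\infty \lesssim \delta_\sigma + 2L_3\delta_\theta$ (so that the $\|\omegab_k^*\|_1^2$ contribution can be absorbed into the first piece) and simultaneously $\alpha^{-1}\|\hat\Sbb\|_\infty \lesssim \delta_\sigma + 2L_3\delta_\theta + \|\Sigmab_\psi\|_\infty + \|\Sigmab_{\psi r}\|_\infty$ (so that the $\|\db\|_1^2$ contribution can be absorbed into the third). Since $\|\hat\Sbb\|_\infty$ is of order $\|\Sigmab_\text{pooled}\|_\infty$, the first constraint forces $\alpha$ to be small (of order $(\delta_\sigma+2L_3\delta_\theta)/\|\Sigmab_\text{pooled}\|_\infty$), while the second forces $\alpha$ to be of order one; these are incompatible. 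So this is not bookkeeping: the decomposition you chose inevitably places the large population norm into the cross term, and no AM--GM split can push it solely onto the $\delta_k^2$ side.

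The paper uses a different decomposition that sidesteps this. Treating the $x$- and $y$-blocks separately, it writes
\[
\check\omegab_k^\top \hat\Sbb \check\omegab_k - \omegab_k^{*\top}\Sigmab\omegab_k^*
= \check\omegab_k^\top (\hat\Sbb - \Sigmab) \check\omegab_k + \bigl(\check\omegab_k^\top \Sigmab \check\omegab_k - \omegab_k^{*\top}\Sigmab\omegab_k^*\bigr),
\]
keeping $\check\omegab_k$ rather than $\omegab_k^*$ in the matrix-difference term. Then $\|\check\omegab_k\|_1^2 \leq 2(\|\omegab_k^*\|_1^2 + \delta_k^2)$ attaches the small factor $\|\hat\Sbb - \Sigmab\|_\infty$ to both $\|\omegab_k^*\|_1^2$ and $\delta_k^2$, while the second bracket is a difference of quadratic forms in the \emph{same} matrix $\Sigmab$ and therefore involves only the vector perturbation; this is where $\|\Sigmab_\psi\|_\infty$ and $\|\Sigmab_{\psi r}\|_\infty$ enter, multiplying $\delta_k^2$ alone. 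Your handling of $\|\hat\Sbb-\Sigmab_\text{pooled}\|_\infty$ via the three-term split and the Lipschitz property of $\hat\Sbb_{\psi\hat r}(\cdot)$ is exactly what the paper does (their \Cref{lem:Lips_sample_cov}); it is only the choice of pivot in the quadratic-form decomposition that needs to change.
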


\begin{proof}
Let
\begin{equation*}
\Sigmab_\text{pooled} = \eta_{x,n}^{-1} \Sigmab_\psi + \eta_{y,n}^{-1} \Sigmab_{\psi r}.
\end{equation*}
We have
\begin{equation*}
\begin{aligned}
\hat\sigma_k^2 - \sigma_k^2
&= \check\omegab_k^\top \hat\Sbb_\text{pooled} \check\omegab_k - \omegab_k^{*\top} \Sigmab_\text{pooled} \omegab_k^* \\
&= \check\omegab_k^\top \left\{ \eta_{x,n}^{-1} \hat\Sbb_\psi + \eta_{y,n}^{-1} \hat\Sbb_{\psi\hat r}(\thetab) \right\} \check\omegab_k - \omegab_k^{*\top} \left\{ \eta_{x,n}^{-1} \Sigmab_\psi + \eta_{y,n}^{-1} \Sigmab_{\psi r} \right\} \omegab_k^* \\
&= \eta_{x,n}^{-1} \left( \check\omegab_k^\top \hat\Sbb_\psi \check\omegab_k - \omegab_k^{*\top} \Sigmab_\psi \omegab_k^* \right) + \eta_{y,n}^{-1} \left( \check\omegab_k^\top \hat\Sbb_{\psi\hat r}(\thetab) \check\omegab_k - \omegab_k^{*\top} \Sigmab_{\psi r} \omegab_k^* \right).
\end{aligned}
\end{equation*}
The first term is bounded as
\begin{equation*}
\begin{aligned}
\abr{\check\omegab_k^\top \hat\Sbb_\psi \check\omegab_k - \omegab_k^{*\top} \Sigmab_\psi \omegab_k^*}
&\leq \abr{\check\omegab_k^\top \{\hat\Sbb_\psi - \Sigmab_\psi\} \check\omegab_k} + \abr{(\check\omegab_k - \omegab_k^*)^\top \Sigmab_\psi (\check\omegab_k - \omegab_k^*)}\\
&\leq \topnorm{\hat\Sbb_\psi - \Sigmab_\psi} \|\check\omegab_k\|^2 + \topnorm{\Sigmab_\psi} \|\check\omegab_k - \omegab_k^*\|^2\\
&\leq \tfrac 12 {\delta_\sigma} \rbr{\|\omegab_k^*\|^2 + \delta_k^2} + \topnorm{\Sigmab_\psi} \delta_k^2.
\end{aligned}
\end{equation*}
Similarly,
\begin{multline*}
\abr{\check\omegab_k^\top \hat\Sbb_{\psi\hat r}(\thetab) \check\omegab_k - \omegab_k^{*\top} \Sigmab_{\psi r} \omegab_k^*}\\
\begin{aligned}
&\leq \abr{\check\omegab_k^\top \{\hat\Sbb_{\psi\hat r}(\thetab) - \Sigmab_{\psi r}\} \check\omegab_k} + \abr{(\check\omegab_k - \omegab_k^*)^\top \Sigmab_{\psi r} (\check\omegab_k - \omegab_k^*)}\\
&\leq \topnorm{\hat\Sbb_{\psi\hat r}(\thetab) - \Sigmab_{\psi r}} \|\check\omegab_k\|^2 + \topnorm{\Sigmab_{\psi r}} \|\check\omegab_k - \omegab_k^*\|^2\\
&\leq \rbr{\topnorm{\hat\Sbb_{\psi\hat r}(\thetab) - \hat\Sbb_{\psi\hat r}(\thetab^*)} + \topnorm{\hat\Sbb_{\psi\hat r}(\thetab^*) - \Sigmab_{\psi r}}} \|\check\omegab_k\|^2 + \topnorm{\Sigmab_{\psi r}} \|\check\omegab_k - \omegab_k^*\|^2\\
&\leq \rbr{L_3 \, \|\thetab - \thetab^*\| + \topnorm{\hat\Sbb_{\psi\hat r}(\thetab^*) - \Sigmab_{\psi r}}} \|\check\omegab_k\|^2 + \topnorm{\Sigmab_{\psi r}} \|\check\omegab_k - \omegab_k^*\|^2\\
&\leq \rbr{2L_3 \, \delta_\theta + \tfrac 12 {\delta_\sigma}} \rbr{\|\omegab_k^*\|^2 + \delta_k^2} + \topnorm{\Sigmab_{\psi r}} \delta_k^2,
\end{aligned}
\end{multline*}
where the penultimate line is by \Cref{lem:Lips_sample_cov}.
Thus,
\begin{equation*}
|\hat\sigma_k^2 - \sigma_k^2|
\leq \rbr{\eta_{x,n} \eta_{y,n}}^{-1}
\cbr{\|\omegab_k^*\|^2 \rbr{\delta_\sigma + 2 L_3 \, \delta_\theta}
+ \rbr{\delta_\sigma + 2 L_3 \, \delta_\theta + \topnorm{\Sigmab_\psi} + \topnorm{\Sigmab_{\psi r}}} \delta_k^2}.
\end{equation*}
\end{proof}

\begin{lemma} \label[lemma]{lem:Lips_sample_cov}
There exists $L_3 > 0$ depending on $M_r, M_\psi$ only such that
\begin{equation*}
\topnorm{\hat\Sbb_{\psi\hat r}(\thetab) - \hat\Sbb_{\psi\hat r}(\thetab^*)} \leq L_3 \|\thetab - \thetab^*\| \quad\text{for all}\quad \thetab \in \bar\Bcal_\varrho(\thetab^*).
\end{equation*}
\end{lemma}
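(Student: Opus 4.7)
The plan is to reduce the claim to the already-established local Lipschitz properties of $\hat r_\theta$ and $\hat\mub$ recorded in \Cref{lem:Lips}, combined with the uniform boundedness of the sufficient statistics (\Cref{prop:bss}) and of the empirical density ratio (equation \eqref{eq:bdrhat}). I would decompose
\[
\hat\Sbb_{\psi\hat r}(\thetab) - \hat\Sbb_{\psi\hat r}(\thetab^*)
= \underbrace{\frac{1}{n_y}\sum_{j=1}^{n_y}\bigl(\hat r_\theta^2(\yb^{(j)}) - \hat r_{\theta^*}^2(\yb^{(j)})\bigr)\psib(\yb^{(j)})\psib(\yb^{(j)})^\top}_{\text{(I)}}
\;-\; \underbrace{\bigl(\hat\mub(\thetab)\hat\mub(\thetab)^\top - \hat\mub(\thetab^*)\hat\mub(\thetab^*)^\top\bigr)}_{\text{(II)}}
\]
and bound each piece component-wise.

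For (I), factor $\hat r_\theta^2 - \hat r_{\theta^*}^2 = (\hat r_\theta + \hat r_{\theta^*})(\hat r_\theta - \hat r_{\theta^*})$. Under \Cref{cond:bdr}, equation \eqref{eq:bdrhat} gives $\hat r_\theta + \hat r_{\theta^*} \leq 2M_r^2$ uniformly in $\yb^{(j)}$, while \eqref{eq:Lips_rhat} gives $|\hat r_\theta(\yb^{(j)}) - \hat r_{\theta^*}(\yb^{(j)})| \leq L_1 \|\thetab - \thetab^*\|_1$. Combined with $\|\psib(\yb^{(j)})\|_\infty \leq M_\psi$ from \Cref{prop:bss}, each entry of (I) is bounded by $2 M_\psi^2 M_r^2 L_1 \|\thetab - \thetab^*\|_1$, so $\|(\mathrm{I})\|_\infty \lesssim M_\psi^2 M_r^2 L_1 \|\thetab - \thetab^*\|_1$.

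For (II), write the standard difference-of-outer-products identity
\[
\hat\mub(\thetab)\hat\mub(\thetab)^\top - \hat\mub(\thetab^*)\hat\mub(\thetab^*)^\top
= \bigl(\hat\mub(\thetab) - \hat\mub(\thetab^*)\bigr)\hat\mub(\thetab)^\top + \hat\mub(\thetab^*)\bigl(\hat\mub(\thetab) - \hat\mub(\thetab^*)\bigr)^\top.
\]
Since $\|\hat\mub(\thetab)\|_\infty \leq M_\psi M_r^2$ uniformly on $\bar\Bcal_\varrho(\thetab^*)$ and \eqref{eq:Lips_muhat} gives $\|\hat\mub(\thetab) - \hat\mub(\thetab^*)\|_\infty \leq L_2\|\thetab - \thetab^*\|_1$, each entry of (II) is bounded by $2 M_\psi M_r^2 L_2 \|\thetab - \thetab^*\|_1$. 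Adding the two contributions yields the claim with $L_3 = 2 M_\psi^2 M_r^2 L_1 + 2 M_\psi M_r^2 L_2$, which depends on $M_r$ and $M_\psi$ only (through $L_1, L_2$ from \Cref{lem:Lips}).

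There is no real obstacle here: the lemma is essentially a bookkeeping exercise that inherits Lipschitzness of the covariance-like functional from Lipschitzness of its ingredients. The only point requiring mild care is ensuring that all constants appearing in the final bound depend solely on $M_r$ and $M_\psi$ (and not, e.g., on $n_y$ or $p$), which follows because the bounds from \Cref{prop:bss}, \eqref{eq:bdrhat}, and \Cref{lem:Lips} are all deterministic and uniform over $\thetab \in \bar\Bcal_\varrho(\thetab^*)$ with $\varrho = \|\thetab^*\|_1$ fixed.
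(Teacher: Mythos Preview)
Your proof is correct and follows essentially the same approach as the paper's. The paper's own proof is a one-line remark (``By applying \Cref{lem:Lips} after computing the form of each $\hat S_{\psi \hat r_{k' k}}(\thetab) - \hat S_{\psi \hat r_{k' k}}(\thetab^*)$''), and you have spelled out exactly those details: decompose into the second-moment piece and the outer-product piece, then feed in the bounds from \Cref{prop:bss}, \eqref{eq:bdrhat}, and \Cref{lem:Lips}.
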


\begin{proof}
By applying \Cref{lem:Lips} after computing the form of each $\hat S_{\psi \hat r_{k' k}}(\thetab) - \hat S_{\psi \hat r_{k' k}}(\thetab^*)$.
\end{proof}

\begin{lemma} \label[lemma]{lem:sample_cov:psi}
Under \Cref{cond:bdr} with $\ell_1$-norm, there exist constants $K, c, c' > 0$ depending on $M_\psi$ only such that for any $t \in [K \sqrt{\log p / n_x}, 1]$,
\begin{equation*}
\PP\cbr{\|\hat\Sbb_\psi - \Sigmab_\psi\|_\infty > t} \leq c\exp(-c' t^2 n_x).
\end{equation*}
\end{lemma}

\begin{proof}
Let $k, k' \in [p]$.
\begin{multline*}
\hat S_{\psi_{k' k}} - \Sigma_{\psi_{k' k}}
= \frac{1}{n_x} \sum_{i=1}^{n_x} \left( \psi_{k'}(\xb_{k'}^{(i)}) - \mu_{\psi_{k'}} \right) \left( \psi_k(\xb_k^{(i)}) - \mu_{\psi_k} \right) - \Sigma_{\psi_{k' k}}\\
- \left\{ \frac{1}{n_x} \sum_{i=1}^{n_x} \psi_{k'}(\xb_{k'}^{(i)}) - \mu_{\psi_{k'}} \right\} \left\{ \frac{1}{n_x} \sum_{i=1}^{n_x} \psi_k(\xb_k^{(i)}) - \mu_{\psi_k} \right\}.
\end{multline*}

Suppose $t$ satisfies the conditions of the lemma, and suppose
\begin{gather*}
\left| \frac{1}{n_x} \sum_{i=1}^{n_x} \psi_k(\xb_k^{(i)}) - \mu_{\psi_k} \right| \leq t \quad \forall \, k,\\
\left| \frac{1}{n_x} \sum_{i=1}^{n_x} \left( \psi_{k'}(\xb_{k'}^{(i)}) - \mu_{\psi_{k'}} \right) \left( \psi_k(\xb_k^{(i)}) - \mu_{\psi_k} \right) - \Sigma_{\psi_{k' k}} \right| \leq t \quad \forall \, k, k'.
\end{gather*}
On this event,
\begin{multline*}
\|\hat\Sbb_\psi - \Sigmab_\psi\|_\infty\\
\begin{aligned}[t]
&= \max_{k,k'} |\hat S_{\psi_{k' k}} - \Sigma_{\psi_{k' k}}|\\
&\leq \max_{k,k'} \left| \frac{1}{n_x} \sum_{i=1}^{n_x} \left( \psi_{k'}(\xb_{k'}^{(i)}) - \mu_{\psi_{k'}} \right) \left( \psi_k(\xb_k^{(i)}) - \mu_{\psi_k} \right) - \Sigma_{\psi_{k' k}} \right| + \max_k \left| \frac{1}{n_x} \sum_{i=1}^{n_x} \psi_k(\xb_k^{(i)}) - \mu_{\psi_k} \right|^2
\end{aligned}\\
\leq t + t^2
\leq 2t,
\end{multline*}
using the upper bound on $t$.

Now, the boundedness of $\psib(\xb)$ implies
\begin{gather*}
\PP\left\{ \left| \frac{1}{n_x} \sum_{i=1}^{n_x} \psi_k(\xb_k^{(i)}) - \mu_{\psi_k} \right| > t \right\} \leq 2\exp(-c_1 t^2 n_x),\\
\PP\left\{ \left| \frac{1}{n_x} \sum_{i=1}^{n_x} \left( \psi_{k'}(\xb_{k'}^{(i)}) - \mu_{\psi_{k'}} \right) \left( \psi_k(\xb_k^{(i)}) - \mu_{\psi_k} \right) - \Sigma_{\psi_{k' k}} \right| > t \right\} \leq 2\exp(-c_2 t^2 n_x),
\end{gather*}
where $c_1, c_2 > 0$ are constants depending on $M_\psi$ only.

Thus,
\begin{equation} \label{eq:sample_cov:psi:primitive}
\PP\cbr{\|\hat\Sbb_\psi - \Sigmab_\psi\|_\infty > t}
\leq 2p \exp(-c_1 t^2 n_x) + 2p^2 \exp(-c_2 t^2 n_x)
\leq 4p^2 \exp(-c_3 t^2 n_x),
\end{equation}
where $c_3 > 0$ is another constant depending on $M_\psi$ only.
\eqref{eq:sample_cov:psi:primitive} can be simplified by using the lower bound on $t$:
\begin{equation*}
\PP\cbr{\|\hat\Sbb_\psi - \Sigmab_\psi\|_\infty > t} \leq c\exp(-c' t^2 n_x),
\end{equation*}
where $c, c' > 0$ are constants depending on $M_\psi$ only.
\end{proof}

\begin{lemma} \label[lemma]{lem:sample_cov:psi_rhat}
Under the bounded density ratio model (\Cref{cond:bdr}), there exist constants $K, c, c' > 0$ depending on $M_r, M_\psi$ only such that for any $t \in [K \sqrt{\log p / n_y}, 1]$,
\begin{equation*}
\PP\cbr{\|\hat\Sbb_{\psi r}(\thetab^*) - \Sigmab_{\psi r}\|_\infty > t} \leq c\exp(-c' t^2 n_y).
\end{equation*}
\end{lemma}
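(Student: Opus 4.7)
The proof follows the same template as Lemma \ref{lem:sample_cov:psi}: decompose the entrywise deviation into bounded, mean-zero empirical averages, apply Hoeffding's inequality to each, and combine via a union bound over the $O(p^2)$ coordinates $(k',k)$. The only added complication is the empirical density ratio $\hat r_{\theta^*} = \alpha \, r_{\theta^*}$, where $\alpha := Z_y(\thetab^*)/\hat Z_y(\thetab^*)$, whose fluctuations are controlled by Lemma \ref{lem:Hoeffding_r}.

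The first step is to factor out $\alpha$ from the sample covariance. Abbreviating
\begin{equation*}
\hat m_k := \tfrac{1}{n_y} \sum_{j=1}^{n_y} \psi_k(\yb_k^{(j)}) \, r_{\theta^*}(\yb^{(j)}), \qquad \hat A_{k'k} := \tfrac{1}{n_y} \sum_{j=1}^{n_y} \psi_{k'}(\yb_{k'}^{(j)}) \psi_k(\yb_k^{(j)}) \, r_{\theta^*}^2(\yb^{(j)}),
\end{equation*}
we have $\hat S_{\psi\hat r, k'k}(\thetab^*) = \alpha^2 \hat A_{k'k} - \alpha^2 \hat m_{k'} \hat m_k$ and $\EE_y[\hat m_k] = \mu_{\psi_k}$. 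Subtracting the population counterpart and applying the telescoping identities $\alpha^2 U - U_0 = \alpha^2 (U - U_0) + (\alpha^2 - 1) U_0$ and $\hat m_{k'} \hat m_k - \mu_{\psi_{k'}} \mu_{\psi_k} = (\hat m_{k'} - \mu_{\psi_{k'}}) \hat m_k + \mu_{\psi_{k'}} (\hat m_k - \mu_{\psi_k})$ reduces the entrywise deviation to a sum of four terms, each a product of a bounded quantity with a deviation from its mean.

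Under Condition \ref{cond:bdr}, both $|\psi_k(\yb_k) r_{\theta^*}(\yb)| \leq M_\psi M_r$ and $|\psi_{k'}(\yb_{k'}) \psi_k(\yb_k) r_{\theta^*}^2(\yb)| \leq M_\psi^2 M_r^2$ hold a.s., so Hoeffding's inequality yields, for each coordinate,
\begin{equation*}
\PP\{|\hat m_k - \mu_{\psi_k}| > t\} \leq 2\exp(-c_1 t^2 n_y), \qquad \PP\{|\hat A_{k'k} - \EE_y \hat A_{k'k}| > t\} \leq 2\exp(-c_2 t^2 n_y),
\end{equation*}
while Lemma \ref{lem:Hoeffding_r} gives $\PP\{|\alpha - 1| > t\} \leq 2\exp(-c_3 t^2 n_y)$, and $\alpha \in [M_r^{-1}, M_r]$ a.s.\ implies $|\alpha^2 - 1| \leq (M_r + 1) t$ on the event $|\alpha - 1| \leq t$.

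On the intersection of these events (total failure probability $O(p^2) \exp(-c\,t^2 n_y)$ by a union bound over the $p$ coordinates $k$ and the $p^2$ pairs $(k',k)$), each of the four terms in the decomposition is bounded by a constant multiple of $t$; the restriction $t \leq 1$ absorbs any lower-order $t^2$ factors into $t$. Rescaling the constants and requiring $t \geq K \sqrt{\log p / n_y}$ to absorb the $\log p$ factor from the union bound yields the stated bound $c\exp(-c' t^2 n_y)$. No conceptual obstacle arises; the only mild subtlety is bookkeeping of the various Hoeffding constants through the decomposition, which is entirely parallel to the treatment in the proof of Lemma \ref{lem:sample_cov:psi}.
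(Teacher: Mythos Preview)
Your proposal is correct and follows essentially the same approach as the paper: factor out $\alpha^2 = Z_y^2(\thetab^*)/\hat Z_y^2(\thetab^*)$, reduce the remaining deviation to centered bounded empirical averages, apply Hoeffding's inequality entrywise together with \Cref{lem:Hoeffding_r} for $\alpha$, and finish with a union bound over the $O(p^2)$ coordinates. The only cosmetic difference is that the paper writes the inner term via the centered products $(\psi_{k'} r - \mu_{\psi_{k'}})(\psi_k r - \mu_{\psi_k})$ rather than separating $\hat A_{k'k}$ and $\hat m_k$, but the two decompositions are algebraically equivalent.
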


\begin{proof}
Let $k, k' \in [p]$.
We have
\begin{equation*}
\hat S_{\psi\hat r_{k' k}}(\thetab^*) - \Sigma_{\psi r_{k' k}}
= \left\{ \hat S_{\psi\hat r_{k' k}}(\thetab^*) - \frac{Z_y^2(\thetab^*)}{\hat Z_y^2(\thetab^*)} \Sigma_{\psi r_{k' k}} \right\} + \left( \frac{Z_y^2(\thetab^*)}{\hat Z_y^2(\thetab^*)} - 1 \right) \Sigma_{\psi r_{k' k}}
\end{equation*}
with
\begin{equation*}
\begin{aligned}
\hat S_{\psi\hat r_{k' k}}(\thetab^*) - & \frac{Z_y^2(\thetab^*)}{\hat Z_y^2(\thetab^*)} \Sigma_{\psi r_{k' k}}\\
= &
\begin{multlined}[t]
\frac{Z_y^2(\thetab^*)}{\hat Z_y^2(\thetab^*)} \, \Bigg[ \frac{1}{n_y} \sum_{j=1}^{n_y} \left( \psi_{k'}(\yb_{k'}^{(j)}) r_{\theta^*}(\yb^{(j)}) - \mu_{\psi_{k'}}\right) \left( \psi_k(\yb_k^{(j)}) r_{\theta^*}(\yb^{(j)}) - \mu_{\psi_k} \right) - \Sigma_{\psi r_{k' k}}\\
- \Bigg\{ \frac{1}{n_y} \sum_{j=1}^{n_y} \psi_{k'}(\yb_{k'}^{(j)}) r_\theta(\yb^{(j)}) - \mu_{\psi_{k'}} \Bigg\} \Bigg\{ \frac{1}{n_y} \sum_{j=1}^{n_y} \psi_k(\yb_k^{(j)}) r_\theta(\yb^{(j)}) - \mu_{\psi_k} \Bigg\} \Bigg]
\end{multlined}
\end{aligned}
\end{equation*}
and
\begin{equation*}
\frac{Z_y^2(\thetab^*)}{\hat Z_y^2(\thetab^*)} - 1
= \frac{Z_y^2(\thetab^*)}{\hat Z_y^2(\thetab^*)} \left( 1 + \frac{\hat Z_y(\thetab^*)}{Z_y(\thetab^*)} \right) \left( 1 - \frac{\hat Z_y(\thetab^*)}{Z_y(\thetab^*)} \right).
\end{equation*}
\Cref{cond:bdr} implies that $Z_y(\thetab^*) / \hat Z_y(\thetab^*) \in [M_r^{-1}, M_r]$, as well as that $\|\Sigmab_{\psi r}\|_\infty$ is bounded by some constant.
So,
\begin{equation*}
\begin{aligned}
\Bigg| \hat S_{\psi\hat r_{k' k}}(\thetab^*) - & \frac{Z_y^2(\thetab^*)}{\hat Z_y^2(\thetab^*)} \Sigma_{\psi r_{k' k}} \Bigg|\\
\leq &
\begin{multlined}[t]
M_r^2 \, \Bigg[ \Bigg| \frac{1}{n_y} \sum_{j=1}^{n_y} \left( \psi_{k'}(\yb_{k'}^{(j)}) r_{\theta^*}(\yb^{(j)}) - \mu_{\psi_{k'}}\right) \left( \psi_k(\yb_k^{(j)}) r_{\theta^*}(\yb^{(j)}) - \mu_{\psi_k} \right) - \Sigma_{\psi r_{k' k}}\Bigg|\\
+ \Bigg| \frac{1}{n_y} \sum_{j=1}^{n_y} \psi_{k'}(\yb_{k'}^{(j)}) r_\theta(\yb^{(j)}) - \mu_{\psi_{k'}} \Bigg| \Bigg| \frac{1}{n_y} \sum_{j=1}^{n_y} \psi_k(\yb_k^{(j)}) r_\theta(\yb^{(j)}) - \mu_{\psi_k} \Bigg| \Bigg]
\end{multlined}
\end{aligned}
\end{equation*}
and
\begin{equation*}
\abr{\rbr{\frac{Z_y^2(\thetab^*)}{\hat Z_y^2(\thetab^*)} - 1} \Sigma_{\psi r_{k' k}}}
\leq M_r^2 (1+M_r) \|\Sigmab_{\psi r}\|_\infty \abr{1 - \frac{\hat Z_y(\thetab^*)}{Z_y(\thetab^*)}}.
\end{equation*}

Suppose $t$ satisfies the conditions of the lemma, and suppose
\begin{gather*}
\abr{\frac{\hat Z_y(\thetab)}{Z_y(\thetab)} - 1} \leq t,\\
\abr{\frac{1}{n_y} \sum_{j=1}^{n_y} \psi_k(\yb_k^{(j)}) r_\theta(\yb^{(j)}) - \mu_{\psi_k}} \leq t \quad \forall \, k,\\
\abr{\frac{1}{n_y} \sum_{j=1}^{n_y} \left( \psi_{k'}(\yb_{k'}^{(j)}) r_{\theta^*}(\yb^{(j)}) - \mu_{\psi_{k'}}\right) \left( \psi_k(\yb_k^{(j)}) r_{\theta^*}(\yb^{(j)}) - \mu_{\psi_k} \right) - \Sigma_{\psi r_{k' k}}} \leq t \quad \forall \, k, k'.
\end{gather*}
On this event,
\begin{equation*}
\abr{\hat S_{\psi\hat r_{k' k}}(\thetab^*) - \frac{Z_y^2(\thetab^*)}{\hat Z_y^2(\thetab^*)} \Sigma_{\psi r_{k' k}}}
\leq M_r^2 (t + t^2)
\leq 2 M_r^2 \, t
\end{equation*}
and
\begin{equation*}
\abr{\rbr{\frac{Z_y^2(\thetab^*)}{\hat Z_y^2(\thetab^*)} - 1} \Sigma_{\psi r_{k' k}}}
\leq M_r^2 (1+M_r) \|\Sigmab_{\psi r}\|_\infty \, t,
\end{equation*}
and hence,
\begin{equation*}
\|\hat\Sbb_{\psi\hat r}(\thetab^*) - \Sigmab_{\psi r}\|_\infty \leq K t
\end{equation*}
for some constant $K > 0$.

We finish the proof by bounding the probability of the complementary event.
By \Cref{lem:Hoeffding_r},
\begin{equation*}
\PP\left\{ \left| \frac{\hat Z_y(\thetab)}{Z_y(\thetab)} - 1 \right| > t \right\} \leq 2\exp( -c_1 t^2 n_y),
\end{equation*}
for some constant $c_1 > 0$ depending on $M_r$ only.
On the other hand, the boundedness of $\psi(\yb) r_{\theta^*}(\yb)$ implies
\begin{gather*}
\PP\cbr{\abr{\frac{1}{n_y} \sum_{j=1}^{n_y} \psi_k(\yb_k^{(j)}) r_{\theta^*}(\yb^{(j)}) - \mu_{\psi_k}} > t} \leq 2\exp(-c_2 t^2 n_y),\\
\PP\cbr{\abr{\frac{1}{n_y} \rbr{\psi_{k'}(\yb_{k'}^{(j)}) r_{\theta^*}(\yb^{(j)}) - \mu_{\psi_{k'}}} \rbr{\psi_{k}(\yb_{k}^{(j)}) r_{\theta^*}(\yb^{(j)}) - \mu_{\psi_{k}}} - \Sigma_{\psi r_{k' k}}} > t}
\leq 2\exp(-c_3 t^2 n_y),
\end{gather*}
where $c_2, c_3 > 0$ are constants depending on $M_r, M_\psi$ only.

Thus,
\begin{multline} \label{eq:sample_cov:psi_rhat:primitive}
\PP\cbr{\|\hat\Sbb_{\psi\hat r}(\thetab^*) - \Sigmab_{\psi r}\|_\infty > t}\\
\leq 2\exp( -c_1 t^2 n_y) + 2p \exp(-c_2 t^2 n_y) + 2p^2 \exp(-c_3 t^2 n_y)
\leq 6p^2 \exp(-c_4 t^2 n_y),
\end{multline}
where $c_4 > 0$ is another constant depending on $M_r, M_\psi$ only.
\eqref{eq:sample_cov:psi_rhat:primitive} can be simplified by using the lower bound on $t$:
\begin{equation*}
\PP\cbr{\|\hat\Sbb_{\psi r}(\thetab^*) - \Sigmab_{\psi r}\|_\infty > t} \leq c\exp(-c' t^2 n_y),
\end{equation*}
where $c, c' > 0$ are constants depending on $M_r, M_\psi$ only.
\end{proof}

\section{Implementation details}

\subsection{\label{supp:implementation:autoscaling}Autoscaling}

\subsubsection{Sparse KLIEP with self-normalizing penalty}

The default option in KLIEPInference.jl (\url{https://github.com/mlakolar/KLIEPInference.jl}) replaces \eqref{eq:KLIEP+:1} in the initial KLIEP estimation step with the following modified version
\begin{flalign}
\label{eq:KLIEP+:1:selfnorm}
	\check\thetab &\leftarrow \arg\min_{\thetab} \lKLIEP(\thetab;\Xb_{n_x},\Yb_{n_y}) + \lambda_{\theta 0} \sum_{k = 1}^{p} \tau_k |\theta_k|,
\end{flalign}
where $\lambda_{\theta 0} = (1 + a) \Phi^{-1}(1 - b / p)$ for some small $a, b > 0$ is the universal penalty and $\tau_k > 0$ is the $k$th penalty loading.
For $\lambda_{\theta 0}$, we used $a = 0.01$ and $b = 0.05$.
The $k$th penalty loading is chosen to match the sample standard deviation of $\partial_k \lKLIEP(\thetab^*)$; this has the effect of penalizing components with larger variance more.

As $\thetab^*$ is unavailable to us,
we take the following two-step approach:
\begin{procedure}[\label{procedure:step1}Two-step procedure for minimizing \eqref{eq:KLIEP+:1:selfnorm}]
\begin{framed}
\begin{algorithmic}
\STATE {Initialize $\check\thetab \leftarrow \zero$.}
\STATE {Compute the initial penalty loadings: for $k = 1, \dots, p$,
\begin{flalign*}
	\mathindent
	\tau_k &\leftarrow \hat\Sbb_{\text{pooled}, kk}(\check\thetab).&&
\end{flalign*}}
\STATE {Compute $\check\thetab$:
\begin{flalign*}
	\mathindent
	\check\thetab &\leftarrow \arg\min_{\thetab} \lKLIEP(\thetab;\Xb_{n_x},\Yb_{n_y}) + \lambda_{\theta 0} \sum_{k = 1}^{p} \tau_k |\theta_k|.&&
\end{flalign*}}
\STATE {Update the penalty loadings: for $k = 1, \dots, p$,
\begin{flalign*}
	\mathindent
	\tau_k &\leftarrow \hat\Sbb_{\text{pooled}, kk}(\check\thetab).&&
\end{flalign*}}
\STATE {Estimate $\check\thetab$ with the updated penalty loadings.}
\end{algorithmic}
\end{framed}
\end{procedure}

The intuition behind $\lambda_{\theta 0} = (1 + a) \Phi^{-1}(1 - b / p)$
and $\tau_k \approx \sqrt{\hat\Var[\partial_k \lKLIEP(\thetab^*)]}$ is as
follows. Estimation using \eqref{eq:KLIEP+:1:selfnorm} is consistent provided that
\begin{equation}
\label{eq:consistency:selfnorm}
	\PP\cbr{\max_k |\partial_k \lKLIEP(\thetab^*) / \tau_k| > \lambda_{\theta 0}} \text{ is small.}
\end{equation}
For sufficiently large sample sizes, we would
have $\partial_k \lKLIEP(\thetab^*) / \sqrt{\hat\Var[\partial_k \lKLIEP(\thetab^*)]} \approx \Ncal(0, 1)$,
and hence for
$\lambda_{\theta 0} = (1 + a) \Phi^{-1}(1 - b / p)$,
an upper bound for the probability in \eqref{eq:consistency:selfnorm} is about $b > 0$.
Thus, $b$ can be interpreted as a tolerance parameter that controls 
the probability of the undesirable event.
Similar approach was taken in
\citet{Belloni2011Square, Belloni2014Pivotal, Belloni2019Pivotal}
in the context of linear regression, nonparametric regression,
and error-in-variables regression problems. For detailed
discussions of the motivation and the relationship
to the moderate deviations theory, we refer the reader to
these works and the references therein.
In particular, a rigorous proof in the context of our
problem would involve establishing a moderate deviation
bound \citep{delaPena2009Self, Jing2003Self} for
the self-normalized gradient
$[\partial_k \lKLIEP(\thetab^*) / \sqrt{\hat\Var[\partial_k \lKLIEP(\thetab^*)]}]_{k = 1}^{p}$,
which we leave up to future work.

\subsubsection{Sparse Hessian inversion via the scaled lasso}

The default option in KLIEPInference.jl (\url{https://github.com/mlakolar/KLIEPInference.jl}) replaces \eqref{eq:KLIEP+:2} in the Hessian inversion step with a scaled lasso formulation \citep{Sun2012Scaled}.
In particular, we use the approach described in \citet{Sun2012Sparse} that allows us to estimate a sparse inverse of the Hessian without hyperparameter tuning. This implementation is used for all of our
experiments.

In the below, we describe the procedure in more detail.
The equation \eqref{eq:KLIEP+:2} is modified so that $\check\omegab_k = -\check\tau_k \check\db_k$, where
\begin{flalign}
\label{eq:KLIEP+:2:scaled}
	\mathindent
	\check\db_k, \check\tau_k
	&\leftarrow \arg\min_{\db, \tau: d_k = -1} \frac{\db^\top \nabla^2 \lKLIEP(\check\thetab) \db}{2 \tau} + \frac{\tau}{2} + \lambda_0 \sum_{k' = 1}^{p} \partial_{k'k'}^2 \lKLIEP(\check\thetab) |d_{k'}|&&
\end{flalign}
and the universal penalty level $\lambda_0 = \sqrt{2 \log p / n_y}$
does not depend on the unknown problem specific parameters.
Following \citet{Sun2012Sparse}, the solution $(\check\db_k, \check\tau_k)$ is obtained from the following iterative procedure:
\begin{procedure}[\label{procedure:step2}Iterative procedure for solving \eqref{eq:KLIEP+:2:scaled}]
\begin{framed}
\begin{algorithmic}
\STATE {Initialize $\check\db_k \leftarrow \eb_k$.}
\REPEAT
\STATE {\begin{flalign*}
	\mathindent
	\check\tau_k &\leftarrow \check\db_k^\top \nabla^2 \lKLIEP(\check\thetab) \check\db_k,&&\\
	\lambda &\leftarrow \lambda_0 \check\tau_k,\\
	\check\db_k &\leftarrow \arg\min_{\db} \frac 12 \db^\top \nabla^2 \lKLIEP(\check\thetab) \db - \db^\top \eb_k + \lambda \|\db\|_1.
\end{flalign*}}
\UNTIL {converged}
\end{algorithmic}
\end{framed}
\end{procedure}
For a detailed discussion of the procedure and its theoretical properties,
the reader is referred to \citet{Sun2012Sparse}.

\subsection{\label{supp:implementation:hyperparameter}Regularization parametter tuning}

In all our experiments, including the experiments published only in Appendix, we used \Cref{procedure:step2} for Step 2 with the universal penalty level $\lambda_0 = \sqrt{2 \log p / n_y}$.
For Experiments~1 -- 3, we use \Cref{procedure:step1} for Step 1 with the universal penalty level $\lambda_{\theta 0} = 1.01 \Phi^{-1}(1 - 0.05 / p)$.
Experiments~4 -- 5 use the original sparse KLIEP formulation \citep{Liu2017Support} \emph{without} autoscaling. For Experiment 4, we used $\lambda_\theta = \sqrt{4 \log p / n_x}$, because for Ising models, the components of the gradient $\nabla \lKLIEP(\thetab^*)$ are bounded by $2$ when $\thetab^* \approx \zero$.


Parameter tuning is an issue for most, if not all, high-dimensional estimation procedures, and ours is no exception. As noted by one reviewer, it is at least unclear how the regularization parameter pair can be chosen to achieve the best performance. In the case of the bounded model, it is possible to make an educated guess for the first-stage regularization parameter $\lambda_\theta$ (\Cref{lem:grad:1}), and this is what we do in our experiments. Choosing the second-stage regularization parameters $\lambda_k$ is a more delicate matter.

One heuristic is to cross-validate the \emph{three-stage procedure in its entirety} over a 2D grid of $(\lambda_\theta, \lambda_k)$ pairs using the empirical KLIEP loss.
A clear drawback of this strategy is that it is computationally intensive.
It also has very little theory.

A good alternative is to use autoscaling procedures for the initial estimation steps. In our simulations, the combination of \Cref{procedure:step1} and \Cref{procedure:step2} has been seen to yield excellent performance while removing the need for hyperparameter tuning. For theory, we need the initial estimates obtained using \Cref{procedure:step1} and \Cref{procedure:step2} to be consistent. While we leave this up for future work, theoretical results for similar problems (e.g., \citet{Belloni2011Square} in the case of Step 1 and \citet{Sun2012Sparse} in the case of Step 2) lend support to our claim.

Additionally, to study the sensitivity of the overall procedure to the choice of the regularization parameter when the original sparse KLIEP formulation \citep{Liu2017Support} is used, we ran additional experiments where we varied $\lambda_\theta$ on a grid of five values under the same set-up as that of Experiment 1. For Step 2, we still use \Cref{procedure:step2} with the universal penalty level $\lambda_0 = \sqrt{2 \log p / n_y}$. We record the coverage and the median width of the 95\% confidence intervals as well as the bias of the final estimate over 1000 independent replications. The regularization parameter settings are detailed in Table~\ref{table:lambda_values:additional}. The results are shown in Tables~\ref{table:divergences:coverage:chain} to \ref{table:divergences:bias:tree}. The coverage, the median width, and the bias are all stable for both SparKLIE+ procedures. The reversed and the symmetrized procedures do show some instability, but it is likely that this has more to do with the fact that both procedures have a larger sample complexity relative to KLIEP. See \Cref{remark:asymmetry} in \Cref{sec:theory:main}.

\subsection{\label{supp:implementation:studentization}Studentized bootstrap}

Consider the studentized analogue of the statistic in \eqref{eq:def_max_statistic}
\begin{equation}
	W = W_{n_x, n_y} = \max_{k = 1, \dots, p} \sqrt{n} \, |\hat\theta_k - \theta_k^*| / \hat\sigma_k,
\end{equation}
where $\hat\theta_k$ is either SparKLIE+1 or SparKLIE+2 estimator and $\hat\sigma_k$ is the estimator of the standard error from \eqref{eq:varest}. $W$ can replace $T$ as the reference distribution in carrying out statistical inference. Letting $c_{W, q}$ be the $q$-quantile of $T$, $\hat\thetab \pm (c_{W, 1-\alpha} / \sqrt{n}) \hat\sigmab$, where $\hat\sigmab = (\hat\sigma_k)_{k = 1}^{p}$, is a $100 \times (1-\alpha) \%$ confidence region for $\thetab^*$. Similarly, the test that rejects if $\max_k |\hat\theta_k | / \hat\sigma_k > c_{W, 1-\alpha} / \sqrt{n}$ controls the family-wise error rate at level $\alpha$ for the null hypothesis $H_0: \theta_k^* = 0$ for all $k \in [p]$. This approach has the advantage of being adaptive to the heterogeneity in variance across multiple components.

The bootstrap procedures of Section~\ref{sec:methodology:multiple} can be easily modified to yield estimates of the quantiles of $W$. In Procedure~\ref{procedure:bootstrap:Gaussian}, this is accomplished by replacing \eqref{eq:T-boot:Gaussian} with
\begin{multline}
\label{eq:W-boot:Gaussian}
	\hat W^{(b)} = \max_k \frac{1}{\hat\sigma_k \sqrt{n}} \Bigg| \Bigg\langle \check\omegab_k, \frac{n}{n_x} \sum_{i=1}^{n_x} \rbr{\psib(\xb^{(i)}) - \overline{\psib}} \xi_x^{(b,i)}\\
	- \frac{n}{n_y} \sum_{j=1}^{n_y} \rbr{\psib(\yb^{(j)}) \hat r_{\check\theta}(\yb^{(j)}) - \hat\mub(\check\thetab)} \xi_y^{(b,j)} \Bigg\rangle \Bigg|.
\end{multline}
In the case of Procedure~\ref{procedure:bootstrap:empirical}, one replaces \eqref{eq:T-boot:empirical} with
\begin{equation}
\label{eq:W-boot:empirical}
	\hat W^{(b)} = \max_k \sqrt{n} \, |\hat\theta_k^{(b)} - \hat\theta_k| / \hat\sigma_k.
\end{equation}

\section{Supplementary material for Section 5}

\subsection{\label{supp:sec5:competing}Competing procedures}

The \emph{oracle} refers to the following procedure:
\begin{flalign*}
\mathindent
\hat\thetab^\text{oracle}
&\leftarrow \argmin_{\thetab} \lKLIEP(\thetab;\Xb_{n_x},\Yb_{n_y})
\text{ subject to } \supp(\thetab) \subseteq \{k\} \cup \supp(\thetab^*).&&
\end{flalign*}
This is clearly infeasible due to the occurrence of $\thetab^*$ in the constraint.
It is meant to be a performance benchmark rather than an actual alternative.

The \emph{na\"{i}ve} re-estimation is the procedure obtained by replacing the unknown $\thetab^*$ in the constraint with a sample estimate $\check\thetab$:
\begin{flalign*}
\mathindent
\hat\thetab^\text{na\"{i}ve}
\leftarrow \argmin_{\thetab} \lKLIEP(\thetab;\Xb_{n_x},\Yb_{n_y})
\text{ subject to } \supp(\thetab) \subseteq \{k\} \cup \supp(\check\thetab).&&
\end{flalign*}
This can have a near oracle behavior if $\check\thetab$ recovers the true support with high probability.
Unfortunately, the sufficient conditions are often not met for many interesting applications; they are also notoriously difficult to check from the data \citep{Liu2017Support}.
As such, the procedure is expected to be brittle to errors in model selection.

Finally, \emph{SparKLIE+2} is the procedure obtained by choosing double-selection rather than one-step approximation in Step~3 of SparKLIE+1 (Procedure~2):
\begin{enumerate}[label={\it Step \arabic*.}, wide=0pt]
\setcounter{enumi}{2}
\item (Re-estimation on the combined support)
\begin{flalign*}
\mathindent
\hat\thetab^\text{2+}
\leftarrow \argmin_{\thetab} \lKLIEP(\thetab;\Xb_{n_x},\Yb_{n_y})
\text{ subject to }
\supp(\thetab) \subseteq \{k\} \cup \supp(\check\thetab) \cup \supp(\check\omegab_k).&&
\end{flalign*}
\end{enumerate}
This looks deceptively like the na\"{i}ve re-estimation, but the inclusion of the coordinates with large correlations with $k$ makes the procedure robust to model selection mistakes.
SparKLIE+2 is first-order equivalent to SparKLIE+1 \citep{Chernozhukov2015Valid}.

\subsection{\label{supp:sec5:design:exper1}Parameter generation for Experiment~1}

\tikzstyle{vertex} = [circle, draw, inner sep = 1.5, fill = black]
\tikzstyle{dotdotdot} = [circle, draw, inner sep = 0.1, fill = black]

\begin{figure}[tp]
    \caption{\label{fig:chain1}\textbf{Chain 1 pair, realized edge weights.} The displayed weights are the actual values used in the experiments. Excluding the difference graph, all the weights, including the ones not shown here, were generated i.i.d.~$\Unif(-1,1)$. The target of inference is marked in red.}
    \centering
    
    \vspace{\baselineskip}
    
    \begin{minipage}[t][0.1\textheight][c]{\linewidth}
    \centering
    {\small (a) $\gammab_x$}
    
    \vspace{\baselineskip}
    
    \begin{tikzpicture}[shorten >=1pt]
    \foreach \x in {1, 2, ..., 10} \node[vertex, label=below:$\scriptstyle\x$] (G-\x) at (1.25*\x,0) {};
    \foreach \x in {1, 2, 3} \node[dotdotdot] (E-\x) at (13.75+0.2*\x,0) {};
    \draw (G-1) -- node[anchor=south] {$\scriptscriptstyle-0.54$} (G-2);
    \draw (G-2) -- node[anchor=south] {$\scriptscriptstyle-0.85$} (G-3);
    \draw (G-3) -- node[anchor=south] {$\scriptscriptstyle 0.74$} (G-4);
    \draw (G-4) -- node[anchor=south] {$\scriptscriptstyle 0.56$} (G-5);
    \draw[red] (G-5) -- node[anchor=south] {$\scriptscriptstyle-0.06$} (G-6);
    \draw (G-6) -- node[anchor=south] {$\scriptscriptstyle-0.10$} (G-7);
    \draw (G-7) -- node[anchor=south] {$\scriptscriptstyle 0.11$} (G-8);
    \draw (G-8) -- node[anchor=south] {$\scriptscriptstyle 0.35$} (G-9);
    \draw (G-9) -- node[anchor=south] {$\scriptscriptstyle 0.09$} (G-10);
    \draw (G-10) -- node[anchor=south] {$\scriptscriptstyle 0.20$} (13.75,0);
    \end{tikzpicture}
    \end{minipage}
    
    \vspace{\baselineskip}
    
    \begin{minipage}[b]{\linewidth}
    \centering
    {\small (b) $\gammab_y$}
    
    \begin{tikzpicture}[shorten >=1pt]
    \foreach \x in {1, 2, ..., 10} \node[vertex, label=below:$\scriptstyle\x$] (G-\x) at (1.25*\x,0) {};
    \foreach \x in {1, 2, 3} \node[dotdotdot] (E-\x) at (13.75+0.2*\x,0) {};
    \draw (G-1) -- node[anchor=south] {$\scriptscriptstyle-0.54$} (G-2);
    \draw (G-2) -- node[anchor=south] {$\scriptscriptstyle-0.85$} (G-3);
    \draw (G-3) -- node[anchor=south] {$\scriptscriptstyle 0.74$} (G-4);
    \draw (G-4) -- node[anchor=south] {$\scriptscriptstyle 0.16$} (G-5);
    \draw[red] (G-5) -- node[anchor=south] {$\scriptscriptstyle 0.14$} (G-6);
    \draw (G-6) -- node[anchor=south] {$\scriptscriptstyle 0.30$} (G-7);
    \draw (G-7) -- node[anchor=south] {$\scriptscriptstyle 0.11$} (G-8);
    \draw (G-8) -- node[anchor=south] {$\scriptscriptstyle 0.35$} (G-9);
    \draw (G-9) -- node[anchor=south] {$\scriptscriptstyle 0.09$} (G-10);
    \draw (G-10) -- node[anchor=south] {$\scriptscriptstyle 0.20$} (13.75,0);
    \path (G-4) edge[bend left = 75] node[anchor=south] {$\scriptscriptstyle-0.20$} (G-6);
    \path (G-5) edge[bend right = 75] node[below] {$\scriptscriptstyle-0.20$} (G-7);
    \end{tikzpicture}
    \end{minipage}
    
    \vspace{\baselineskip}
    
    \begin{minipage}[b]{\linewidth}
    \centering
    {\small (c) difference}
    
    \begin{tikzpicture}[shorten >=1pt]
    \foreach \x in {1, 2, ..., 10} \node[vertex, label=below:$\scriptstyle\x$] (G-\x) at (1.25*\x,0) {};
    \foreach \x in {1, 2, 3} \node[dotdotdot] (E-\x) at (13.75+0.2*\x,0) {};
    \draw (G-4) -- node[anchor=south] {$\scriptscriptstyle 0.4$} (G-5);
    \draw[red] (G-5) -- node[anchor=south] {$\scriptscriptstyle-0.2$} (G-6);
    \draw (G-6) -- node[anchor=south] {$\scriptscriptstyle-0.4$} (G-7);
    \path (G-4) edge[bend left = 75] node[anchor=south] {$\scriptscriptstyle 0.2$} (G-6);
    \path (G-5) edge[bend right = 75] node[below] {$\scriptscriptstyle 0.2$} (G-7);
    \end{tikzpicture}
    \end{minipage}
\end{figure}
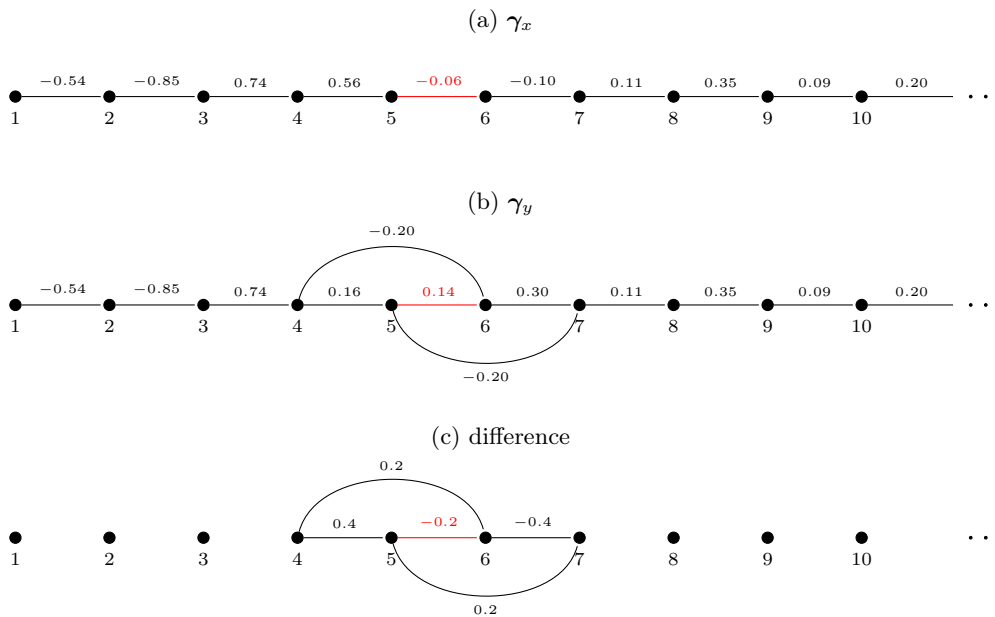

\begin{figure}[bp]
    \caption{\label{fig:chain2}\textbf{Chain 2 pair, realized edge weights.} The displayed weights are the actual values used in the experiments. Excluding the difference graph, all the weights, including the ones not shown here, were generated i.i.d.~$\Unif(-1,1)$. The target of inference is marked in red.}
    \centering
    
    \vspace{\baselineskip}
    
    \begin{minipage}[t][0.1\textheight][c]{\linewidth}
    \centering
    {\small (a) $\gammab_x$}
    
    \vspace{\baselineskip}
    
    \begin{tikzpicture}[shorten >=1pt]
    \foreach \x in {1, 2, ..., 10} \node[vertex, label=below:$\scriptstyle\x$] (G-\x) at (1.25*\x,0) {};
    \foreach \x in {1, 2, 3} \node[dotdotdot] (E-\x) at (13.75+0.2*\x,0) {};
    \draw (G-1) -- node[anchor=south] {$\scriptscriptstyle-0.54$} (G-2);
    \draw (G-2) -- node[anchor=south] {$\scriptscriptstyle-0.85$} (G-3);
    \draw (G-3) -- node[anchor=south] {$\scriptscriptstyle 0.74$} (G-4);
    \draw (G-4) -- node[anchor=south] {$\scriptscriptstyle 0.56$} (G-5);
    \draw[red] (G-5) -- node[anchor=south] {$\scriptscriptstyle-0.06$} (G-6);
    \draw (G-6) -- node[anchor=south] {$\scriptscriptstyle-0.10$} (G-7);
    \draw (G-7) -- node[anchor=south] {$\scriptscriptstyle 0.11$} (G-8);
    \draw (G-8) -- node[anchor=south] {$\scriptscriptstyle 0.35$} (G-9);
    \draw (G-9) -- node[anchor=south] {$\scriptscriptstyle 0.09$} (G-10);
    \draw (G-10) -- node[anchor=south] {$\scriptscriptstyle 0.20$} (13.75,0);
    \end{tikzpicture}
    \end{minipage}
    
    \vspace{\baselineskip}
    
    \begin{minipage}[b]{\linewidth}
    \centering
    {\small (b) $\gammab_y$}
    
    \begin{tikzpicture}[shorten >=1pt]
    \foreach \x in {1, 2, ..., 10} \node[vertex, label=below:$\scriptstyle\x$] (G-\x) at (1.25*\x,0) {};
    \foreach \x in {1, 2, 3} \node[dotdotdot] (E-\x) at (13.75+0.2*\x,0) {};
    \draw (G-1) -- node[anchor=south] {$\scriptscriptstyle-0.54$} (G-2);
    \draw (G-2) -- node[anchor=south] {$\scriptscriptstyle-0.85$} (G-3);
    \draw (G-3) -- node[anchor=south] {$\scriptscriptstyle 0.34$} (G-4);
    \draw (G-4) -- node[anchor=south] {$\scriptscriptstyle 0.36$} (G-5);
    \draw[red] (G-5) -- node[anchor=south] {$\scriptscriptstyle 0.14$} (G-6);
    \draw (G-6) -- node[anchor=south] {$\scriptscriptstyle-0.10$} (G-7);
    \draw (G-7) -- node[anchor=south] {$\scriptscriptstyle-0.29$} (G-8);
    \draw (G-8) -- node[anchor=south] {$\scriptscriptstyle 0.35$} (G-9);
    \draw (G-9) -- node[anchor=south] {$\scriptscriptstyle 0.09$} (G-10);
    \draw (G-10) -- node[anchor=south] {$\scriptscriptstyle 0.20$} (13.75,0);
    \path (G-4) edge[bend left = 75] node[anchor=south] {$\scriptscriptstyle-0.20$} (G-6);
    \end{tikzpicture}
    \end{minipage}
    
    \vspace{\baselineskip}
    
    \begin{minipage}[b]{\linewidth}
    \centering
    {\small (c) difference}
    
    \begin{tikzpicture}[shorten >=1pt]
    \foreach \x in {1, 2, ..., 10} \node[vertex, label=below:$\scriptstyle\x$] (G-\x) at (1.25*\x,0) {};
    \foreach \x in {1, 2, 3} \node[dotdotdot] (E-\x) at (13.75+0.2*\x,0) {};
    \draw (G-3) -- node[anchor=south] {$\scriptscriptstyle 0.4$} (G-4);
    \draw (G-4) -- node[anchor=south] {$\scriptscriptstyle 0.2$} (G-5);
    \draw[red] (G-5) -- node[anchor=south] {$\scriptscriptstyle-0.2$} (G-6);
    \draw (G-7) -- node[anchor=south] {$\scriptscriptstyle 0.4$} (G-8);
    \path (G-4) edge[bend left = 75] node[anchor=south] {$\scriptscriptstyle 0.2$} (G-6);
    \end{tikzpicture}
    \end{minipage}
\end{figure}

\tikzstyle{level 1}=[level distance=50pt, sibling distance = 125pt]
\tikzstyle{level 2}=[level distance=50pt, sibling distance = 40pt]
\tikzstyle{level 3}=[level distance=30pt, sibling distance = 15pt]

\begin{figure}[p]
    \caption{\label{fig:tree1}\textbf{Tree 1 pair, realized edge weights.} The displayed weights are the actual values used in the experiments. Excluding the difference graph, all the weights, including the ones not shown here, were generated i.i.d.~$\Unif(-1,1)$. The target of inference is marked in red.}
    \centering
    
    \vspace{\baselineskip}
    
    \begin{minipage}[b]{\linewidth}
    \centering
    {\small (a) $\gammab_x$}
    
    \begin{tikzpicture}[>=stealth]
    \node(0)[vertex, label = above:{$\scriptstyle 1$}]{}
    	child {node[vertex, label = above:{$\scriptstyle 2$}]{}
    		child {node[vertex, label = above left:{$\scriptstyle 5$}]{}
    			child{node{} edge from parent node[above, sloped] {$\scriptscriptstyle-0.70$}}
    			child{node{} edge from parent node[above, sloped] {$\scriptscriptstyle-0.58$}}
    			child{node{} edge from parent node[above, sloped] {$\scriptscriptstyle 0.67$}}
    			edge from parent node[above, sloped] {$\scriptscriptstyle 0.56$}}
    		child {node[vertex, label = above right:{$\scriptstyle 6$}]{}
    			child{node{} edge from parent node[above, sloped] {$\scriptscriptstyle 0.17$}}
    			child{node{} edge from parent node[above, sloped] {$\scriptscriptstyle-0.29$}}
    			child{node{} edge from parent node[above, sloped] {$\scriptscriptstyle-0.95$}}
    			edge from parent node[above, sloped] {$\scriptscriptstyle-0.06$}}
    		child {node[vertex, label = above right:{$\scriptstyle 7$}]{}
    			child{node{} edge from parent node[above, sloped] {$\scriptscriptstyle 0.62$}}
    			child{node{} edge from parent node[above, sloped] {$\scriptscriptstyle 0.88$}}
    			child{node{} edge from parent node[above, sloped] {$\scriptscriptstyle 0.86$}}
    			edge from parent node[above, sloped] {$\scriptscriptstyle-0.10$}}
    		edge from parent node[above, sloped] {$\scriptscriptstyle-0.54$}}
    	child {node[vertex, label = above right:{$\scriptstyle 3$}]{}
    		child {node[vertex, label = above left:{$\scriptstyle 8$}]{}
    			child{node{} edge from parent node[above, sloped] {$\scriptscriptstyle-0.16$}}
    			child{node{} edge from parent node[above, sloped] {$\scriptscriptstyle-0.51$}}
    			child{node{} edge from parent node[above, sloped] {$\scriptscriptstyle-0.04$}}
    			edge from parent[black] node[above, sloped] {$\scriptscriptstyle 0.11$}}
    		child {node[vertex, label = above right:{$\scriptstyle 9$}]{}
    			child{node{} edge from parent node[above, sloped] {$\scriptscriptstyle 0.94$}}
    			child{node{} edge from parent node[above, sloped] {$\scriptscriptstyle-0.79$}}
    			child{node{} edge from parent node[above, sloped] {$\scriptscriptstyle-0.41$}}
    			edge from parent[black] node[above, sloped] {$\scriptscriptstyle 0.35$}}
    		child {node[vertex, label = above right:{$\scriptstyle 10$}]{}
    			child{node{} edge from parent node[above, sloped] {$\scriptscriptstyle 0.21$}}
    			child{node{} edge from parent node[above, sloped] {$\scriptscriptstyle 0.79$}}
    			child{node{} edge from parent node[above, sloped] {$\scriptscriptstyle 0.18$}}
    			edge from parent[black] node[above, sloped] {$\scriptscriptstyle 0.09$}}
    		edge from parent[red] node[above, sloped] {$\scriptscriptstyle-0.85$}}
    	child {node[vertex, label = above:{$\scriptstyle 4$},]{}
    		child {node[vertex, label = above left:{$\scriptstyle 11$}]{}
    			child{node{} edge from parent node[above, sloped] {$\scriptscriptstyle 0.70$}}
    			child{node{} edge from parent node[above, sloped] {$\scriptscriptstyle 0.86$}}
    			child{node{} edge from parent node[above, sloped] {$\scriptscriptstyle 0.70$}}
    			edge from parent node[above, sloped] {$\scriptscriptstyle-0.20$}}
    		child {node[vertex, label = above right:{$\scriptstyle 12$}]{}
    			child{node{} edge from parent node[above, sloped] {$\scriptscriptstyle 0.99$}}
    			child{node{} edge from parent node[above, sloped] {$\scriptscriptstyle-0.38$}}
    			child{node{} edge from parent node[above, sloped] {$\scriptscriptstyle 0.35$}}
    			edge from parent node[above, sloped] {$\scriptscriptstyle-0.90$}}
    		child {node[vertex, label = above right:{$\scriptstyle 13$}]{}
    			child{node{} edge from parent node[above, sloped] {$\scriptscriptstyle 0.96$}}
    			child{node{} edge from parent node[above, sloped] {$\scriptscriptstyle-0.13$}}
    			child{node{} edge from parent node[above, sloped] {$\scriptscriptstyle-0.52$}}
    			edge from parent node[above, sloped] {$\scriptscriptstyle 0.97$}}
    		edge from parent node[above, sloped] {$\scriptscriptstyle 0.74$}};
    \end{tikzpicture}
    \end{minipage}
    
    \vspace{\baselineskip}
    
    \begin{minipage}[b]{\linewidth}
    \centering
    {\small (b) $\gammab_y$}
    
    \begin{tikzpicture}[>=stealth]
    \node(0)[vertex, label = above:{$\scriptstyle 1$}]{}
    	child {node[vertex, label = above:{$\scriptstyle 2$}]{}
    		child {node[vertex, label = above left:{$\scriptstyle 5$}]{}
    			child{node{} edge from parent node[above, sloped] {$\scriptscriptstyle-0.70$}}
    			child{node{} edge from parent node[above, sloped] {$\scriptscriptstyle-0.58$}}
    			child{node{} edge from parent node[above, sloped] {$\scriptscriptstyle 0.67$}}
    			edge from parent node[above, sloped] {$\scriptscriptstyle 0.56$}}
    		child {node[vertex, label = above right:{$\scriptstyle 6$}]{}
    			child{node{} edge from parent node[above, sloped] {$\scriptscriptstyle 0.17$}}
    			child{node{} edge from parent node[above, sloped] {$\scriptscriptstyle-0.29$}}
    			child{node{} edge from parent node[above, sloped] {$\scriptscriptstyle-0.95$}}
    			edge from parent node[above, sloped] {$\scriptscriptstyle-0.06$}}
    		child {node[vertex, label = above right:{$\scriptstyle 7$}]{}
    			child{node{} edge from parent node[above, sloped] {$\scriptscriptstyle 0.62$}}
    			child{node{} edge from parent node[above, sloped] {$\scriptscriptstyle 0.88$}}
    			child{node{} edge from parent node[above, sloped] {$\scriptscriptstyle 0.86$}}
    			edge from parent node[above, sloped] {$\scriptscriptstyle-0.10$}}
    		edge from parent node[above, sloped] {$\scriptscriptstyle-0.14$}}
    	child {node[vertex, label = above right:{$\scriptstyle 3$}]{}
    		child {node[vertex, label = above left:{$\scriptstyle 8$}]{}
    			child{node{} edge from parent node[above, sloped] {$\scriptscriptstyle-0.16$}}
    			child{node{} edge from parent node[above, sloped] {$\scriptscriptstyle-0.51$}}
    			child{node{} edge from parent node[above, sloped] {$\scriptscriptstyle-0.04$}}
    			edge from parent[black] node[above, sloped] {$\scriptscriptstyle-0.29$}}
    		child {node[vertex, label = above right:{$\scriptstyle 9$}]{}
    			child{node{} edge from parent node[above, sloped] {$\scriptscriptstyle 0.94$}}
    			child{node{} edge from parent node[above, sloped] {$\scriptscriptstyle-0.79$}}
    			child{node{} edge from parent node[above, sloped] {$\scriptscriptstyle-0.41$}}
    			edge from parent[black] node[above, sloped] {$\scriptscriptstyle 0.35$}}
    		child {node[vertex, label = above right:{$\scriptstyle 10$}]{}
    			child{node{} edge from parent node[above, sloped] {$\scriptscriptstyle 0.21$}}
    			child{node{} edge from parent node[above, sloped] {$\scriptscriptstyle 0.79$}}
    			child{node{} edge from parent node[above, sloped] {$\scriptscriptstyle 0.18$}}
    			edge from parent[black] node[above, sloped] {$\scriptscriptstyle 0.09$}}
    		edge from parent[red] node[above, sloped] {$\scriptscriptstyle-0.65$}}
    	child {node[vertex, label = above:{$\scriptstyle 4$},]{}
    		child {node[vertex, label = above left:{$\scriptstyle 11$}]{}
    			child{node{} edge from parent node[above, sloped] {$\scriptscriptstyle 0.70$}}
    			child{node{} edge from parent node[above, sloped] {$\scriptscriptstyle 0.86$}}
    			child{node{} edge from parent node[above, sloped] {$\scriptscriptstyle 0.70$}}
    			edge from parent node[above, sloped] {$\scriptscriptstyle-0.20$}}
    		child {node[vertex, label = above right:{$\scriptstyle 12$}]{}
    			child{node{} edge from parent node[above, sloped] {$\scriptscriptstyle 0.99$}}
    			child{node{} edge from parent node[above, sloped] {$\scriptscriptstyle-0.38$}}
    			child{node{} edge from parent node[above, sloped] {$\scriptscriptstyle 0.35$}}
    			edge from parent node[above, sloped] {$\scriptscriptstyle-0.90$}}
    		child {node[vertex, label = above right:{$\scriptstyle 13$}]{}
    			child{node{} edge from parent node[above, sloped] {$\scriptscriptstyle 0.96$}}
    			child{node{} edge from parent node[above, sloped] {$\scriptscriptstyle-0.13$}}
    			child{node{} edge from parent node[above, sloped] {$\scriptscriptstyle-0.52$}}
    			edge from parent node[above, sloped] {$\scriptscriptstyle 0.97$}}
    		edge from parent node[above, sloped] {$\scriptscriptstyle 0.74$}};
    \path (0) edge[bend right = 1000] node[draw = none, anchor=west, pos = 0.6] {$\scriptscriptstyle 0.20$} (0-2-2);
    \path (0-2) edge[bend left = 10] node[draw=none, anchor=south] {$\scriptscriptstyle-0.20$} (0-3);
    \end{tikzpicture}
    \end{minipage}
    
    \vspace{\baselineskip}
    
    \begin{minipage}[b]{\linewidth}
    \centering
    {\small (c) difference}
    
    \begin{tikzpicture}[>=stealth]
    \node(0)[vertex, label = above:{$\scriptstyle 1$}]{}
    	child {node[vertex, label = above:{$\scriptstyle 2$}]{}
    		child {node[vertex, label = above left:{$\scriptstyle 5$}]{}
    			edge from parent[draw=none] node[above, sloped] {}}
    		child {node[vertex, label = above right:{$\scriptstyle 6$}]{}
    			edge from parent[draw=none] node[above, sloped] {}}
    		child {node[vertex, label = above right:{$\scriptstyle 7$}]{}
    			edge from parent[draw=none] node[above, sloped] {}}
    		edge from parent node[above, sloped] {$\scriptscriptstyle-0.4$}}
    	child {node[vertex, label = above right:{$\scriptstyle 3$}]{}
    		child {node[vertex, label = above left:{$\scriptstyle 8$}]{}
    			edge from parent[black] node[above, sloped] {$\scriptscriptstyle 0.4$}}
    		child {node[vertex, label = above right:{$\scriptstyle 9$}]{}
    			edge from parent[draw=none] node[above, sloped] {}}
    		child {node[vertex, label = above right:{$\scriptstyle 10$}]{}
    			edge from parent[draw=none] node[above, sloped] {}}
    		edge from parent[red] node[above, sloped] {$\scriptscriptstyle-0.2$}}
    	child {node[vertex, label = above:{$\scriptstyle 4$},]{}
    		child {node[vertex, label = above left:{$\scriptstyle 11$}]{}
    			edge from parent[draw=none] node[above, sloped] {}}
    		child {node[vertex, label = above right:{$\scriptstyle 12$}]{}
    			edge from parent[draw=none] node[above, sloped] {}}
    		child {node[vertex, label = above right:{$\scriptstyle 13$}]{}
    			edge from parent[draw=none] node[above, sloped] {}}
    		edge from parent[draw=none] node[above, sloped] {}};
    \path (0) edge[bend right = 1000] node[draw = none, anchor = west, pos = 0.6] {$\scriptscriptstyle-0.2$} (0-2-2);
    \path (0-2) edge[bend left = 10] node[draw = none, anchor = south] {$\scriptscriptstyle 0.2$} (0-3);
    \end{tikzpicture}
    \end{minipage}
\end{figure}

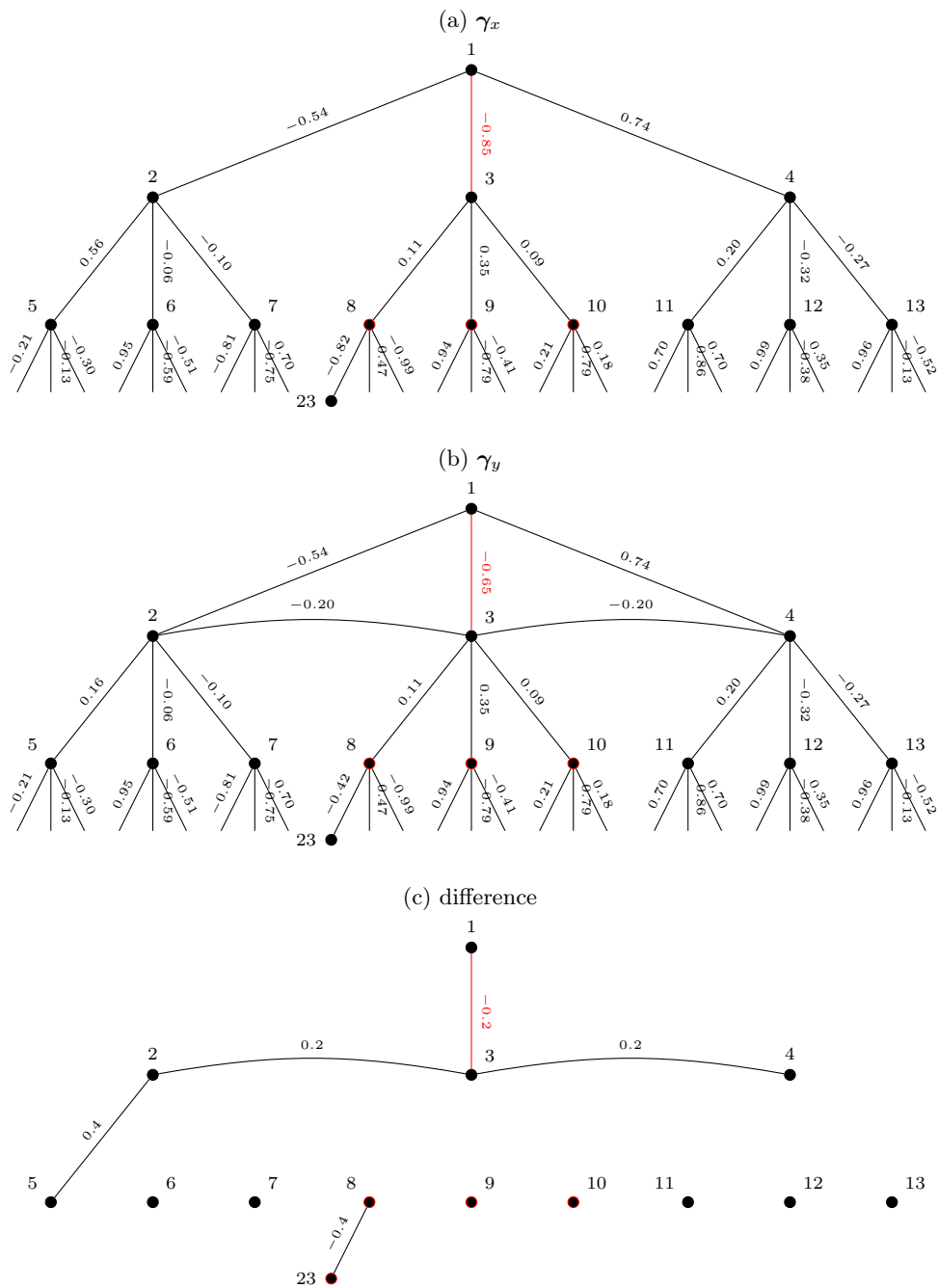
\begin{figure}[p]
    \caption{\label{fig:tree2}\textbf{Tree 2 pair, realized edge weights.} The displayed weights are the actual values used in the experiments. Excluding the difference graph, all the weights, including the ones not shown here, were generated i.i.d.~$\Unif(-1,1)$. The target of inference is marked in red.}
    \centering
    
    \vspace{\baselineskip}
    
    \begin{minipage}[b]{\linewidth}
    \centering
    {\small (a) $\gammab_x$}
    
    \begin{tikzpicture}[>=stealth]
    \node(0)[vertex, label = above:{$\scriptstyle 1$}]{}
    	child {node[vertex, label = above:{$\scriptstyle 2$}]{}
    		child {node[vertex, label = above left:{$\scriptstyle 5$}]{}
    			child{node{} edge from parent node[above, sloped] {$\scriptscriptstyle-0.21$}}
    			child{node{} edge from parent node[above, sloped] {$\scriptscriptstyle-0.13$}}
    			child{node{} edge from parent node[above, sloped] {$\scriptscriptstyle-0.30$}}
    			edge from parent node[above, sloped] {$\scriptscriptstyle 0.56$}}
    		child {node[vertex, label = above right:{$\scriptstyle 6$}]{}
    			child{node{} edge from parent node[above, sloped] {$\scriptscriptstyle 0.95$}}
    			child{node{} edge from parent node[above, sloped] {$\scriptscriptstyle-0.59$}}
    			child{node{} edge from parent node[above, sloped] {$\scriptscriptstyle-0.51$}}
    			edge from parent node[above, sloped] {$\scriptscriptstyle-0.06$}}
    		child {node[vertex, label = above right:{$\scriptstyle 7$}]{}
    			child{node{} edge from parent node[above, sloped] {$\scriptscriptstyle-0.81$}}
    			child{node{} edge from parent node[above, sloped] {$\scriptscriptstyle-0.75$}}
    			child{node{} edge from parent node[above, sloped] {$\scriptscriptstyle 0.70$}}
    			edge from parent node[above, sloped] {$\scriptscriptstyle-0.10$}}
    		edge from parent node[above, sloped] {$\scriptscriptstyle-0.54$}}
    	child {node[vertex, label = above right:{$\scriptstyle 3$}]{}
    		child {node[vertex, label = above left:{$\scriptstyle 8$}]{}
    			child{node[vertex, label = left:{$\scriptstyle 23$}]{}
    				edge from parent node[above, sloped] {$\scriptscriptstyle-0.82$}}
    			child{node{} edge from parent node[above, sloped] {$\scriptscriptstyle 0.47$}}
    			child{node{} edge from parent node[above, sloped] {$\scriptscriptstyle-0.99$}}
    			edge from parent[black] node[above, sloped] {$\scriptscriptstyle 0.11$}}
    		child {node[vertex, label = above right:{$\scriptstyle 9$}]{}
    			child{node{} edge from parent node[above, sloped] {$\scriptscriptstyle 0.94$}}
    			child{node{} edge from parent node[above, sloped] {$\scriptscriptstyle-0.79$}}
    			child{node{} edge from parent node[above, sloped]{$\scriptscriptstyle-0.41$}}
    			edge from parent[black] node[above, sloped] {$\scriptscriptstyle 0.35$}}
    		child {node[vertex, label = above right:{$\scriptstyle 10$}]{}
    			child{node{} edge from parent node[above, sloped] {$\scriptscriptstyle 0.21$}}
    			child{node{} edge from parent node[above, sloped] {$\scriptscriptstyle 0.79$}}
    			child{node{} edge from parent node[above, sloped] {$\scriptscriptstyle 0.18$}}
    			edge from parent[black] node[above, sloped] {$\scriptscriptstyle 0.09$}}
    		edge from parent[red] node[above, sloped] {$\scriptscriptstyle-0.85$}}
    	child {node[vertex, label = above:{$\scriptstyle 4$}]{}
    		child {node[vertex, label = above left:{$\scriptstyle 11$}]{}
    			child{node{} edge from parent node[above, sloped] {$\scriptscriptstyle 0.70$}}
    			child{node{} edge from parent node[above, sloped] {$\scriptscriptstyle 0.86$}}
    			child{node{} edge from parent node[above, sloped] {$\scriptscriptstyle 0.70$}}
    			edge from parent node[above, sloped] {$\scriptscriptstyle 0.20$}}
    		child {node[vertex, label = above right:{$\scriptstyle 12$}]{}
    			child{node{} edge from parent node[above, sloped] {$\scriptscriptstyle 0.99$}}
    			child{node{} edge from parent node[above, sloped] {$\scriptscriptstyle-0.38$}}
    			child{node{} edge from parent node[above, sloped] {$\scriptscriptstyle 0.35$}}
    			edge from parent node[above, sloped] {$\scriptscriptstyle-0.32$}}
    		child {node[vertex, label = above right:{$\scriptstyle 13$}]{}
    			child{node{} edge from parent node[above, sloped] {$\scriptscriptstyle 0.96$}}
    			child{node{} edge from parent node[above, sloped] {$\scriptscriptstyle-0.13$}}
    			child{node{} edge from parent node[above, sloped] {$\scriptscriptstyle-0.52$}}
    			edge from parent node[above, sloped] {$\scriptscriptstyle-0.27$}}
    		edge from parent node[above, sloped] {$\scriptscriptstyle 0.74$}};
    \end{tikzpicture}
    \end{minipage}
    
    \vspace{\baselineskip}
    
    \begin{minipage}[b]{\linewidth}
    \centering
    {\small (b) $\gammab_y$}
    
    \begin{tikzpicture}[>=stealth]
    \node(0)[vertex, label = above:{$\scriptstyle 1$}]{}
    	child {node[vertex, label = above:{$\scriptstyle 2$}]{}
    		child {node[vertex, label = above left:{$\scriptstyle 5$}]{}
    			child{node{} edge from parent node[above, sloped] {$\scriptscriptstyle-0.21$}}
    			child{node{} edge from parent node[above, sloped] {$\scriptscriptstyle-0.13$}}
    			child{node{} edge from parent node[above, sloped] {$\scriptscriptstyle-0.30$}}
    			edge from parent node[above, sloped] {$\scriptscriptstyle 0.16$}}
    		child {node[vertex, label = above right:{$\scriptstyle 6$}]{}
    			child{node{} edge from parent node[above, sloped] {$\scriptscriptstyle 0.95$}}
    			child{node{} edge from parent node[above, sloped] {$\scriptscriptstyle-0.59$}}
    			child{node{} edge from parent node[above, sloped] {$\scriptscriptstyle-0.51$}}
    			edge from parent node[above, sloped] {$\scriptscriptstyle-0.06$}}
    		child {node[vertex, label = above right:{$\scriptstyle 7$}]{}
    			child{node{} edge from parent node[above, sloped] {$\scriptscriptstyle-0.81$}}
    			child{node{} edge from parent node[above, sloped] {$\scriptscriptstyle-0.75$}}
    			child{node{} edge from parent node[above, sloped] {$\scriptscriptstyle 0.70$}}
    			edge from parent node[above, sloped] {$\scriptscriptstyle-0.10$}}
    		edge from parent node[above, sloped] {$\scriptscriptstyle-0.54$}}
    	child {node[vertex, label = above right:{$\scriptstyle 3$}]{}
    		child {node[vertex, label = above left:{$\scriptstyle 8$}]{}
    			child{node[vertex, label = left:{$\scriptstyle 23$}]{}
    				edge from parent node[above, sloped] {$\scriptscriptstyle-0.42$}}
    			child{node{} edge from parent node[above, sloped] {$\scriptscriptstyle 0.47$}}
    			child{node{} edge from parent node[above, sloped] {$\scriptscriptstyle-0.99$}}
    			edge from parent[black] node[above, sloped] {$\scriptscriptstyle 0.11$}}
    		child {node[vertex, label = above right:{$\scriptstyle 9$}]{}
    			child{node{} edge from parent node[above, sloped] {$\scriptscriptstyle 0.94$}}
    			child{node{} edge from parent node[above, sloped] {$\scriptscriptstyle-0.79$}}
    			child{node{} edge from parent node[above, sloped]{$\scriptscriptstyle-0.41$}}
    			edge from parent[black] node[above, sloped] {$\scriptscriptstyle 0.35$}}
    		child {node[vertex, label = above right:{$\scriptstyle 10$}]{}
    			child{node{} edge from parent node[above, sloped] {$\scriptscriptstyle 0.21$}}
    			child{node{} edge from parent node[above, sloped] {$\scriptscriptstyle 0.79$}}
    			child{node{} edge from parent node[above, sloped] {$\scriptscriptstyle 0.18$}}
    			edge from parent[black] node[above, sloped] {$\scriptscriptstyle 0.09$}}
    		edge from parent[red] node[above, sloped] {$\scriptscriptstyle-0.65$}}
    	child {node[vertex, label = above:{$\scriptstyle 4$}]{}
    		child {node[vertex, label = above left:{$\scriptstyle 11$}]{}
    			child{node{} edge from parent node[above, sloped] {$\scriptscriptstyle 0.70$}}
    			child{node{} edge from parent node[above, sloped] {$\scriptscriptstyle 0.86$}}
    			child{node{} edge from parent node[above, sloped] {$\scriptscriptstyle 0.70$}}
    			edge from parent node[above, sloped] {$\scriptscriptstyle 0.20$}}
    		child {node[vertex, label = above right:{$\scriptstyle 12$}]{}
    			child{node{} edge from parent node[above, sloped] {$\scriptscriptstyle 0.99$}}
    			child{node{} edge from parent node[above, sloped] {$\scriptscriptstyle-0.38$}}
    			child{node{} edge from parent node[above, sloped] {$\scriptscriptstyle 0.35$}}
    			edge from parent node[above, sloped] {$\scriptscriptstyle-0.32$}}
    		child {node[vertex, label = above right:{$\scriptstyle 13$}]{}
    			child{node{} edge from parent node[above, sloped] {$\scriptscriptstyle 0.96$}}
    			child{node{} edge from parent node[above, sloped] {$\scriptscriptstyle-0.13$}}
    			child{node{} edge from parent node[above, sloped] {$\scriptscriptstyle-0.52$}}
    			edge from parent node[above, sloped] {$\scriptscriptstyle-0.27$}}
    		edge from parent node[above, sloped] {$\scriptscriptstyle 0.74$}};
    \path (0-1) edge[bend left = 10] node[draw = none, anchor = south] {$\scriptscriptstyle-0.20$} (0-2);
    \path (0-2) edge[bend left = 10] node[draw = none, anchor = south] {$\scriptscriptstyle-0.20$} (0-3);
    \end{tikzpicture}
    \end{minipage}
    
    \vspace{\baselineskip}
    
    \begin{minipage}[b]{\linewidth}
    \centering
    {\small (c) difference}
    
    \begin{tikzpicture}[>=stealth]
    \node(0)[vertex, label = above:{$\scriptstyle 1$}]{}
    	child {node[vertex, label = above:{$\scriptstyle 2$}]{}
    		child {node[vertex, label = above left:{$\scriptstyle 5$}]{}
    			child{node{} edge from parent[draw=none] node[above, sloped] {}}
    			child{node{} edge from parent[draw=none] node[above, sloped] {}}
    			child{node{} edge from parent[draw=none] node[above, sloped] {}}
    			edge from parent node[above, sloped] {$\scriptscriptstyle 0.4$}}
    		child {node[vertex, label = above right:{$\scriptstyle 6$}]{}
    			child{node{} edge from parent[draw=none] node[above, sloped] {}}
    			child{node{} edge from parent[draw=none] node[above, sloped] {}}
    			child{node{} edge from parent[draw=none] node[above, sloped] {}}
    			edge from parent[draw=none] node[above, sloped] {}}
    		child {node[vertex, label = above right:{$\scriptstyle 7$}]{}
    			child{node{} edge from parent[draw=none] node[above, sloped] {}}
    			child{node{} edge from parent[draw=none] node[above, sloped] {}}
    			child{node{} edge from parent[draw=none] node[above, sloped] {}}
    			edge from parent[draw=none] node[above, sloped] {}}
    		edge from parent[draw=none] node[above, sloped] {}}
    	child {node[vertex, label = above right:{$\scriptstyle 3$}]{}
    		child {node[vertex, label = above left:{$\scriptstyle 8$}]{}
    			child{node[vertex, label = left:{$\scriptstyle 23$}]{}
    				edge from parent[black] node[above, sloped] {$\scriptscriptstyle-0.4$}}
    			child{node{} edge from parent[draw=none] node[above, sloped] {}}
    			child{node{} edge from parent[draw=none] node[above, sloped] {}}
    			edge from parent[draw=none] node[above, sloped] {}}
    		child {node[vertex, label = above right:{$\scriptstyle 9$}]{}
    			child{node{} edge from parent[draw=none] node[above, sloped] {}}
    			child{node{} edge from parent[draw=none] node[above, sloped] {}}
    			child{node{} edge from parent[draw=none] node[above, sloped]{}}
    			edge from parent[draw=none]node[above, sloped] {}}
    		child {node[vertex, label = above right:{$\scriptstyle 10$}]{}
    			child{node{} edge from parent[draw=none] node[above, sloped] {}}
    			child{node{} edge from parent[draw=none] node[above, sloped] {}}
    			child{node{} edge from parent[draw=none] node[above, sloped] {}}
    			edge from parent[draw=none]node[above, sloped] {}}
    		edge from parent[red] node[above, sloped] {$\scriptscriptstyle-0.2$}}
    	child {node[vertex, label = above:{$\scriptstyle 4$}]{}
    		child {node[vertex, label = above left:{$\scriptstyle 11$}]{}
    			child{node{} edge from parent[draw=none] node[above, sloped] {}}
    			child{node{} edge from parent[draw=none] node[above, sloped] {}}
    			child{node{} edge from parent[draw=none] node[above, sloped] {}}
    			edge from parent[draw=none] node[above, sloped] {}}
    		child {node[vertex, label = above right:{$\scriptstyle 12$}]{}
    			child{node{} edge from parent[draw=none] node[above, sloped] {}}
    			child{node{} edge from parent[draw=none] node[above, sloped] {}}
    			child{node{} edge from parent[draw=none] node[above, sloped] {}}
    			edge from parent[draw=none] node[above, sloped] {}}
    		child {node[vertex, label = above right:{$\scriptstyle 13$}]{}
    			child{node{} edge from parent[draw=none] node[above, sloped] {}}
    			child{node{} edge from parent[draw=none] node[above, sloped] {}}
    			child{node{} edge from parent[draw=none] node[above, sloped] {}}
    			edge from parent[draw=none] node[above, sloped] {}}
    		edge from parent[draw=none] node[above, sloped] {}};
    \path (0-1) edge[bend left = 10] node[draw = none, anchor = south] {$\scriptscriptstyle 0.2$} (0-2);
    \path (0-2) edge[bend left = 10] node[draw = none, anchor = south] {$\scriptscriptstyle 0.2$} (0-3);
    \end{tikzpicture}
    \end{minipage}
\end{figure}
\FloatBarrier

The advantage of our method is most clearly illustrated in settings in which initial sparse estimates are likely to miss parts of the support that are nonetheless important for inference. That is to say, both SparKLIE+ and the na\"{i}ve procedure described in Appendix~\ref{supp:sec5:competing} are expected to do well when the support is recovered with high probability. However, when this is no longer true, only SparKLIE+ will perform well.

We constructed eight graph pairs to highlight this difference. See Figures~\ref{fig:chain1} to \ref{fig:tree2}. We have four designs, and each design has a 25-node version and a 50-node version. The designs are labeled as Chain~1, Chain~2, Tree~1, and Tree~2, where the first part refers to the structure of $\gammab_x$ and the second, the type of modification used to obtain $\gammab_y$ from $\gammab_x$.

The edge weights were picked in the following manner. First, the weights for $\gammab_x$ were generated i.i.d.~$\Unif(-1,1)$. Next, $\gammab_y$ was obtained from $\gammab_x$ by modifying five edges. Thus, the difference graph always contained \emph{five} nonzero edges.

Each design has a fixed inference target, a.k.a.~the edge of interest. For Chain~1and Chain~2, this was always the edge $(5,6)$. For Tree~1 and Tree~2, this was always the edge $(1,3)$. The magnitude was always fixed at 0.2. By contrast, two of the nuisance edges had magnitude 0.4, while the two others had magnitude 0.2. The signs were chosen so that the none of the edge weights had magnitudes exceeding 1.

For each design, we first generated a 25-node version, and then embedded the 25-node version into a 50-node one.

\subsection{Data generation}

In Experiments~1 -- 5, the data were generated as i.i.d.~draws from an Ising model with zero node potentials.
A Gibbs sampler \citep{Geman1984Stochastic} was used. For Experiments~1, 2, and 5 burn-in was 3000 and thinning was 1000. For Experiments~3 and 4, burn-in was 3000 and thinning was 2000.

\subsection{\label{supp:sec5:res:exper1}Additional figures and tables for Experiment~1}

\begin{figure}[p]
    \caption{\label{fig:qqplots_chain1}\textbf{The quality of Gaussian approximation for Chain 1 pair.}
    We plot the distributions of the na\"{i}ve re-fitted (left), the SparKLIE+1 (middle), and the SparKLIE+2 (right) estimators after studentization (i.e., standardizing by the standard error estimate \eqref{eq:varest}), first as a Normal Q-Q plot (top) and then as a histogram (bottom). In each Q-Q plot, the distribution of the oracle estimator after studentization (gray dots) is also provided for easy comparison. The orange curves in each histogram represents the density of $\Ncal(0,1)$ and is provided for reference.}
    \centering
    
    \vspace{\baselineskip}

    \begin{minipage}[b]{\linewidth}
    \centering
    {\small (a) 25 nodes}

    \includegraphics[height=0.40\textheight]{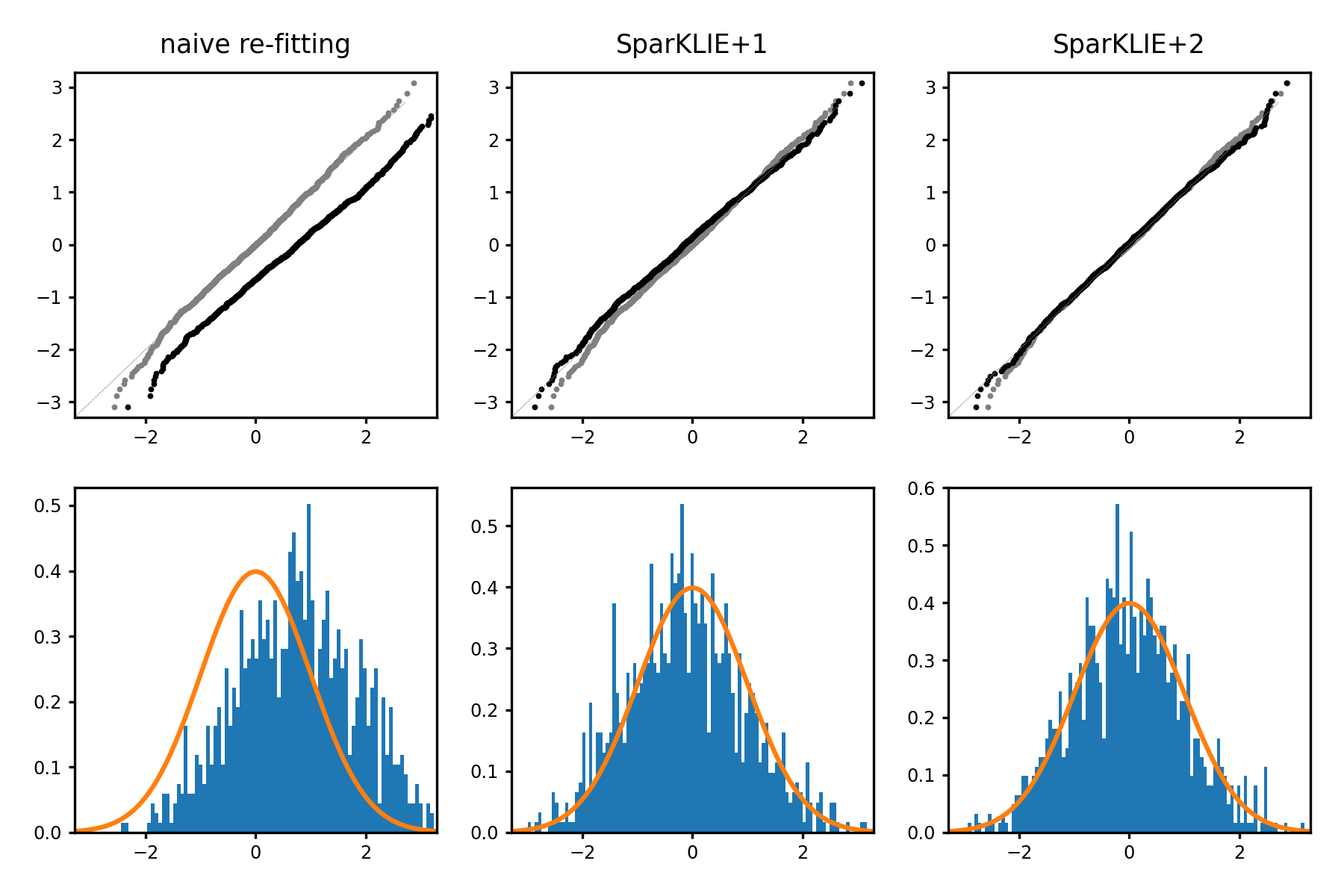}
    \end{minipage}
    
    \vskip\baselineskip

    \begin{minipage}[b]{\linewidth}
    \centering
    {\small (b) 50 nodes}

    \includegraphics[height=0.40\textheight]{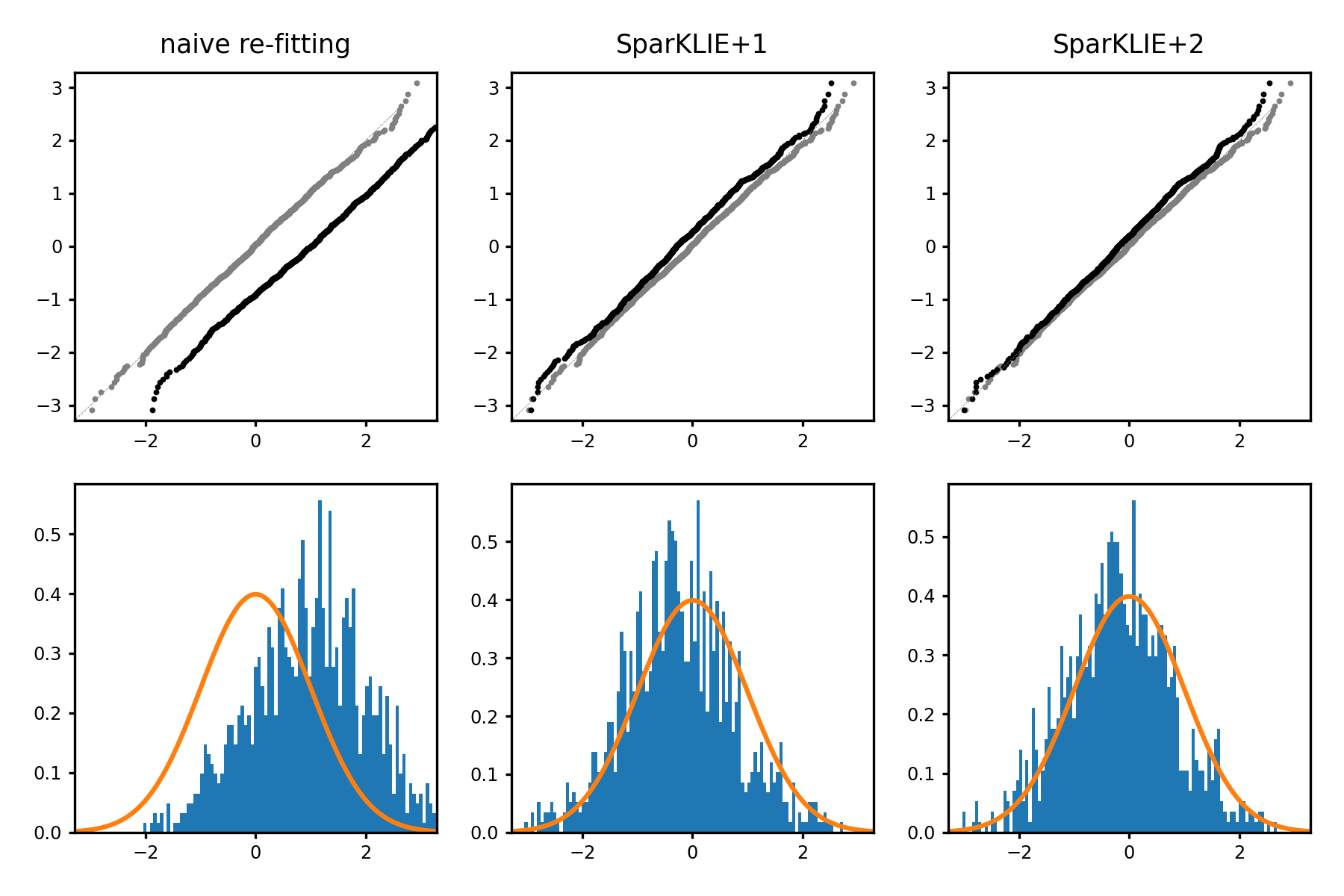}
    \end{minipage}
\end{figure}

\begin{figure}[p]
    \caption{\label{fig:qqplots_chain2}\textbf{The quality of Gaussian approximation for Chain 2 pair.}
    We plot the distributions of the na\"{i}ve re-fitted (left), the SparKLIE+1 (middle), and the SparKLIE+2 (right) estimators after studentization (i.e., standardizing by the standard error estimate \eqref{eq:varest}), first as a Normal Q-Q plot (top) and then as a histogram (bottom). In each Q-Q plot, the distribution of the oracle estimator after studentization (gray dots) is also provided for easy comparison. The orange curves in each histogram represents the density of $\Ncal(0,1)$ and is provided for reference.}
    \centering
    
    \vspace{\baselineskip}

    \begin{minipage}[b]{\linewidth}
    \centering
    {\small (a) 25 nodes}

    \includegraphics[height=0.40\textheight]{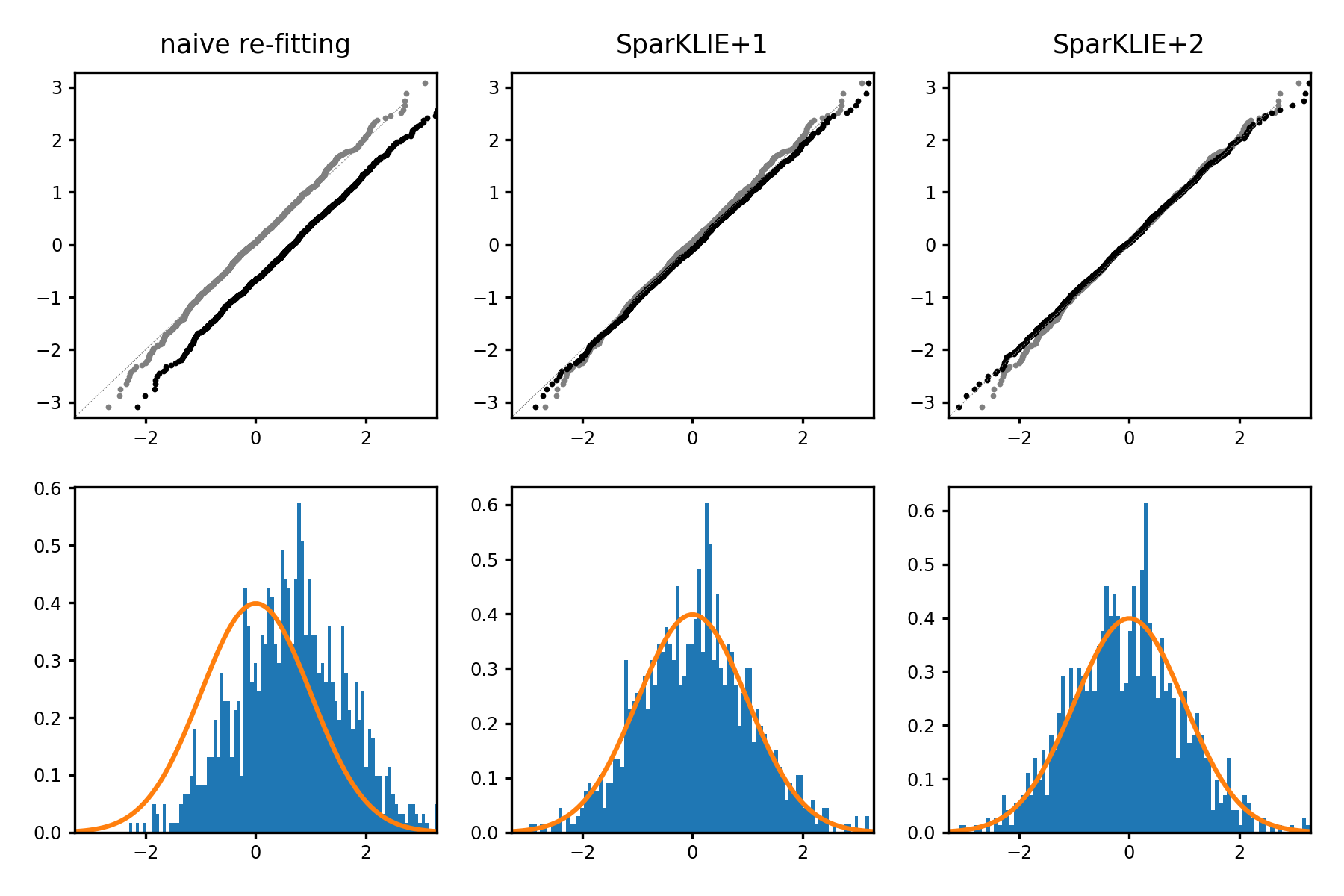}
    \end{minipage}
    
    \vskip\baselineskip

    \begin{minipage}[b]{\linewidth}
    \centering
    {\small (b) 50 nodes}

    \includegraphics[height=0.40\textheight]{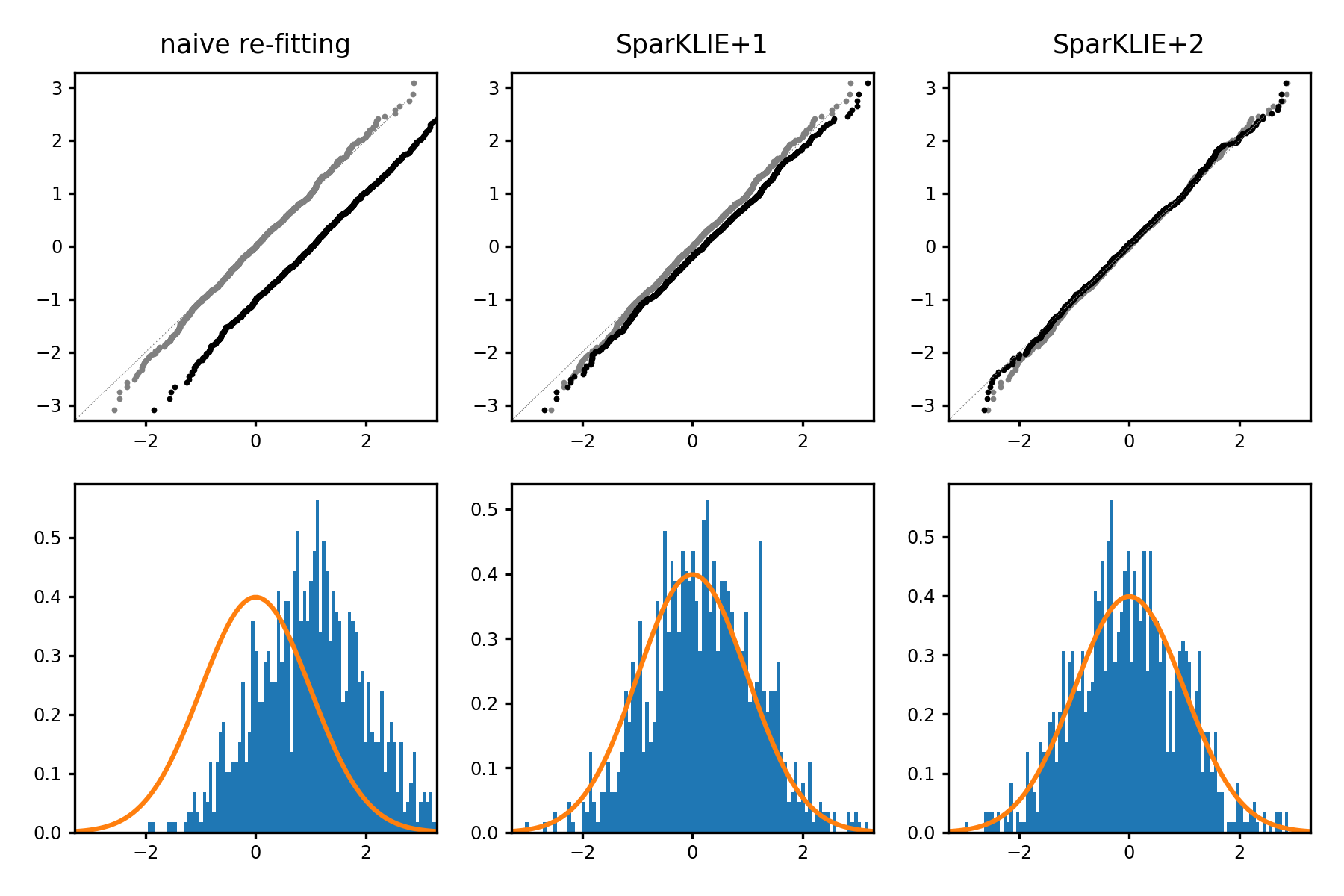}
    \end{minipage}
\end{figure}

\begin{figure}[p]
    \caption{\label{fig:qqplots_tree1}\textbf{The quality of Gaussian approximation for Tree 1 pair.}
    We plot the distributions of the na\"{i}ve re-fitted (left), the SparKLIE+1 (middle), and the SparKLIE+2 (right) estimators after studentization (i.e., standardizing by the standard error estimate \eqref{eq:varest}), first as a Normal Q-Q plot (top) and then as a histogram (bottom). In each Q-Q plot, the distribution of the oracle estimator after studentization (gray dots) is also provided for easy comparison. The orange curves in each histogram represents the density of $\Ncal(0,1)$ and is provided for reference.}
    \centering
    
    \vspace{\baselineskip}

    \begin{minipage}[b]{\linewidth}
    \centering
    {\small (a) 25 nodes}

    \includegraphics[height=0.40\textheight]{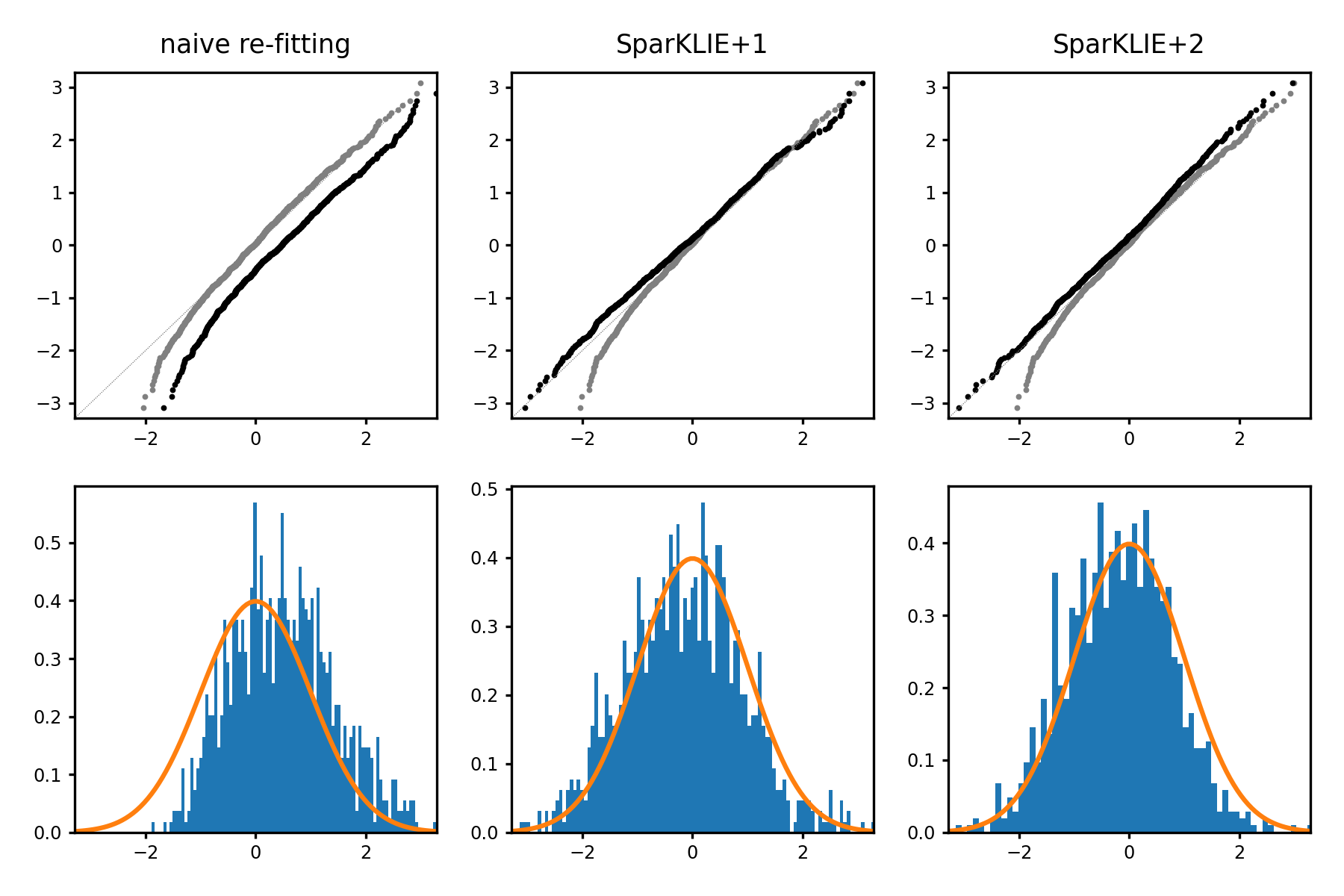}
    \end{minipage}
    
    \vskip\baselineskip

    \begin{minipage}[b]{\linewidth}
    \centering
    {\small (b) 50 nodes}

    \includegraphics[height=0.40\textheight]{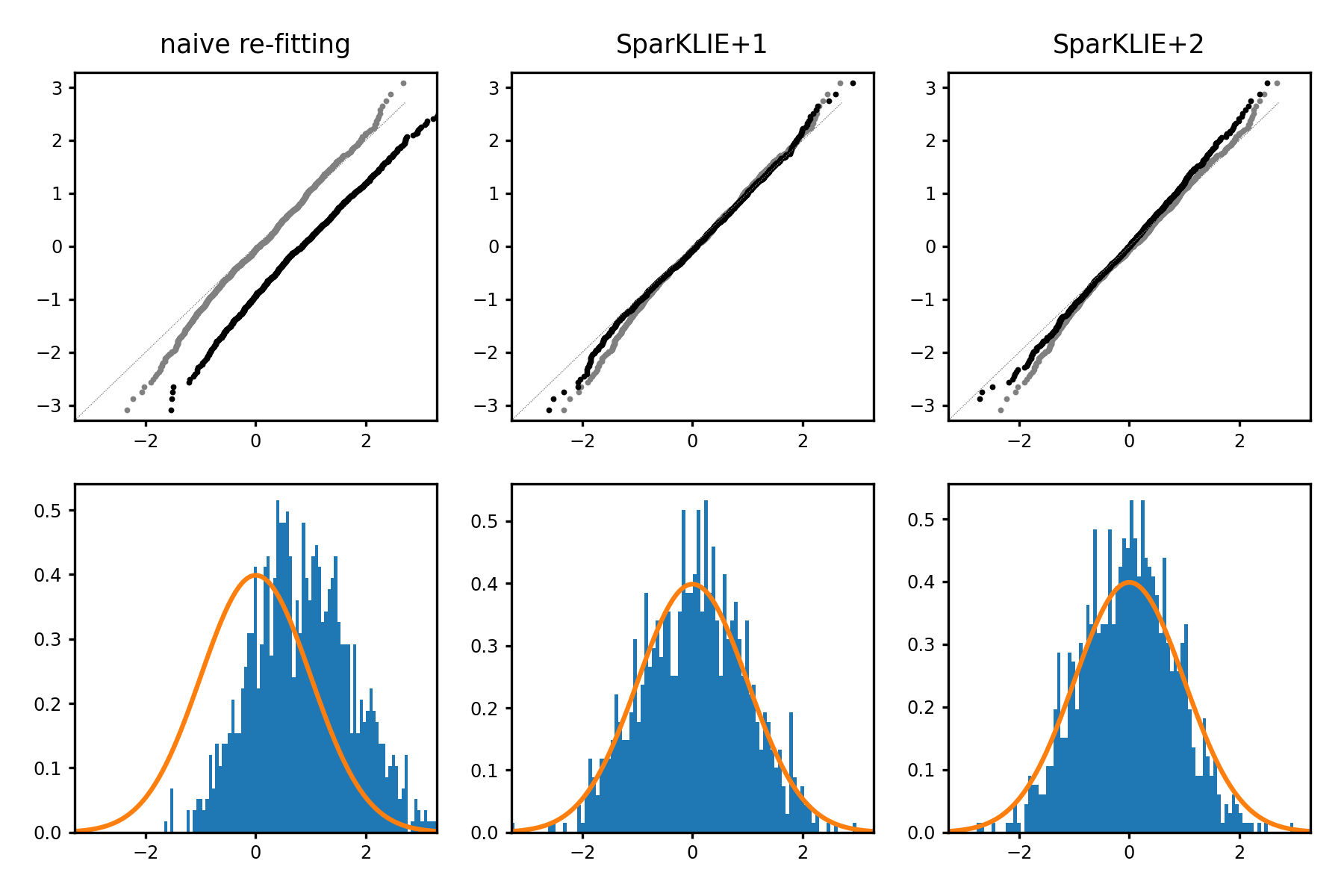}
    \end{minipage}
\end{figure}

\begin{figure}[p]
    \caption{\label{fig:qqplots_tree2}\textbf{The quality of Gaussian approximation for Tree 2 pair.}
    We plot the distributions of the na\"{i}ve re-fitted (left), the SparKLIE+1 (middle), and the SparKLIE+2 (right) estimators after studentization (i.e., standardizing by the standard error estimate \eqref{eq:varest}), first as a Normal Q-Q plot (top) and then as a histogram (bottom). In each Q-Q plot, the distribution of the oracle estimator after studentization (gray dots) is also provided for easy comparison. The orange curves in each histogram represents the density of $\Ncal(0,1)$ and is provided for reference.}
    \centering
    
    \vspace{\baselineskip}

    \begin{minipage}[b]{\linewidth}
    \centering
    {\small (a) 25 nodes}

    \includegraphics[height=0.40\textheight]{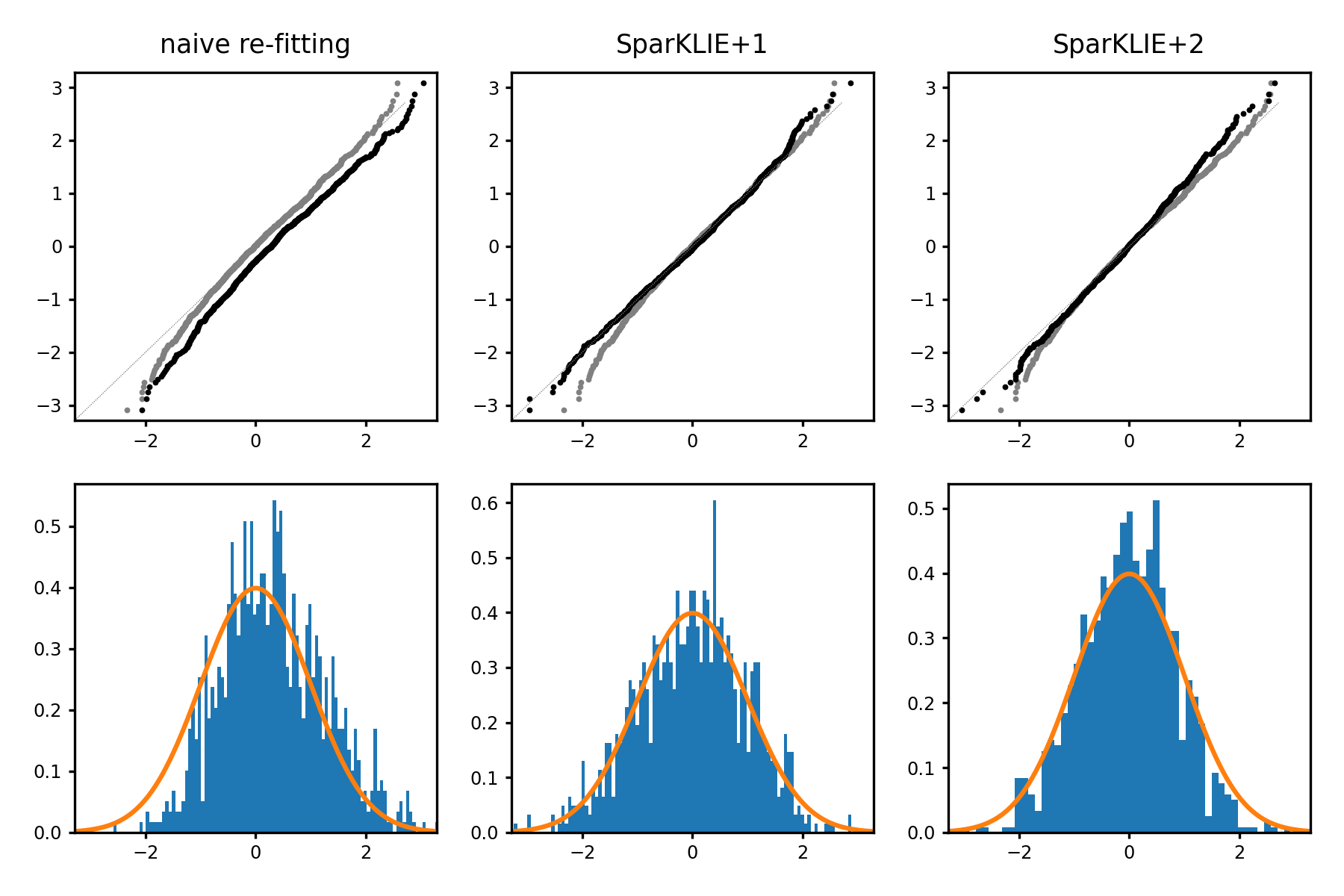}
    \end{minipage}
    
    \vskip\baselineskip

    \begin{minipage}[b]{\linewidth}
    \centering
    {\small (b) 50 nodes}

    \includegraphics[height=0.40\textheight]{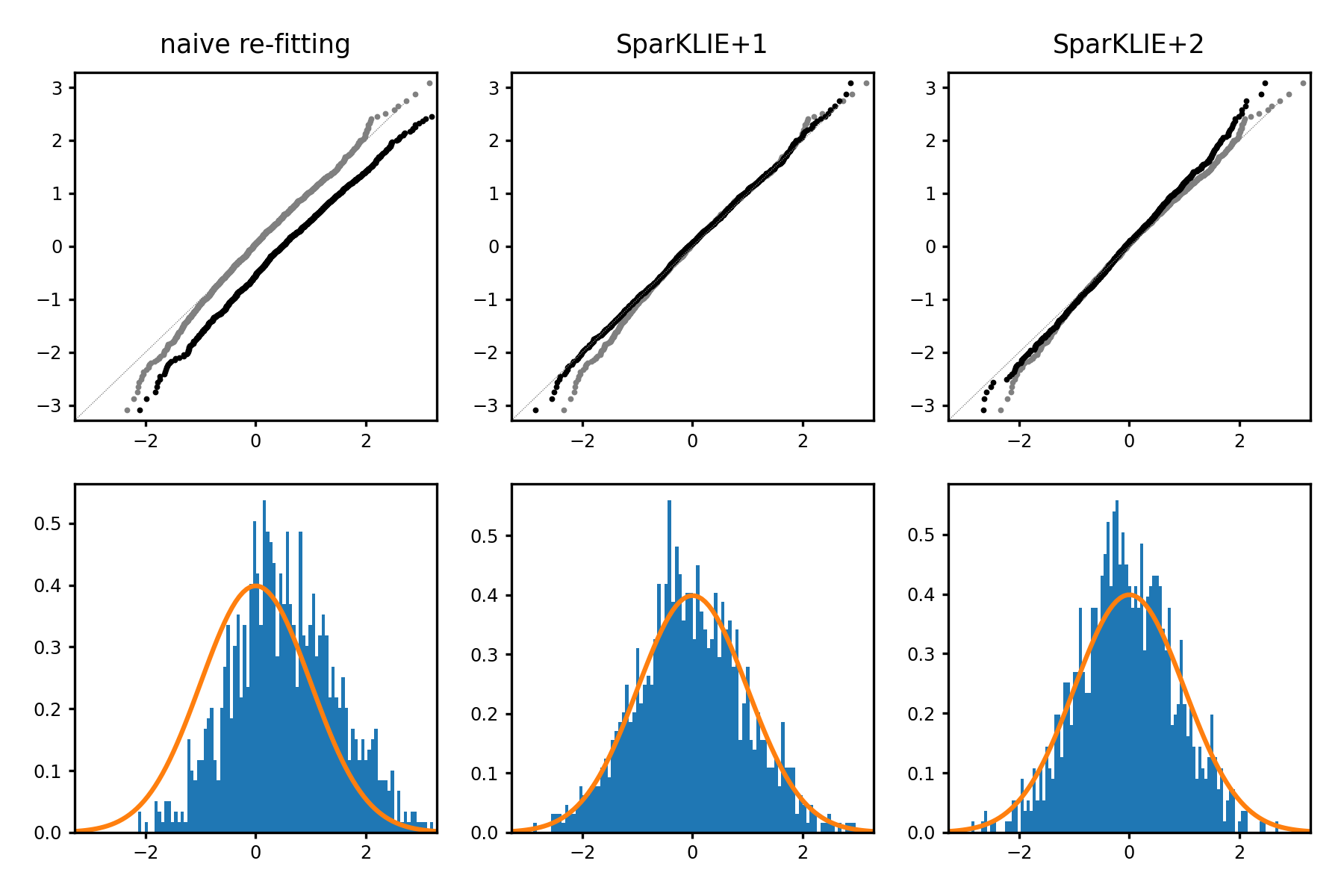}
    \end{minipage}
\end{figure}

\begin{table}
\caption{\textbf{Empirical bias $\times 10^2$.} We estimate the bias $\EE[\hat\theta_k - \theta_k^*] \times 10^2$, where $\hat\theta_k$ is either the oracle, the na\"{i}ve re-fitted, the SparKLIE+1, or the SparKLIE+2 estimator. The results are averages over 1000 independent replications.}
\label{table:bias}
\centering
\fbox{%
\begin{tabular}{c *{4}{c} *{4}{r}}
\multirow{2}{*}{$\gammab_x$} & \multirow{2}{*}{$\gammab_y$} & \multirow{2}{*}{$m$} & \multirow{2}{*}{$n_x$} & \multirow{2}{*}{$n_y$} & \multirow{2}{*}{oracle} & \multirow{2}{*}{na\"{i}ve} & \multicolumn{2}{c}{\scriptsize SparKLIE}\\
& & & & & & & {\scriptsize +1} & {\scriptsize +2}\\
\hline
\multirow{4}{*}{chain}
& \multirow{2}{*}{(1)} & 25 & 150 & 300 & -0.505 & 8.033 & -1.894 & -0.621\\
& & 50 & 300 & 600 & -0.360 & 7.692 & -2.301 & -1.673\\
& \multirow{2}{*}{(2)} & 25 & 150 & 300 & -0.819 & 6.920 & 0.526 & -1.013\\
& & 50 & 300 & 600 & -0.039 & 7.636 & 1.516 & -0.369\\
\multirow{4}{*}{\shortstack{ternary \\ tree}}
& \multirow{2}{*}{(1)} & 25 & 150 & 300 & -1.763 & 6.698 & -2.323 & -4.143\\
& & 50 & 300 & 600 & 0.256 & 8.975 & 0.875 & -0.539\\
& \multirow{2}{*}{(2)} & 25 & 150 & 300 & -0.770 & 3.803 & 1.168 & -0.587\\
& & 50 & 300 & 600 & -0.611 & 5.306 & -0.248 & -0.826
\end{tabular}}
\end{table}
\FloatBarrier

\section{Additional experiments}

\subsection{\label{supp:exper2}Experiment~2: Power of the normal-theory based test}

We study the power of the normal-theory based test with SparKLIE+1 and +2 estimators.
The parameters for this experiment were generated by first fixing $\gammab_y$ at the $\gammab_y$ of the 25-node Chain~1 pair from Experiment~1, and then obtaining 124 distinct graphs for $\gammab_x$ by varying the value of the change of interest over a grid $\delta = -0.75, -0.60, \dots, 0.75$ in one of the four settings described below:

\vskip\baselineskip

\begin{enumerate}[nosep, label={\it Setting \arabic*.}, wide=0pt]

\item (\textsc{None}) the edge of interest is the only edge that changes from $\gammab_y$ to $\gammab_x$,

\item (\textsc{Strong}) there are two additional strong changes of magnitude $0.4$,

\item (\textsc{Weak}) there are two additional weak changes of magnitude $0.2$, or

\item (\textsc{Mixed}) there are both weak and strong changes.

\end{enumerate}

\vskip\baselineskip

\noindent See Figures~\ref{fig:power_none} -- \ref{fig:power_mixed} for illustration.

\begin{figure}[H]
    \caption{\label{fig:power_none}{\bfseries{\scshape None}, realized edge weights.} $\gammab_y$ is the same as the $\gammab_y$ of Chain 1 pair. $\gammab_x$ is obtained from $\gammab_y$ by applying the change $\delta$ to the target edge marked in red.}
    \centering
    
    \vspace{\baselineskip}
    
    \begin{minipage}[b]{\linewidth}
    \centering
    {\small (a) $\gammab_x$}
    
    \begin{tikzpicture}[shorten >=1pt]
    \foreach \x in {1, 2, ..., 10} \node[vertex, label=below:$\scriptstyle\x$] (G-\x) at (1.25*\x,0) {};
    \foreach \x in {1, 2, 3} \node[dotdotdot] (E-\x) at (13.75+0.2*\x,0) {};
    \draw (G-1) -- node[anchor=south] {$\scriptscriptstyle-0.54$} (G-2);
    \draw (G-2) -- node[anchor=south] {$\scriptscriptstyle-0.85$} (G-3);
    \draw (G-3) -- node[anchor=south] {$\scriptscriptstyle 0.74$} (G-4);
    \draw (G-4) -- node[anchor=south] {$\scriptscriptstyle 0.16$} (G-5);
    \draw[red] (G-5) -- node[anchor=south] {$\scriptscriptstyle 0.14 + \theta_k^*$} (G-6);
    \draw (G-6) -- node[anchor=south] {$\scriptscriptstyle 0.30$} (G-7);
    \draw (G-7) -- node[anchor=south] {$\scriptscriptstyle 0.11$} (G-8);
    \draw (G-8) -- node[anchor=south] {$\scriptscriptstyle 0.35$} (G-9);
    \draw (G-9) -- node[anchor=south] {$\scriptscriptstyle 0.09$} (G-10);
    \draw (G-10) -- node[anchor=south] {$\scriptscriptstyle 0.20$} (13.75,0);
    \path (G-4) edge[bend left = 75] node[anchor=south] {$\scriptscriptstyle-0.2$} (G-6);
    \path (G-5) edge[bend right = 75] node[anchor=south] {$\scriptscriptstyle-0.2$} (G-7);
    \end{tikzpicture}
    \end{minipage}
    
    \vspace{\baselineskip}
    
    \begin{minipage}[b]{\linewidth}
    \centering
    {\small (b) $\gammab_y$}
    
    \begin{tikzpicture}[shorten >=1pt]
    \foreach \x in {1, 2, ..., 10} \node[vertex, label=below:$\scriptstyle\x$] (G-\x) at (1.25*\x,0) {};
    \foreach \x in {1, 2, 3} \node[dotdotdot] (E-\x) at (13.75+0.2*\x,0) {};
    \draw (G-1) -- node[anchor=south] {$\scriptscriptstyle-0.54$} (G-2);
    \draw (G-2) -- node[anchor=south] {$\scriptscriptstyle-0.85$} (G-3);
    \draw (G-3) -- node[anchor=south] {$\scriptscriptstyle 0.74$} (G-4);
    \draw (G-4) -- node[anchor=south] {$\scriptscriptstyle 0.16$} (G-5);
    \draw[red] (G-5) -- node[anchor=south] {$\scriptscriptstyle 0.14$} (G-6);
    \draw (G-6) -- node[anchor=south] {$\scriptscriptstyle 0.30$} (G-7);
    \draw (G-7) -- node[anchor=south] {$\scriptscriptstyle 0.11$} (G-8);
    \draw (G-8) -- node[anchor=south] {$\scriptscriptstyle 0.35$} (G-9);
    \draw (G-9) -- node[anchor=south] {$\scriptscriptstyle 0.09$} (G-10);
    \draw (G-10) -- node[anchor=south] {$\scriptscriptstyle 0.20$} (13.75,0);
    \path (G-4) edge[bend left = 75] node[anchor=south] {$\scriptscriptstyle-0.2$} (G-6);
    \path (G-5) edge[bend right = 75] node[below] {$\scriptscriptstyle-0.2$} (G-7);
    \end{tikzpicture}
    \end{minipage}
    
    \begin{minipage}[t][0.1\textheight][c]{\linewidth}
    \centering
    {\small (c) difference}
    
    \vspace{\baselineskip}
    
    \begin{tikzpicture}[shorten >=1pt]
    \foreach \x in {1, 2, ..., 10} \node[vertex, label=below:$\scriptstyle\x$] (G-\x) at (1.25*\x,0) {};
    \foreach \x in {1, 2, 3} \node[dotdotdot] (E-\x) at (13.75+0.2*\x,0) {};
    \path[red] (G-5) edge node[anchor=south] {$\scriptscriptstyle \theta_k^*$} (G-6);
    \end{tikzpicture}
    \end{minipage}
\end{figure}
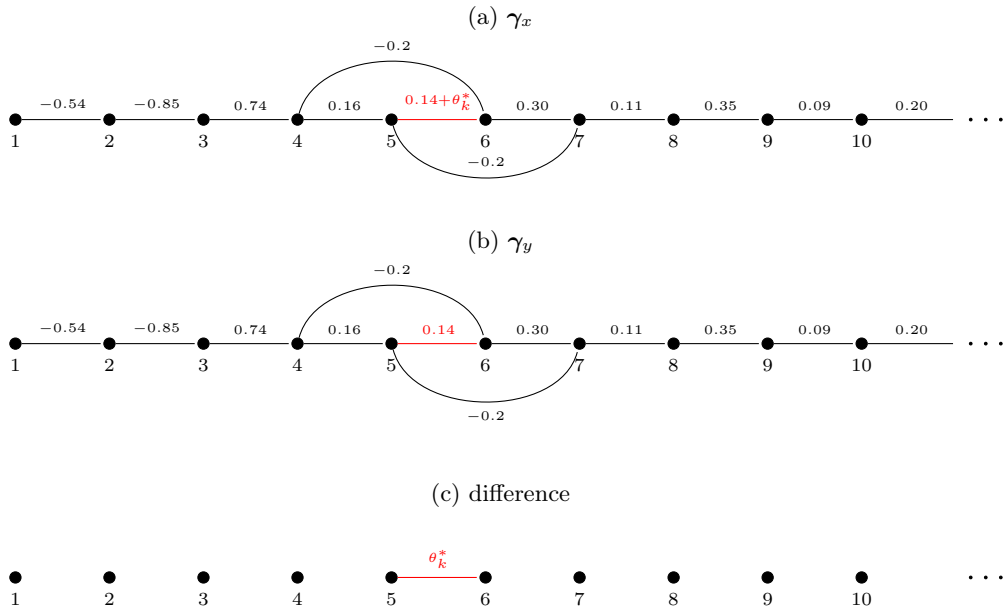

\begin{figure}[tp]
    \caption{\label{fig:power_strong}{\bfseries{\scshape Strong}, realized edge weights.} $\gammab_y$ is the same as the $\gammab_y$ of Chain 1 pair. $\gammab_x$ is obtained from $\gammab_y$ by applying the change $\delta$ to the target edge marked in red, as well as changes of magnitude $0.4$ to two neighboring edges.}
    \centering
    
    \vspace{\baselineskip}
    
    \begin{minipage}[b]{\linewidth}
    \centering
    {\small (a) $\gammab_x$}
    
    \begin{tikzpicture}[shorten >=1pt]
    \foreach \x in {1, 2, ..., 10} \node[vertex, label=below:$\scriptstyle\x$] (G-\x) at (1.25*\x,0) {};
    \foreach \x in {1, 2, 3} \node[dotdotdot] (E-\x) at (13.75+0.2*\x,0) {};
    \draw (G-1) -- node[anchor=south] {$\scriptscriptstyle-0.54$} (G-2);
    \draw (G-2) -- node[anchor=south] {$\scriptscriptstyle-0.85$} (G-3);
    \draw (G-3) -- node[anchor=south] {$\scriptscriptstyle 0.74$} (G-4);
    \draw (G-4) -- node[anchor=south] {$\scriptscriptstyle 0.56$} (G-5);
    \draw[red] (G-5) -- node[anchor=south] {$\scriptscriptstyle 0.14 + \theta_k^*$} (G-6);
    \draw (G-6) -- node[anchor=south] {$\scriptscriptstyle-0.10$} (G-7);
    \draw (G-7) -- node[anchor=south] {$\scriptscriptstyle 0.11$} (G-8);
    \draw (G-8) -- node[anchor=south] {$\scriptscriptstyle 0.35$} (G-9);
    \draw (G-9) -- node[anchor=south] {$\scriptscriptstyle 0.09$} (G-10);
    \draw (G-10) -- node[anchor=south] {$\scriptscriptstyle 0.20$} (13.75,0);
    \path (G-4) edge[bend left = 75] node[anchor=south] {$\scriptscriptstyle-0.2$} (G-6);
    \path (G-5) edge[bend right = 75] node[below] {$\scriptscriptstyle-0.2$} (G-7);
    \end{tikzpicture}
    \end{minipage}
    
    \vspace{\baselineskip}
    
    \begin{minipage}[b]{\linewidth}
    \centering
    {\small (b) $\gammab_y$}
    
    \begin{tikzpicture}[shorten >=1pt]
    \foreach \x in {1, 2, ..., 10} \node[vertex, label=below:$\scriptstyle\x$] (G-\x) at (1.25*\x,0) {};
    \foreach \x in {1, 2, 3} \node[dotdotdot] (E-\x) at (13.75+0.2*\x,0) {};
    \draw (G-1) -- node[anchor=south] {$\scriptscriptstyle-0.54$} (G-2);
    \draw (G-2) -- node[anchor=south] {$\scriptscriptstyle-0.85$} (G-3);
    \draw (G-3) -- node[anchor=south] {$\scriptscriptstyle 0.74$} (G-4);
    \draw (G-4) -- node[anchor=south] {$\scriptscriptstyle 0.16$} (G-5);
    \draw[red] (G-5) -- node[anchor=south] {$\scriptscriptstyle 0.14$} (G-6);
    \draw (G-6) -- node[anchor=south] {$\scriptscriptstyle 0.30$} (G-7);
    \draw (G-7) -- node[anchor=south] {$\scriptscriptstyle 0.11$} (G-8);
    \draw (G-8) -- node[anchor=south] {$\scriptscriptstyle 0.35$} (G-9);
    \draw (G-9) -- node[anchor=south] {$\scriptscriptstyle 0.09$} (G-10);
    \draw (G-10) -- node[anchor=south] {$\scriptscriptstyle 0.20$} (13.75,0);
    \path (G-4) edge[bend left = 75] node[anchor=south] {$\scriptscriptstyle-0.2$} (G-6);
    \path (G-5) edge[bend right = 75] node[below] {$\scriptscriptstyle-0.2$} (G-7);
    \end{tikzpicture}
    \end{minipage}
    
    \begin{minipage}[t][0.1\textheight][c]{\linewidth}
    \centering
    {\small (c) difference}
    
    \vspace{\baselineskip}
    
    \begin{tikzpicture}[shorten >=1pt]
    \foreach \x in {1, 2, ..., 10} \node[vertex, label=below:$\scriptstyle\x$] (G-\x) at (1.25*\x,0) {};
    \foreach \x in {1, 2, 3} \node[dotdotdot] (E-\x) at (13.75+0.2*\x,0) {};
    \draw (G-4) -- node[anchor=south] {$\scriptscriptstyle 0.4$} (G-5);
    \draw (G-6) -- node[anchor=south] {$\scriptscriptstyle-0.4$} (G-7);
    \draw[red] (G-5) -- node[anchor=south] {$\scriptscriptstyle \theta_k^*$} (G-6);
    \end{tikzpicture}
    \end{minipage}
\end{figure}

\begin{figure}[bp]
    \caption{\label{fig:power_weak}{\bfseries{\scshape Weak}, realized edge weights.} $\gammab_y$ is the same as the $\gammab_y$ of Chain 1 pair. $\gammab_x$ is obtained from $\gammab_y$ by applying the change $\delta$ to the target edge marked in red, as well as changes of magnitude $0.2$ to two neighboring edges.}
    \centering
    
    \vspace{\baselineskip}
    
    \begin{minipage}[t][0.1\textheight][c]{\linewidth}
    \centering
    {\small (a) $\gammab_x$}
    
    \vspace{\baselineskip}
    
    \begin{tikzpicture}[shorten >=1pt]
    \foreach \x in {1, 2, ..., 10} \node[vertex, label=below:$\scriptstyle\x$] (G-\x) at (1.25*\x,0) {};
    \foreach \x in {1, 2, 3} \node[dotdotdot] (E-\x) at (13.75+0.2*\x,0) {};
    \draw (G-1) -- node[anchor=south] {$\scriptscriptstyle-0.54$} (G-2);
    \draw (G-2) -- node[anchor=south] {$\scriptscriptstyle-0.85$} (G-3);
    \draw (G-3) -- node[anchor=south] {$\scriptscriptstyle 0.74$} (G-4);
    \draw (G-4) -- node[anchor=south] {$\scriptscriptstyle 0.16$} (G-5);
    \draw[red] (G-5) -- node[anchor=south] {$\scriptscriptstyle 0.14 + \theta_k^*$} (G-6);
    \draw (G-6) -- node[anchor=south] {$\scriptscriptstyle 0.30$} (G-7);
    \draw (G-7) -- node[anchor=south] {$\scriptscriptstyle 0.11$} (G-8);
    \draw (G-8) -- node[anchor=south] {$\scriptscriptstyle 0.35$} (G-9);
    \draw (G-9) -- node[anchor=south] {$\scriptscriptstyle 0.09$} (G-10);
    \draw (G-10) -- node[anchor=south] {$\scriptscriptstyle 0.20$} (13.75,0);
    \end{tikzpicture}
    \end{minipage}
    
    \vspace{\baselineskip}
    
    \begin{minipage}[b]{\linewidth}
    \centering
    {\small (b) $\gammab_y$}
    
    \begin{tikzpicture}[shorten >=1pt]
    \foreach \x in {1, 2, ..., 10} \node[vertex, label=below:$\scriptstyle\x$] (G-\x) at (1.25*\x,0) {};
    \foreach \x in {1, 2, 3} \node[dotdotdot] (E-\x) at (13.75+0.2*\x,0) {};
    \draw (G-1) -- node[anchor=south] {$\scriptscriptstyle-0.54$} (G-2);
    \draw (G-2) -- node[anchor=south] {$\scriptscriptstyle-0.85$} (G-3);
    \draw (G-3) -- node[anchor=south] {$\scriptscriptstyle 0.74$} (G-4);
    \draw (G-4) -- node[anchor=south] {$\scriptscriptstyle 0.16$} (G-5);
    \draw[red] (G-5) -- node[anchor=south] {$\scriptscriptstyle 0.14$} (G-6);
    \draw (G-6) -- node[anchor=south] {$\scriptscriptstyle 0.30$} (G-7);
    \draw (G-7) -- node[anchor=south] {$\scriptscriptstyle 0.11$} (G-8);
    \draw (G-8) -- node[anchor=south] {$\scriptscriptstyle 0.35$} (G-9);
    \draw (G-9) -- node[anchor=south] {$\scriptscriptstyle 0.09$} (G-10);
    \draw (G-10) -- node[anchor=south] {$\scriptscriptstyle 0.20$} (13.75,0);
    \path (G-4) edge[bend left = 75] node[anchor=south] {$\scriptscriptstyle-0.2$} (G-6);
    \path (G-5) edge[bend right = 75] node[below] {$\scriptscriptstyle-0.2$} (G-7);
    \end{tikzpicture}
    \end{minipage}
    
    \vspace{\baselineskip}
    
    \begin{minipage}[b]{\linewidth}
    \centering
    {\small (c) difference}
    
    \begin{tikzpicture}[shorten >=1pt]
    \foreach \x in {1, 2, ..., 10} \node[vertex, label=below:$\scriptstyle\x$] (G-\x) at (1.25*\x,0) {};
    \foreach \x in {1, 2, 3} \node[dotdotdot] (E-\x) at (13.75+0.2*\x,0) {};
    \draw[red] (G-5) -- node[anchor=south] {$\scriptscriptstyle \theta_k^*$} (G-6);
    \path (G-4) edge[bend left = 75] node[anchor=south] {$\scriptscriptstyle 0.2$} (G-6);
    \path (G-5) edge[bend right = 75] node[below] {$\scriptscriptstyle 0.2$} (G-7);
    \end{tikzpicture}
    \end{minipage}
\end{figure}

\begin{figure}[tp]
    \caption{\label{fig:power_mixed}{\bfseries{\scshape Mixed}, realized edge weights.} $\gammab_y$ is the same as the $\gammab_y$ of Chain 1 pair. $\gammab_x$ is obtained from $\gammab_y$ by applying the change $\delta$ to the target edge marked in red, as well as both types of nuisance changes in \textsc{Strong} and \textsc{Weak}.}
    \centering
    
    \vspace{\baselineskip}
    
    \begin{minipage}[t][0.1\textheight][c]{\linewidth}
    \centering
    {\small (a) $\gammab_x$}
    
    \vspace{\baselineskip}
    
    \begin{tikzpicture}[shorten >=1pt]
    \foreach \x in {1, 2, ..., 10} \node[vertex, label=below:$\scriptstyle\x$] (G-\x) at (1.25*\x,0) {};
    \foreach \x in {1, 2, 3} \node[dotdotdot] (E-\x) at (13.75+0.2*\x,0) {};
    \draw (G-1) -- node[anchor=south] {$\scriptscriptstyle-0.54$} (G-2);
    \draw (G-2) -- node[anchor=south] {$\scriptscriptstyle-0.85$} (G-3);
    \draw (G-3) -- node[anchor=south] {$\scriptscriptstyle 0.74$} (G-4);
    \draw (G-4) -- node[anchor=south] {$\scriptscriptstyle 0.56$} (G-5);
    \draw[red] (G-5) -- node[anchor=south] {$\scriptscriptstyle 0.14 + \theta_k^*$} (G-6);
    \draw (G-6) -- node[anchor=south] {$\scriptscriptstyle-0.10$} (G-7);
    \draw (G-7) -- node[anchor=south] {$\scriptscriptstyle 0.11$} (G-8);
    \draw (G-8) -- node[anchor=south] {$\scriptscriptstyle 0.35$} (G-9);
    \draw (G-9) -- node[anchor=south] {$\scriptscriptstyle 0.09$} (G-10);
    \draw (G-10) -- node[anchor=south] {$\scriptscriptstyle 0.20$} (13.75,0);
    \end{tikzpicture}
    \end{minipage}
    
    \vspace{\baselineskip}
    
    \begin{minipage}[b]{\linewidth}
    \centering
    {\small (b) $\gammab_y$}
    
    \begin{tikzpicture}[shorten >=1pt]
    \foreach \x in {1, 2, ..., 10} \node[vertex, label=below:$\scriptstyle\x$] (G-\x) at (1.25*\x,0) {};
    \foreach \x in {1, 2, 3} \node[dotdotdot] (E-\x) at (13.75+0.2*\x,0) {};
    \draw (G-1) -- node[anchor=south] {$\scriptscriptstyle-0.54$} (G-2);
    \draw (G-2) -- node[anchor=south] {$\scriptscriptstyle-0.85$} (G-3);
    \draw (G-3) -- node[anchor=south] {$\scriptscriptstyle 0.74$} (G-4);
    \draw (G-4) -- node[anchor=south] {$\scriptscriptstyle 0.16$} (G-5);
    \draw[red] (G-5) -- node[anchor=south] {$\scriptscriptstyle 0.14$} (G-6);
    \draw (G-6) -- node[anchor=south] {$\scriptscriptstyle 0.30$} (G-7);
    \draw (G-7) -- node[anchor=south] {$\scriptscriptstyle 0.11$} (G-8);
    \draw (G-8) -- node[anchor=south] {$\scriptscriptstyle 0.35$} (G-9);
    \draw (G-9) -- node[anchor=south] {$\scriptscriptstyle 0.09$} (G-10);
    \draw (G-10) -- node[anchor=south] {$\scriptscriptstyle 0.20$} (13.75,0);
    \path (G-4) edge[bend left = 75] node[anchor=south] {$\scriptscriptstyle-0.2$} (G-6);
    \path (G-5) edge[bend right = 75] node[below] {$\scriptscriptstyle-0.2$} (G-7);
    \end{tikzpicture}
    \end{minipage}
    
    \vspace{\baselineskip}
    
    \begin{minipage}[b]{\linewidth}
    \centering
    {\small (c) difference}
    
    \begin{tikzpicture}[shorten >=1pt]
    \foreach \x in {1, 2, ..., 10} \node[vertex, label=below:$\scriptstyle\x$] (G-\x) at (1.25*\x,0) {};
    \foreach \x in {1, 2, 3} \node[dotdotdot] (E-\x) at (13.75+0.2*\x,0) {};
    \draw (G-4) -- node[anchor=south] {$\scriptscriptstyle 0.4$} (G-5);
    \draw[red] (G-5) -- node[anchor=south] {$\scriptscriptstyle \theta_k^*$} (G-6);
    \draw (G-6) -- node[anchor=south] {$\scriptscriptstyle-0.4$} (G-7);
    \path (G-4) edge[bend left = 75] node[anchor=south] {$\scriptscriptstyle 0.2$} (G-6);
    \path (G-5) edge[bend right = 75] node[below] {$\scriptscriptstyle 0.2$} (G-7);
    \end{tikzpicture}
    \end{minipage}
\end{figure}
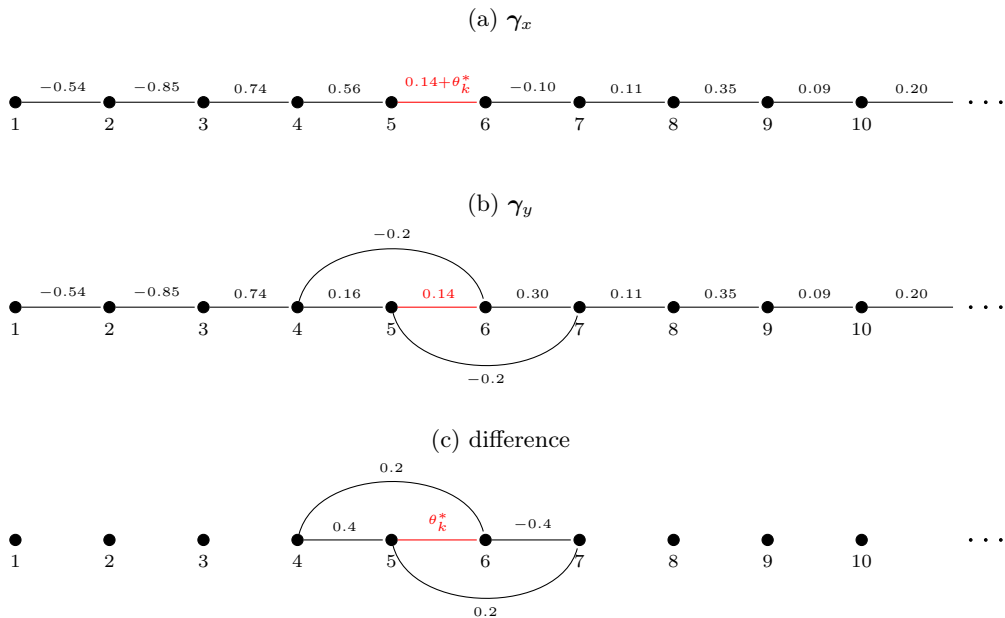
\FloatBarrier

We expect \textsc{None} and \textsc{Strong} to be easy in the sense that all four estimators are projected to perform equally well. By contrast, \textsc{Weak} and \textsc{Mixed} represent hard problems for the na\"{i}ve re-estimation procedure.

Figure \ref{fig:power} gives a summary of the results. The power is estimated as the proportion of rejections out of 1000 independent replications at level $0.05$. As in Experiment~1, both SparKLIE+ estimators behave similarly.

\begin{figure}
    \caption{\label{fig:power}\textbf{Power of the test $|\hat\theta_k| / \hat\sigma_k > z_{0.975} / \sqrt{n}$ for the hypothesis $\Hcal_0: \theta_k^* = 0$.} Here, $\hat\theta_k$ is either the SparKLIE+1 or the SparKLIE+2 estimator, $\hat\sigma_k$ is the estimator of the standard error from \eqref{eq:varest}, and $z_{0.975}$ is the 0.975-quantile of $\Ncal(0,1)$. The blue line with $\bullet$ indicates SparKLIE+1. The orange line with $\blacktriangledown$ indicates SparKLIE+2.}
    \centering
    
    \vspace{\baselineskip}

    \begin{minipage}[b]{0.48\linewidth}
    \centering
    {\small (a) \textsc{None}}

    \includegraphics[width=\linewidth]{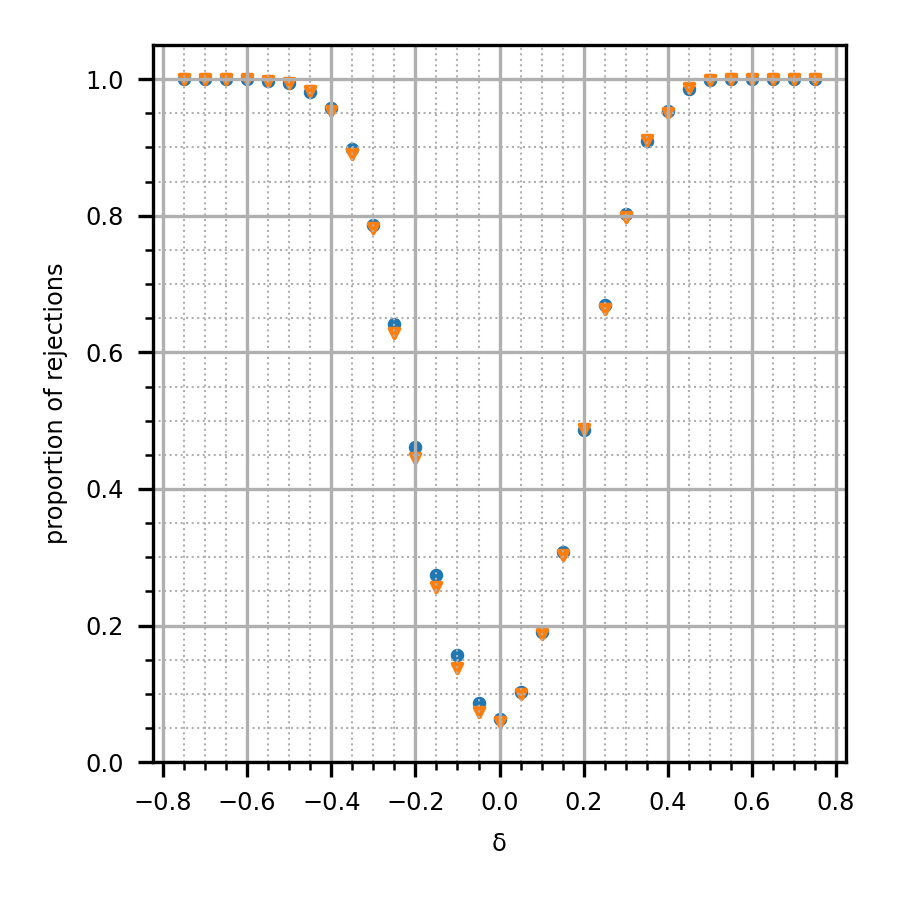}
    \end{minipage}
    \hfill
    \begin{minipage}[b]{0.48\linewidth}
    \centering
    {\small (b) \textsc{Strong}}

    \includegraphics[width=\linewidth]{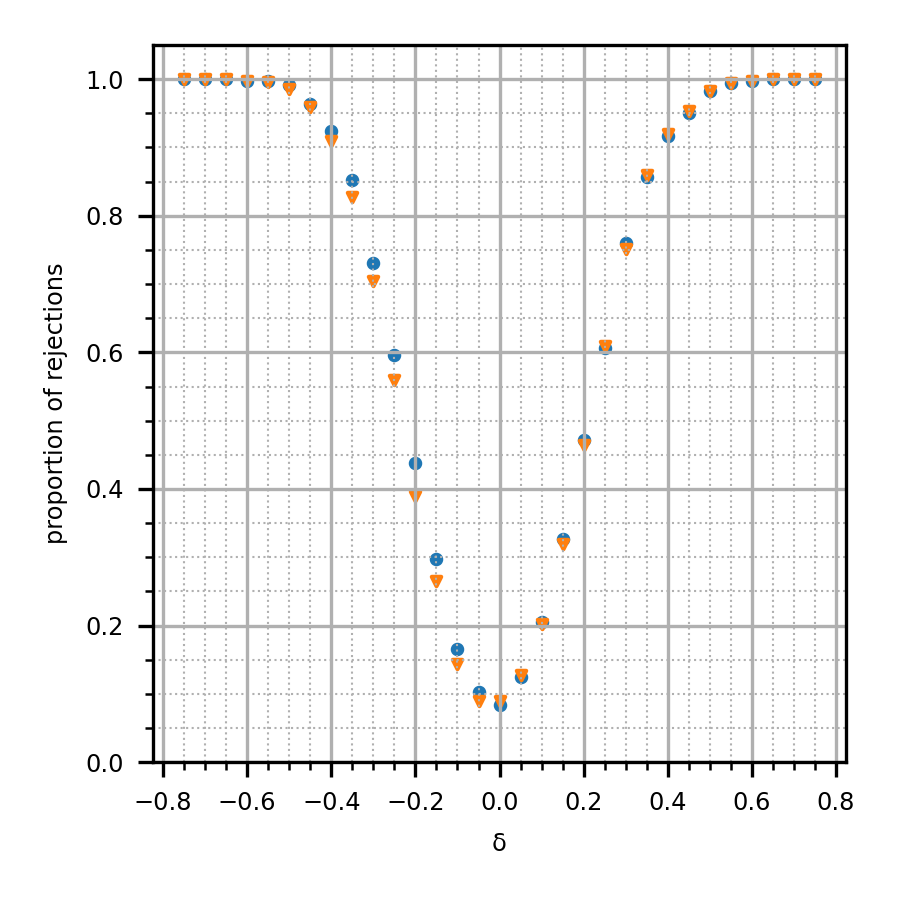}
    \end{minipage}

    \begin{minipage}[b]{0.48\linewidth}
    \centering
    {\small (c) \textsc{Weak}}

    \includegraphics[width=\linewidth]{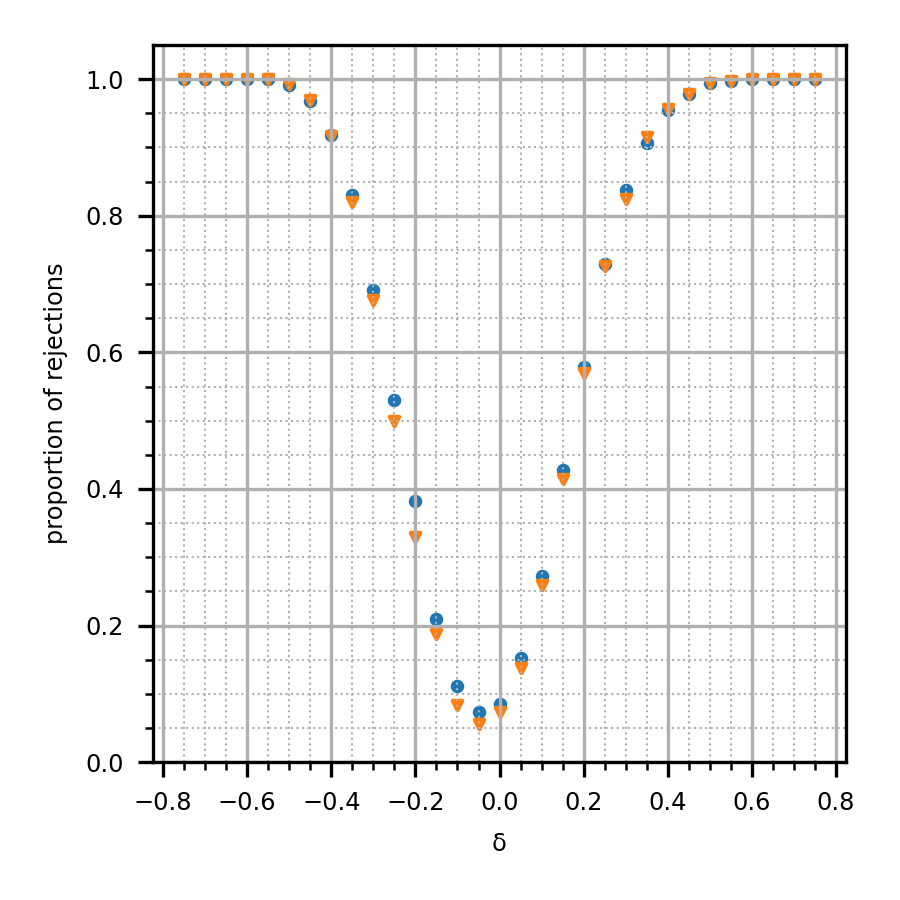}
    \end{minipage}
    \hfill
    \begin{minipage}[b]{0.48\linewidth}
    \centering
    {\small (d) \textsc{Mixed}}

    \includegraphics[width=\linewidth]{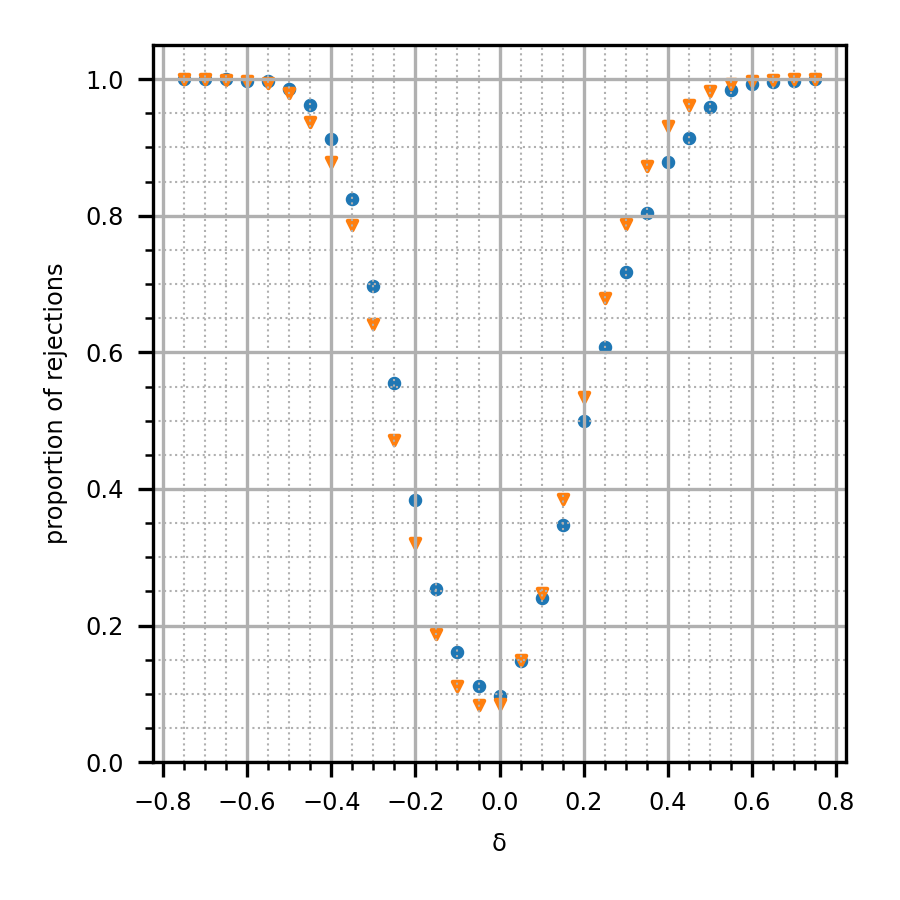}
    \end{minipage}
\end{figure}
\FloatBarrier

\subsection{\label{supp:exper4}Experiment~4: Power of the empirical bootstrap test}

We look at the power of the empirical bootstrap test as a function of the number of the changes and their magnitudes. For each $m \in \{25, 500, 100\}$, we fix $\gammab_x$ at the $\gammab_x$ from Experiment~3, and then modify $\gammab_x$ to obtain $\gammab_y$. This was done by first picking $s_\theta \in \cbr{1, 3, 5}$ edges uniformly at random from the set of all possible edges, next drawing $\delta \sim \Unif(l, l+0.1)$ for $l \in \cbr{0,.05,.10, \dots, .50}$ for each edge in the difference graph independently of everything else, and finally subtracting the chosen $\delta$'s from $\gammab_x$.

Here, we focused on bootstrapping SparKLIE+2 only. Also, we considered the studentized version $W = \max_k \sqrt{n} \, |\hat\theta_k - \theta_k^*| /\hat\sigma_k$, where $\hat\sigma_k$ is the estimate of the standard error \eqref{eq:varest}. $\hat c_{W, \alpha}$ refers to the estimated $(1-\alpha)$-quantile of $W$ (see Appendix~\ref{supp:implementation:studentization}).

The results are summarized in Figure~\ref{fig:power_boot} at level $0.05$. In the plots, the label ``unnormalized" refers to the testing procedure using the unnormalized statistics $T$, and the label ``normalized", to the studentized version $W$. There is a moderate gain in power when the latter is used.

\begin{figure}
    \caption{\label{fig:power_boot}\textbf{Power of the empirical bootstrap test for the global hypothesis $\Hcal_0: \thetab^* = \zero$.} We plot the power curves for $m = 25, 50, 100$ and the number of changes $= 1, 3, 5$ using two different test statistics. The three panels on the left correspond to the test $\max_k |\hat\theta_k| > \hat c_{T, 1-\alpha} / \sqrt{n}$. The three panels on the right correspond to the \emph{studentized} test $\max_k |\hat\theta_k| / \hat\sigma_k > \hat c_{W, 1-\alpha} / \sqrt{n}$. The blue $\bullet$'s correspond to the case of the difference graph with 1 change; the orange $+$'s, 3 changes; and the green x's, 5 changes.}
    \centering
    
    \vspace{\baselineskip}

    \begin{minipage}[b]{\linewidth}
    \centering
    {\small (a) 25 nodes}
    
    \hspace{1cm} \includegraphics[height=0.27\textheight]{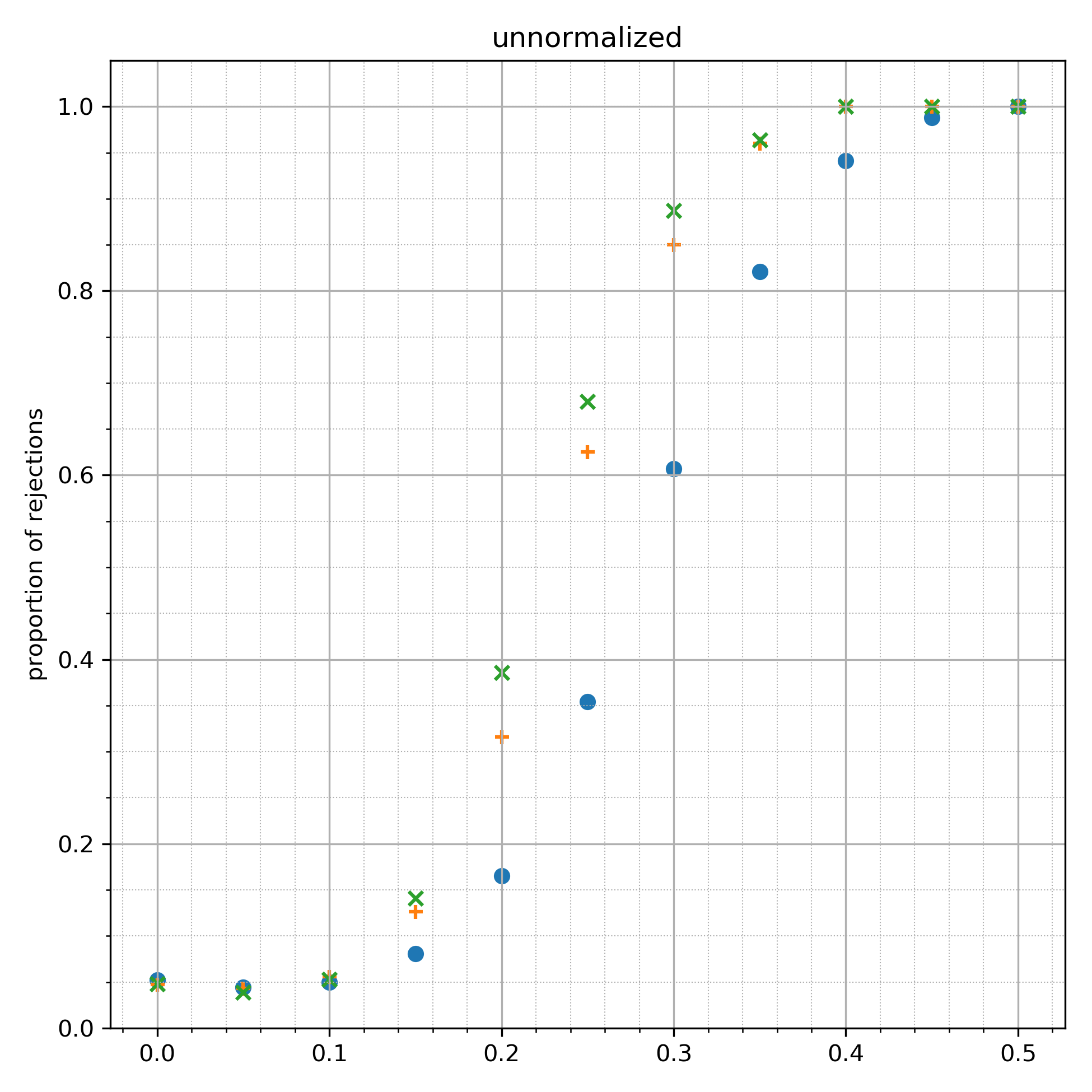} \hfill \includegraphics[height=0.27\textheight]{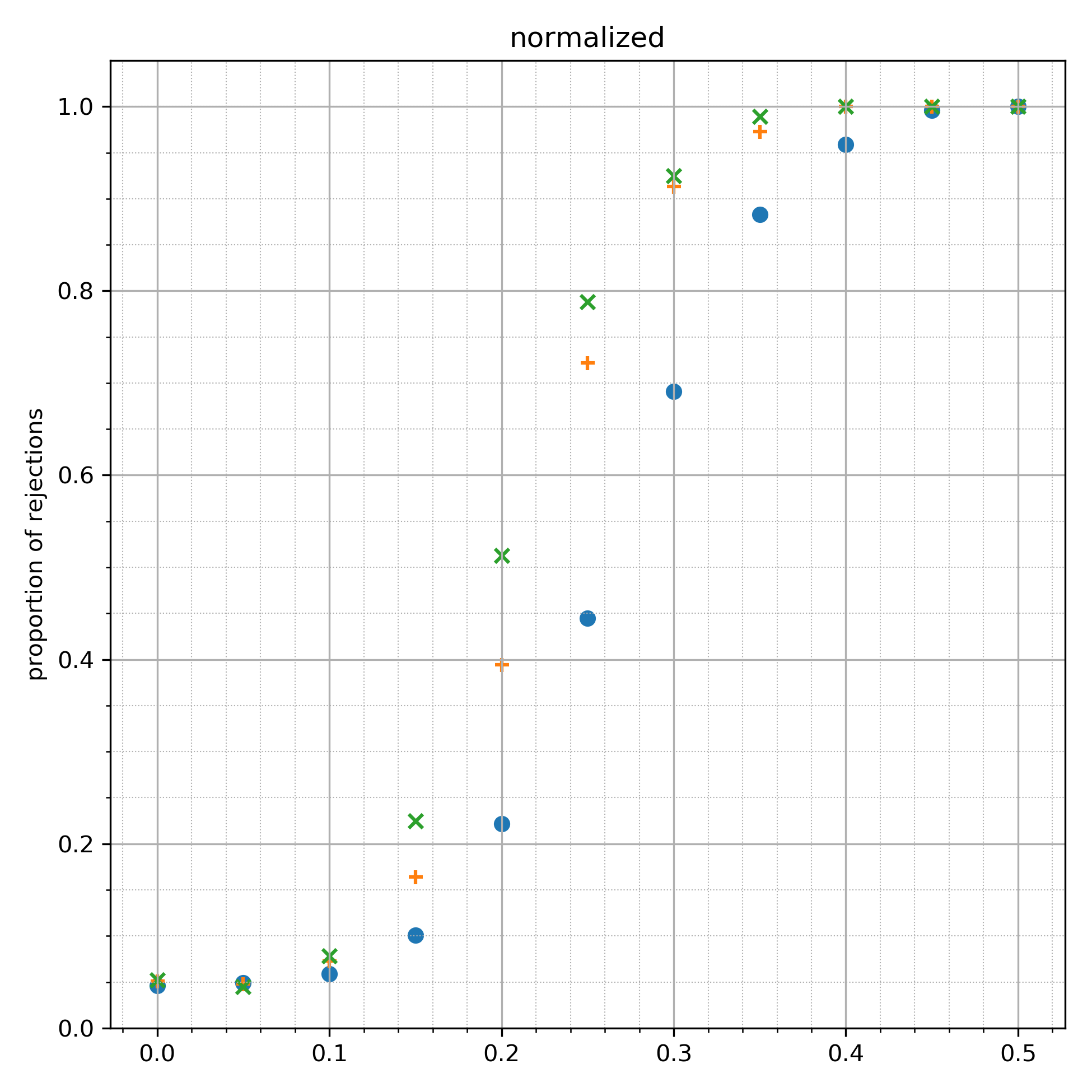} \hspace{1cm}
    \end{minipage}

    \begin{minipage}[b]{\linewidth}
    \centering
    {\small (b) 50 nodes}
    
    \hspace{1cm} \includegraphics[height=0.27\textheight]{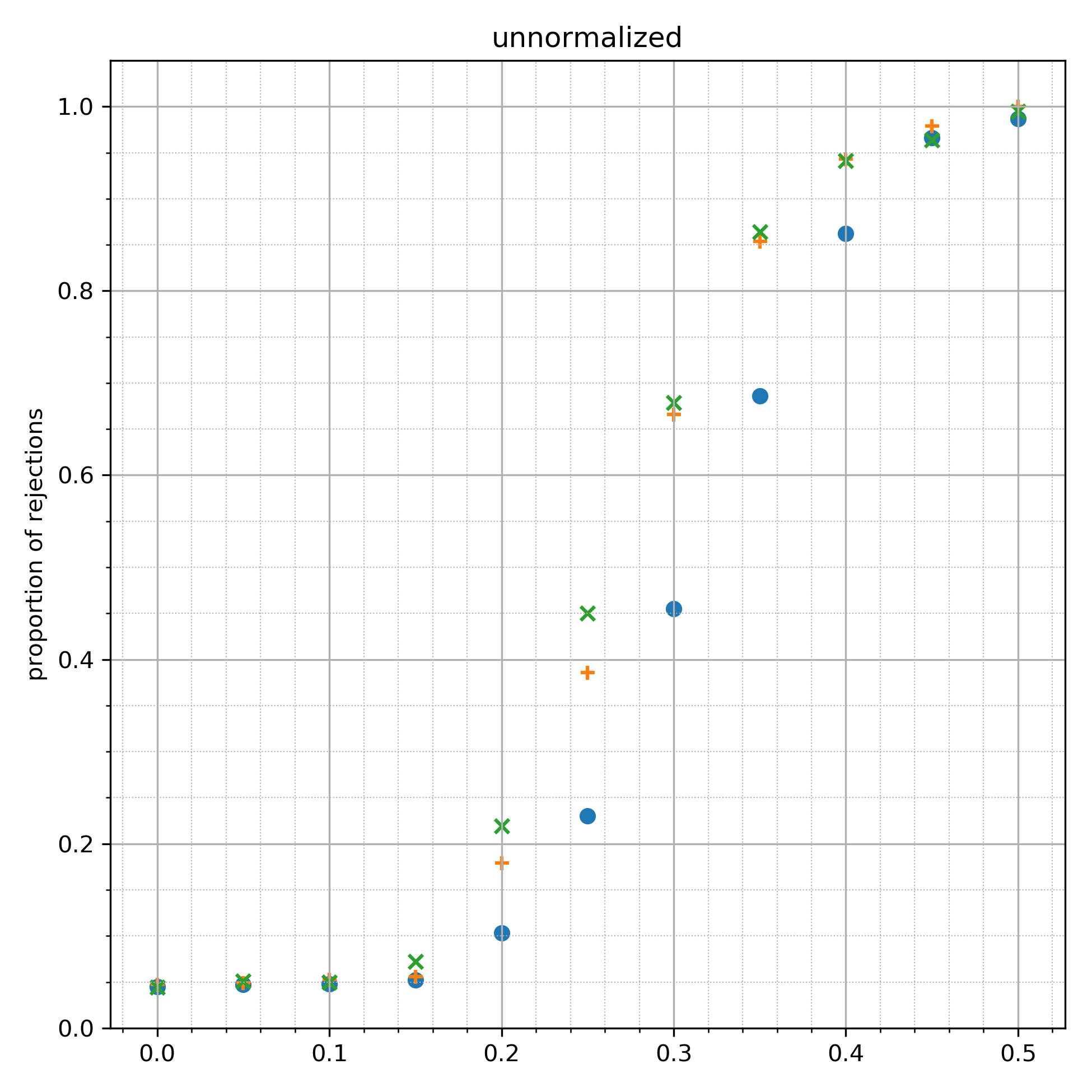} \hfill \includegraphics[height=0.27\textheight]{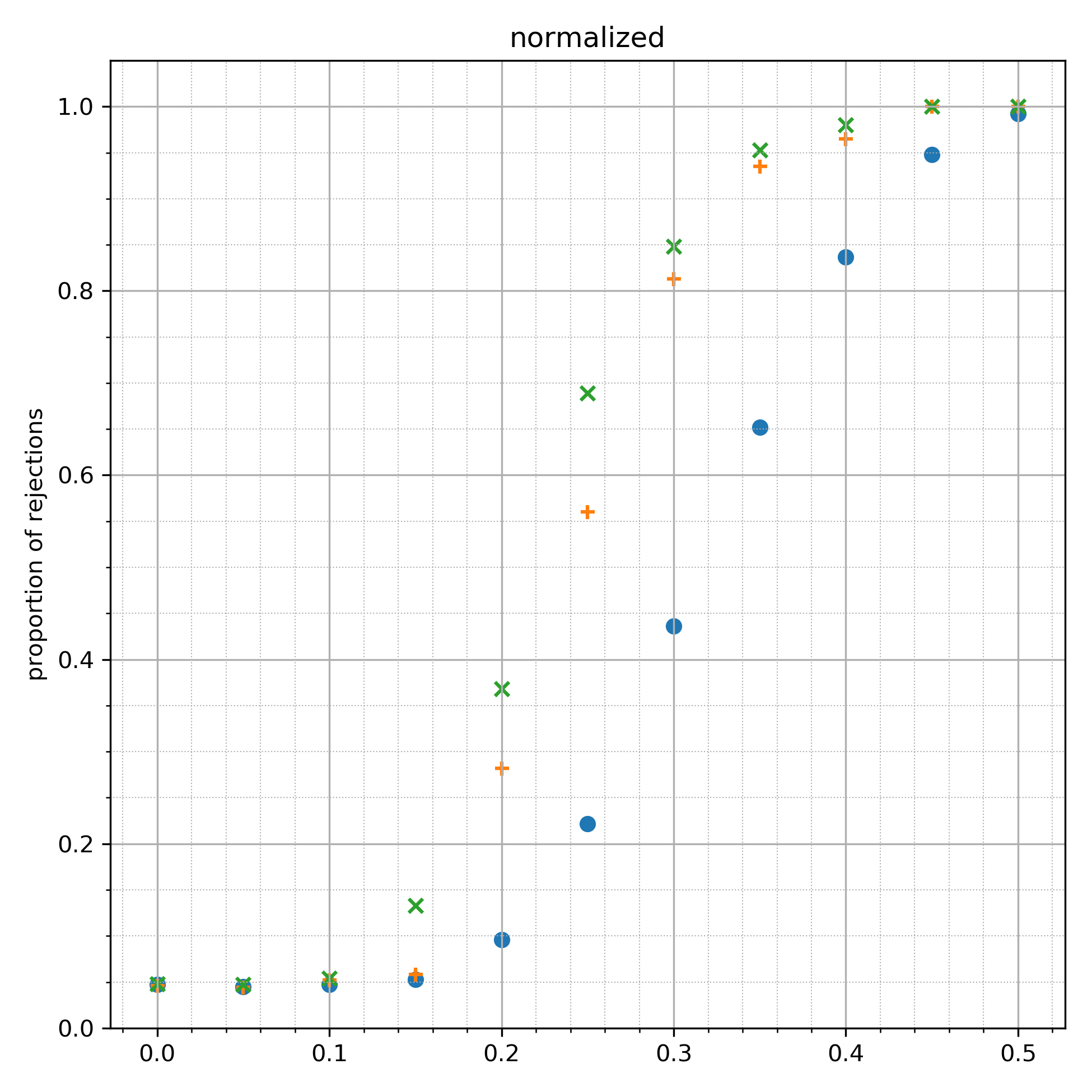} \hspace{1cm}
    \end{minipage}

    \begin{minipage}[b]{\linewidth}
    \centering
    {\small (c) 100 nodes}
    
    \hspace{1cm} \includegraphics[height=0.27\textheight]{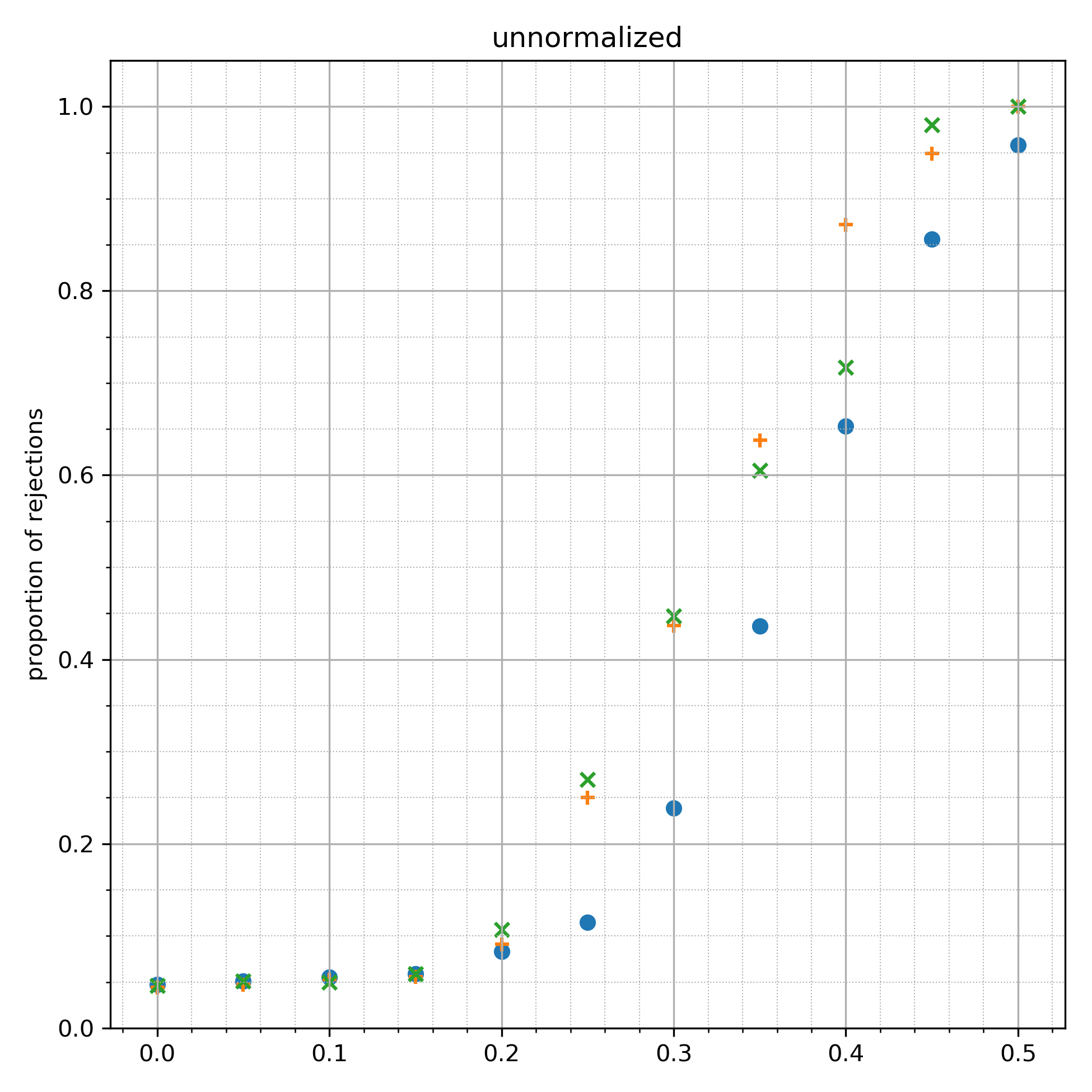} \hfill \includegraphics[height=0.27\textheight]{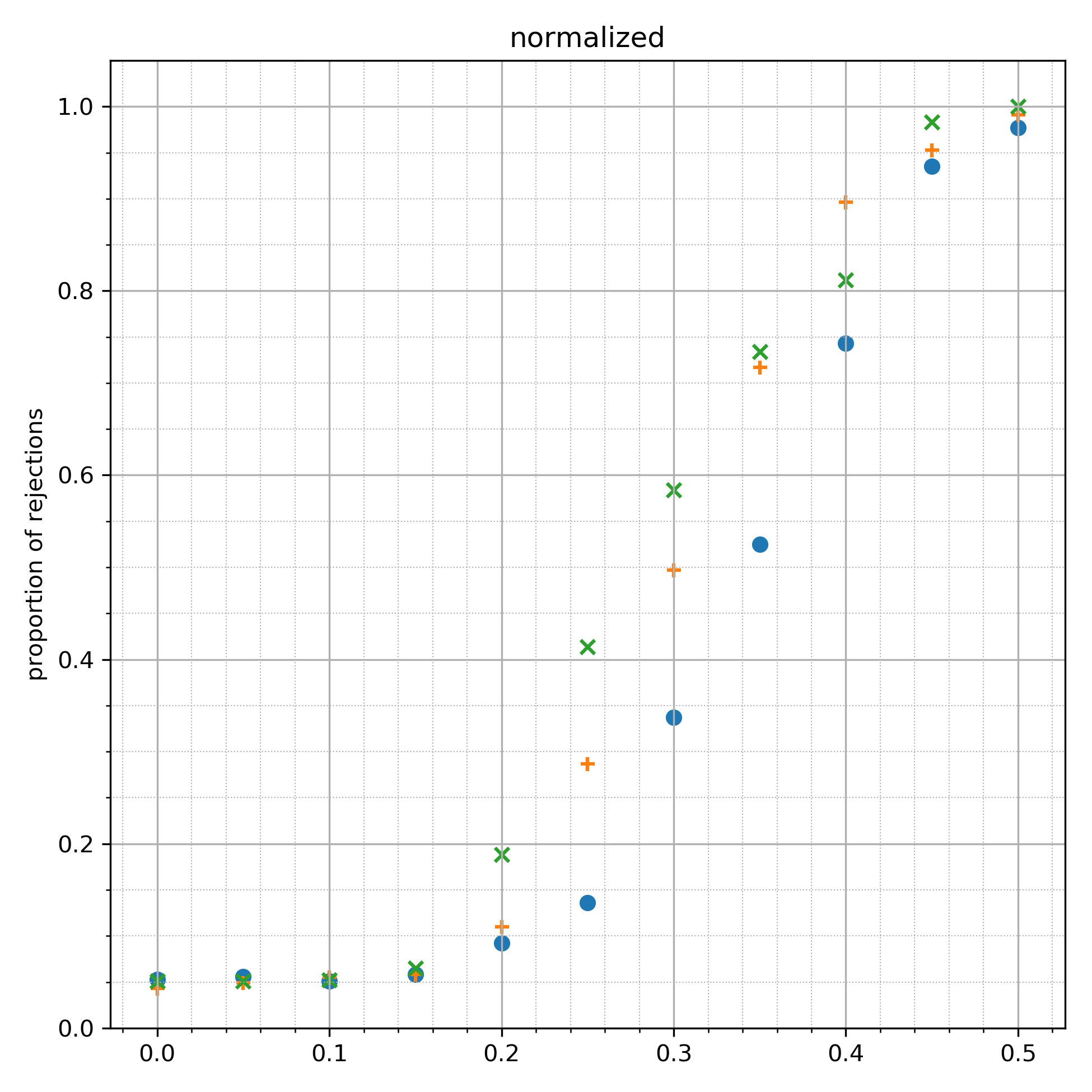} \hspace{1cm}
    \end{minipage}
\end{figure}

\subsection{\label{supp:exper5}Experiment~5: Reversed and symmetrized procedures and sensitivity to $\lambda_\theta$}

We study the performance of the reversed and the symmetrized procedures using the same synthetic data as in Experiment~1 for easier comparison with SparKLIE+.
The reversed procedure is obtained by replacing $\lKLIEP$ with the reversed loss
\[
\ell_\text{RevKLIEP}(\thetab; \Xb_{n_x}, \Yb_{n_y})
= \frac{1}{n_y} \sum_{j=1}^{n_y} \thetab^\top \psib(\yb^{(j)}) + \log\cbr{\frac{1}{n_x} \sum_{i=1}^{n_x} \exp\rbr{-\thetab^\top \psib(\xb^{(i)})}}.
\]
It is easy to see that this is just $\lKLIEP$ with the roles of $\xb$ and $\yb$ switched.
$\ell_\text{RevKLIEP}$ also occurs as a result of minimizing the reverse KL divergence from $f_x / r_\theta$ to $f_y$.
The symmetrized procedure minimizes the sum of $\lKLIEP$ and $\ell_\text{RevKLIEP}$
\begin{multline*}
\ell_\text{SymKLIEP}(\thetab; \Xb_{n_x}, \Yb_{n_y})\\
\begin{aligned}[t]
&= \lKLIEP(\thetab; \Xb_{n_x}, \Yb_{n_y}) + \ell_\text{RevKLIEP}(\thetab; \Xb_{n_x}, \Yb_{n_y})\\
&=
\begin{multlined}[t]
-\frac{1}{n_x} \sum_{i=1}^{n_x} \thetab^\top \psib(\xb^{(i)}) + \frac{1}{n_y} \sum_{j=1}^{n_y} \thetab^\top \psib(\yb^{(j)})\\
+ \log\cbr{\frac{1}{n_x} \sum_{i=1}^{n_x} \exp\rbr{-\thetab^\top \psib(\xb^{(i)})}}
+ \log\cbr{\frac{1}{n_y} \sum_{j=1}^{n_y} \exp\rbr{\thetab^\top \psib(\yb^{(j)})}}.
\end{multlined}
\end{aligned}
\end{multline*}
To measure performance, we looked at the coverage and the median width of 95\% confidence intervals, as well as the bias of the estimator over the same 1000 replications as in Experiment~1.
The results are in Tables~\ref{table:divergences:coverage:chain} to \ref{table:divergences:bias:tree}.
The reversed and the symmetrized procedures are expected to have worse sample complexity compared to SparKLIE+.
This is indeed what we observe.

Also, to study the sensitivity to the regularization parameter choice, we tried five difference values of $\lambda_\theta$ as detailed in \Cref{table:lambda_values:additional}.
The results in Tables~\ref{table:divergences:coverage:chain} to \ref{table:divergences:bias:tree} tell us that all performance measures are quite stable for both SparKLIE+ procedures.
The reversed and the symmetrized procedures do show some instability, but it is likely that this has more to do with the fact that both procedures have larger sample complexity relative to KLIEP.
See \Cref{remark:asymmetry} in \Cref{sec:theory:main}.

\begin{table}
\caption{\label{table:lambda_values:additional}\textbf{Regularization parameter settings for Experiment~5.}}
\centering
\fbox{%
\begin{tabular}{c c l c}
Divergence & \multicolumn{2}{c}{$\lambda_\theta$} & $\lambda_k$\\
\hline
KL & $\sqrt{j \log p / \min\{n_x, n_y\}}$, & $j = 4, 3.5, \dots, 2$ & $\sqrt{2 \log p/ n_y}$\\
Reverse & $\sqrt{j \log p / \min\{n_x, n_y\}}$, & $j = 16, 12.5, \dots, 2$ & $\sqrt{2 \log p / n_x}$\\
Symmetric & $\sqrt{j \log p / \min\{n_x, n_y\}}$, & $j = 16, 12.5, \dots, 2$ & $\tfrac 12 \sqrt{2 \log p / n_x} + \tfrac 12 \sqrt{2 \log p / n_y}$
\end{tabular}}
\end{table}

\begin{table}
\caption{\label{table:divergences:coverage:chain}\textbf{Empirical coverage of the 95\% CI $\hat\theta_k \pm z_{0.975} \hat\sigma_k / \sqrt{n}$ for Chain 1 and Chain 2.}}
\centering
\fbox{%
\begin{tabular}{*{4}{c} *{5}{c}}
$\gammab_y$ & $m$ & Divergence & De-biasing & \multicolumn{5}{c}{Coverage as a function of $\lambda_\theta$}\\
\hline
\multirow{15}{*}{(1)} & \multirow{7}{*}{25} & KL & \multirow{3}{*}{+1} & 0.934 & 0.941 & 0.942 & 0.943 & 0.953\\
&& Reverse && 0.920 & 0.919 & 0.921 & 0.917 & 0.902\\
&& Symmetric && 0.911 & 0.895 & 0.893 & 0.876 & 0.875\\
\\
&& KL & \multirow{3}{*}{+2} & 0.963 & 0.965 & 0.964 & 0.965 & 0.964\\
&& Reverse && 0.967 & 0.965 & 0.956 & 0.936 & 0.915\\
&& Symmetric && 0.940 & 0.930 & 0.897 & 0.781 & 0.567\\
\\
& \multirow{7}{*}{50} & KL & \multirow{3}{*}{+1} & 0.951 & 0.955 & 0.953 & 0.955 & 0.957\\
&& Reverse && 0.888 & 0.876 & 0.859 & 0.919 & 0.891\\
&& Symmetric && 0.909 & 0.914 & 0.887 & 0.868 & 0.708\\
\\
&& KL & \multirow{3}{*}{+2} & 0.970 & 0.972 & 0.974 & 0.970 & 0.964\\
&& Reverse && 0.947 & 0.930 & 0.889 & 0.930 & 0.895\\
&& Symmetric && 0.940 & 0.933 & 0.871 & 0.525 & 0.978\\
\\
\multirow{15}{*}{(2)} & \multirow{7}{*}{25} & KL & \multirow{3}{*}{+1} & 0.956 & 0.951 & 0.947 & 0.948 & 0.957\\
&& Reverse && 0.900 & 0.900 & 0.891 & 0.898 & 0.877\\
&& Symmetric && 0.938 & 0.929 & 0.917 & 0.895 & 0.889\\
\\
&& KL & \multirow{3}{*}{+2} & 0.959 & 0.955 & 0.956 & 0.955 & 0.961\\
&& Reverse && 0.953 & 0.953 & 0.951 & 0.948 & 0.910\\
&& Symmetric && 0.949 & 0.948 & 0.903 & 0.783 & 0.568\\
\\
& \multirow{7}{*}{50} & KL & \multirow{3}{*}{+1} & 0.924 & 0.930 & 0.938 & 0.943 & 0.928\\
&& Reverse && 0.877 & 0.877 & 0.873 & 0.878 & 0.857\\
&& Symmetric && 0.927 & 0.926 & 0.887 & 0.836 & 0.718\\
\\
&& KL & \multirow{3}{*}{+2} & 0.937 & 0.942 & 0.943 & 0.952 & 0.945\\
&& Reverse && 0.926 & 0.925 & 0.927 & 0.920 & 0.883\\
&& Symmetric && 0.936 & 0.935 & 0.859 & 0.487 & 0.987
\end{tabular}}
\end{table}

\begin{table}
\caption{\textbf{Empirical coverage of the 95\% CI $\hat\theta_k \pm z_{0.975} \hat\sigma_k / \sqrt{n}$ for Tree 1 and Tree 2.}}
\centering
\fbox{%
\begin{tabular}{*{4}{c} *{5}{c}}
$\gammab_y$ & $m$ & Divergence & De-biasing & \multicolumn{5}{c}{Coverage as a function of $\lambda_\theta$}\\
\hline
\multirow{15}{*}{(1)} & \multirow{7}{*}{25} & KL & \multirow{3}{*}{+1} & 0.940 & 0.945 & 0.947 & 0.955 & 0.952\\
&& Reverse && 0.798 & 0.801 & 0.831 & 0.862 & 0.893\\
&& Symmetric && 0.880 & 0.892 & 0.925 & 0.946 & 0.895\\
\\
&& KL & \multirow{3}{*}{+2} & 0.977 & 0.976 & 0.974 & 0.972 & 0.977\\
&& Reverse && 0.939 & 0.939 & 0.939 & 0.940 & 0.934\\
&& Symmetric && 0.909 & 0.903 & 0.905 & 0.865 & 0.728\\
\\
& \multirow{7}{*}{50} & KL & \multirow{3}{*}{+1} & 0.954 & 0.957 & 0.961 & 0.961 & 0.959\\
&& Reverse && 0.743 & 0.755 & 0.820 & 0.843 & 0.860\\
&& Symmetric && 0.871 & 0.883 & 0.903 & 0.964 & 0.435\\
\\
&& KL & \multirow{3}{*}{+2} & 0.985 & 0.985 & 0.985 & 0.981 & 0.982\\
&& Reverse && 0.906 & 0.914 & 0.940 & 0.942 & 0.934\\
&& Symmetric && 0.905 & 0.908 & 0.878 & 0.734 & 0.987\\
\\
\multirow{15}{*}{(2)} & \multirow{7}{*}{25} & KL & \multirow{3}{*}{+1} & 0.955 & 0.961 & 0.959 & 0.959 & 0.958\\
&& Reverse && 0.860 & 0.861 & 0.856 & 0.862 & 0.889\\
&& Symmetric && 0.887 & 0.905 & 0.937 & 0.970 & 0.906\\
\\
&& KL & \multirow{3}{*}{+2} & 0.982 & 0.987 & 0.988 & 0.985 & 0.985\\
&& Reverse && 0.941 & 0.941 & 0.939 & 0.927 & 0.929\\
&& Symmetric && 0.925 & 0.918 & 0.917 & 0.896 & 0.731\\
\\
& \multirow{7}{*}{50} & KL & \multirow{3}{*}{+1} & 0.954 & 0.956 & 0.950 & 0.954 & 0.955\\
&& Reverse && 0.859 & 0.859 & 0.855 & 0.860 & 0.873\\
&& Symmetric && 0.903 & 0.910 & 0.932 & 0.972 & 0.435\\
\\
&& KL & \multirow{3}{*}{+2} & 0.990 & 0.988 & 0.982 & 0.980 & 0.980\\
&& Reverse && 0.954 & 0.951 & 0.939 & 0.936 & 0.932\\
&& Symmetric && 0.935 & 0.921 & 0.914 & 0.784 & 0.990
\end{tabular}}
\end{table}

\begin{table}
\caption{\textbf{Median width of the 95\% CI $\hat\theta_k \pm z_{0.975} \hat\sigma_k / \sqrt{n}$ for Chain 1 and Chain 2.}}
\centering
\fbox{%
\begin{tabular}{*{4}{c} *{5}{c}}
$\gammab_y$ & $m$ & Divergence & De-biasing & \multicolumn{5}{c}{Median width as a function of $\lambda_\theta$}\\
\hline
\multirow{15}{*}{(1)} & \multirow{7}{*}{25} & KL & \multirow{3}{*}{+1} & 0.479 & 0.481 & 0.485 & 0.490 & 0.497\\
&& Reverse && 0.500 & 0.500 & 0.494 & 0.478 & 0.503\\
&& Symmetric && 0.420 & 0.438 & 0.503 & 0.701 & 1.467\\
\\
&& KL & \multirow{3}{*}{+2} & 0.511 & 0.517 & 0.519 & 0.523 & 0.532\\
&& Reverse && 0.540 & 0.540 & 0.531 & 0.502 & 0.528\\
&& Symmetric && 0.454 & 0.483 & 0.531 & 0.669 & 1.605\\
\\
& \multirow{7}{*}{50} & KL & \multirow{3}{*}{+1} & 0.347 & 0.347 & 0.346 & 0.347 & 0.351\\
&& Reverse && 0.353 & 0.351 & 0.331 & 0.316 & 0.344\\
&& Symmetric && 0.300 & 0.310 & 0.384 & 0.776 & 766.6\\
\\
&& KL & \multirow{3}{*}{+2} & 0.366 & 0.364 & 0.364 & 0.365 & 0.369\\
&& Reverse && 0.382 & 0.381 & 0.346 & 0.324 & 0.359\\
&& Symmetric && 0.333 & 0.340 & 0.385 & 0.649 & 936.7\\
\\
\multirow{15}{*}{(2)} & \multirow{7}{*}{25} & KL & \multirow{3}{*}{+1} & 0.436 & 0.446 & 0.454 & 0.466 & 0.483\\
&& Reverse && 0.483 & 0.483 & 0.494 & 0.524 & 0.573\\
&& Symmetric && 0.443 & 0.463 & 0.528 & 0.727 & 1.503\\
\\
&& KL & \multirow{3}{*}{+2} & 0.444 & 0.454 & 0.465 & 0.481 & 0.504\\
&& Reverse && 0.521 & 0.522 & 0.537 & 0.568 & 0.630\\
&& Symmetric && 0.458 & 0.480 & 0.535 & 0.680 & 1.569\\
\\
& \multirow{7}{*}{50} & KL & \multirow{3}{*}{+1} & 0.318 & 0.323 & 0.329 & 0.336 & 0.349\\
&& Reverse && 0.341 & 0.344 & 0.362 & 0.380 & 0.410\\
&& Symmetric && 0.319 & 0.328 & 0.390 & 0.787 & 756.2\\
\\
&& KL & \multirow{3}{*}{+2} & 0.322 & 0.327 & 0.336 & 0.348 & 0.363\\
&& Reverse && 0.368 & 0.372 & 0.395 & 0.413 & 0.445\\
&& Symmetric && 0.331 & 0.342 & 0.388 & 0.654 & 953.3
\end{tabular}}
\end{table}

\begin{table}
\caption{\textbf{Median width of the 95\% CI $\hat\theta_k \pm z_{0.975} \hat\sigma_k / \sqrt{n}$ for Tree 1 and Tree 2.}}
\centering
\fbox{%
\begin{tabular}{*{4}{c} *{5}{c}}
$\gammab_y$ & $m$ & Divergence & De-biasing & \multicolumn{5}{c}{Median width as a function of $\lambda_\theta$}\\
\hline
\multirow{15}{*}{(1)} & \multirow{7}{*}{25} & KL & \multirow{3}{*}{+1} & 0.754 & 0.765 & 0.776 & 0.792 & 0.815\\
&& Reverse && 0.711 & 0.712 & 0.740 & 0.781 & 0.865\\
&& Symmetric && 0.707 & 0.772 & 0.969 & 1.467 & 2.925\\
\\
&& KL & \multirow{3}{*}{+2} & 0.845 & 0.865 & 0.881 & 0.903 & 0.940\\
&& Reverse && 0.786 & 0.788 & 0.804 & 0.831 & 0.925\\
&& Symmetric && 0.783 & 0.853 & 1.014 & 1.508 & 4.574\\
\\
& \multirow{7}{*}{50} & KL & \multirow{3}{*}{+1} & 0.581 & 0.578 & 0.575 & 0.575 & 0.584\\
&& Reverse && 0.508 & 0.516 & 0.559 & 0.580 & 0.676\\
&& Symmetric && 0.527 & 0.558 & 0.717 & 1.709 & 2.008\\
\\
&& KL & \multirow{3}{*}{+2} & 0.659 & 0.654 & 0.651 & 0.652 & 0.669\\
&& Reverse && 0.577 & 0.583 & 0.607 & 0.614 & 0.746\\
&& Symmetric && 0.592 & 0.619 & 0.758 & 1.733 & 411.9\\
\\
\multirow{15}{*}{(2)} & \multirow{7}{*}{25} & KL & \multirow{3}{*}{+1} & 0.815 & 0.826 & 0.835 & 0.842 & 0.867\\
&& Reverse && 0.686 & 0.686 & 0.696 & 0.770 & 0.889\\
&& Symmetric && 0.740 & 0.802 & 0.990 & 1.533 & 3.451\\
\\
&& KL & \multirow{3}{*}{+2} & 0.893 & 0.906 & 0.928 & 0.933 & 0.973\\
&& Reverse && 0.726 & 0.726 & 0.738 & 0.814 & 0.948\\
&& Symmetric && 0.783 & 0.852 & 1.014 & 1.514 & 4.893\\
\\
& \multirow{7}{*}{50} & KL & \multirow{3}{*}{+1} & 0.620 & 0.621 & 0.620 & 0.617 & 0.632\\
&& Reverse && 0.485 & 0.486 & 0.524 & 0.599 & 0.735\\
&& Symmetric && 0.539 & 0.579 & 0.755 & 1.848 & 1.954\\
\\
&& KL & \multirow{3}{*}{+2} & 0.687 & 0.684 & 0.679 & 0.679 & 0.693\\
&& Reverse && 0.515 & 0.517 & 0.558 & 0.629 & 0.797\\
&& Symmetric && 0.574 & 0.611 & 0.752 & 1.754 & 416.6
\end{tabular}}
\end{table}

\begin{table}
\caption{\textbf{Empirical bias of $\hat\theta_k$ for Chain 1 and Chain 2.}}
\centering
\fbox{%
\begin{tabular}{*{4}{c} *{5}{r}}
$\gammab_y$ & $m$ & Divergence & De-biasing & \multicolumn{5}{c}{Bias as a function of $\lambda_\theta$}\\
\hline
\multirow{15}{*}{(1)} & \multirow{7}{*}{25} & KL & \multirow{3}{*}{+1} & -0.009 & -0.014 & -0.019 & -0.021 & -0.023\\
&& Reverse && -0.061 & -0.062 & -0.046 & -0.002 & 0.003\\
&& Symmetric && 0.006 & -0.006 & -0.033 & -1.591 & -$1.9 \times 10^{15}$\\
\\
&& KL & \multirow{3}{*}{+2} & 0.009 & -0.001 & -0.012 & -0.017 & -0.021\\
&& Reverse && -0.058 & -0.059 & -0.045 & -0.005 & -0.038\\
&& Symmetric && 0.005 & -0.009 & -0.041 & -0.541 & -12.007\\
\\
& \multirow{7}{*}{50} & KL & \multirow{3}{*}{+1} & -0.018 & -0.017 & -0.017 & -0.017 & -0.017\\
&& Reverse && -0.058 & -0.054 & -0.005 & 0.023 & 0.005\\
&& Symmetric && 0.008 & -0.002 & -0.043 & -0.775 & -96.784\\
\\
&& KL & \multirow{3}{*}{+2} & -0.011 & -0.013 & -0.012 & -0.012 & -0.014\\
&& Reverse && -0.054 & -0.052 & -0.007 & 0.019 & -0.002\\
&& Symmetric && 0.006 & -0.004 & -0.050 & -2.337 & -22.035\\
\\
\multirow{15}{*}{(2)} & \multirow{7}{*}{25} & KL & \multirow{3}{*}{+1} & 0.012 & 0.007 & 0.004 & -0.000 & -0.004\\
&& Reverse && -0.070 & -0.070 & -0.076 & -0.073 & -0.078\\
&& Symmetric && -0.023 & -0.029 & -0.047 & -0.118 & -10.152\\
\\
&& KL & \multirow{3}{*}{+2} & -0.004 & -0.006 & -0.008 & -0.012 & -0.014\\
&& Reverse && -0.067 & -0.067 & -0.073 & -0.140 & -0.237\\
&& Symmetric && -0.023 & -0.031 & -0.054 & -0.282 & -9.502\\
\\
& \multirow{7}{*}{50} & KL & \multirow{3}{*}{+1} & 0.022 & 0.018 & 0.013 & 0.005 & -0.003\\
&& Reverse && -0.066 & -0.067 & -0.073 & -0.069 & -0.074\\
&& Symmetric && -0.019 & -0.022 & -0.054 & -0.696 & -83.982\\
\\
&& KL & \multirow{3}{*}{+2} & -0.006 & -0.007 & -0.008 & -0.010 & -0.014\\
&& Reverse && -0.063 & -0.064 & -0.070 & -0.070 & -0.083\\
&& Symmetric && -0.020 & -0.023 & -0.061 & -2.634 & -18.973
\end{tabular}}
\end{table}

\begin{table}
\caption{\label{table:divergences:bias:tree}\textbf{Empirical bias of $\hat\theta_k$ for Tree 1 and Tree 2.}}
\centering
\fbox{%
\begin{tabular}{*{4}{c} *{5}{r}}
$\gammab_y$ & $m$ & Divergence & De-biasing & \multicolumn{5}{c}{Bias as a function of $\lambda_\theta$}\\
\multirow{15}{*}{(1)} & \multirow{7}{*}{25} & KL & \multirow{3}{*}{+1} & -0.021 & -0.017 & -0.014 & -0.012 & -0.012\\
&& Reverse && -22.828 & -21.619 & -21.573 & -21.307 & -20.354\\
&& Symmetric && -0.042 & -0.085 & -0.129 & -0.300 & -11.936\\
\\
&& KL & \multirow{3}{*}{+2} & -0.030 & -0.031 & -0.031 & -0.034 & -0.039\\
&& Reverse && -4.351 & -3.258 & -4.820 & -4.550 & -3.982\\
&& Symmetric && -3.215 & -3.624 & -3.284 & -3.849 & -11.791\\
\\
& \multirow{7}{*}{50} & KL & \multirow{3}{*}{+1} & 0.001 & -0.000 & -0.003 & -0.007 & -0.011\\
&& Reverse && -0.381 & 0.008 & -2.644 & -1.543 & -2.899\\
&& Symmetric && -0.046 & -0.063 & -0.105 & -0.341 & -56.174\\
\\
&& KL & \multirow{3}{*}{+2} & -0.012 & -0.012 & -0.014 & -0.017 & -0.021\\
&& Reverse && -0.331 & 0.038 & -0.226 & -0.343 & -0.684\\
&& Symmetric && -0.056 & -0.080 & -0.140 & -2.748 & -14.916\\
\\
\multirow{15}{*}{(2)} & \multirow{7}{*}{25} & KL & \multirow{3}{*}{+1} & 0.020 & 0.021 & 0.017 & 0.016 & 0.012\\
&& Reverse && -20.257 & -19.118 & -19.523 & -20.280 & -19.418\\
&& Symmetric && -0.062 & -0.074 & -0.106 & -0.251 & -9.982\\
\\
&& KL & \multirow{3}{*}{+2} & 0.005 & 0.005 & 0.005 & 0.005 & -0.001\\
&& Reverse && -3.518 & -3.371 & -3.643 & -3.835 & -4.016\\
&& Symmetric && -3.006 & -3.024 & -2.678 & -3.294 & -10.106\\
\\
& \multirow{7}{*}{50} & KL & \multirow{3}{*}{+1} & 0.001 & -0.001 & -0.003 & -0.004 & -0.005\\
&& Reverse && -1.360 & -0.999 & -0.880 & -2.011 & -2.479\\
&& Symmetric && -0.046 & -0.052 & -0.084 & -0.756 & -60.579\\
\\
&& KL & \multirow{3}{*}{+2} & -0.007 & -0.008 & -0.007 & -0.007 & -0.008\\
&& Reverse && -0.200 & -0.104 & -0.284 & -0.121 & -0.918\\
&& Symmetric && -0.048 & -0.057 & -0.101 & -2.445 & -13.089
\end{tabular}}
\end{table}
\FloatBarrier

\section{\label{supp:sec6}Supplementary material for Section 6}

\subsection{Preprocessing}
The data were preprocessed in SPM12 (Wellcome Trust Centre for Neuroimaging, \url{http://www.fil.ion.ucl.ac.uk/spm}).
The default SPM12 steps were used, except in normalization, the voxel size was set to $2 \times 2 \times 2$ and the bounding box was changed to match the automated anatomical labelling atlas \citep{Tzourio-Mazoyer2002Automated}.

\subsection{Experiment}
The fMRI measurements were made while the participants were asked to go through four blocks of task sequences, each made up of three types of tasks arranged in some order.
During the experiment, the participants were asked to look at a screen, through which they received instructions about the tasks.
All three tasks involved squeezing and releasing a hand dynamometer while looking at the screen.
For the sensorimotor task (T1), the participants were asked to squeeze and release the hand dynamometer freely at their own pace while paying heed to the images on the screen.
By contrast, in the intrinsic alertness task (T2) or the extrinsic alertness task (T3), the participants were supposed to squeeze the hand dynamometer only after seeing a white square.
In the case of T3, a black screen always preceded each occurrence of the white square.
For T2, there was no forewarning.

Figure~\ref{fig:bristol_paradigm} gives the task sequence used in the pilot study.

\begin{figure}
\caption{\label{fig:bristol_paradigm}Task design (T1 - blue, T2 - green, T3 - red)}
\centering
\includegraphics[width=\linewidth]{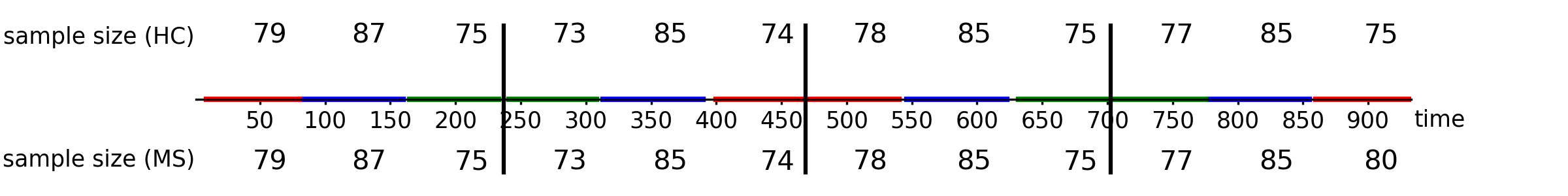}
\end{figure}
\FloatBarrier

\section{\label{supp:add_realdata}Additional real data example: Voting records of the 109th United States Senate}

We apply \Cref{sec:methodology:single} and \Cref{procedure:bootstrap:empirical} to compare the voting records in the 109th US Senate between the first half (January 3, 2005 -- January 16, 2006) and the second half (January 16, 2006 -- January 3, 2007). The data were taken from a larger dataset covering a longer period (1979 -- 2012) originally extracted from the website \url{www.voteview.com} and then processed by the authors of \cite{Roy2017Change}. We are grateful to the authors of \cite{Roy2017Change} for sharing their data with us.

We focus on the two halves of the 109th Senate. This is to ensure a sparse network difference as well as homogeneity of the data. Only one seat changed hands between the two periods from one Democrat to another. On January 16, 2006, Democrat Jon Corzine resigned in order to assume his new position as Governor of New Jersey, naming Democrat Bob Menendez to succeed. In spite of the change in membership, one would not expect there to be significant changes in the overall voting pattern, as the votes tend to split along the party lines, and nothing in our research suggests that the two Democrats were exceptional in this respect. This leads to the hypothesis
\[
	\Hcal_\text{NJ}: \gamma_{\text{1st half, Corzine / Menendez}, v} = \gamma_{\text{2nd half, Corzine / Menendez}, v} \text{ for all } v \neq \text{Corzine / Menendez}.
\]

There were 251 votes in the first half, and 177 votes in the second. Following \cite{Roy2017Change}, we code ``Yea" as $+1$ and ``Nay" as $-1$, and model the votes as independent observations from one of two Ising models with zero node potentials, one for each period. Admittedly, our model is far too simple to capture all the nuances of the complex political process. What we are hoping to observe with this toy example is whether the pattern recovered by SparKLIE+ aligns well with our knowledge of past political events, which in this case corresponds to an empty graph for the neighborhood of the New Jersey seat of interest.

We test $\Hcal_\text{NJ}$ at level 0.05. We use \Cref{procedure:KLIEP+} to estimate the differential network in the neighborhood of the New Jersey seat. We use the version of \Cref{procedure:KLIEP+} employing autoscaling formulations for Steps 1 and 2 with the universal penalty levels, as explained in Remark~\ref{remark:lambda} in Section~\ref{sec:methodology:single}. The rejection threshold for the test statistic
\[
	T_0 = \max_{v \neq \text{Corzine / Menendez}} |\hat\theta_{\text{Corzine / Menendez}, v}|
\]
was estimated using \Cref{procedure:bootstrap:empirical}. Comparing $T_0$ with the estimated rejection threshold yielded no statistically significant edges in this neighborhood differential network. We conclude that Senator Menendez's records did not differ significantly from those of his predecessor, as expected.

\end{document}